\documentclass[a4paper,10pt,notitlepage]{article}
\usepackage{amsmath}
\usepackage{amssymb}
\usepackage{xstring}%For Automatic Detection of reference prefixes.
\usepackage[english]{babel}
\usepackage{hyperref}
\usepackage[amsthm,thref,hyperref,thmmarks]{ntheorem}
\usepackage{caption}
\usepackage{subcaption}
\usepackage[boxed]{algorithm2e}
\usepackage{xcolor} %For ToDo Box
%\usepackage{tikz} %For tikz graphics
%\usepackage{pgfplots} %For tikz graphics
%\pgfplotsset{compat=1.14}
\usepackage{graphicx}
\usepackage[a4paper,left=20mm,right=20mm,top=20mm,bottom=20mm]{geometry}
%Reference
%Note that P stands for Prefix. It adds the prefix comming from the first Word from the label.
\newcommand{\refP}[1]{%
	\def\InputString{#1}%
	\IfBeginWith{\InputString}{Equation}{%
		(\ref{#1})}{%
	\IfBeginWith{\InputString}{Section}{%
		Section \ref{#1}}{%
	\IfBeginWith{\InputString}{Subsection}{%
		Subsection \ref{#1}}{%
	\IfBeginWith{\InputString}{Chapter}{%
		Chapter \ref{#1}}{%
	\IfBeginWith{\InputString}{Subsubsection}{%
		Subsubsection \ref{#1}}{%
	\IfBeginWith{\InputString}{Problem}{%
		(\ref{#1})}{%
	\IfBeginWith{\InputString}{Property}{%
		property (\ref{#1})}{%
	\IfBeginWith{\InputString}{Algorithm}{%
		Algorithm \ref{#1}}{%
	\IfBeginWith{\InputString}{Figure}{%
		Figure (\ref{#1})}{%
	\IfBeginWith{\InputString}{Question}{%
		Question (\ref{#1})}{%
	\IfBeginWith{\InputString}{Footnote}{%
		Footnote \ref{#1}}{%
		\ref{#1}}}}}}}}}}}}%
}

%Misselancousoabdudb
\definecolor{TodoRed}{RGB}{150,50,0}

%Table Of Contents

%%%%%%%%%%%%%%%%%%%%%%%%%%%%%%%%%%%%%%%%%%%%%%%%%%%   Text in Mathmode
\newcommand{\TextForAll}{\hspace{2pt} \text{ for all } \hspace{2pt}}
\newcommand{\TextSuchThat}{\hspace{2pt}\text{ such that }\hspace{2pt}}
\newcommand{\TextIf}{\hspace{2pt}\text{ if }\hspace{2pt}}
\newcommand{\TextWith}{\hspace{2pt}\text{ with }\hspace{2pt}}
\newcommand{\TextAnd}{\hspace{2pt}\text{ and }\hspace{2pt}}
\newcommand{\TextExists}{\hspace{2pt}\text{ there exists }\hspace{2pt}}
\newcommand{\TextAs}{\hspace{2pt}\text{ as }\hspace{2pt}}

\newcommand{\ZeroSet}{\left\{0\right\}}

\newcommand{\abs}[1]{\left|#1\right|}
\newcommand{\RoundUp}[1]{\left\lceil #1\right\rceil}

\newcommand{\norm}[1]{\left\|#1\right\|}
\newcommand{\scprod}[2]{\langle#1,#2\rangle}

\newcommand{\ProjToIndex}[2]{\left.#2\right|_{#1}}

\newcommand{\SetOf}[1]{\left[#1\right]}

\newcommand{\argmin}[1]{\underset{#1}{\textnormal{argmin}}\hspace{1pt}}

%%%%%%%%%%%%%%%%%%%%%%%%%%%%%%%%%%%%%%%%%%%%%%%%%%%   Probability Stuff
\newcommand{\Prob}[1]{\mathbb{P}\left[#1\right]}
\newcommand{\Expect}[1]{\mathbb{E}\left[#1\right]}

\newcommand{\GaussianRV}[2]{\mathcal{N}\left(#1,#2\right)}

%%%%%%%%%%%%%%%%%%%%%%%%%%%%%%%%%%%%%%%%%%%%%%%%%%%   Function Stuff

%%%%%%%%%%%%%%%%%%%%%%%%%%%%%%%%%%%%%%%%%%%%%%%%%%%   Basic Functions
%\newcommand{\NoInput}{}

%\def\AddEmptyFunction#1#2{%
%	\expandafter\def\csname #1Empty\endcsname{#2}%
%}
%\AddEmptyFunction{Sin}{\sin}
%\def\AddBasicFunction#1#2{
%	\expandafter\def\csname #1Empty\endcsname{#2}
%	\expandafter\def\csname #1\endcsname##1{
%		\def\InputString{##1}
%		\def\CheckString{}%I define the String \CheckString to be an empty String.
%		\ifx\InputString\CheckString %Checks if the Input is an Empty String.
%			\csname #1Empty\endcsname
%		\else
%			\csname #1Empty\endcsname \left(##1\right)
%		\fi
%	}
%}
\def\AddBasicFunction#1#2{
	\expandafter\def\csname #1\endcsname##1{
		\def\InputString{##1}
		\def\CheckString{}%I define the String \CheckString to be an empty String.
		\ifx\InputString\CheckString %Checks if the Input is an Empty String.
			#2
		\else
			#2 \left(##1\right)
		\fi
	}
}
\AddBasicFunction{Exp}{\exp}
\AddBasicFunction{Cos}{\cos}
\AddBasicFunction{Sin}{\sin}
\AddBasicFunction{Tan}{\tan}
\AddBasicFunction{Ln}{\ln}
\AddBasicFunction{Log}{\log}
\AddBasicFunction{ArcCos}{\arccos}
\AddBasicFunction{ArcSin}{\arcsin}
\AddBasicFunction{ArcTan}{\arctan}
\AddBasicFunction{Cosh}{\cosh}
\AddBasicFunction{Sinh}{\sinh}
\AddBasicFunction{Tanh}{\tanh}
\AddBasicFunction{ArcCosh}{\textnormal{arccosh}}
\AddBasicFunction{ArcSinh}{\textnormal{arcsinh}}
\AddBasicFunction{ArcTanh}{\textnormal{arctanh}}

%\newcommand{\ExpEmpty}{\exp}
%\newcommand{\Exp}[1]{
%	\def\InputString{#1}
%	\def\CheckString{}
%	\ifx\InputString\CheckString
%		\ExpEmpty
%	\else
%		\ExpEmpty\left(#1\right)
%	\fi
%}
%\newcommand{\Exp}[1][\NoInput]{
%	\def\InputString{#1}
%	\def\CheckString{\NoInput}%I define the String \CheckString to be an empty String.
%	\ifx\InputString\CheckString %Checks if the Input is an Empty String.
%		\ExpEmpty
%	\else
%		\ExpEmpty\left(#1\right)
%	\fi
%}

%%%%%%%%%%%%%%%%%%%%%%%%%%%%%%%%%%%%%%%%%%%%%%%%%%%   Matrix Stuff
\AddBasicFunction{Kernel}{\ker}
\AddBasicFunction{Rank}{\textnormal{rank}}
\AddBasicFunction{Range}{\textnormal{Ran}}

%%%%%%%%%%%%%%%%%%%%%%%%%%%%%%%%%%%%%%%%%%%%%%%%%%%   Function Stuff
\AddBasicFunction{supp}{\textnormal{supp}}
\newcommand{\conv}[1]{\textnormal{conv}\left(#1\right)}

%%%%%%%%%%%%%%%%%%%%%%%%%%%%%%%%%%%%%%%%%%%%%%%%%%%   Arithmetric Stuff
\AddBasicFunction{sgn}{\textnormal{sgn}}

%%%%%%%%%%%%%%%%%%%%%%%%%%%%%%%%%%%%%%%%%%%%%%%%%%%   Constants
\newcommand{\ExpE}{\mathrm{e}}

%%%%%%%%%%%%%%%%%%%%%%%%%%%%%%%%%%%%%%%%%%%%%%%%%%%   Sets
\newcommand{\SetSize}[1]{\#\left(#1\right)}

%%%%%%%%%%%%%%%%%%%%%%%%%%%%%%%%%%%%%%%%%%%%%%%%%%%   SubDifferential
\newcommand{\SubDiff}[2]{\partial\left.#1\hspace{1pt}\right|_{#2}}
%%%%%%%%%%%%%%%%%%%%%%%%%%%%%%%%%%%%%%%% NSP
\newcommand{\SecOrd}{\textnormal{SecOrd}}
\newcommand{\NormIndex}{ }
\newcommand{\DualNormSymbol}{\ast}
\newcommand{\normRHS}[1]{\norm{#1}\NormIndex}
\newcommand{\normDual}[1]{\norm{#1}_\DualNormSymbol}
\newcommand{\compr}[2]{d_1\left(#1,\Sigma_#2\right)}

%%%%%%%%%%%%%%%%%%%%%%%%%%%%%%%%%%%%%%%% Encoding/Decoding
\AddBasicFunction{Decoder}{Q}

%%%%%%%%%%%%%%%%%%%%%%%%%%%%%%%%%%%%%%%% Random Sapling
\newcommand{\GW}[1]{\ell\left(#1\right)}
\newcommand{\UnitSphere}[2]{\mathbb{S}^{#1-1}_{#2}}

\newcommand{\GMC}[1]{E_#1}
\newcommand{\NSPCone}{T}

%%%%%%%%%%%%%%%%%%%%%%%%%%%%%%%%%%%%%%% SQL

%%%%%%%%%%%%%%%%%%%%%%%%%%%%%%%%%%%%%%% Examples
\AddBasicFunction{RightVertices}{\textnormal{Row}}

%%%%%%%%%%%%%%%%%%%%%%%%%%%%%%%%%%%%%% HRL1
\newcommand{\hrlone}{rLASSO}%{HRL1}
\newcommand{\hrloneParam}{\lambda}

%%%%%%%%%%%%%%%%%%%%%%%%%%%%%%%%%%%%%% proof
\AddBasicFunction{fFunc}{f}
\AddBasicFunction{gFunc}{g}

%%%%%%%%%%%%%%%%%%%%%%%%%%%%%%%%%%%%%% other
\newcommand{\tempmin}[1]{\min_{#1}}
\renewcommand{\subset}{\subseteq}
\renewcommand{\supset}{\supseteq}

%%%%%%%%%%%%%%%%%%%%%%%%%%%%%%%%%%%%%%%%%%Numbering  without Algorithms
%\newtheorem{CounterTheorem}{}[chapter]
\newtheorem{CounterTheorem}{}[section]

\newtheorem{Definition}[CounterTheorem]{Definition}
\newtheorem{Theorem}[CounterTheorem]{Theorem}
\newtheorem{Proposition}[CounterTheorem]{Proposition}
\newtheorem{Lemma}[CounterTheorem]{Lemma}
\newtheorem{Corollary}[CounterTheorem]{Corollary}
\newtheorem{Remark}[CounterTheorem]{Remark}

\newtheorem{Experiment}{Experiment}

%Add Environment for algorithms.
%\theorembodyfont{\upshape}
\newtheorem{AlgorithmEnvirorementForNTheorem}[CounterTheorem]{Algorithm}

%\theorembodyfont{\itshape}

%%%%%%%%%No Numbering. Sometimes required for bugfixing.
%\newtheorem{Definition}{Definition}
%\newtheorem{Theorem}{Theorem}
%\newtheorem{Proposition}{Proposition}
%\newtheorem{Lemma}{Lemma}
%\newtheorem{Corollary}{Corollary}
%\newtheorem{Remark}{Remark}
%\newtheorem{Example}{Example}
%\newtheorem{Conjecture}{Conjecture}
%\newtheorem{Question}{Question}

%%%%%%%%%%%%%%%%%%%%%%%%%%%%%%%%%%%%% NUMBERING OF THEOREMS WITH ALGORITHMS:
%algocf=counter used by algorithm2e to count algorithms.
%Algorithms are numbered within the numbering system of theorems. Or rather:
%Theorems are numbered within the numbering system of algorithms.

%\newtheorem{Definition}[algocf]{Definition}
%\newtheorem{Theorem}[algocf]{Theorem}
%\newtheorem{Proposition}[algocf]{Proposition}
%\newtheorem{Lemma}[algocf]{Lemma}
%\newtheorem{Corollary}[algocf]{Corollary}
%\newtheorem{Remark}[algocf]{Remark}
%\newtheorem{Example}[algocf]{Example}
%\newtheorem{Conjecture}[algocf]{Conjecture}
%\newtheorem{Question}[algocf]{Question}

%%%%%%%%%%%%%%%%%%%%%%%%%%%%%%%%%%%%%%%%%%%%%%%%%%   Vectors

\renewcommand{\Vec}[1]{\mathbf{#1}}

\newcommand{\eVec}{\Vec{e}}

\newcommand{\gVec}{\Vec{g}}

\newcommand{\vVec}{\Vec{v}}
\newcommand{\wVec}{\Vec{w}}
\newcommand{\xVec}{\Vec{x}}
\newcommand{\yVec}{\Vec{y}}
\newcommand{\zVec}{\Vec{z}}

%%%%%%%%%%%%%%%%%%%%%%%%%%%%%%%%%%%%%%%%%%%%%%%%%%   Matrices

\newcommand{\Mat}[1]{\mathbf{#1}}

\newcommand{\AMat}{\Mat{A}}
\newcommand{\BMat}{\Mat{B}}

\newcommand{\UMat}{\Mat{U}}
\newcommand{\VMat}{\Mat{V}}

%%%%%%%%%%%%%%%%%%%%%%%%%%%%%%%%%%%%%%%%%%%%%%%%%%   Sets

%%%%%%%%%%%%%%%%%%%%%%%%%%%%%%%%%%%%%%%%%%%%%%%%%%   Identities and Indicators

\newcommand{\ID}{\textnormal{Id}}

\newcommand{\IDMat}{\ID_{mat}}

 %Indicator function
%\title{Robust Instance Optimal Decoding from Measurements with Unknown Noise}
%\title{RIP BPDN}
%\title{Stable, Robust Decoding of Sparse Signals from Measurements with unknown Noise}
%\title{Stable, Robust Decoding of Sparse Signals without Prior Knowledge}
%\title{Stable, Robust Decoding of Sparse Signals without Assumptions on the Noise}
%\title{Stable, Robust Decoding of Sparse Signals without Prior Information about the Noise}
%\title{Stable, Robust Decoding of Sparse Signals without Prior Knowledge about the Noise}
%\title{Stable, Robust Decoding of Sparse Signals without a priori Information}
%\title{Stable, Robust Decoding of Compressible Signals}
%\title{A new Notion of Stable, Robust Decoding of Compressible Signals}
          %\title{A new Approach for Stable, Robust Decoding of Compressible Signals} %Favorite <3
%\title{Necessary and Sufficient Conditions for the Existence of Stable, Robust Decoders}
\title{Robust Instance-Optimal Recovery of Sparse Signals at Unknown Noise Levels}
\date{}
\author{
	Hendrik Bernd Petersen
	\footnote{
		Communications and Information Theory Group,
		Technische Universtität Berlin, Berlin,
		\href{mailto:petersen@tu-berlin.de}{petersen@tu-berlin.de}
	}
	\and
	Peter  Jung
	\footnote{
		Communications and Information Theory Group,
		Technische Universtität Berlin, Berlin,
		\href{mailto:peter.jung@tu-berlin.de}{peter.jung@tu-berlin.de}
	}
}

\begin{document}
	\maketitle
\begin{abstract}
	We consider the problem of sparse signal recovery from noisy measurements.
	Many of frequently used recovery methods rely on some sort of tuning depending on
	either noise or signal parameters.
	If no estimates for either of them are available,
	the noisy recovery problem is significantly harder.
	The square root LASSO and the least absolute deviation LASSO are known to be noise-blind,
	in the sense that the tuning parameter can be chosen independent on the noise and the signal.
	We generalize those recovery methods to the \hrlone{} and give a recovery guarantee once
	the tuning parameter is above a threshold. Moreover, we analyze the effect of
%	a bad chosen tuning parameter
	mistuning on a theoretic level and prove
	the optimality of our recovery guarantee.
	Further, for Gaussian matrices we give a refined analysis
	of the threshold of the tuning parameter
	and proof a new relation of the tuning parameter
	on the dimensions. Indeed, for a certain amount of measurements the tuning parameter
	becomes independent on the sparsity.
	Finally, we verify that the least absolute deviation LASSO
	can be used with random walk matrices of uniformly at random
	chosen left regular biparitite graphs.
\end{abstract}
\section{Introduction}
We consider the problem of sparse signal recovery from noisy measurements.
Classical recovery methods require additional information about either the noise or the signal.
The basis pursuit denoising needs to be tuned in the order of the noise power \cite[Theorem~4.22]{IntroductionCS},
the $\ell_1$-norm constrained least residual needs to be tuned in the order of the $\ell_1$-norm
of the signal \cite[Theorem~11.1]{Tib_LASSO}, and the tuning parameter of the
$\ell_1$-norm penalized least squares (LASSO)
allegedly depends on the noise power \cite[Theorem~11.1]{Tib_LASSO}.
If no prior information about the signal and noise are available,
these methods fail or, in their sub-optimally tuned
versions, yield a sub-optimal recovery guarantee \cite{QuotientProperty}.
Thus, it is desirable to find other noise-blind recovery methods.
A commonly used approach is cross validation which is often computationally more
expensive and theoretical guarantees are not fully understood, see
exemplary \cite{Meinshausen2006} for further discussion.
If the signal is non-negative, the non-negative least squares \cite{NNLS_first}
and non-negative least absolute deviation \cite{NNLAD} are tuning free methods that achieve
almost as good robustness bounds as the optimally tuned basis pursuit denoising
and $\ell_1$-norm constrained least residual.
Without the non-negativity assumption this problem is harder.\\
The square-root LASSO, introduced in \cite{SQLIntroduction}, is an alteration of the LASSO
where the square of the $\ell_2$-norm is removed. The square-root LASSO is known
to be a noise-blind recovery method. Indeed, in \cite{SQLIntroduction}
it has been shown that the tuning parameter can be chosen independent on the noise power.
%In particular, it should be chosen in the order of the square root of the sparsity %\cite{SQL_01}.
Further, the square root LASSO has been studied in \cite{SQL_01,SQL_02,SQL_03,SQL_04,SQL_05,SQL_alteration_01,SQL_alteration_02}.\\
The least absolute deviation LASSO is an alteration of the square root LASSO,
where the $\ell_2$-norm of the data fidelity term is replaced by an $\ell_1$-norm.
The least absolute deviation LASSO has also been studied frequently
\cite{LAD_01,LAD_02,LAD_05,LAD_sparse_noise_01,LAD_sparse_noise_02,
	LAD_sparse_noise_03,LAD_sparse_noise_04,LAD_sparse_noise_05,LAD_sparse_noise_06,
	LAD_alteration_01,LAD_alteration_02,LAD_2level_01,LAD_2level_02}.
%and similar results as for the square root LASSO are true \cite{TODO}.
Under the assumption that the measurement matrix extended by the identity of the measurement domain
has a null space property, it was proven that the least absolute deviation LASSO
can recover sparse signals exactly even
in the presence of sparse noise \cite{LAD_sparse_noise_01},
see also
\cite{LAD_sparse_noise_02,LAD_sparse_noise_03,LAD_sparse_noise_04,
	LAD_sparse_noise_05,LAD_sparse_noise_06,LAD_2level_01,LAD_2level_02}.
\subsection*{Our Contribution}
We will introduce the notion of a stable and robust decoder and
generalize the square root LASSO and the least absolute deviation LASSO to the
``$p$th-root LASSO `` (\hrlone{}). Under the assumption of a robust null space property we generalize the recovery guarantee of the square root LASSO.
In particular, we prove
that robust recovery is possible if the tuning parameter is larger than a threshold.
Further, this threshold is a smooth function of the parameters of the robust null space property.
We will then discuss the effect of the tuning parameter on the recovery guarantee
in a larger theoretical detail.
In particular, we prove that the error bound does not degenerate when the tuning parameter
is chosen too large.
On the other hand we prove that, if the tuning parameter
is chosen smaller than the threshold of our recovery guarantee,
recovery has to fail for at least one sparse signal.
This yields that the recovery guarantee is optimal in a certain sense and can not be improved.
%On the other hand, if the tuning parameter is chosen to yield a uniform recovery guarantee,
%we can prove that the threshold of our recovery guarantee has to be fulfilled, yielding
%that the recovery guarantee can not be improved.
In the second part of our work we focus on the estimation of the tuning parameter.
For Gaussian matrices we use Gordon's escape through the mesh \cite{gordon}
to estimate the tuning parameter from the phase transition
and refine the dependence of the tuning parameter on the dimensions.
This dependence coincides with the general established rule, to choose
the tuning parameter in the order of the square root of the sparsity,
only if sufficient measurements are present,
but if the number of measurements is close to the optimal number of measurements,
a different rule for the tuning parameter is better suitable.
Further, we will establish that \hrlone{} can be used with
random walk matrices of uniformly at random chosen left regular bipartite graphs
and will prove that the tuning parameter can be chosen in the order of a
constant, independent on all dimensions.
Lastly, we will verify our theoretical results by short numerical tests.
%We will not consider any stronger null space properties for the least absolute %deviation LASSO
%as in \cite{LAD_sparse_noise_01} or \cite{LAD_2level_01}.
%That means that we do not investigate the property that
%the least absolute deviation LASSO can recover sparse signals exactly even in the
%presence of sparse noise.
\section{Preliminaries}\label{Section:Prelimninaries}
Given a set $C\subset\mathbb{R}^N$ and a function $\fFunc{}:C\rightarrow\mathbb{R}$
we denote the set of minimizers of $\fFunc{}$ on $C$ by
\begin{align*}
	\argmin{\zVec\in C}\fFunc{\zVec}:=\left\{\zVec\in C:\fFunc{\zVec}=\inf_{\zVec'\in C}\fFunc{\zVec'}\right\}.
\end{align*}
For $q\in\left[1,\infty\right)$ and $\xVec\in\mathbb{R}^N$ we denote the $\ell_q$-norm by
$\norm{\xVec}_q:=\left(\sum_{n=1}^N \abs{\xVec_n}^q\right)^\frac{1}{q}$
and the $\ell_\infty$-norm by
$\norm{\xVec}_\infty:=\sup_{n\in\SetOf{N}}\abs{\xVec_n}$.
By
$\UnitSphere{N}{\ell_q}:=\left\{\zVec\in\mathbb{R}^N:\norm{\zVec}_q=1\right\}$
we denote the unit sphere in the $\ell_q$-norm.
If $q=\infty$, we use the notation $\frac{1}{q}:=q^{-1}:=0$.
For a number $N\in\mathbb{N}$ we denote $\SetOf{N}:=\left\{1,\dots,N\right\}$.
For a set $T\subset\SetOf{N}$ we denote the number of elements in $T$ by $\SetSize{T}$.
We also write $\SetSize{T}\leq S$ to mean $T\subset\SetOf{N}$ and $\SetSize{T}\leq S$.
For $T\subset\left[N\right]$ we denote the projection
onto the subspace $\left\{\zVec\in\mathbb{R}^N:\supp{\zVec}\subset T\right\}$
by $\ProjToIndex{T}{\cdot}$. For $\xVec\in\mathbb{R}^N$ it is given by
\begin{align*}
	\ProjToIndex{T}{\xVec}
	:=\begin{Bmatrix}
		\left(\ProjToIndex{T}{\xVec}\right)_n=\xVec_n & \TextIf n\in T
		\\\left(\ProjToIndex{T}{\xVec}\right)_n=0 & \TextIf n \notin T
	\end{Bmatrix}.
\end{align*}
We call $\AMat\in\mathbb{R}^{M\times N}$ a measurement matrix and any map
$\Decoder{}:\mathbb{R}^M\rightarrow\mathbb{R}^N$ a decoder.
A vector $\xVec\in\mathbb{R}^N$ is called signal,
and any $\yVec\in\mathbb{R}^M$ is called measurement.
Given all these we call $\eVec:=\yVec-\AMat\xVec$ the noise.
A signal $\xVec$ is called $S$-sparse if $\norm{\xVec}_0:=\SetSize{\supp{\xVec}}\leq S$.
The set of $S$-sparse signals is denoted by
$\Sigma_S:=\left\{\zVec\in\mathbb{R}^N:\norm{\zVec}_0\leq S\right\}$.
Given some $q\in\left[1,\infty\right]$ and $S\in\SetOf{N}$
the compressibility of a signal $\xVec$ is measured by
\begin{align*}
	d_q\left(\xVec,\Sigma_S\right)=\inf_{\zVec\in\Sigma_S}\norm{\xVec-\zVec}_q,
\end{align*}
and describes how close the signal is to being $S$-sparse.
Motivated by \cite{InstanceOptimality} we would like to get decoders
that are able to bound the estimation error by a linear function in the
uncertainties $\compr{\xVec}{S}$ and $\normRHS{\eVec}$,
for some norm $\normRHS{\cdot}$ on $\mathbb{R}^M$.
\begin{Definition}
	Let $S\in\SetOf{N}$, $q\in\left[1,\infty\right]$ and $\normRHS{\cdot}$ be any norm on $\mathbb{R}^M$.
	Let $\AMat\in\mathbb{R}^{M\times N}$ and $\Decoder{}:\mathbb{R}^M\rightarrow\mathbb{R}^N$.
	If there exist $C,D\in\left[0,\infty\right)$ such that
	\begin{align*}
		\norm{\Decoder{\yVec}-\xVec}_q
		\leq CS^{\frac{1}{q}-1}\compr{\xVec}{S}
			+D\normRHS{\yVec-\AMat\xVec}
		\TextForAll\xVec\in\mathbb{R}^N,\yVec\in\mathbb{R}^M
	\end{align*}
	holds true, then we say $\Decoder{}$ is an
	$\ell_q$-stable robust decoder of order $S$ with respect to $\normRHS{\cdot}$
	for $\AMat$ with constants $C$ and $D$.
	We shorten this to $\ell_q$-SRD of order $S$ wrt $\normRHS{\cdot}$ for $\AMat$
	with constants $C$ and $D$,
	and omit parts of it in case they are not of importance or clear from context.
\end{Definition}
To find such decoders the measurement matrix needs to obey certain properties.
We consider a robust null space property introduced in \cite[Definition~4.21]{IntroductionCS}.
\begin{Definition}[ {\cite[Definition~4.21]{IntroductionCS}} ]\label{Definition:SRNSP}
	Let $S\in\SetOf{N}$, $q\in\left[1,\infty\right]$ and $\normRHS{\cdot}$ be any norm on $\mathbb{R}^M$ and
	$\AMat\in\mathbb{R}^{M\times N}$.
	If there exist $\rho\in\left[0,1\right)$ and $\tau\in\left[0,\infty\right)$ such that
	\begin{align*}
		\norm{\ProjToIndex{T}{\vVec}}_q
		\leq \rho S^{\frac{1}{q}-1}\norm{\ProjToIndex{T^c}{\vVec}}_1
			+\tau\normRHS{\AMat\vVec}
		\TextForAll \SetSize{T}\leq S \TextAnd \vVec\in\mathbb{R}^N
	\end{align*}
	holds true, then we say $\AMat$ has the
	$\ell_q$-robust null space property of order $S$ with
	respect to $\normRHS{\cdot}$ with constants $\rho$ and $\tau$.
	We shorten this to
	$\ell_q$-RNSP of order $S$ wrt $\normRHS{\cdot}$
	with constants $\rho$ and $\tau$,
	and omit parts of it in case they are not of importance or clear from context.
	$\rho$ is called stableness constant and $\tau$ is called robustness constant.
\end{Definition}
It is well known that certain decoders (basis pursuit denoising, $\ell_1$-norm constrained least residual, LASSO) obey robust recovery guarantees
if the measurement matrix obeys an RNSP
\cite[Theorem~4.22]{IntroductionCS}\cite[Theorem~11.1]{Tib_LASSO}\cite[Theorem~11.1]{Tib_LASSO}.
However, these fail to define an SRD for $\AMat$,
since they rely on some form of a priori knowledge to achieve these bounds.
Only under the additional assumption of a quotient property, the basis pursuit defines
an SRD for $\AMat$ \cite{QuotientProperty}.\\
We introduce the pth root LASSO (\hrlone{}) as follows:
Let $\normRHS{\cdot}$ be any norm on $\mathbb{R}^M$ and $\hrloneParam\in\left[0,\infty\right)$.
Then \hrlone{} with input $\yVec\in\mathbb{R}^M$ and $\AMat\in\mathbb{R}^{M\times N}$ is the optimization problem
\begin{align}\label{Problem:HRL1}
	\tag{$\hrlone_{\hrloneParam}$}
	\argmin{\zVec\in\mathbb{R}^n}
		\norm{\zVec}_1+\hrloneParam\normRHS{\yVec-\AMat\zVec}.
\end{align}
\hrlone{} has been studied in the case $\normRHS{\cdot}=\norm{\cdot}_2$ under the name
square root LASSO
\cite{SQLIntroduction,SQL_01,SQL_02,SQL_03,SQL_04,SQL_05,
	SQL_alteration_01,SQL_alteration_02}
and in the case $\normRHS{\cdot}=\norm{\cdot}_1$ under the name least absolute deviation LASSO
\cite{LAD_01,LAD_02,LAD_05,LAD_sparse_noise_01,LAD_sparse_noise_02,
	LAD_sparse_noise_03,LAD_sparse_noise_04,LAD_sparse_noise_05,
	LAD_sparse_noise_06,LAD_alteration_01,LAD_alteration_02,
	LAD_2level_01,LAD_2level_02}.
The first recovery guarantee of the square root LASSO has been presented in \cite{SQLIntroduction}.
To obtain recovery guarantees the authors assumed a compatibility condition (or sometimes restricted eigenvalue condition) and other minor conditions.
Note that by \thref{Proposition:$S$-CC<=>$S$-SRNSP} the compatibility condition is equivalent to the corresponding RNSP, and thus
our work is a generalization of \cite{SQLIntroduction}.
Relations of the compatibility condition to other conditions for LASSO can be found in \cite{CompatibilityCondition}.\\
One of the first recovery guarantee of the least absolute deviation LASSO was presented in
\cite{LAD_01}.
Closest to our work and results is \cite{LAD_2level_02}, although it considers
a certain structured sparsity model. In particular,
under the assumption of a $2$-level robust null space property (definition see
\cite{LAD_01}) they prove
that the least absolute deviation LASSO
can recover sparse signals exact, even in the presence of sparse noise.
There are numerous works that consider
exact recovery in the presence of sparse noise under such stronger conditions
\cite{LAD_sparse_noise_02,LAD_sparse_noise_03,LAD_sparse_noise_04,
	LAD_sparse_noise_05,LAD_sparse_noise_06,LAD_2level_01,LAD_2level_02}.
We will not consider such stronger requirements, but only use the weakest requirements possible,
i.e., the RNSP.
%In particular, under a structured sparsity model it was proven in
%\cite[Proof of Theorem~5.14]{WLAD-LASSO} that \hrlone{} has robust
%recovery guarantees if $\hrloneParam$ is larger than a threshold depending solely on
%$\AMat$.\\
	
	%New Chapter
\section{Theoretic Results for \hrlone{}}\label{Section:Theory}
%In this section we will prove theoretical recovery guarantees for \hrlone{}.
\subsection{Recovery Guarantee for \hrlone{}}
\label{Subsection:Theory_RecoveryGuarantees}
The main statement here is that robust recovery independent on the noise power is possible,
as long as $\hrloneParam$ is above a threshold.
This threshold is a smooth function in the constants
$\tau$ and $S$ of the RNSP.
\begin{Theorem}[Recovery with \hrlone{}]\label{Theorem:HRL1Recovery:Lambda>TauSq}
	Let $\AMat\in\mathbb{R}^{M\times N}$ have $\ell_q$-RNSP of order $S$ wrt $\normRHS{\cdot}$
	with constants $\rho$ and $\tau$. Let
	\begin{align}\label{Equation:thresh_hold:Theorem:HRL1Recovery:Lambda>TauSq}
		\hrloneParam>\tau S^{1-\frac{1}{q}}
		\text{ and set }
		\rho'
		=
		\begin{Bmatrix}
			\max\left\{\rho,\frac{1}{2\hrloneParam}\tau S^{1-\frac{1}{q}}\left(1+\sqrt{8\frac{\hrloneParam}{\tau}S^{\frac{1}{q}-1}+1}\right)-1\right\}
			& \TextIf & q\in\left(1,\infty\right]
			\\
			\max\left\{\rho,\frac{2}{\hrloneParam}\tau-1\right\}
			& \TextIf & q=1
		\end{Bmatrix}.
	\end{align}
	Then $\rho'\in\left[\rho,1\right)$ and for all $\xVec\in\mathbb{R}^N$ and $\yVec\in\mathbb{R}^M$
	any minimizer $\xVec^\#$ of
	\begin{align*}
		\tempmin{\zVec\in\mathbb{R}^n}
			\norm{\zVec}_1
			+\hrloneParam\normRHS{\yVec-\AMat\zVec}
	\end{align*}
	obeys
	\begin{align}\label{Equation:ErrorBound:Theorem:HRL1Recovery:Lambda>TauSq}
		\norm{\xVec^\#-\xVec}_q
		\leq\begin{Bmatrix}
			2\frac{\left(1+\rho'\right)^2}{1-\rho'}S^{\frac{1}{q}-1}\compr{\xVec}{S}
				+\left(\frac{3+\rho'}{1-\rho'}\tau
				+\frac{\left(1+\rho'\right)^2}{1-\rho'}S^{\frac{1}{q}-1}\hrloneParam\right)\normRHS{\yVec-\AMat\xVec}
			& \TextIf & q\in\left(1,\infty\right]
			\\
			2\frac{1+\rho'}{1-\rho'}\compr{\xVec}{S}
				+\left(\frac{2}{1-\rho'}\tau+\frac{1+\rho'}{1-\rho'}\hrloneParam\right)\normRHS{\yVec-\AMat\xVec}
			& \TextIf & q=1
		\end{Bmatrix}.
	\end{align}
	In particular, $\rho'=\rho$ happens if and only if
	\begin{align}\label{Equation:EQ1:Subsection:Theory_RecoveryGuarantees}
		\hrloneParam\geq\begin{Bmatrix}
			\frac{3+\rho}{\left(1+\rho\right)^2}\tau S^{1-\frac{1}{q}} & \TextIf & q\in\left(1,\infty\right]
			\\\frac{2}{1+\rho}\tau & \TextIf & q=1
		\end{Bmatrix}.
\end{align}
\end{Theorem}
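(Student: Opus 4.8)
The plan is to combine the robust null space estimate that Definition~\ref{Definition:SRNSP} forces on the error vector with the first order optimality of the minimizer, and then to tune $\rho'$ so that one non-negative error contribution can simply be discarded. Throughout set $\vVec:=\xVec^\#-\xVec$ and $\eVec:=\yVec-\AMat\xVec$, let $T$ index the $S$ largest entries of $\xVec$ in modulus so that $\norm{\ProjToIndex{T^c}{\xVec}}_1=\compr{\xVec}{S}$, and observe that an $\ell_q$-RNSP with constants $\rho$ and $\tau$ remains one with constants $\rho'$ and $\tau$ for every $\rho'\in\left[\rho,1\right)$, since enlarging the stableness constant only enlarges the right hand side. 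I fix such a $\rho'$, to be pinned down at the end, and write $\alpha,\beta$ for the corresponding recovery constants introduced below.

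First I would record the estimate produced by the RNSP. Decomposing $\ProjToIndex{T^c}{\vVec}$ into successive blocks of size $S$ ordered by decreasing modulus and summing the defining inequality of the $\ell_q$-RNSP over the blocks, which is the standard computation behind the recovery guarantees in \cite{IntroductionCS} and reduces to a single cone inequality when $q=1$, gives
\begin{align*}
	\norm{\vVec}_q
	\leq\alpha\left(\norm{\xVec^\#}_1-\norm{\xVec}_1+2\compr{\xVec}{S}\right)+\beta\normRHS{\AMat\vVec},
\end{align*}
where $\alpha=\frac{\left(1+\rho'\right)^2}{1-\rho'}S^{\frac{1}{q}-1}$ and $\beta=\frac{3+\rho'}{1-\rho'}\tau$ when $q\in\left(1,\infty\right]$, while $\alpha=\frac{1+\rho'}{1-\rho'}$ and $\beta=\frac{2}{1-\rho'}\tau$ when $q=1$. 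The point of this inequality is that it exposes exactly the two quantities the \hrlone{} objective controls, namely the $\ell_1$-gap $\norm{\xVec^\#}_1-\norm{\xVec}_1$ and the residual $\normRHS{\AMat\vVec}$.

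Next I would insert optimality. Writing $r:=\normRHS{\yVec-\AMat\xVec^\#}\geq0$ and comparing the objective at $\xVec^\#$ with its value at $\xVec$ gives $\norm{\xVec^\#}_1-\norm{\xVec}_1\leq\hrloneParam\left(\normRHS{\eVec}-r\right)$, while the triangle inequality gives $\normRHS{\AMat\vVec}\leq r+\normRHS{\eVec}$. Substituting both and collecting terms produces
\begin{align*}
	\norm{\vVec}_q
	\leq2\alpha\compr{\xVec}{S}+\left(\alpha\hrloneParam+\beta\right)\normRHS{\eVec}+\left(\beta-\alpha\hrloneParam\right)r.
\end{align*}
Since $r\geq0$, this is precisely \refP{Equation:ErrorBound:Theorem:HRL1Recovery:Lambda>TauSq} as soon as $\beta-\alpha\hrloneParam\leq0$; the entire purpose of $\rho'$ is to make this coefficient non-positive while keeping $\rho'$ as small as possible, so that $\alpha$ and $\beta$, which increase in $\rho'$, stay small.

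Finally I would determine $\rho'$. For $q\in\left(1,\infty\right]$ the requirement $\beta-\alpha\hrloneParam\leq0$ is $g\left(\rho'\right)\leq t$ with $g\left(\rho'\right)=\frac{3+\rho'}{\left(1+\rho'\right)^2}$ and $t:=\frac{\hrloneParam}{\tau}S^{\frac{1}{q}-1}$; one checks that $g$ is strictly decreasing on $\left[0,1\right)$ onto $\left(1,3\right]$ and that the threshold $\hrloneParam>\tau S^{1-\frac{1}{q}}$ is exactly $t>1$. Solving the quadratic $t\left(1+\rho'\right)^2=3+\rho'$ gives the relevant root $\frac{1+\sqrt{8t+1}}{2t}-1$, which is the second entry in the maximum defining $\rho'$; taking the maximum with $\rho$ enforces $\rho'\geq\rho$ and, by monotonicity of $g$, still yields $\beta-\alpha\hrloneParam\leq0$, and together with $t>1$ it gives $\rho'\in\left[\rho,1\right)$. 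The case $q=1$ is the same with the linear relation $\frac{2\tau}{\hrloneParam}-1\leq\rho'$. The characterization \refP{Equation:EQ1:Subsection:Theory_RecoveryGuarantees} is then immediate: as $g$ decreases, $\rho'=\rho$ holds exactly when $g\left(\rho\right)\leq t$, i.e. when $\hrloneParam\geq\frac{3+\rho}{\left(1+\rho\right)^2}\tau S^{1-\frac{1}{q}}$, and analogously $\hrloneParam\geq\frac{2}{1+\rho}\tau$ for $q=1$. I expect the crux to be the sharp noise coefficient $\alpha\hrloneParam+\beta$: obtaining it rather than a lossier multiple forces me to retain the minimizer's residual $r$ as an independent non-negative variable and discard it only after its coefficient is shown to vanish, and to identify $\rho'$ as exactly the sign-change root of $g\left(\rho'\right)=t$.
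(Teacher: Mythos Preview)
Your proposal is correct and follows essentially the same route as the paper. The paper isolates the case $\hrloneParam\geq\beta/\alpha$ as a separate auxiliary statement (\thref{Theorem:HRL1Recovery}), invoking \thref{Lemma:SRNSPConsequence} (the cited \cite[Theorems~4.20 and 4.25]{IntroductionCS}) for the $\alpha,\beta$ inequality rather than sketching the block decomposition inline, and then proves the main theorem by observing that the RNSP persists with the larger stableness constant $\rho'$ and applying the auxiliary result; your version merges these two steps and keeps the minimizer's residual $r$ as an explicit non-negative variable whose coefficient $\beta-\alpha\hrloneParam$ is forced non-positive, which is the same condition the paper imposes when it replaces $\frac{\beta}{\alpha}\normRHS{\yVec-\AMat\xVec^\#}$ by $\hrloneParam\normRHS{\yVec-\AMat\xVec^\#}$ before using optimality. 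The analysis of $\rho'$ via the strictly decreasing map $g\left(\rho'\right)=\frac{3+\rho'}{\left(1+\rho'\right)^2}$ and its explicit inverse $r\mapsto\frac{1+\sqrt{8r+1}}{2r}-1$ is identical to the paper's.
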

%\begin{proof}
	The result is proven in \refP{Subsection:Theory_RecoveryGuarantees:Proofs}.
%\end{proof}
Note that $S$ from the threshold is not $\norm{\xVec}_0$
but the order of the RNSP of the matrix $\AMat$,
and exact recovery in the absence of noise is possible for all $\xVec$ with $\norm{\xVec}_0\leq S$.
The case $q=1$ is obviously interesting since the sparsity disappears from the condition for $\hrloneParam$.
%If the signal $\xVec$ is $S$-sparse, the first error term is vanishing. If the measurements are
%noiseless, i.e. $\yVec=\AMat\xVec$, the second error term is vanishing. Thus, we can
%recover all $S$-sparse vectors $\xVec$ from noiseless measurements $\yVec=\AMat\xVec$ exactly,
%if $\hrloneParam$ is chosen sufficiently large.
%Note that $\rho'=\rho$ happens if and only if
%\begin{align}\label{Equation:EQ1:Subsection:Theory_RecoveryGuarantees}
%	\hrloneParam\geq\begin{Bmatrix}
%		\frac{3+\rho}{\left(1+\rho\right)^2}\tau S^{1-\frac{1}{q}} & \TextIf & q\in\left(1,\infty\right]
%		\\\frac{2}{1+\rho}\tau & \TextIf & q=1
%	\end{Bmatrix}.
%\end{align}
If $\hrloneParam{}$ obeys \refP{Equation:EQ1:Subsection:Theory_RecoveryGuarantees} with
equality, then the bound \refP{Equation:ErrorBound:Theorem:HRL1Recovery:Lambda>TauSq}
is the same error bound as the so far best known result for the optimally tuned basis pursuit denoising
and $\ell_1$-norm constrained least residual \cite[Theorem~4.22]{IntroductionCS}.\\
At this point we have three open problems to address.
The first problem is: The error bound of the recovery guarantee scales with $\hrloneParam$.
What happens when $\hrloneParam$ converges to infinity?
%This question will be answered
This problem will be studied
in \refP{Subsection:Theory_ConvergenceToBP}.
The second problem is: Is there a recovery guarantee for $\hrloneParam\leq\tau S^{1-\frac{1}{q}}$?
This problem is related to the optimality of the recovery guarantee and
will be answered in \refP{Subsection:Theory_RecoveryEquivalence}.
The third problem is: Given a choice on $\hrloneParam{}$, how do we determine whether or not
the threshold $\hrloneParam>\tau S^{1-\frac{1}{q}}$ is fulfilled?
This problem will be answered in \refP{Section:NotNPHard}.
In particular, we will study the threshold explicitly for Gaussian matrices in
\refP{Subsection:NotNPHard_GaussianMatrices}.
Using the threshold 
\refP{Equation:thresh_hold:Theorem:HRL1Recovery:Lambda>TauSq} or
\refP{Equation:EQ1:Subsection:Theory_RecoveryGuarantees} with $q=2$
suggests to choose $\hrloneParam\asymp \sqrt{S}$.
A similar argument has been used in \cite[Section~5.1.4]{LAD_2level_02}.
However, this ignores the dependence of $\tau$ on the dimensions $M,N,S$.
We will give a more detailed analysis by estimating
$\tau$ from the phase transition inequality with Gordon's escape through the mesh \cite{gordon}.
From this it will follow that $\hrloneParam\asymp \sqrt{S}$
is only valid if the number of measurements is suboptimal
and close to the optimal number of measurements the tuning parameter scales differently.
For the exact results we refer to \thref{Theorem:Gaussian=>NSP}
and the discussion afterwards.
%In particular we will obtain that
%$\tau\asymp \frac{\sqrt{M}}{\sqrt{M}-C(\rho)\sqrt{S\Ln{\ExpE\frac{N}{S}}}}$
%and thus $\hrloneParam\asymp C'(\rho)\frac{\sqrt{MS}}{\sqrt{M}-C(\rho)\sqrt{S\Ln{\ExpE\frac{N}{S}}}}$.
%We asked ourself, if this example is a special case or if for every matrix
%$\hrlone=\tau S^{1-\frac{1}{q}}$ always fails to recover at least one sparse signal.
%The answer is that neither is true. There are both cases where $\hrlone=\tau S^{1-\frac{1}{q}}$
%gives recovery guarantees and cases where $\hrlone=\tau S^{1-\frac{1}{q}}$ does not recover all sparse signals.
%This problem is closely related to the question whether or not there exists a smallest robustness constant.
%Surprisingly this question is non trivial and will be discussed in \refP{Subsection:NullSpaceProperties:Geometry}.
%However, in the general case the set of parameters $\hrloneParam$ which allow for recovery of all sparse vectors is open,
%and we will prove this in \refP{Section:Theory_RecoveryEquivalence}.
Before we proceed with these problems, we formulate one result.
\begin{Corollary}\label{Corollary:S-SRNSP=>S-SRD}
	Let $\AMat\in\mathbb{R}^{M\times N}$ have $\ell_q$-RNSP of order $S$ wrt $\normRHS{\cdot}$
	with constants $\rho$ and $\tau$.
	Then with \hrlone{} there exists an $\ell_q$-SRD of order $S$ wrt $\normRHS{\cdot}$ for $\AMat$
	with constants $C=2\frac{\left(1+\rho\right)^2}{1-\rho}$ and $D=2\frac{3+\rho}{1-\rho}\tau$.
	In particular, if $q=1$, we get the improved constants $C=2\frac{1+\rho}{1-\rho}$ and $D=\frac{4}{1-\rho}\tau$.
\end{Corollary}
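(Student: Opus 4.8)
The plan is to read the corollary directly off \thref{Theorem:HRL1Recovery:Lambda>TauSq}: the statement is exactly that theorem specialized to the smallest value of $\hrloneParam$ that still forces $\rho'=\rho$. Concretely, I would fix $\hrloneParam$ equal to the right-hand side of \refP{Equation:EQ1:Subsection:Theory_RecoveryGuarantees}, i.e.\ $\hrloneParam=\frac{3+\rho}{\left(1+\rho\right)^2}\tau S^{1-\frac1q}$ for $q\in\left(1,\infty\right]$ and $\hrloneParam=\frac{2}{1+\rho}\tau$ for $q=1$, and define the decoder $\Decoder{}$ by letting $\Decoder{\yVec}$ be any minimizer of \hrlone{} for this $\hrloneParam$. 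Such a minimizer exists because the objective $\norm{\cdot}_1+\hrloneParam\normRHS{\yVec-\AMat\,\cdot}$ is continuous and coercive. This particular choice is natural because, once $\rho'=\rho$, the coefficient $D$ in \refP{Equation:ErrorBound:Theorem:HRL1Recovery:Lambda>TauSq} is increasing in $\hrloneParam$, so the boundary value of \refP{Equation:EQ1:Subsection:Theory_RecoveryGuarantees} is the admissible choice that minimizes it.

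Before invoking the theorem I would verify that this $\hrloneParam$ respects the strict threshold \refP{Equation:thresh_hold:Theorem:HRL1Recovery:Lambda>TauSq}. For $q\in\left(1,\infty\right]$ this reduces to $\frac{3+\rho}{\left(1+\rho\right)^2}>1$, equivalently $\left(\rho+2\right)\left(\rho-1\right)<0$, which holds for every $\rho\in\left[0,1\right)$; for $q=1$ it reduces to $\frac{2}{1+\rho}>1$, i.e.\ $\rho<1$, again true. Since $\hrloneParam$ equals the right-hand side of \refP{Equation:EQ1:Subsection:Theory_RecoveryGuarantees}, the ``if and only if'' clause of the theorem yields $\rho'=\rho$, so \refP{Equation:ErrorBound:Theorem:HRL1Recovery:Lambda>TauSq} holds with $\rho'$ replaced by $\rho$ for all $\xVec\in\mathbb{R}^N$ and $\yVec\in\mathbb{R}^M$.

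It then remains to read off the two coefficients. The coefficient of $S^{\frac1q-1}\compr{\xVec}{S}$ (resp.\ of $\compr{\xVec}{S}$ when $q=1$) is already $C=2\frac{\left(1+\rho\right)^2}{1-\rho}$ (resp.\ $C=2\frac{1+\rho}{1-\rho}$), as claimed. For the coefficient $D$ of $\normRHS{\yVec-\AMat\xVec}$ I would substitute the chosen $\hrloneParam$ into the bracketed term. For $q\in\left(1,\infty\right]$ the factor $\left(1+\rho\right)^2$ together with the powers $S^{\frac1q-1}S^{1-\frac1q}=1$ cancel, turning $\frac{\left(1+\rho\right)^2}{1-\rho}S^{\frac1q-1}\hrloneParam$ into $\frac{3+\rho}{1-\rho}\tau$, whence $D=\frac{3+\rho}{1-\rho}\tau+\frac{3+\rho}{1-\rho}\tau=2\frac{3+\rho}{1-\rho}\tau$. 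For $q=1$ the factor $1+\rho$ cancels, turning $\frac{1+\rho}{1-\rho}\hrloneParam$ into $\frac{2}{1-\rho}\tau$, whence $D=\frac{2}{1-\rho}\tau+\frac{2}{1-\rho}\tau=\frac{4}{1-\rho}\tau$. These match the stated constants, and since $C,D$ depend neither on $\xVec$ nor on $\yVec$, the bound is precisely of the SRD form.

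I expect no genuine obstacle: all the analytic work is carried by \thref{Theorem:HRL1Recovery:Lambda>TauSq}, and what is left is the optimal (smallest admissible) choice of $\hrloneParam$ together with two short algebraic cancellations and the threshold check of the second paragraph. The only point meriting a word of care is the well-definedness of the decoder. The SRD definition asks only for a map $\Decoder{}:\mathbb{R}^M\to\mathbb{R}^N$ satisfying the inequality, and the theorem guarantees the inequality for \emph{every} minimizer; hence any selection assigning to each $\yVec$ one of its minimizers works, and no additional regularity or uniqueness of the minimizer is required.
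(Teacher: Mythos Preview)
Your proposal is correct and is essentially the paper's own argument. The only cosmetic difference is that the paper invokes the auxiliary \thref{Theorem:HRL1Recovery} (which already has the non-strict condition $\hrloneParam\geq\frac{3+\rho}{(1+\rho)^2}\tau S^{1-1/q}$, resp.\ $\hrloneParam\geq\frac{2}{1+\rho}\tau$) and so does not need your threshold check, whereas you route through \thref{Theorem:HRL1Recovery:Lambda>TauSq} together with its ``if and only if'' clause to force $\rho'=\rho$; the choice of $\hrloneParam$ and the subsequent algebraic cancellations are identical.
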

%\begin{proof}
	The result is proven in \refP{Subsection:Theory_RecoveryGuarantees:Proofs}.
%\end{proof}
%In the proof of \thref{Theorem:HRL1ImpRecovery} we used three inequalities.
%At first \cite[Theorem~4.20]{IntroductionCS}, then the triangle inequality
%$\normRHS{\AMat\xVec-\AMat\xVec^\#}\leq\norm{\yVec-\AMat\xVec}+\normRHS{\yVec-\AMat\xVec^\#}$,
%and lastly the minimization property of $\xVec^\#$.
%For only once in this work suppose that $\rho$ and $\tau$ are tight $S$-SRNSP constants for $\AMat$ and let $\xVec$ be any vector that suffices
%	\begin{align}
%		\norm{\ProjToIndex{T}{\vVec}}_q
%		\leq \rho S^{\frac{1}{q}-1}\norm{\ProjToIndex{T^c}{\vVec}}_1
%			+\tau\normRHS{\AMat\vVec}
%		\TextForAll \SetSize{T}\leq S \TextAnd \vVec\in\mathbb{R}^N
%	\end{align}
%
%
%We consider the case noiseless and sparse case, i.e. let $\yVec=\AMat\xVec$ and $\xVec$ be any $S$-sparse vector.
%We obtain from \thref{Theorem:HRL1ImpRecovery} that $\xVec^\#=\xVec$.
%As a consequence the inequality from the minimization property and the triangle inquality are equalities.
%Thus we have used only one inequality in the prove of \thref{Theorem:HRL1ImpRecovery}, namely
%\cite[Theorem~4.20]{IntroductionCS}. But this is an equivalent condition.
%
%Note that the inequality from \cite[Theorem~4.20]{IntroductionCS} is an equivalent condition to the $S$-SRNSP.
\subsection{Asymptotic Analysis of \hrlone{} for $\hrloneParam\rightarrow\infty$}\label{Subsection:Theory_ConvergenceToBP}
Heuristically, if $\hrloneParam$ goes to infinity,
the second summand of \refP{Problem:HRL1} becomes more dominant and we expect that
the minimizer needs to be closer to a minimizer of $\tempmin{\zVec\in\mathbb{R}^N}\normRHS{\yVec-\AMat\zVec}$.
\refP{Problem:HRL1} then basically only minimizes $\norm{\zVec}_1$ under the restriction that $\normRHS{\yVec-\AMat\zVec}$
is almost minimal.
%in particular if $\AMat$ is surjective \refP{Problem:HRL1} becomes \refP{Problem:BP} for $\hrloneParam$ going to infinity.
%However, this is in general not true.
In this section we will prove that, indeed, if $\hrloneParam$ goes to infinity, the minimizers of \refP{Problem:HRL1}
move closer to the minimizers of
\begin{align}\label{Problem:BPImp}\tag{$BPImp$}
	\tempmin{\zVec\in\argmin{\zVec'\in\mathbb{R}^N}\normRHS{\AMat\zVec'-\yVec}}\norm{\zVec}_1.
\end{align}
%Note that if $\normRHS{\cdot}=\norm{\cdot}_2$, then the constraint is equivalent to
%$\AMat^T\AMat\zVec=\AMat^T\yVec$.
Note that, if $\yVec\in\Range{\AMat}$,
this problem is the basis pursuit and \refP{Problem:HRL1} can be used to approximate a minimizer of basis pursuit.
%Also note that \refP{Problem:BPImp} actually recovers sparse signals from noiseless measurements if $\AMat$ has a $S$-SRNSP,
%since in this case the constraint becomes $\AMat\zVec=\yVec$.\\
%Let $\xVec\in\mathbb{R}^N$, set $\yVec:=\AMat\xVec$ and let $\xVec^\#$ be the minimizer of \refP{Problem:BPImp}.
%Then $\min_{\zVec'\in\mathbb{R}^N}\normRHS{\AMat\zVec-\yVec}=0$
%and $\xVec$ is a point of feasibility of \refP{Problem:BPImp} and hence $\norm{\xVec^\#}_1\leq\norm{\xVec}_1$.
%Also note that the minimizer $\xVec^\#$ is feasible and thus has to obey $\AMat\xVec^\#-\yVec=0$.
%By \thref{Lemma:SRNSPConsequence} it follows that
%\begin{align}
%		\norm{\xVec^\#-\xVec}_q
%		\leq& \frac{\left(1+\rho\right)^2}{1-\rho}S^{\frac{1}{q}-1}
%				\left(\norm{\xVec^\#}_1-\norm{\xVec}_1+2\sigma_S\left(\xVec\right)_1\right)
%			+\frac{3+\rho}{1-\rho}\tau\normRHS{\AMat\xVec-\AMat\xVec^\#}
%		\\\leq& 2\frac{\left(1+\rho\right)^2}{1-\rho}S^{\frac{1}{q}-1}
%				\sigma_S\left(\xVec\right)_1
%			+\frac{3+\rho}{1-\rho}\tau\normRHS{\yVec-\yVec}
%		\leq 2\frac{\left(1+\rho\right)^2}{1-\rho}S^{\frac{1}{q}-1}\sigma_S\left(\xVec\right)_1.
%\end{align}
Further, we get a verifiable condition if a minimizer of \refP{Problem:HRL1} is also an optimizer of \refP{Problem:BPImp}.
\begin{Theorem}\label{Theorem:ConvergenceHRL1Param}
	Let $\normRHS{\cdot}$ be any norm on $\mathbb{R}^M$, $\AMat\in\mathbb{R}^{M\times N}$
	and $\yVec\in\mathbb{R}^M$.
	For every $\hrloneParam\in\left[0,\infty\right)$ let $\xVec^\hrloneParam$ be any minimizer of 
	\begin{align*}
		\tempmin{\zVec\in\mathbb{R}^N}
			\norm{\zVec}_1
			+\hrloneParam\normRHS{\yVec-\AMat\zVec}.
	\end{align*}
	Then we have the following results:
	\begin{itemize}
		\item[(1)] We have two stopping criteria:\\
			$\xVec^\hrloneParam$ is a minimizer of \refP{Problem:BPImp}
			if and only if
			$
				\norm{\xVec^\hrloneParam}_1=\inf_{\zVec\in\argmin{\zVec'\in\mathbb{R}^N}\normRHS{\AMat\zVec'-\yVec}}\norm{\zVec}_1
			$, and\\
			$\xVec^\hrloneParam$ is a minimizer of \refP{Problem:BPImp}
			if and only if
			$
				\normRHS{\AMat\xVec^\hrloneParam-\yVec}=\inf_{\zVec'\in\mathbb{R}^N}\normRHS{\AMat\zVec'-\yVec}
			$.
		\item[(2)] We have the convergence of the stopping criteria
			\begin{align}
				\label{Equation:Convergence:StoppingCriteria:Theorem:ConvergenceHRL1Param}
				\norm{\xVec^\hrloneParam}_1&\nearrow \inf_{\zVec\in\argmin{\zVec'\in\mathbb{R}^N}\normRHS{\AMat\zVec'-\yVec}}\norm{\zVec}_1
				\TextAs
				\hrloneParam\nearrow\infty \TextAnd
				\\\label{Equation:Convergence:StoppingCriteriaResidual:Theorem:ConvergenceHRL1Param}
				\normRHS{\AMat\xVec^\hrloneParam-\yVec}&\searrow \inf_{\zVec'\in\mathbb{R}^N}\normRHS{\AMat\zVec'-\yVec}
				\TextAs
				\hrloneParam\nearrow\infty
				\text{ with distance bounded by }
				\hrloneParam^{-1}\left(\inf_{\zVec\in\argmin{\zVec'\in\mathbb{R}^N}\normRHS{\AMat\zVec'-\yVec}}\norm{\zVec}_1\right).
			\end{align}
		\item[(3)]
			The sequence $\xVec^\hrloneParam$ converges to the set of minimizers
			of \refP{Problem:BPImp}, meaning that
			\begin{align*}
				\lim_{\hrloneParam\rightarrow\infty}\hspace{5pt}
%				\inf_{\zVec\in\argmin{\zVec\in\argmin{\zVec'\in\mathbb{R}^N}\normRHS{\AMat\zVec-\yVec}}\norm{\zVec}_1}
				\inf_{ \underset{ \text{\refP{Problem:BPImp}} }{ \text{$\zVec$ minimizer of} } }
				\norm{\xVec^\hrloneParam-\zVec}_2
				= 0.
			\end{align*}
%			In particular, there always exists a convergent subsequence $\left(\xVec^{\hrloneParam_n}\right)_{n\in\mathbb{N}}$
%			converging to one minimizer of \refP{Problem:BPImp}
%			and if there is a convergent subsequence it always converges to one minimizer of \refP{Problem:BPImp}.
		\item[(4)]
			Let $\lim_{n\rightarrow\infty}\hrloneParam_n=\infty$ and
			consider the sequence $\left(\xVec^{\hrloneParam_n}\right)_{n\in\mathbb{N}}$.
			If this sequence converges or the minimizer of \refP{Problem:BPImp} is unique,
			the sequence converges to a minimizer of \refP{Problem:BPImp}.
			In particular, the sequence always has a subsequence that converges to a minimizer of \refP{Problem:BPImp}.
		\item[(5)]
			If $\yVec\in\Range{\AMat}$, we have the convergence of optimal values
			\begin{align*}
				\lim_{\hrloneParam\rightarrow\infty}
				\norm{\xVec^\hrloneParam}_1+\hrloneParam\normRHS{\AMat\xVec^\hrloneParam-\yVec}
				=\inf_{\zVec:\AMat\zVec=\yVec}\norm{\zVec}_1.
			\end{align*}
	\end{itemize}
\end{Theorem}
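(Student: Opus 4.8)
The plan is to reduce everything to two elementary facts about the minimizers $\xVec^\hrloneParam$: a monotonicity obtained by playing the optimality of $\xVec^{\hrloneParam_1}$ and $\xVec^{\hrloneParam_2}$ off against each other, and a one-sided bound obtained by testing the objective against a fixed minimizer of \refP{Problem:BPImp}. Throughout write $r_0:=\inf_{\zVec'\in\mathbb{R}^N}\normRHS{\AMat\zVec'-\yVec}$ for the minimal residual, $\mathcal{M}:=\argmin{\zVec'\in\mathbb{R}^N}\normRHS{\AMat\zVec'-\yVec}$ for the set of residual minimizers, and $v_0:=\inf_{\zVec\in\mathcal{M}}\norm{\zVec}_1$ for the optimal value of \refP{Problem:BPImp}; these are exactly the right-hand sides appearing in the statement. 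First I would record that both infima are attained: $\Range{\AMat}$ is a closed subspace, so $r_0$ is attained and $\mathcal{M}$ is non-empty, closed and convex, and since $\norm{\cdot}_1$ is coercive, its infimum over the closed set $\mathcal{M}$ is attained, so $v_0<\infty$ and the minimizer set $\mathcal{S}:=\{\zVec\in\mathcal{M}:\norm{\zVec}_1=v_0\}$ of \refP{Problem:BPImp} is non-empty, closed and convex.

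For the monotonicity, fix $\hrloneParam_1<\hrloneParam_2$ and add the two optimality inequalities $F_{\hrloneParam_1}(\xVec^{\hrloneParam_1})\leq F_{\hrloneParam_1}(\xVec^{\hrloneParam_2})$ and $F_{\hrloneParam_2}(\xVec^{\hrloneParam_2})\leq F_{\hrloneParam_2}(\xVec^{\hrloneParam_1})$, where $F_\hrloneParam(\zVec):=\norm{\zVec}_1+\hrloneParam\normRHS{\yVec-\AMat\zVec}$. The $\norm{\cdot}_1$ terms cancel and one is left with $(\hrloneParam_1-\hrloneParam_2)\normRHS{\yVec-\AMat\xVec^{\hrloneParam_1}}\leq(\hrloneParam_1-\hrloneParam_2)\normRHS{\yVec-\AMat\xVec^{\hrloneParam_2}}$, which gives that $\normRHS{\yVec-\AMat\xVec^\hrloneParam}$ is non-increasing; feeding this back into the first optimality inequality shows $\norm{\xVec^\hrloneParam}_1$ is non-decreasing. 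Testing the objective against a minimizer $\zVec^\ast\in\mathcal{S}$ and using $\normRHS{\yVec-\AMat\xVec^\hrloneParam}\geq r_0$ yields simultaneously $\norm{\xVec^\hrloneParam}_1\leq v_0$ and $\normRHS{\yVec-\AMat\xVec^\hrloneParam}-r_0\leq\hrloneParam^{-1}v_0$; the latter is precisely the rate in \refP{Equation:Convergence:StoppingCriteriaResidual:Theorem:ConvergenceHRL1Param}, and the former together with monotonicity shows that $\norm{\xVec^\hrloneParam}_1$ increases to some limit $\leq v_0$.

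The crux, and the step I expect to be the main obstacle, is upgrading this to $\norm{\xVec^\hrloneParam}_1\nearrow v_0$ and identifying the limit points, which is a compactness argument. Since $\norm{\xVec^\hrloneParam}_1\leq v_0$, any sequence $\xVec^{\hrloneParam_n}$ with $\hrloneParam_n\to\infty$ is bounded, so it has a subsequence converging to some $\bar{\xVec}$. The residual bound forces $\normRHS{\yVec-\AMat\bar{\xVec}}=r_0$, so $\bar{\xVec}\in\mathcal{M}$ and hence $\norm{\bar{\xVec}}_1\geq v_0$; on the other hand $\norm{\bar{\xVec}}_1\leq v_0$ by continuity, so $\norm{\bar{\xVec}}_1=v_0$ and $\bar{\xVec}\in\mathcal{S}$. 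Thus every subsequential limit lies in $\mathcal{S}$ with $\ell_1$-norm $v_0$; since $\norm{\xVec^\hrloneParam}_1$ is monotone, this pins the full limit to $v_0$, giving \refP{Equation:Convergence:StoppingCriteria:Theorem:ConvergenceHRL1Param}. A standard contradiction argument (if $\operatorname{dist}(\xVec^{\hrloneParam_n},\mathcal{S})\geq\varepsilon$ along some $\hrloneParam_n\to\infty$, pass to a convergent subsequence whose limit is in $\mathcal{S}$) then proves (3), and (4) follows immediately: a convergent sequence is its own convergent subsequence and hence has its limit in $\mathcal{S}$, uniqueness of the minimizer turns the set-distance convergence of (3) into genuine convergence, and boundedness always supplies a convergent subsequence.

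Finally the stopping criteria (1) follow from the fixed-$\hrloneParam$ optimality inequality $\norm{\xVec^\hrloneParam}_1+\hrloneParam\normRHS{\yVec-\AMat\xVec^\hrloneParam}\leq v_0+\hrloneParam r_0$: if $\norm{\xVec^\hrloneParam}_1=v_0$ it forces $\normRHS{\yVec-\AMat\xVec^\hrloneParam}\leq r_0$, hence $=r_0$, so $\xVec^\hrloneParam\in\mathcal{M}$ is a minimizer of \refP{Problem:BPImp}; conversely if $\normRHS{\yVec-\AMat\xVec^\hrloneParam}=r_0$ then $\xVec^\hrloneParam\in\mathcal{M}$ gives $\norm{\xVec^\hrloneParam}_1\geq v_0$ while optimality gives $\leq v_0$, so equality holds and again $\xVec^\hrloneParam$ solves \refP{Problem:BPImp}; the reverse implications are immediate from the definition of a minimizer of \refP{Problem:BPImp}. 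For (5), when $\yVec\in\Range{\AMat}$ we have $r_0=0$, so testing against $\zVec^\ast\in\mathcal{S}$ gives $\norm{\xVec^\hrloneParam}_1+\hrloneParam\normRHS{\AMat\xVec^\hrloneParam-\yVec}\leq v_0$, while this value is bounded below by $\norm{\xVec^\hrloneParam}_1\to v_0$; the two bounds squeeze the optimal value to $v_0=\inf_{\zVec:\AMat\zVec=\yVec}\norm{\zVec}_1$.
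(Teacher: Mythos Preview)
Your proposal is correct and follows essentially the same route as the paper: the paper factors the preliminary facts (monotonicity of $\norm{\xVec^\hrloneParam}_1$ and $\normRHS{\AMat\xVec^\hrloneParam-\yVec}$, the bound $\norm{\xVec^\hrloneParam}_1\leq v_0$, the residual rate $\hrloneParam^{-1}v_0$, and the two one-sided implications behind the stopping criteria) into a separate lemma and then runs the identical compactness/subsequence argument you describe for (2)--(5). The only cosmetic difference is that the paper derives the two monotonicities sequentially rather than by your trick of adding the two optimality inequalities, but the content is the same.
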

%\begin{proof}
	The result is proven in \refP{Subsection:Theory_ConvergenceToBP:Proofs}.
%\end{proof}
Surprisingly, the convergence for $\hrloneParam\rightarrow\infty$ occurs at a finite value
whenever $\AMat$ is surjective, since then the operator $\AMat^T$ is bounded below.
\begin{Proposition}[Finite Convergence]\label{Proposition:FiniteConvergence}
	Let $\norm{\cdot}$ be a norm on $\mathbb{R}^M$ with dual norm
	$\normDual{\cdot}:=\sup_{\normRHS{\wVec}\leq 1}\abs{\scprod{\wVec}{\cdot}}$.
	Let $\AMat\in\mathbb{R}^{M\times N}$ be surjective
	and $\hrloneParam>\hrloneParam_\infty:=\sup_{0\neq\wVec\in\mathbb{R}^M}\frac{\normDual{\wVec}}{\norm{\AMat^T\wVec}_\infty}$.
	Then $\hrloneParam_\infty<\infty$.
	Further, for $\yVec\in\mathbb{R}^M$ any minimizer $\xVec^\hrloneParam$ of 
	$
		\tempmin{\zVec\in\mathbb{R}^N}
			\norm{\zVec}_1
			+\hrloneParam\normRHS{\yVec-\AMat\zVec}
	$
	is also a minimizer of
	$
		\tempmin{\zVec\in\mathbb{R}^N:\AMat\zVec=\yVec}\norm{\zVec}_1
	$.
\end{Proposition}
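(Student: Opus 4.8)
The plan is to treat the two assertions separately: first the finiteness $\hrloneParam_\infty<\infty$, then the claim that any \hrlone{} minimizer solves the constrained problem $\min_{\AMat\zVec=\yVec}\norm{\zVec}_1$ once $\hrloneParam>\hrloneParam_\infty$. For finiteness I would use that surjectivity of $\AMat$ is equivalent to injectivity of $\AMat^T$, so that $\wVec\mapsto\norm{\AMat^T\wVec}_\infty$ is a genuine norm on $\mathbb{R}^M$. The quantity defining $\hrloneParam_\infty$ is a quotient of two norms and hence invariant under scaling of $\wVec$, so the supremum may be taken over the compact dual unit sphere $\{\wVec:\normDual{\wVec}=1\}$. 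There the continuous map $\wVec\mapsto\norm{\AMat^T\wVec}_\infty$ is strictly positive, hence attains a positive minimum $m>0$, giving $\hrloneParam_\infty=1/m<\infty$. (Equivalently, one invokes equivalence of norms on the finite-dimensional space $\mathbb{R}^M$.)

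For the second assertion, set $f(\zVec)=\norm{\zVec}_1+\hrloneParam\normRHS{\yVec-\AMat\zVec}$ and use the convex optimality condition $0\in\partial f(\xVec^\hrloneParam)$. Applying the sum rule (both summands are finite convex functions on all of $\mathbb{R}^N$) together with the chain rule for the affine precomposition $\zVec\mapsto\yVec-\AMat\zVec$ produces a subgradient $\uVec$ of $\norm{\cdot}_1$ at $\xVec^\hrloneParam$ and a subgradient $\wVec$ of $\normRHS{\cdot}$ at the point $\yVec-\AMat\xVec^\hrloneParam$ satisfying $\AMat^T\wVec=\hrloneParam^{-1}\uVec$. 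Since every subgradient of the $\ell_1$-norm obeys $\norm{\uVec}_\infty\leq 1$, this gives the bound $\norm{\AMat^T\wVec}_\infty\leq\hrloneParam^{-1}$.

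The crux is then a short contradiction argument. Suppose $\AMat\xVec^\hrloneParam\neq\yVec$, so that $\normRHS{\cdot}$ is being subdifferentiated at the nonzero point $\yVec-\AMat\xVec^\hrloneParam$. For a norm, every subgradient at a nonzero point lies on the dual unit sphere, i.e.\ $\normDual{\wVec}=1$; in particular $\wVec\neq 0$, so it is admissible in the supremum defining $\hrloneParam_\infty$. Combining this with the bound above yields $\hrloneParam_\infty\geq\normDual{\wVec}/\norm{\AMat^T\wVec}_\infty=1/\norm{\AMat^T\wVec}_\infty\geq\hrloneParam$, contradicting $\hrloneParam>\hrloneParam_\infty$. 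Hence $\AMat\xVec^\hrloneParam=\yVec$.

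Finally, feasibility makes the penalty vanish, so the \hrlone{} objective at $\xVec^\hrloneParam$ equals $\norm{\xVec^\hrloneParam}_1$; comparing against any $\zVec$ with $\AMat\zVec=\yVec$ (whose objective value is likewise $\norm{\zVec}_1$) and using global minimality of $\xVec^\hrloneParam$ gives $\norm{\xVec^\hrloneParam}_1\leq\norm{\zVec}_1$, which is exactly the claim. (Alternatively, once $\AMat\xVec^\hrloneParam=\yVec$ one may quote part (1) of \thref{Theorem:ConvergenceHRL1Param}, since $\yVec\in\Range{\AMat}$ forces the residual infimum to be zero.) I expect the main obstacle to be the subdifferential bookkeeping — justifying the sum and chain rules and, above all, invoking the sharp characterization $\normDual{\wVec}=1$ of subgradients of a norm away from the origin, which is precisely where the strict threshold $\hrloneParam>\hrloneParam_\infty$ is consumed.
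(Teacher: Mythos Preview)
Your proposal is correct and follows essentially the same route as the paper: finiteness of $\hrloneParam_\infty$ via injectivity of $\AMat^T$ (the paper phrases this through the operator norm of the inverse, you through compactness/equivalence of norms, which is the same content), then the contradiction argument using $0\in\partial f(\xVec^\hrloneParam)$, the sum and affine chain rules, and the key fact that a subgradient of a norm at a nonzero point has dual norm exactly $1$ (this is \thref{Lemma:SubDiff_of_Norm} in the paper). The paper concludes by invoking part~(1) of \thref{Theorem:ConvergenceHRL1Param}, which you also note as an alternative to the direct comparison.
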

%\begin{proof}
	The result is proven in \refP{Subsection:Theory_ConvergenceToBP:Proofs}.
%\end{proof}
If $\normRHS{\cdot}=\norm{\cdot}_p$, the dual norm is $\normDual{\cdot}=\norm{\cdot}_{\left(1-\frac{1}{p}\right)^{-1}}$
and we can calculate an upper bound for $\hrloneParam_\infty$ in polynomial time.
Let $\AMat\in\mathbb{R}^{M\times N}$ be surjective and have the singular value decomposition $\AMat=\UMat\Sigma\VMat^T$ with
$\UMat\in\mathbb{R}^{M\times M}$, $\Sigma\in\mathbb{R}^{M\times M}$ and $\VMat\in\mathbb{R}^{N\times M}$.
Then $\Sigma$ is invertible and ${\AMat^T}^\dagger:=\UMat\Sigma^{-1}\VMat^T\in\mathbb{R}^{M\times N}$ is the Moore-Penrose inverse of $\AMat^T$
and obeys ${\AMat^T}^\dagger\AMat^T=\IDMat\in\mathbb{R}^{M\times M}$.
It follows that
\begin{align*}
	\hrloneParam_\infty
	=&\sup_{0\neq\wVec\in\mathbb{R}^M}\frac{\norm{\wVec}_{\left(1-\frac{1}{p}\right)^{-1}}}{\norm{\AMat^T\wVec}_\infty}
	=\sup_{0\neq\wVec\in\mathbb{R}^M}\frac{\norm{{\AMat^T}^\dagger\AMat^T\wVec}_{\left(1-\frac{1}{p}\right)^{-1}}}{\norm{\AMat^T\wVec}_\infty}
	=\sup_{0\neq\vVec\in\Range{\AMat^T}}\frac{\norm{{\AMat^T}^\dagger\vVec}_{\left(1-\frac{1}{p}\right)^{-1}}}{\norm{\vVec}_\infty}
	\\&\leq M^{1-\left(1-\frac{1}{p}\right)}\sup_{0\neq\vVec\in\Range{\AMat^T}}\frac{\norm{{\AMat^T}^\dagger\vVec}_\infty}{\norm{\vVec}_\infty}
	\leq M^{1-\left(1-\frac{1}{p}\right)}\sup_{0\neq\wVec\in\mathbb{R}^N}\frac{\norm{{\AMat^T}^\dagger\wVec}_\infty}{\norm{\wVec}_\infty}
	=M^{\frac{1}{p}}\norm{{\AMat^T}^\dagger}_{\infty\rightarrow\infty}.
\end{align*}
This value is computable in polynomial time since the norm is the maximum absolute row sum of ${\AMat^T}^\dagger$.
Other possible bounds involve the smallest, non-zero singular value.
\subsubsection*{Large Tuning Parameters for \hrlone{} and the Quotient Property}\label{SubSubsection:Theory_QuotientProperty}
By statement (3) of \thref{Theorem:ConvergenceHRL1Param} the minimizers of \hrlone{} converge to the minimizers of \refP{Problem:BPImp}
for $\hrloneParam\rightarrow\infty$.
This suggests that the error bound of \thref{Theorem:HRL1Recovery:Lambda>TauSq} is not tight for large
$\hrloneParam$ and that it should be possible to replace $\hrloneParam$ in the error bound
by a constant.
Indeed, this is true if $\AMat$ obeys a quotient property.
\begin{Definition}\label{Definition:QuotientProperty}
	Let $q\in\left[1,\infty\right]$ and $\normRHS{\cdot}$ be any norm on $\mathbb{R}^M$ and $\AMat\in\mathbb{R}^{M\times N}$.
	If there exists $d\in\left[0,\infty\right)$ such that
	\begin{align*}
		\TextForAll \eVec\in\mathbb{R}^M \TextExists \vVec\in\mathbb{R}^N \TextSuchThat
		\AMat\vVec=\eVec\TextAnd
		\norm{\vVec}_q\leq d\normRHS{\eVec}
	\end{align*}
	holds true, then we say $\AMat$ has $\ell_q$-quotient property with constant $d$
	relative to $\normRHS{\cdot}$.
\end{Definition}
In \cite{QuotientProperty} it was shown that the additional assumption of the quotient property
yields robust recovery guarantees for the basis pursuit.
Further, it was shown that Gaussian matrices obey a good quotient property with high
probability. In \cite[Chapter 11.2]{IntroductionCS} the techniques were adapted to
account for the RNSP instead of the restricted isometry property.
In particular, from \cite[Lemma~11.15 and 11.16]{IntroductionCS} we can deduce directly the following result.
\begin{Proposition}[ {\cite[Chapter 11.2]{IntroductionCS}} ]\label{Proposition:HRL1Recovery:QP}
	Under the additional assumption that $\AMat\in\mathbb{R}^{M\times N}$ has
	$\ell_q$-quotient property with constant $d$ relative to $\normRHS{\cdot}$
	and that \refP{Equation:EQ1:Subsection:Theory_RecoveryGuarantees} holds true,
	the error bound \refP{Equation:ErrorBound:Theorem:HRL1Recovery:Lambda>TauSq}
	of \thref{Theorem:HRL1Recovery:Lambda>TauSq} can be replaced by
	\begin{align*}
		\norm{\xVec^\#-\xVec}_q
		\leq\begin{Bmatrix}
			2\frac{\left(1+\rho\right)^2}{1-\rho}S^{\frac{1}{q}-1}\compr{\xVec}{S}
			+\left(\frac{\left(1+\rho\right)\left(3+\rho\right)}{1-\rho}S^{\frac{1}{q}-1}d+\tau\right)\normRHS{\yVec-\AMat\xVec}
			& \TextIf & q\in\left(1,\infty\right]
			\\
			2\frac{1+\rho}{1-\rho}\compr{\xVec}{S}
				+\frac{3+\rho}{1-\rho}d\normRHS{\yVec-\AMat\xVec}_1
			& \TextIf & q=1
		\end{Bmatrix},
	\end{align*}
	which is independent on the possibly large $\hrloneParam$.
\end{Proposition}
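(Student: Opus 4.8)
The plan is to exploit the quotient property to turn the noisy recovery problem into a \emph{noiseless} one and then feed it into \thref{Theorem:HRL1Recovery:Lambda>TauSq}. Write $\eVec:=\yVec-\AMat\xVec$. By \thref{Definition:QuotientProperty} there is a $\uVec\in\mathbb{R}^N$ with $\AMat\uVec=\eVec$ and $\norm{\uVec}_q\leq d\normRHS{\eVec}$, so that the perturbed signal $\tilde\xVec:=\xVec+\uVec$ satisfies $\AMat\tilde\xVec=\AMat\xVec+\eVec=\yVec$ exactly. The point is that a minimizer $\xVec^\#$ of \refP{Problem:HRL1} is determined by $\yVec$ and $\AMat$ alone, so we may apply the recovery guarantee with $\tilde\xVec$ playing the role of the ground truth, in which case the data $\yVec$ is noise-free.

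First I would verify that \thref{Theorem:HRL1Recovery:Lambda>TauSq} applies with $\rho'=\rho$: this is exactly the equality case recorded in the theorem and guaranteed by the standing hypothesis \refP{Equation:EQ1:Subsection:Theory_RecoveryGuarantees}; moreover the thresholds in \refP{Equation:EQ1:Subsection:Theory_RecoveryGuarantees} dominate $\tau S^{1-\frac1q}$ because $\tfrac{3+\rho}{(1+\rho)^2}\geq1$ and $\tfrac{2}{1+\rho}\geq1$ on $[0,1)$, so the entry condition $\hrloneParam>\tau S^{1-\frac1q}$ holds as well. Applying \refP{Equation:ErrorBound:Theorem:HRL1Recovery:Lambda>TauSq} to $\tilde\xVec$ makes the noise term $\normRHS{\yVec-\AMat\tilde\xVec}$ vanish, leaving $\norm{\xVec^\#-\tilde\xVec}_q\leq 2\tfrac{(1+\rho)^2}{1-\rho}S^{\frac1q-1}\compr{\tilde\xVec}{S}$ for $q\in(1,\infty]$ and the sharper version with prefactor $2\tfrac{1+\rho}{1-\rho}$ for $q=1$. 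Crucially, this estimate is already \emph{independent of} $\hrloneParam$, which is the whole purpose of the statement.

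It remains to transfer the estimate from $\tilde\xVec$ back to $\xVec$. I would use $\compr{\tilde\xVec}{S}\leq\compr{\xVec}{S}+\norm{\uVec}_1$ (test the best $S$-term $\ell_1$-approximant of $\xVec$ against $\tilde\xVec$) together with the triangle inequality $\norm{\xVec^\#-\xVec}_q\leq\norm{\xVec^\#-\tilde\xVec}_q+\norm{\uVec}_q$. For $q=1$ the two occurrences of $\uVec$ are both $\ell_1$-norms, which the quotient property controls directly by $d\normRHS{\eVec}$; collecting terms gives the signal constant $2\tfrac{1+\rho}{1-\rho}$ and the noise constant $\big(2\tfrac{1+\rho}{1-\rho}+1\big)d=\tfrac{3+\rho}{1-\rho}d$, matching the claim on the nose. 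For $q\in(1,\infty]$ the same scheme produces an $\hrloneParam$-free bound whose $S^{\frac1q-1}d$-coefficient can be read off via the identity $\tfrac{(1+\rho)(3+\rho)}{1-\rho}=2\tfrac{(1+\rho)^2}{1-\rho}+(1+\rho)$.

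The main obstacle lies precisely in the case $q\in(1,\infty]$, and it is twofold. The compressibility bound needs $\norm{\uVec}_1$, whereas \thref{Definition:QuotientProperty} supplies only the $\ell_q$-control $\norm{\uVec}_q\leq d\normRHS{\eVec}$; and the additive $\tau$ in the stated noise coefficient cannot come from a single triangle inequality, which would instead leave a bare $d$. Both features are what \cite[Lemma~11.15 and 11.16]{IntroductionCS} are designed to handle: rather than discarding $\uVec$ through $\norm{\uVec}_q$, one routes $\AMat(\xVec^\#-\xVec)=\eVec-(\yVec-\AMat\xVec^\#)$ through the robust null space property, so that the robustness constant $\tau$ enters via $\normRHS{\AMat(\xVec^\#-\xVec)}$ while the residual $\normRHS{\yVec-\AMat\xVec^\#}$ drops out because $\tilde\xVec$ is a zero-residual competitor in \refP{Problem:HRL1}. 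I would therefore carry out this final transfer along the lines of those two lemmas and verify that their constants specialize to the coefficients claimed here.
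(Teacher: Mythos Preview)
Your plan is essentially the paper's own sketch: reduce to a noiseless problem via the quotient property, invoke \thref{Theorem:HRL1Recovery:Lambda>TauSq} with $\rho'=\rho$, and then transfer back using \cite[Lemma~11.15 and 11.16]{IntroductionCS}. Your explicit computation for $q=1$ is correct and is precisely the content of Lemma~11.15 in that special case (the paper just says the simultaneous $(\ell_1,\ell_1)$-quotient property is ``directly fulfilled'' with $D=D'=d$).

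One small correction concerning your heuristic for $q\in(1,\infty]$: the way the $\tau$ enters is not by applying the RNSP to $\xVec^\#-\xVec$, but by applying it to the quotient vector $\uVec$ itself. Lemma~11.16 upgrades the $\ell_1$-control on $\uVec$ to an $\ell_q$-bound via the RNSP (Stechkin on $\ProjToIndex{T^c}{\uVec}$ plus the RNSP inequality on $\ProjToIndex{T}{\uVec}$, using $\AMat\uVec=\eVec$), which produces the constant $D=(1+\rho)dS^{\frac1q-1}+\tau$; Lemma~11.15 then runs exactly your triangle-inequality argument with both $\norm{\uVec}_1\leq d\normRHS{\eVec}$ and $\norm{\uVec}_q\leq D\normRHS{\eVec}$ available simultaneously. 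Your identity $\tfrac{(1+\rho)(3+\rho)}{1-\rho}=2\tfrac{(1+\rho)^2}{1-\rho}+(1+\rho)$ is exactly what makes the constants match once you plug $C=2\tfrac{(1+\rho)^2}{1-\rho}$, $D'=d$ and $D=(1+\rho)dS^{\frac1q-1}+\tau$ into $CS^{\frac1q-1}D'+D$. So your outline is right; only the attribution of where the RNSP is used needs adjusting.
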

A sketch for the proof can be found in
\refP{Subsection:Theory_ConvergenceToBP:Proofs}.
The strength of \hrlone{} is not that it achieves the error bound from \thref{Proposition:HRL1Recovery:QP} whenever $\AMat$ suffices a quotient property, but the strength of \hrlone{} is that it achieves stable and robust recovery even if $\AMat$ has
a bad quotient property constant.
Especially if the number of measurements $M$ increases,
the quotient property is harder to fulfill and the constant $d$ gets worse.
Thus, the error bound of
\thref{Proposition:HRL1Recovery:QP} gets worse just as the error bound of the minimizer of the basis pursuit \cite[Figure~1]{QP_plots}.
Opposed to that, the stableness constant $\rho$ and robustness constant $\tau$ of the RNSP get better when the number of measurements increases.
If we get a good estimate on these parameters, \thref{Theorem:HRL1Recovery:Lambda>TauSq}
with
\begin{align*}
	\hrloneParam =\begin{Bmatrix}
		\frac{3+\rho}{\left(1+\rho\right)^2}\tau S^{1-\frac{1}{q}} & \TextIf & q\in\left(1,\infty\right]
		\\\frac{2}{1+\rho}\tau & \TextIf & q=1
	\end{Bmatrix}
\end{align*}
gives an error bound that gets better with increasing number of measurements $M$!
This effect will also be verified numerically in \refP{Subsection:Numerics_GaussianMatrices}.
\subsection{Equivalent Conditions for Successful Recovery with \hrlone{}}\label{Subsection:Theory_RecoveryEquivalence}
In this subsection we consider the second problem from
\refP{Subsection:Theory_RecoveryGuarantees},
i.e. what happens when $\hrloneParam$ goes to the threshold $\tau S^{1-\frac{1}{q}}$
from \thref{Theorem:HRL1Recovery:Lambda>TauSq}.
%By \thref{Example:HRL1Recovery:Tight} there are matrices such that the set of parameters $\hrloneParam$
%that allow for recovery is the open interval $\left(\tau S^{1-\frac{1}{q}},\infty\right)$. In particular, all values $\hrloneParam>\tau S^{1-\frac{1}{q}}$
%allow for stable and robust recovery, but $\hrloneParam=\tau S^{1-\frac{1}{q}}$ does not even recover sparse signals from noiseless measurements.\\
For this we need to introduce some new null space properties.
%We want to answer the question if such a parameter $\hrloneParam_0$ exists for every matrix.\\
In \thref{Definition:SRNSP} we have introduced a robust null space property,
but, before this property was introduced in the way we have it here, there have been different notions of
null space properties. We will give a brief history of null space properties, but
for a general overview about null space properties we refer the reader to \cite[Section~4]{IntroductionCS}.
\begin{Definition}\label{Definition:RNSP}
	Let $S\in\SetOf{N}$, $q\in\left[1,\infty\right]$ and $\normRHS{\cdot}$ be any norm on $\mathbb{R}^M$ and
	$\AMat\in\mathbb{R}^{M\times N}$.
	\begin{itemize}
		\item[(1)]
		If
	\begin{align*}
		\norm{\ProjToIndex{T}{\vVec}}_q
		<S^{\frac{1}{q}-1}\norm{\ProjToIndex{T^c}{\vVec}}_1
		\TextForAll \SetSize{T}\leq S \TextAnd \vVec\in\Kernel{\AMat}\setminus\ZeroSet
	\end{align*}
	holds true, then we say $\AMat$ has
	$\ell_q$-null space property of order $S$.
	We shorten this to $\ell_q$-NSP of order $S$.	
%		\item[(2)]
%		If there exists a $\rho\in\left[0,1\right)$ such that
%	\begin{align}
%		\norm{\ProjToIndex{T}{\vVec}}_q
%		\leq \rho S^{\frac{1}{q}-1}\norm{\ProjToIndex{T^c}{\vVec}}_1
%		\TextForAll \SetSize{T}\leq S \TextAnd \vVec\in\Kernel{\AMat}
%	\end{align}
%	holds true, then we say $\AMat$ has the
%	$\norm{\cdot}_q$-norm $S$-SNSP with
%	constant $\rho$.
%	Here SNSP is shortform for stable null space property.
		\item[(2)] 
	If there exists a $\tau\in\left[0,\infty\right)$ such that
	\begin{align*}
		\norm{\ProjToIndex{T}{\vVec}}_q
		< S^{\frac{1}{q}-1}\norm{\ProjToIndex{T^c}{\vVec}}_1
			+\tau\normRHS{\AMat\vVec}
		\TextForAll \SetSize{T}\leq S \TextAnd \vVec\in\mathbb{R}^N\setminus\ZeroSet
	\end{align*}
	holds true, then we say $\AMat$ has
	$\ell_q$-only robust null space property of order $S$ with respect to $\normRHS{\cdot}$
	with constant $\tau$.
	We shorten this to $\ell_q$-ORNSP of order $S$ wrt $\normRHS{\cdot}$
	with constant $\tau$.
	\end{itemize}
	We omit parts of these definitions in case they are not of importance or clear
	from context.
\end{Definition}
The first null space property used was the NSP for noiseless recovery of sparse 
signals with the basis pursuit.
One of the first uses was \cite{NSP}, although the term null space property was not used.
To account for compressibility one considered the stableness constant.
In \cite{InstanceOptimality} the stable null space property was used for the first time
and also the term null space property appeared for the first time.
Lastly, the robustness constant was added in addition to the stableness constant to account for additive noise.
The result is the RNSP defined in \thref{Definition:SRNSP}. See for instance \cite[Section~4.3]{IntroductionCS}.
However, if we only add the robustness to the NSP, we obtain the ORNSP, which only accounts
for noise in the measurements but not for compressibility.
To the best of the knowledge of the authors this property has not been used before.
%We will later prove that for $q=1$ this property is actually equivalent to a recovery guarantee for \hrlone{} with $\hrloneParam=\tau$.
By \thref{Lemma:NSP<=>RNSP} and \thref{Lemma:RNSP<=>SRNSP} all null space properties introduced are
equivalent to each other. More interesting is that
the $\ell_1$-ORNSP characterizes whether or not
recovery with \hrlone{} is successful.
\begin{Theorem}[Equivalent Condition for Stable and Robust Decodability]\label{Theorem:HRL1SRdecEquivalence}
	Let $S\in\SetOf{N}$ and $\normRHS{\cdot}$ be any norm on $\mathbb{R}^M$.
	Let $\AMat\in\mathbb{R}^{M\times N}$ and $\hrloneParam\in\left[0,\infty\right)$.
	Then the following are equivalent:
	\begin{itemize}
		\item[(1)] $\AMat$ has $\ell_1$-ORNSP of order $S$ wrt $\normRHS{\cdot}$ with constant $\hrloneParam$.
%		\item[(2)]
%			Any decoder $\Decoder{}:\mathbb{R}^M\rightarrow\mathbb{R}^N$ that maps $\yVec$ to a minimizer of \refP{Problem:HRL1}
%			with input $\yVec$ is a $\norm{\cdot}_1$ to $\normRHS{\cdot}$ $S$-SRD for $\AMat$.
		\item[(2)]
			Any decoder $\Decoder{}:\mathbb{R}^M\rightarrow\mathbb{R}^N$ such that
			$\Decoder{\yVec}\in\argmin{\zVec\in\mathbb{R}^N}\norm{\zVec}_1
					+\hrloneParam\normRHS{\AMat\zVec-\yVec}$ for all $\yVec\in\mathbb{R}^M$
				is an $\ell_1$-SRD of order $S$ wrt $\normRHS{\cdot}$ for $\AMat$.
%		\item[(3)] 
%			For all $S$-sparse $\xVec\in\mathbb{R}^N$
%				the unique solution of \refP{Problem:HRL1}
%				with input $\yVec:=\AMat\xVec$ is $\xVec$.
		\item[(3)]
			For all $\xVec\in\Sigma_S$ we have $\left\{\xVec\right\}
				=\argmin{\zVec\in\mathbb{R}^N}\norm{\zVec}_1
					+\hrloneParam\normRHS{\AMat\zVec-\AMat\xVec}$.
	\end{itemize}
\end{Theorem}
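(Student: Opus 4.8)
The plan is to prove the cycle $(1)\Rightarrow(2)\Rightarrow(3)\Rightarrow(1)$, noting throughout that we are in the case $q=1$, so that $S^{\frac{1}{q}-1}=1$ and the threshold of \thref{Theorem:HRL1Recovery:Lambda>TauSq} reads $\hrloneParam>\tau$. For $(1)\Rightarrow(2)$ I would upgrade the strict $\ell_1$-ORNSP to a quantitative $\ell_1$-RNSP by a compactness argument and then invoke the recovery guarantee. Fix $T$ with $\SetSize{T}\leq S$ and consider the map $\vVec\mapsto\norm{\ProjToIndex{T^c}{\vVec}}_1+\hrloneParam\normRHS{\AMat\vVec}$. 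This is a seminorm, and ORNSP forces it to be a genuine norm: if it vanished at some $\vVec\neq\Vec{0}$, then $\vVec$ would be supported on $T$ and lie in $\Kernel{\AMat}$, so that $\norm{\ProjToIndex{T}{\vVec}}_1=\norm{\vVec}_1>0$ would contradict the strict inequality against a vanishing right-hand side. On the (now compact) unit sphere of this norm, ORNSP gives $\norm{\ProjToIndex{T}{\vVec}}_1<1$ pointwise, so by continuity and compactness a maximum $\rho_T<1$ is attained. Setting $\rho=\max_{\SetSize{T}\leq S}\rho_T<1$ over the finitely many index sets and rescaling by homogeneity yields $\norm{\ProjToIndex{T}{\vVec}}_1\leq\rho\norm{\ProjToIndex{T^c}{\vVec}}_1+\rho\hrloneParam\normRHS{\AMat\vVec}$ for every $\vVec$, i.e. the $\ell_1$-RNSP with stableness constant $\rho$ and robustness constant $\tau=\rho\hrloneParam$. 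Since $\tau=\rho\hrloneParam<\hrloneParam$, the threshold $\hrloneParam>\tau$ holds, and \thref{Theorem:HRL1Recovery:Lambda>TauSq} produces an error bound of exactly the SRD form, valid for every minimizer and hence for any admissible decoder; this is (2).

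For $(2)\Rightarrow(3)$ I would specialize to noiseless sparse data. Fix $\xVec\in\Sigma_S$ and set $\yVec=\AMat\xVec$; the \hrlone{} objective is continuous and coercive, so its minimizer set is nonempty. Given any minimizer $\zVec^\star$, define a decoder $\Decoder{}$ that selects $\zVec^\star$ at the input $\AMat\xVec$ (and arbitrary minimizers elsewhere). By (2) this $\Decoder{}$ is an $\ell_1$-SRD, so $\norm{\zVec^\star-\xVec}_1\leq C\compr{\xVec}{S}+D\normRHS{\AMat\xVec-\AMat\xVec}=0$, using $\compr{\xVec}{S}=0$ for $\xVec\in\Sigma_S$. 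As $\zVec^\star$ was an arbitrary minimizer, every minimizer equals $\xVec$, giving $\left\{\xVec\right\}=\argmin{\zVec\in\mathbb{R}^N}\norm{\zVec}_1+\hrloneParam\normRHS{\AMat\zVec-\AMat\xVec}$, which is (3).

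For $(3)\Rightarrow(1)$ I would argue by contraposition with an explicit competitor. If ORNSP fails, there exist $\SetSize{T}\leq S$ and $\vVec\neq\Vec{0}$ with $\norm{\ProjToIndex{T}{\vVec}}_1\geq\norm{\ProjToIndex{T^c}{\vVec}}_1+\hrloneParam\normRHS{\AMat\vVec}$. Put $\xVec=\ProjToIndex{T}{\vVec}\in\Sigma_S$ and $\zVec=\xVec-\vVec=-\ProjToIndex{T^c}{\vVec}$. Evaluating the \hrlone{} objective at $\yVec=\AMat\xVec$, the value at $\xVec$ is $\norm{\ProjToIndex{T}{\vVec}}_1$, whereas the value at $\zVec$ is $\norm{\ProjToIndex{T^c}{\vVec}}_1+\hrloneParam\normRHS{\AMat\vVec}$; the failure inequality makes the latter no larger, while $\zVec\neq\xVec$ since $\vVec\neq\Vec{0}$. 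Hence $\xVec$ is not the unique minimizer of the \hrlone{} objective, and (3) fails.

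The main obstacle is the compactness step in $(1)\Rightarrow(2)$. One must verify that the normalizing functional is genuinely a norm, so that its unit sphere is compact, and, more importantly, that the robustness constant extracted this way is $\tau=\rho\hrloneParam$ with $\rho<1$, so that it lands strictly below $\hrloneParam$ and matches the tuning parameter to the threshold $\hrloneParam>\tau$ of \thref{Theorem:HRL1Recovery:Lambda>TauSq}. This sharp matching of the ORNSP constant to the tuning parameter is the genuine content; once it is in hand, the remaining implications follow from the noiseless specialization and the choice of the competitor $\xVec-\vVec$.
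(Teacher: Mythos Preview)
Your proof is correct. The implications $(2)\Rightarrow(3)$ and $(3)\Rightarrow(1)$ are the same as the paper's (the latter is just the contrapositive of the paper's direct argument, with the same competitor $-\ProjToIndex{T^c}{\vVec}$).

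For $(1)\Rightarrow(2)$ you take a genuinely different route. The paper first invokes \thref{Corollary:ShapeSNSPAndRNSP} to see that the set of ORNSP constants is the open interval $(\tau_1^0,\infty)$, so $\hrloneParam>\tau_1^0$ and one can pick $\tau'=\tfrac{1}{2}(\hrloneParam+\tau_1^0)<\hrloneParam$; then \thref{Lemma:RNSP<=>SRNSP} upgrades this to an RNSP with robustness constant $\tau'$, and finally \thref{Theorem:HRL1Recovery:Lambda>TauSq} applies because $\hrloneParam>\tau'$. Your argument bypasses the shape-constant machinery entirely: normalizing by the full right-hand side $\norm{\ProjToIndex{T^c}{\cdot}}_1+\hrloneParam\normRHS{\AMat\cdot}$ (which ORNSP forces to be a genuine norm) and taking a maximum over the compact sphere yields in one stroke an RNSP with stableness $\rho<1$ and robustness $\tau=\rho\hrloneParam<\hrloneParam$, already matched to the threshold. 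This is more self-contained and slightly sharper in that it produces $\rho$ and $\tau$ simultaneously; the paper's route, on the other hand, fits into the broader characterization of \emph{all} admissible tuning parameters that is needed for \thref{Corollary:ShapeHRL1Param}.
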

%\begin{proof}
	The result is proven in \refP{Subsection:Theory_RecoveryEquivalence:Proofs}.
%\end{proof}
None of the statements is equivalent to $\AMat$ having $\ell_1$-RNSP with robustness
constant $\hrloneParam$ since for such $A$ we can only guarantee the $\ell_1$-ORNSP
with constant $\tau>\hrloneParam$, see \thref{Lemma:RNSP<=>SRNSP}.
Further, if only one decoder that maps to solutions of \refP{Problem:HRL1} is an
SRD of order $S$, we can not prove any equivalence, but we can prove the equivalence
if all possible decoders are SRD of order $S$.
To find the exact parameters $\hrloneParam$ which give a recovery guarantee,
it thus remains to characterize all constants of the ORNSP. Thus, we introduce the
NSP shape constant.
\begin{Definition}\label{Definition:NSPShape}
	Let $S\in\SetOf{N}$, $q\in\left[1,\infty\right]$ and $\normRHS{\cdot}$ be a norm on $\mathbb{R}^M$.
	Let $\AMat\in\mathbb{R}^{M\times N}$ have $\ell_q$-NSP of order $S$.
	The constant
	\begin{align*}
%		\rho_q^0
%			:=&\max\left\{0,\sup_{\SetSize{T}\leq S}\sup_{\vVec\in\Kernel{\AMat}\setminus\ZeroSet}
%				\frac{\norm{\ProjToIndex{T}{\vVec}}_q}{S^{\frac{1}{q}-1}\norm{\ProjToIndex{T^c}{\vVec}}_1}\right\} \TextAnd
%		\\\tau_q^0
%			:=&\max\left\{0,\sup_{\SetSize{T}\leq S}\sup_{\vVec\in\mathbb{R}^N\setminus\Kernel{\AMat}}
%				\frac{\norm{\ProjToIndex{T}{\vVec}}_q-S^{\frac{1}{q}-1}\norm{\ProjToIndex{T^c}{\vVec}}_1}{\normRHS{\AMat\vVec}}\right\}
%		\\
		\tau_q^0
			:=&\sup_{\SetSize{T}\leq S}\sup_{\vVec\in\mathbb{R}^N\setminus\Kernel{\AMat}}
				\frac{\norm{\ProjToIndex{T}{\vVec}}_q-S^{\frac{1}{q}-1}\norm{\ProjToIndex{T^c}{\vVec}}_1}{\normRHS{\AMat\vVec}}
	\end{align*}
	is called NSP shape constant and the function
	\begin{align*}
%		\\\tau_q\left(\rho\right)
%			:=&\max\left\{0,\sup_{\SetSize{T}\leq S}\sup_{\vVec\in\mathbb{R}^N\setminus\Kernel{\AMat}}
%					\frac{\norm{\ProjToIndex{T}{\vVec}}_q
%					-\rho S^{\frac{1}{q}-1}\norm{\ProjToIndex{T^c}{\vVec}}_1}{
%					\normRHS{\AMat\vVec}}\right\}
%			\TextForAll \rho\in\left[\rho_q^0,1\right]
%		\tau_q\left(\rho\right)
%			:=&\sup_{\SetSize{T}\leq S}\sup_{\vVec\in\mathbb{R}^N\setminus\Kernel{\AMat}}
%					\frac{\norm{\ProjToIndex{T}{\vVec}}_q
%					-\rho S^{\frac{1}{q}-1}\norm{\ProjToIndex{T^c}{\vVec}}_1}{
%					\normRHS{\AMat\vVec}}
%			\TextForAll \rho\in\left[\rho_q^0,1\right]  \TextAnd
%		\\
		\rho_q\left(\tau\right)
			:=&\max\left\{0,\sup_{\SetSize{T}\leq S}\sup_{\underset{\ProjToIndex{T^c}{\vVec}\neq 0}{\vVec\in\mathbb{R}^N:}}
					\frac{\norm{\ProjToIndex{T}{\vVec}}_q-\tau\normRHS{\AMat\vVec}}{
					S^{\frac{1}{q}-1}\norm{\ProjToIndex{T^c}{\vVec}}_1}\right\}
			\TextForAll \tau\in\left[\tau_q^0,\infty\right).
	\end{align*}
	is called NSP shape function.
%	We call
%	\begin{align}
%%		Sshape_q:=&\left\{\text{$\rho\in\left[0,1\right):$
%%			$\AMat$ has $\norm{\cdot}_q$-norm $S$-SNSP with constant $\rho$}\right\},
%%		\\Rshape_q:=&\left\{\text{$\tau\in\left[0,\infty\right):$
%%			$\AMat$ has $\norm{\cdot}_q$ to $\normRHS{\cdot}$ $S$-RNSP with constant $\tau$}\right\},
%%		\\sRshape_q:=&\left\{\text{$\rho\in\left[0,1\right):$
%%			$\AMat$ has $\norm{\cdot}_q$ to $\normRHS{\cdot}$ $S$-SRNSP with stableness constant $\rho$}\right\},
%%		\\Srshape_q:=&\left\{\text{$\tau\in\left[0,\infty\right):$
%%			$\AMat$ has $\norm{\cdot}_q$ to $\normRHS{\cdot}$ $S$-SRNSP with robustness constant $\tau$}\right\},
%		SRshape_q:=&\left\{\text{$\left(\rho,\tau\right)\in\left[0,1\right)\times\left[0,\infty\right):$
%			$\AMat$ has $\norm{\cdot}_q$ to $\normRHS{\cdot}$ $S$-SRNSP with constants $\rho$ and $\tau$}\right\}
%	\end{align}
%	the NSP shape set.
\end{Definition}
By \thref{Lemma:NSP<=>RNSP} the NSP shape constant and by
\thref{Lemma:RNSP<=>SRNSP} the NSP shape function
are well defined and obey
$\tau_q^0\in\left(0,\infty\right)$
and $\rho_q\left(\tau\right)\in\left[0,1\right)$ for all $\tau\in\left(\tau_q^0,\infty\right)$.
Further, the outer maximum of $\rho_q\left(\tau\right)$
is only required when the inner suprema all have no feasible points,
i.e. when $\AMat$ is injective.
We can prove the following result.
%It follows immediately that we can characterize all parameters $\hrloneParam$ that allow for exact recovery from noiseless measurements and all $\hrloneParam$ that allow
%for $\norm{\cdot}_q$ to $\normRHS{\cdot}$ stable robust decodability.
\begin{Corollary}\label{Corollary:ShapeHRL1Param}
	Let $S\in\SetOf{N}$, $q\in\left[1,\infty\right]$ and $\normRHS{\cdot}$ be a norm on $\mathbb{R}^M$ and
	$\AMat\in\mathbb{R}^{M\times N}$.
%	and define
%	$\tau_0:=	\sup_{\SetSize{T}\leq S}\sup_{\vVec\in\mathbb{R}^N\setminus\Kernel{\AMat}}
%					\frac{\norm{\ProjToIndex{T}{\vVec}}_1
%					-\norm{\ProjToIndex{T^c}{\vVec}}_1}{
%					\normRHS{\AMat\vVec}}$.
%	Then, $\tau_0$ is well defined and the set of $\hrloneParam$ such that
	Then we have the set equality
	\begin{align}
%		&\big\{\hrloneParam\in\left[0,\infty\right):
%			\text{any decoder $\Decoder{}:\mathbb{R}^M\rightarrow\mathbb{R}^N$
%				that maps $\yVec$ to a minimizer of}
%		\\&\hspace{5pt}
%		\left.\text{\refP{Problem:HRL1} with input $\yVec$ is a $\norm{\cdot}_q$ to $\normRHS{\cdot}$ $S$-SRdec for $\AMat$}
%		\right\}
		\nonumber
		&\bigg\{\hrloneParam\in\left[0,\infty\right):
			\text{any decoder $\Decoder{}:\mathbb{R}^M\rightarrow\mathbb{R}^N$ such that
				$\Decoder{\yVec}\in\argmin{\zVec\in\mathbb{R}^N}\norm{\zVec}_1
					+\hrloneParam\normRHS{\AMat\zVec-\yVec}$ for all $\yVec\in\mathbb{R}^M$
			}
		\\\nonumber&\hspace{60pt}
		\text{is an $\ell_q$-SRD of order $S$ wrt $\normRHS{\cdot}$ for $\AMat$}
		\bigg\}
%		\\=&\left\{\hrloneParam\in\left[0,\infty\right):
%			\text{for all $S$-sparse $\xVec\in\mathbb{R}^N$ the unique solution of
%			\refP{Problem:HRL1} with input $\yVec:=\AMat\xVec$ is $\xVec$}\right\}
		\\=&
		\left\{\hrloneParam\in\left[0,\infty\right):
			\text{for all $\xVec\in\Sigma_S$ we have $\left\{\xVec\right\}
				=\argmin{\zVec\in\mathbb{R}^N}\norm{\zVec}_1
					+\hrloneParam\normRHS{\AMat\zVec-\AMat\xVec}$}\right\}.
		\label{Equation:Sets:Corollary:ShapeHRL1Param}
%		\\=&\left(\tau_1^0,\infty\right).
	\end{align}
	If $\AMat$ has $\ell_q$-NSP of order $S$, then
	$\AMat$ has $\ell_1$-NSP of order $S$ and $\tau_1^0\leq \tau_q^0S^{1-\frac{1}{q}}$
	and the sets in \refP{Equation:Sets:Corollary:ShapeHRL1Param}
	are equal to the set $\left(\tau_1^0,\infty\right)$.
%	In particular, the set of $\hrloneParam$ such that
%	\begin{center}
%		any decoder $\Decoder{}:\mathbb{R}^M\rightarrow\mathbb{R}^N$ that maps $\yVec$ to a minimizer of \refP{Problem:HRL1}
%			with input $\yVec$\\
%			is a $\norm{\cdot}_q$ to $\normRHS{\cdot}$ $S$-SRdec for $\AMat$
%	\end{center}
%	holds true contains the set $\left(\tau_q^0 S^{1-\frac{1}{q}},\infty\right)$
%	and is contained in the set $\left(\tau_1^0,\infty\right)$.
\end{Corollary}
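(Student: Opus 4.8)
The plan is to deduce everything from \thref{Theorem:HRL1SRdecEquivalence} together with two elementary norm comparisons on $S$-sparse supports, namely $\norm{\wVec}_q\leq\norm{\wVec}_1$ for $q\geq 1$, and $\norm{\wVec}_1\leq S^{1-\frac{1}{q}}\norm{\wVec}_q$ whenever $\abs{\supp{\wVec}}\leq S$.

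First I would prove the set equality by chasing four inclusions around a loop. Write $\mathcal{S}_q$ for the first set (every minimizing decoder is an $\ell_q$-SRD) and $\mathcal{T}$ for the second set, which by definition is exactly condition (3) of \thref{Theorem:HRL1SRdecEquivalence}. The inclusion $\mathcal{S}_q\subseteq\mathcal{T}$ comes from the noiseless specialization: for $\xVec\in\Sigma_S$ and $\yVec=\AMat\xVec$ both $\compr{\xVec}{S}$ and $\normRHS{\yVec-\AMat\xVec}$ vanish, so the SRD inequality forces $\norm{\Decoder{\yVec}-\xVec}_q\leq 0$; since this holds for every choice of minimizer, the minimizer set must be the singleton $\left\{\xVec\right\}$, which is (3). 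Next, $\mathcal{T}\subseteq\mathcal{S}_1$ is precisely the implication (3)$\Rightarrow$(2) of \thref{Theorem:HRL1SRdecEquivalence}. Finally, $\mathcal{S}_1\subseteq\mathcal{S}_q$ follows from $\norm{\cdot}_q\leq\norm{\cdot}_1$: an $\ell_1$-SRD bound with constants $C,D$ yields $\norm{\Decoder{\yVec}-\xVec}_q\leq\norm{\Decoder{\yVec}-\xVec}_1\leq CS^{1-\frac{1}{q}}\cdot S^{\frac{1}{q}-1}\compr{\xVec}{S}+D\normRHS{\yVec-\AMat\xVec}$, i.e.\ an $\ell_q$-SRD with constants $CS^{1-\frac{1}{q}},D$. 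Chaining $\mathcal{S}_q\subseteq\mathcal{T}\subseteq\mathcal{S}_1\subseteq\mathcal{S}_q$ closes the loop and gives equality.

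Under the extra hypothesis that $\AMat$ has $\ell_q$-NSP, the two structural claims are routine consequences of the support bound. For $\vVec\in\Kernel{\AMat}\setminus\ZeroSet$ and $\abs{T}\leq S$ one gets $\norm{\ProjToIndex{T}{\vVec}}_1\leq S^{1-\frac{1}{q}}\norm{\ProjToIndex{T}{\vVec}}_q< S^{1-\frac{1}{q}}S^{\frac{1}{q}-1}\norm{\ProjToIndex{T^c}{\vVec}}_1=\norm{\ProjToIndex{T^c}{\vVec}}_1$, which is the $\ell_1$-NSP. The same inequality, rewritten as $S^{\frac{1}{q}-1}\norm{\ProjToIndex{T}{\vVec}}_1\leq\norm{\ProjToIndex{T}{\vVec}}_q$, shows that after multiplying the numerator defining $\tau_1^0$ by $S^{\frac{1}{q}-1}$ it is dominated termwise by the numerator defining $\tau_q^0$ (the $\norm{\ProjToIndex{T^c}{\vVec}}_1$ contributions coincide); taking suprema gives $\tau_1^0\leq\tau_q^0 S^{1-\frac{1}{q}}$.

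It remains to identify the common set with $\left(\tau_1^0,\infty\right)$. By \thref{Theorem:HRL1SRdecEquivalence} it equals $\left\{\hrloneParam:\AMat\text{ has }\ell_1\text{-ORNSP with constant }\hrloneParam\right\}$. Splitting the defining strict inequality according to whether $\vVec$ lies in $\Kernel{\AMat}$, the kernel part is exactly the $\ell_1$-NSP just established, while the part over $\vVec\notin\Kernel{\AMat}$ reads $\frac{\norm{\ProjToIndex{T}{\vVec}}_1-\norm{\ProjToIndex{T^c}{\vVec}}_1}{\normRHS{\AMat\vVec}}<\hrloneParam$ for all such $\vVec$, whose left-hand supremum is precisely $\tau_1^0$. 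Hence every $\hrloneParam>\tau_1^0$ lies in the set and no $\hrloneParam<\tau_1^0$ does. The one delicate point, and the main obstacle, is the endpoint: excluding $\hrloneParam=\tau_1^0$ requires showing that the supremum $\tau_1^0$ is attained. Here the $\ell_1$-NSP is used quantitatively. By $0$-homogeneity I restrict to $\normRHS{\AMat\vVec}=1$; along any sequence escaping to infinity inside this set the normalized vectors accumulate in $\Kernel{\AMat}\setminus\ZeroSet$, where the numerator $\norm{\ProjToIndex{T}{\vVec}}_1-\norm{\ProjToIndex{T^c}{\vVec}}_1$ is strictly negative and therefore drives the quotient to $-\infty$. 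Consequently the maximization is confined to a compact set and the supremum is achieved at some $\vVec^\ast\notin\Kernel{\AMat}$, so $\ell_1$-ORNSP with constant $\tau_1^0$ fails and the set is the open interval $\left(\tau_1^0,\infty\right)$.
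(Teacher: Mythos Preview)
Your proof is correct and follows essentially the same route as the paper: the set equality is obtained from \thref{Theorem:HRL1SRdecEquivalence} together with the norm comparison $\norm{\cdot}_q\leq\norm{\cdot}_1$ (the paper phrases this as the equivalence of $\ell_q$-SRD and $\ell_1$-SRD, while you organize it as a cycle of inclusions), and the identification with $(\tau_1^0,\infty)$ comes from the same compactness argument showing the supremum defining $\tau_1^0$ is attained, which the paper packages separately as \thref{Corollary:ShapeSNSPAndRNSP}.
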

%\begin{proof}
	The result is proven in \refP{Subsection:Theory_RecoveryEquivalence:Proofs}.
%\end{proof}
Let us emphasize that under the $\ell_q$-NSP assumption both
sets are equal to $\left(\tau_1^0,\infty\right)$ and not $\left(\tau_q^0,\infty\right)$.
Note that the robustness of the recovery guarantee for \hrlone{} comes for free.
Now that we know that the set of parameters which yield a recovery guarantee is the open interval
$\left(\tau_1^0,\infty\right)$, we come back to the original problem, namely what happens
if $\hrloneParam\rightarrow \tau S^{1-\frac{1}{q}}$?

Note that the two inequalities $\tau_q^0 S^{1-\frac{1}{q}} \geq \tau_1^0$
and $\tau\geq\tau_q^0$ do not hold with equality in general
and hence $\tau S^{1-\frac{1}{q}}\geq\tau_1^0$ does not hold
with equality in general.
If $\tau S^{1-\frac{1}{q}}=\tau_1^0$,
then \thref{Theorem:HRL1Recovery:Lambda>TauSq} yields
a robust recovery guarantee for all $\hrloneParam>\tau_1^0$.
According to \thref{Corollary:ShapeHRL1Param} the parameter
$\hrloneParam=\tau_1^0$ does not have a robust recovery guarantee,
so that the bound $\hrloneParam>\tau S^{1-\frac{1}{q}}$
from \thref{Theorem:HRL1Recovery:Lambda>TauSq} is optimal.
On the other hand if $\tau S^{1-\frac{1}{q}}>\tau_1^0$,
then the tuning parameter $\hrloneParam=\tau S^{1-\frac{1}{q}}$ obeys
$\hrloneParam>\tau_1^0$ and thus has some robust recovery guarantee.
This gives the impression that \thref{Theorem:HRL1Recovery:Lambda>TauSq}
is not optimal.
Even though, $\tau S^{1-\frac{1}{q}}=\tau_1^0$ might fail to hold, we can still
deduce from this result that \thref{Theorem:HRL1Recovery:Lambda>TauSq} yields
a recovery guarantee for all parameters $\hrloneParam$ that have a recovery guarantee.
\begin{Remark}
	\thref{Theorem:HRL1Recovery:Lambda>TauSq} is optimal in the sense that the inclusion
	$\supset$ in \refP{Equation:Sets:Corollary:ShapeHRL1Param} is proven by applying
	\thref{Theorem:HRL1Recovery:Lambda>TauSq},
	meaning that, whenever the tuning parameter $\hrloneParam$ yields a uniform recovery guarantee,
	the requirements of \thref{Theorem:HRL1Recovery:Lambda>TauSq} are fulfilled.
\end{Remark}
Note that this remark only states the optimality of the threshold
but not optimality of the bounds of \thref{Theorem:HRL1Recovery:Lambda>TauSq}.
\section{Estimating the Tuning Parameter}\label{Section:NotNPHard}
In this section we consider the remaining problem from
\refP{Subsection:Theory_RecoveryGuarantees}. Namely, how to determine
whether or not the threshold $\hrloneParam>\tau S^{1-\frac{1}{q}}$ from
\thref{Theorem:HRL1Recovery:Lambda>TauSq} holds true.
By \thref{Corollary:ShapeHRL1Param} it would be sufficient to calculate the NSP shape constant
$\tau_q^0$. However, in \cite{SRNSPIsNotCalculatable} it was proven that, given the order $S$ of a
stable null space property, calculating the smallest stableness constant
of the stable null space property is NP-hard in general.
We have to accept that calculating the NSP shape constant $\tau_q^0$ might also be NP-hard in general.
%In particular, we have to accept the fact that calculating any threshold for $\hrloneParam$ to guarantee recovery
%might not be possible in a polynomial time.
If we want to use \hrlone{} to actually recover a signal,
we should prove that there are methods to calculate a $\hrloneParam$ above the threshold in a polynomial time.
%that obeys a bound for a recovery guarantee in a polynomial time.
In \thref{Proposition:FiniteConvergence} we have proven that, if $\AMat$ is surjective,
there exists a tuning parameter such that the minimizers of \hrlone{} are the minimizers
of the basis pursuit. Thus, this tuning parameter has a recovery guarantee
and is above the threshold.
Furthermore, it is computable in polynomial time. Hence, given the order $S$ of a null space property, there are methods
that calculate an upper bound on the NSP shape constant in polynomial time. We thus hope that there are methods
that calculate better bounds in polynomial time, although we have no such method yet.
The problem becomes easier, when we consider random matrices, since we can use the following idea:
For certain random matrices the phase transition inequality
is an intrinsic function in the variable $\tau$ and possibly $\rho$.
By solving for $\tau$ we get a function that maps the dimensions
$M,N,S$ and the constant $\rho$ to a tuning parameter $\hrloneParam$ that obeys
\refP{Equation:EQ1:Subsection:Theory_RecoveryGuarantees}.
\subsection{Gaussian Measurements}\label{Subsection:NotNPHard_GaussianMatrices}
We say that a random variable is an $\GaussianRV{\mu}{\sigma^2}$ random variable
if it is normal distributed with expectation $\mu$ and variance $\sigma^2$.
In this section we calculate a threshold for $\hrloneParam$ to ensure recovery guarantees
if the entries of $\AMat$ are i.i.d. $\GaussianRV{0}{M^{-1}}$ random variables.
In \cite[Theorem~11]{NSPCone} it was shown how the stableness and robustness constant
affect the phase transition inequality. In particular, the phase transition inequality is an
intrinsic function in the variable $\tau$. Thus, $\tau$ can be estimated from
the other constants. In view of the threshold
\refP{Equation:thresh_hold:Theorem:HRL1Recovery:Lambda>TauSq}
of \thref{Theorem:HRL1Recovery:Lambda>TauSq} we want to estimate
the smallest possible robustness constant, i.e., we want to estimate the NSP shape constant $\tau_2^0$.
We adapt the proof of \cite[Theorem~11]{NSPCone} suitably to be able to estimate the best possible
$\tau$. Indeed, the threshold we will calculate depends on the following constant.
\begin{Definition}
	Let $M\in\mathbb{N}$ and the entries of $\gVec\in\mathbb{R}^M$ be independent $\GaussianRV{0}{M^{-1}}$
	random variables.
	Then $\GMC{M}:=\Expect{\norm{\gVec}_2}$	is called Gordon's constant.
\end{Definition}
The constant originates from Gordon's escape through the mesh theorem
\cite[Theorem~9.21]{IntroductionCS}, which we will use in the proof of the main result.
By {\cite[Proposition~8.1(b)]{IntroductionCS}} Gordan's constant obeys
$\sqrt{\frac{M}{M+1}}\leq \GMC{M}\leq 1$.
For high $M$ it is thus feasible to estimate $\GMC{M}\approx 1$,
and $\GMC{M}$ should be considered as a constant.
The following result is basically \cite[Theorem~11]{NSPCone}, however their result uses estimates
that calculate a suboptimal $\tau$. Since we want to estimate $\tau_2^0$, we need
to optimize with respect to $\tau$.
\begin{Theorem}\label{Theorem:Gaussian=>NSP}
	Let $\rho\in\left(0,1\right)$, $\eta\in\left(0,1\right)$
	and the entries of $\AMat\in\mathbb{R}^{M\times N}$
	be independent $\GaussianRV{0}{M^{-1}}$ random variables.
	If
	\begin{align}\label{Equation:PhaseTransition:Theorem:Gaussian=>NSP}
		\tau\geq\left(\GMC{M}-\sqrt{1+\left(1+\rho^{-1}\right)^2}
			\left(\sqrt{2\frac{S}{M}\Ln{\ExpE\frac{N}{S}}}+\sqrt{\frac{S}{M}}\right)-\sqrt{\frac{2}{M}\Ln{\eta^{-1}}}\right)^{-1}>0,
	\end{align}
	then, with probability of at least $1-\eta$,
	$\AMat$ has $\ell_2$-RNSP of order $S$ wrt $\norm{\cdot}_2$
	with constants $\rho$ and $\tau$.
	In this case, setting $\hrloneParam:=\frac{3+\rho}{\left(1+\rho\right)^2}\tau\sqrt{S}$ yields
	that for all $\xVec\in\mathbb{R}^N$ and $\yVec\in\mathbb{R}^M$
	any minimizer $\xVec^\#$ of
	\begin{align*}
		\tempmin{\zVec\in\mathbb{R}^n}
			\norm{\zVec}_1
			+\hrloneParam\norm{\yVec-\AMat\zVec}_2
	\end{align*}
	obeys
	\begin{align*}
		\norm{\xVec^\#-\xVec}_2
		\leq
			2\frac{\left(1+\rho\right)^2}{1-\rho}S^{-\frac{1}{2}}\compr{\xVec}{S}
				+2\frac{3+\rho}{1-\rho}\tau\norm{\yVec-\AMat\xVec}_2.
	\end{align*}
\end{Theorem}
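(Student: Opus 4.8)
The recovery error bound is immediate once the $\ell_2$-RNSP is established. Indeed, if $\AMat$ has $\ell_2$-RNSP of order $S$ with respect to $\norm{\cdot}_2$ with constants $\rho$ and $\tau$, then the stated estimate for $\norm{\xVec^\#-\xVec}_2$ is precisely \refP{Equation:ErrorBound:Theorem:HRL1Recovery:Lambda>TauSq} of \thref{Theorem:HRL1Recovery:Lambda>TauSq} for $q=2$ under the choice $\hrloneParam=\frac{3+\rho}{\left(1+\rho\right)^2}\tau\sqrt{S}$, which is exactly the boundary case \refP{Equation:EQ1:Subsection:Theory_RecoveryGuarantees} forcing $\rho'=\rho$. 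Hence the entire content of the theorem is the probabilistic claim that \refP{Equation:PhaseTransition:Theorem:Gaussian=>NSP} implies the $\ell_2$-RNSP with constants $\rho$ and $\tau$ with probability at least $1-\eta$, and this is where I focus.

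The plan is to recast the RNSP as a lower bound on the conic minimal singular value $\min_{\vVec\in\mathcal{S}}\norm{\AMat\vVec}_2$ over the set $\mathcal{S}:=\left\{\vVec\in\UnitSphere{N}{\ell_2}:\norm{\ProjToIndex{T}{\vVec}}_2\geq\rho S^{-\frac{1}{2}}\norm{\ProjToIndex{T^c}{\vVec}}_1\text{ for some }\SetSize{T}\leq S\right\}$. For a fixed $T$ and a $\vVec$ violating the cone condition, the defining inequality of the RNSP holds trivially, since its right-hand side already dominates $\norm{\ProjToIndex{T}{\vVec}}_2$ through the first term alone; by homogeneity it therefore suffices to treat $\vVec\in\mathcal{S}$ with $\norm{\vVec}_2=1$, where $\norm{\ProjToIndex{T}{\vVec}}_2-\rho S^{-\frac{1}{2}}\norm{\ProjToIndex{T^c}{\vVec}}_1\leq\norm{\ProjToIndex{T}{\vVec}}_2\leq 1$. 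Consequently, if $\min_{\vVec\in\mathcal{S}}\norm{\AMat\vVec}_2\geq\kappa$ for some $\kappa>0$, then every $\tau\geq\kappa^{-1}$ satisfies the $\ell_2$-RNSP. Writing $\kappa$ for the bracketed expression in \refP{Equation:PhaseTransition:Theorem:Gaussian=>NSP} (so that the threshold there reads $\tau\geq\kappa^{-1}$), it remains to prove $\min_{\vVec\in\mathcal{S}}\norm{\AMat\vVec}_2\geq\kappa$ with probability at least $1-\eta$.

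To bound $\min_{\vVec\in\mathcal{S}}\norm{\AMat\vVec}_2$ from below I would invoke Gordon's escape through the mesh \cite[Theorem~9.21]{IntroductionCS}. Writing $\AMat=M^{-\frac{1}{2}}\widetilde{\AMat}$ with $\widetilde{\AMat}$ having independent standard Gaussian entries, Gordon's theorem yields $\min_{\vVec\in\mathcal{S}}\norm{\AMat\vVec}_2\geq\GMC{M}-M^{-\frac{1}{2}}\GW{\mathcal{S}}-M^{-\frac{1}{2}}t$ with probability at least $1-\ExpE^{-\frac{t^2}{2}}$, where $\GW{\mathcal{S}}=\Expect{\sup_{\vVec\in\mathcal{S}}\scprod{\hVec}{\vVec}}$ is the Gaussian width for a standard Gaussian vector $\hVec\in\mathbb{R}^N$. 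Choosing $t=\sqrt{2\Ln{\eta^{-1}}}$ turns the exceptional probability into $\eta$ and contributes exactly the term $\sqrt{\frac{2}{M}\Ln{\eta^{-1}}}$ in \refP{Equation:PhaseTransition:Theorem:Gaussian=>NSP}. It then remains to establish the width estimate $\GW{\mathcal{S}}\leq\sqrt{1+\left(1+\rho^{-1}\right)^2}\left(\sqrt{2S\Ln{\ExpE\frac{N}{S}}}+\sqrt{S}\right)$.

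The width estimate is the heart of the argument and the point at which the present bound refines \cite[Theorem~11]{NSPCone}. The key geometric observation is that membership in $\mathcal{S}$ forces, for the witnessing $T$, the chain $S^{-\frac{1}{2}}\norm{\vVec}_1=S^{-\frac{1}{2}}\norm{\ProjToIndex{T}{\vVec}}_1+S^{-\frac{1}{2}}\norm{\ProjToIndex{T^c}{\vVec}}_1\leq\left(1+\rho^{-1}\right)\norm{\ProjToIndex{T}{\vVec}}_2\leq 1+\rho^{-1}$, so $\mathcal{S}$ lies inside the section $\left\{\vVec:\norm{\vVec}_2\leq 1,\ \norm{\vVec}_1\leq\left(1+\rho^{-1}\right)\sqrt{S}\right\}$. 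Splitting $\scprod{\hVec}{\vVec}$ along the $S$ largest-magnitude coordinates $U$ of $\vVec$ and applying a weighted Cauchy--Schwarz inequality gives $\scprod{\hVec}{\vVec}\leq\sqrt{\norm{\ProjToIndex{U}{\vVec}}_2^2+S^{-1}\norm{\ProjToIndex{U^c}{\vVec}}_1^2}\cdot\sqrt{\norm{\ProjToIndex{U}{\hVec}}_2^2+S\norm{\ProjToIndex{U^c}{\hVec}}_\infty^2}$, whose first factor is at most $\sqrt{1+\left(1+\rho^{-1}\right)^2}$ by the containment; taking the supremum over admissible $U$ and the expectation of the second factor then produces the two terms $\sqrt{2S\Ln{\ExpE\frac{N}{S}}}$ and $\sqrt{S}$, using the standard estimate $\Expect{\max_{\SetSize{U}\leq S}\norm{\ProjToIndex{U}{\hVec}}_2}\leq\sqrt{S}+\sqrt{2S\Ln{\ExpE\frac{N}{S}}}$ (via $\binom{N}{S}\leq\left(\ExpE\frac{N}{S}\right)^{S}$) together with a subgaussian maximal inequality for the $\ell_\infty$-block. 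The delicate point, and the main obstacle, is carrying out the supremum over $U$ and the expectation jointly, so that one recovers exactly the factor $\sqrt{2S\Ln{\ExpE\frac{N}{S}}}+\sqrt{S}$ rather than the looser bound obtained by estimating the two Gaussian blocks separately; this tightness is what makes the resulting threshold an estimate of the NSP shape constant $\tau_2^0$, and hence optimal, and it is precisely this optimization over $\tau$ that distinguishes the argument from \cite[Theorem~11]{NSPCone}.
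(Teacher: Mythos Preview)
Your overall architecture is correct and matches the paper: reduce the RNSP to a lower bound on the conic minimal singular value over $\mathcal{S}$ (this is \thref{Lemma:HasNSP} in the paper), invoke Gordon's escape through the mesh (\thref{Theorem:Gordons_Mesh}) with $t=\sqrt{2\Ln{\eta^{-1}}}$, and then bound $\GW{\mathcal{S}}$. The reduction and the application of Gordon are handled well and agree with the paper's proof essentially line for line.

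The divergence is in the width bound, and here your sketch has a gap. Your weighted Cauchy--Schwarz splitting gives
\[
\sup_{\vVec\in\mathcal{S}}\scprod{\hVec}{\vVec}
\leq\sqrt{1+\left(1+\rho^{-1}\right)^2}\;\max_{\SetSize{U}\leq S}\sqrt{\norm{\ProjToIndex{U}{\hVec}}_2^2+S\norm{\ProjToIndex{U^c}{\hVec}}_\infty^2},
\]
but after $\sqrt{a^2+b^2}\leq a+b$ and taking expectations you pick up $\Expect{\max_{\SetSize{U}\leq S}\norm{\ProjToIndex{U}{\hVec}}_2}+\sqrt{S}\,\Expect{\norm{\hVec}_\infty}$, and the second summand contributes an additional term of order $\sqrt{2S\Ln{N}}$ that does \emph{not} collapse into $\sqrt{2S\Ln{\ExpE N/S}}+\sqrt{S}$. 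You call this ``the delicate point'' and ``the main obstacle'' but do not say how to overcome it; as written, your argument yields the right order but a strictly worse constant than \refP{Equation:PhaseTransition:Theorem:Gaussian=>NSP}. The paper sidesteps the issue entirely by quoting the convex hull inclusion $\mathcal{S}\subset\sqrt{1+\left(1+\rho^{-1}\right)^2}\,\conv{\Sigma_S\cap\UnitSphere{N}{\ell_2}}$ (\thref{Lemma:NSPCone}, i.e.\ \cite[Lemma~3(b)]{NSPCone}), which pulls out the factor $\sqrt{1+\left(1+\rho^{-1}\right)^2}$ cleanly and leaves only $\GW{\Sigma_S\cap\UnitSphere{N}{\ell_2}}\leq\sqrt{2S\Ln{\ExpE N/S}}+\sqrt{S}$ (\thref{Lemma:GaussianWidth}). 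Your containment $\mathcal{S}\subset\{\norm{\vVec}_2\leq 1,\ \norm{\vVec}_1\leq(1+\rho^{-1})\sqrt{S}\}$ is correct but weaker than this convex hull inclusion, and that is exactly why the extra $\ell_\infty$-block appears.

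One further remark: your closing sentence misplaces the novelty. The width bound itself is lifted verbatim from \cite{NSPCone}; the adaptation the paper advertises relative to \cite[Theorem~11]{NSPCone} is not a sharper width estimate but simply keeping $\tau$ explicit in the phase-transition inequality rather than absorbing it into unspecified constants.
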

%\begin{proof}
	The result is proven in \refP{Subsection:NotNPHard_GaussianMatrices:Proofs}.
%\end{proof}
Combining \refP{Equation:PhaseTransition:Theorem:Gaussian=>NSP} and $\hrloneParam=\frac{3+\rho}{\left(1+\rho\right)^2}\tau\sqrt{S}$ yields
\begin{align}\label{Equation:PhaseTransition:tuning}
	\hrloneParam\geq\frac{3+\rho}{\left(1+\rho\right)^2}\left(\GMC{M}-\sqrt{1+\left(1+\rho^{-1}\right)^2}
		\left(\sqrt{2\frac{S}{M}\Ln{\ExpE\frac{N}{S}}}+\sqrt{\frac{S}{M}}\right)
			-\sqrt{\frac{2}{M}\Ln{\eta^{-1}}}\right)^{-1}\sqrt{S}.
\end{align}
This allows us to directly estimate $\hrloneParam$ from $M,N,S$
and any choice of $\eta,\rho$ such that the right hand side of
\refP{Equation:PhaseTransition:tuning} is positive.
In practice such an estimation is often infeasible due to suboptimally chosen bounds
in some inequalities of the proof.
Thus, the value of this inequality lies not in the direct relation of the
tuning parameter to the dimensions $M,N,S$, but in the order of this relation.\\
In particular, \refP{Equation:PhaseTransition:Theorem:Gaussian=>NSP} yields
that the robustness constant depends on the order of the term
\begin{align*}
	\SecOrd:=\GMC{M}\sqrt{M}-\sqrt{1+\left(1+\rho^{-1}\right)^2}
			\left(\sqrt{2S\Ln{\ExpE\frac{N}{S}}}+\sqrt{S}\right)-\sqrt{2\Ln{\eta^{-1}}}.
\end{align*}
If for instance for some $\alpha>0$ one of the three equalities
\begin{align*}
	\SecOrd
	=\begin{Bmatrix}
		\alpha
		\\\alpha S^\frac{1}{2}
		\\\alpha M^\frac{1}{2}
	\end{Bmatrix}
\end{align*}
%\begin{align}
%	\GMC{M}\sqrt{M}-\sqrt{1+\left(1+\rho^{-1}\right)^2}
%			\left(\sqrt{2S\Ln{\ExpE\frac{N}{S}}}+\sqrt{S}\right)-\sqrt{2\Ln{\eta^{-1}}}
%	=\begin{Bmatrix}
%		\alpha
%		\\\alpha S^\frac{1}{2}
%		\\\alpha M^\frac{1}{2}
%	\end{Bmatrix}	
%\end{align}
holds true, then, with probability of at least $1-\eta$,
$\AMat$ has $\ell_2$-RNSP of order $S$ wrt $\norm{\cdot}_2$ with constants $\rho$ and
\begin{align*}
		\tau=\frac{\sqrt{M}}{\SecOrd}
		=\begin{Bmatrix}
			\alpha^{-1}M^\frac{1}{2}
			\\\alpha^{-1}\left(\frac{M}{S}\right)^{\frac{1}{2}}
			\\\alpha^{-1}
		\end{Bmatrix}
\end{align*}
%\begin{align}
%		\tau=\sqrt{M}\left(\GMC{M}\sqrt{M}-\sqrt{1+\left(1+\rho^{-1}\right)^2}
%			\left(\sqrt{2S\Ln{\ExpE\frac{N}{S}}}+\sqrt{S}\right)-\sqrt{2\Ln{\eta^{-1}}}\right)^{-1}
%		=\begin{Bmatrix}
%			\alpha^{-1}M^\frac{1}{2}
%			\\\alpha^{-1}\left(\frac{M}{S}\right)^{\frac{1}{2}}
%			\\\alpha^{-1}
%		\end{Bmatrix}
%\end{align}
respectively. In this cases, we should choose the tuning parameter
\begin{align*}
	\hrloneParam=\frac{3+\rho}{\left(1+\rho\right)^2}\tau S^{1-\frac{1}{2}}
	=\frac{3+\rho}{\left(1+\rho\right)^2}\alpha^{-1}
	\begin{Bmatrix}
		\sqrt{MS}
		\\\sqrt{M}
		\\\sqrt{S}
	\end{Bmatrix}
\end{align*}
respectively. Only in the third case this coincides with the simple rule to choose
$\hrloneParam\asymp \sqrt{S}$.
However, this case is also the case with the most measurements.
This suggests that $\hrloneParam\asymp\sqrt{MS}$ is a better choice
and requires less measurements. In between we have that
$\hrloneParam\asymp\sqrt{M}$ is also a viable choice
which has the advantage that it does not require knowledge about the possible
unknown $S$.
In \refP{Subsection:Numerics_GaussianMatrices}
we will verify in a short numeric experiment that a choice of tuning parameter independent on $S$ is indeed possible.\\
%It is well known that the optimal number of measurements for a null space property is
%$\sqrt{M}\asymp \sqrt{S\Ln{\ExpE\frac{N}{S}}}\asymp \sqrt{2S\Ln{\ExpE\frac{N}{S}}}+\sqrt{S}$
%\cite[Proposition~10.7]{IntroductionCS}.
%Thus, let us assume $\sqrt{M}=C\left(\sqrt{2S\Ln{\ExpE\frac{N}{S}}}+\sqrt{S}\right)$,
%hence $\sqrt{2\frac{S}{M}\Ln{\ExpE\frac{N}{S}}}+\sqrt{\frac{S}{M}}=C^{-1}$.
%Using both estimates the thresh hold of \refP{Equation:PhaseTransition:Theorem:Gaussian=>NSP}
%becomes
%\begin{align}
%	\tau\geq\left(\GMC{M}-C^{-1}\right)^{-1}=const.
%\end{align}
%Thus it would be natural to assume that $\hrloneParam\asymp \tau S^{1-\frac{1}{q}}\asymp \sqrt{S}$ is the correct choice.
In any case it should be noted that the choice of the tuning parameter is
more complicated than $\hrloneParam\asymp\sqrt{S}$ and rather given by
the relation in \refP{Equation:PhaseTransition:tuning}.
It is rather impressiv that, if the measurements are optimal in the sense that
$\GMC{M}\sqrt{M}$ is in the same order as
$\sqrt{1+\left(1+\rho^{-1}\right)^2}
			\left(\sqrt{2S\Ln{\ExpE\frac{N}{S}}}+\sqrt{S}\right)+\sqrt{2\Ln{\eta^{-1}}}$
with the same leading constant,
the tuning parameter depends heavily on the second order term, i.e., the difference
\begin{align*}
	\SecOrd=\GMC{M}\sqrt{M}-\sqrt{1+\left(1+\rho^{-1}\right)^2}
			\left(\sqrt{2S\Ln{\ExpE\frac{N}{S}}}+\sqrt{S}\right)-\sqrt{2\Ln{\eta^{-1}}}.
\end{align*}
It is remarkable that contrary to other compressed sensing results the constant $C$ from the
phase transition inequality $M\geq C S\Ln{\ExpE\frac{N}{S}}$ is of importance.
Not only this, also the second order term $\sqrt{M}-\sqrt{C} \sqrt{S\Ln{\ExpE\frac{N}{S}}}$ is of importance as well.\\
%Note that \refP{Equation:PhaseTransition:Theorem:Gaussian=>NSP} is equivalent to
%$\tau\in\left(\GMC{M}^{-1},\infty\right)$ and
%\begin{align}
%	\sqrt{M}\geq\left(\GMC{M}-\tau^{-1}\right)^{-1}\left(\sqrt{1+\left(1+\rho^{-1}\right)^2}\left(\sqrt{2S\Ln{\ExpE\frac{N}{S}}}+\sqrt{S}\right)+\sqrt{2\Ln{\eta^{-1}}}\right).
%\end{align}
%Given some fixed $M,N,S$ we can increase $\rho$ and decrease $\tau$ accordingly
%to still fulfill this inequality.
%Here we can see that the trade off between stableness and robustness constant,
%proposed in \refP{Subsection:Theory_RecoveryEquivalence}, is
%not a artificial occurence of some specifically chosen examples,
%but it does exist for Gaussian matrices.
%\\
In view of \thref{Corollary:ShapeHRL1Param} it is desirable
to estimate $\tau_2^0$ and thus $\tau_1^0\leq \tau_2^0\sqrt{S}$
as good as possible to determine the exact set of tuning parameters
that yield an $\ell_2$-SRD for $\AMat$.
If the entries of $\AMat\in\mathbb{R}^{M\times N}$ are independent $\GaussianRV{0}{M^{-1}}$ random variables, then $\AMat$ can have $\ell_2$-NSP of order $S$. Restricted to this event the NSP shape
constant $\tau_2^0$ is well defined and a random variable. The following result states two things.
At first it bounds the probability that $\tau_2^0$ is bounded above by a constant.
Secondly, given a sufficiently large tuning parameter, it bounds the probability that the
sufficient condition for recovery with \hrlone{} is fulfilled.
\begin{Proposition}\label{Proposition:Gaussian=>HRL1}
	Let the entries of $\AMat\in\mathbb{R}^{M\times N}$
	be independent $\GaussianRV{0}{M^{-1}}$ random variables.
	If
	\begin{align}\label{Equation:PhaseTransition:Theorem:Gaussian=>NSP:HRL1}
		\hrloneParam>\sqrt{S}\left(\GMC{M}-\sqrt{5}\left(\sqrt{2\frac{S}{M}\Ln{\ExpE\frac{N}{S}}}+\sqrt{\frac{S}{M}}\right)\right)^{-1}>0,
	\end{align}
	then, with probability of at least
	\begin{align*}
			1-\Exp{-\frac{1}{2}\left(\GMC{M}-\sqrt{5}\left(\sqrt{2\frac{S}{M}\Ln{\ExpE\frac{N}{S}}}+\sqrt{\frac{S}{M}}\right)-\sqrt{S}\hrloneParam^{-1}\right)^2M}\in\left(0,1\right),
	\end{align*}
	$\AMat$ has $\ell_2$-NSP of order $S$ and the NSP shape constant obeys
	$\hrloneParam>\tau_2^0 \sqrt{S}$.
	In this case, any decoder $\Decoder{}:\mathbb{R}^M\rightarrow\mathbb{R}^N$
	with $\Decoder{\yVec}\in\argmin{\zVec\in\mathbb{R}^N}\norm{\zVec}_1+\hrloneParam\norm{\AMat\zVec-\yVec}_2$ for all $\yVec\in\mathbb{R}^M$ is an $\ell_2$-SRD of order $S$ wrt $\norm{\cdot}_2$ for $\AMat$.
\end{Proposition}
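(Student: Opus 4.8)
The plan is to derive \thref{Proposition:Gaussian=>HRL1} from \thref{Theorem:Gaussian=>NSP} together with \thref{Corollary:ShapeHRL1Param}, exploiting the fact that the constant $\sqrt{5}$ appearing in \refP{Equation:PhaseTransition:Theorem:Gaussian=>NSP:HRL1} is precisely the limit of $\sqrt{1+\left(1+\rho^{-1}\right)^2}$ as $\rho\nearrow1$. To fix notation I would write $W:=\sqrt{2\tfrac{S}{M}\Ln{\ExpE\tfrac{N}{S}}}+\sqrt{\tfrac{S}{M}}$ and $A:=\GMC{M}-\sqrt{5}\,W$, so that the hypothesis reads $\hrloneParam>\sqrt{S}A^{-1}>0$, equivalently $A>\sqrt{S}\hrloneParam^{-1}$, and the claimed probability is $1-\Exp{-\tfrac12\left(A-\sqrt{S}\hrloneParam^{-1}\right)^2M}$. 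The exponent matches the deviation term of \thref{Theorem:Gaussian=>NSP} under the identification $\sqrt{\tfrac{2}{M}\Ln{\eta^{-1}}}=A-\sqrt{S}\hrloneParam^{-1}$.

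First I would record the elementary implication that, for $\rho\in\left[0,1\right)$, the $\ell_2$-RNSP of order $S$ with constants $\rho$ and $\tau$ forces both the $\ell_2$-NSP of order $S$ and the shape-constant bound $\tau_2^0\leq\tau$. Indeed, for any $\SetSize{T}\leq S$ and $\vVec\notin\Kernel{\AMat}$ the defining inequality $\norm{\ProjToIndex{T}{\vVec}}_2\leq\rho S^{-\frac12}\norm{\ProjToIndex{T^c}{\vVec}}_1+\tau\norm{\AMat\vVec}_2$ yields $\norm{\ProjToIndex{T}{\vVec}}_2-S^{-\frac12}\norm{\ProjToIndex{T^c}{\vVec}}_1\leq\left(\rho-1\right)S^{-\frac12}\norm{\ProjToIndex{T^c}{\vVec}}_1+\tau\norm{\AMat\vVec}_2\leq\tau\norm{\AMat\vVec}_2$; dividing by $\norm{\AMat\vVec}_2>0$ and taking the supremum over $T$ and $\vVec$ gives $\tau_2^0\leq\tau$, while the $\ell_2$-NSP follows similarly by considering $\vVec\in\Kernel{\AMat}\setminus\ZeroSet$ and using $\rho<1$. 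Hence every event on which \thref{Theorem:Gaussian=>NSP} certifies the RNSP is contained in the event $\left\{\tau_2^0\leq\tau\right\}$ on which additionally the $\ell_2$-NSP holds.

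Next I would run a limiting argument in the two free parameters $\rho$ and a slightly reduced tuning value $\hrloneParam'$. Fix any $\hrloneParam'\in\left(\sqrt{S}A^{-1},\hrloneParam\right)$ and any $\rho\in\left(0,1\right)$ close enough to $1$ that $\GMC{M}-\sqrt{1+\left(1+\rho^{-1}\right)^2}\,W>\sqrt{S}\hrloneParam'^{-1}$; this is possible since the left-hand side tends to $A>\sqrt{S}\hrloneParam'^{-1}$ as $\rho\nearrow1$. Choosing $\eta=\eta_{\rho,\hrloneParam'}$ with $\sqrt{\tfrac{2}{M}\Ln{\eta^{-1}}}$ equal to this positive difference makes the threshold \refP{Equation:PhaseTransition:Theorem:Gaussian=>NSP} exactly $\hrloneParam'S^{-\frac12}$, so \thref{Theorem:Gaussian=>NSP} applied with $\tau=\hrloneParam'S^{-\frac12}$ certifies the RNSP, and therefore $\tau_2^0\leq\hrloneParam'S^{-\frac12}<\hrloneParam S^{-\frac12}$, with probability at least $1-\eta_{\rho,\hrloneParam'}$. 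Because the event $\left\{\tau_2^0<\hrloneParam S^{-\frac12}\right\}$ does not depend on $\rho$ or $\hrloneParam'$, each such $\left(\rho,\hrloneParam'\right)$ furnishes a lower bound on its probability, and I may take the supremum of these bounds; letting $\rho\nearrow1$ and $\hrloneParam'\nearrow\hrloneParam$ sends $\eta_{\rho,\hrloneParam'}\searrow\Exp{-\tfrac12\left(A-\sqrt{S}\hrloneParam^{-1}\right)^2M}$ monotonically, which delivers the stated probability.

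Finally, on this event the $\ell_2$-NSP of order $S$ holds and $\tau_1^0\leq\tau_2^0\sqrt{S}<\hrloneParam$ by the estimate $\tau_1^0\leq\tau_2^0 S^{1-\frac1q}$ with $q=2$ from \thref{Corollary:ShapeHRL1Param}; since that corollary identifies the admissible tuning set with $\left(\tau_1^0,\infty\right)$, the inclusion $\hrloneParam\in\left(\tau_1^0,\infty\right)$ shows that every decoder mapping to $\argmin{\zVec\in\mathbb{R}^N}\norm{\zVec}_1+\hrloneParam\norm{\AMat\zVec-\yVec}_2$ is an $\ell_2$-SRD of order $S$ wrt $\norm{\cdot}_2$. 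I expect the main obstacle to be the boundary nature of the estimate: since $\sqrt{1+\left(1+\rho^{-1}\right)^2}>\sqrt{5}$ strictly for every admissible $\rho$, \thref{Theorem:Gaussian=>NSP} never applies directly at the optimal constant $\sqrt{5}$, so one must simultaneously push $\rho\nearrow1$ and $\hrloneParam'\nearrow\hrloneParam$ both to reach $\sqrt{5}$ and to upgrade the non-strict $\tau_2^0\leq\hrloneParam S^{-\frac12}$ to the strict inequality required by \thref{Corollary:ShapeHRL1Param}, all while verifying that the probability bounds converge upward to the claimed value. An alternative, should the limit prove delicate, is to re-enter the escape-through-the-mesh argument underlying \thref{Theorem:Gaussian=>NSP} at $\rho=1$ and bound the minimal gain of $\AMat$ over the $\ell_2$-NSP cone directly, the Gaussian width of which produces the $\sqrt{5}=\sqrt{1+\left(1+1\right)^2}$ and the width term $W$.
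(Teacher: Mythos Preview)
Your proposal is correct and follows essentially the same approach as the paper: a two-parameter limit argument built on \thref{Theorem:Gaussian=>NSP}, sending $\rho\nearrow 1$ to reach the constant $\sqrt{5}$ and simultaneously pushing the robustness constant toward $\hrloneParam S^{-\frac12}$ to obtain the strict inequality $\tau_2^0<\hrloneParam S^{-\frac12}$, then invoking \thref{Corollary:ShapeHRL1Param} for the SRD conclusion. The paper routes the same idea through an intermediate reparametrization (\thref{Corollary:Gaussian=>NSP:no_eta}) and an interpolation parameter $\alpha\in(0,1)$ in place of your $(\rho,\hrloneParam')$, but the limiting mechanism and the use of the fixed event $\{\tau_2^0<\hrloneParam S^{-\frac12}\}$ to pass to the supremum of the probability bounds are identical; your version is a bit more direct.
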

%\begin{proof}
	The proof can be found in \refP{Subsection:NotNPHard_GaussianMatrices:Proofs}
\subsection{Random Walk Matrices of Uniformly Distributed $D$-Left Regular Bipartite Graphs}\label{Subsection:NotNPHard_TuningForLeftRegularBipartiteGraphs}
\label{Subsection:NotNPHard_LRBGMatrices}
In view of \thref{Theorem:HRL1Recovery:Lambda>TauSq} %\thref{Theorem:HRL1Recovery}
it is desirable to find matrices which obey the $\ell_1$-RNSP since $S$ does not appear in the threshold for $\hrloneParam$. To generate such matrices we introduce left regular bipartite graphs. Although we will not present a detailed analysis as in
\refP{Subsection:NotNPHard_GaussianMatrices}, we can still deduce a result for
the least absolute deviation LASSO.
\begin{Definition}
	Let $\AMat\in\left\{0,1\right\}^{M\times N}$ and $D\in\SetOf{M}$. For $T\subset N$ we set
	$
		\RightVertices{T}:=\bigcup_{n\in T}\left\{m\in\SetOf{M}:\AMat_{m,n}=1\right\}
	$. If
	$
		\SetSize{\RightVertices{\left\{n\right\}}}=D \TextForAll n\in \SetOf{N}
	$,
	then $D^{-1}\AMat$ is called a random walk matrix of a $D$-left regular bipartite graph.
%	We will also shortly say $D^{-1}\AMat$ is a $D$-LRBG. 
\end{Definition}
Uniformly at random chosen $D$-left regular bipartite graphs are similar to Gaussian matrices.
In particular, if $M\geq C S \Log{\ExpE\frac{N}{S}}$ they obey with high probability
the lossless expansion property,
which is the $\ell_1$ counterpart to the restricted isometry property and yields an $\ell_1$-RNSP.
For more details see \refP{Subsection:NotNPHard_LRBGMatrices:Proofs}.
\begin{Theorem}[Left Regular Bipartite Graph $\Rightarrow \hrloneParam=3$]\label{Theorem:RecoveryViaExpander}
	Let $\theta\in\left(0,\frac{1}{6}\right)$ and
	$D:=\RoundUp{\frac{2}{\theta}\Ln{\frac{\ExpE N}{2S}}}$.
	Let $\AMat\in\left\{0,D^{-1}\right\}^{M\times N}$
	be a uniformly at random chosen $D$-left regular bipartite graph.
%	be the biadjacency of a bipartite graph $\left(\SetOf{N},\SetOf{M},E\right)$,
%	which is drawn uniformly at random among all left $D$-regular bipartite graphs.
	If
	\begin{align*}
		M\geq \frac{4}{\theta}\Exp{\frac{2}{\theta}}S\Ln{\frac{\ExpE N}{2S}},
	\end{align*}
	then, with probability of at least
	$
		1-\frac{2S}{\ExpE N}
	$,
	the matrix $\AMat$ has $\ell_1$-RNSP of order $S$ wrt $\norm{\cdot}_1$ with
	constants $\rho=\frac{2\theta}{1-4\theta}$ and $\tau=\frac{1}{1-4\theta}$.
	In this case, setting
	$\hrloneParam:=\frac{2}{1+\rho}\tau=\frac{2}{1-2\theta}\in\left(2,3\right)$ yields that
	for all $\xVec\in\mathbb{R}^N$ and $\yVec\in\mathbb{R}^M$ any minimizer $\xVec^\#$ of
	\begin{align*}
		\tempmin{\zVec\in\mathbb{R}^n}
			\norm{\zVec}_1
			+\hrloneParam\norm{\yVec-\AMat\zVec}_1
	\end{align*}
	obeys
	\begin{align*}
		\norm{\xVec^\#-\xVec}_1
		\leq 2\frac{1-2\theta}{1-6\theta}\compr{\xVec}{S}
			+\frac{4}{1-6\theta}\norm{\yVec-\AMat\xVec}_1.
	\end{align*}
\end{Theorem}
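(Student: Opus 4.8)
The plan is to factor the argument through the lossless expansion property, which serves as the bridge between the combinatorial structure of the graph and the analytic $\ell_1$-RNSP required by \thref{Theorem:HRL1Recovery:Lambda>TauSq}. Call a $D$-left regular bipartite graph a lossless $(2S,\theta)$-expander if $\SetSize{\RightVertices{T}}\geq(1-\theta)D\SetSize{T}$ for every $T\subset\SetOf{N}$ with $\SetSize{T}\leq 2S$. I would proceed in two stages: first show that a uniformly at random chosen $D$-left regular bipartite graph is such an expander with probability at least $1-\frac{2S}{\ExpE N}$, and then invoke the standard implication that lossless $(2S,\theta)$-expansion of the underlying adjacency matrix endows the scaled matrix $\AMat$ with the $\ell_1$-RNSP of order $S$ wrt $\norm{\cdot}_1$ with exactly the constants $\rho=\frac{2\theta}{1-4\theta}$ and $\tau=\frac{1}{1-4\theta}$, as developed in \cite[Chapter~13]{IntroductionCS}. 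Note that the hypothesis $\theta\in\left(0,\frac16\right)$ is precisely what makes $\rho<1$, since $\frac{2\theta}{1-4\theta}<1$ is equivalent to $6\theta<1$.

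For the probabilistic stage, fix $T\subset\SetOf{N}$ with $\SetSize{T}=s\leq 2S$. The event $\SetSize{\RightVertices{T}}<(1-\theta)Ds$ is contained in the event that all $Ds$ edges leaving $T$ land in some fixed right-vertex set of size $\RoundDown{(1-\theta)Ds}$, whose probability is at most $\binom{M}{\RoundDown{(1-\theta)Ds}}\left(\frac{(1-\theta)Ds}{M}\right)^{Ds}$ because distinct left vertices choose their $D$ neighbors independently. Using $\binom{M}{\RoundDown{(1-\theta)Ds}}\leq\left(\frac{\ExpE M}{(1-\theta)Ds}\right)^{(1-\theta)Ds}$ and $\binom{N}{s}\leq\left(\frac{\ExpE N}{s}\right)^s$, the union bound over all $s$ and all admissible $T$ reduces to $\sum_{s=1}^{2S}\left[\frac{\ExpE N}{s}\,\ExpE^{(1-\theta)D}\left(\frac{Ds}{M}\right)^{\theta D}\right]^s$. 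The choice $D=\RoundUp{\frac{2}{\theta}\Ln{\frac{\ExpE N}{2S}}}$ makes $\theta D\geq 2\Ln{\frac{\ExpE N}{2S}}$, while $M\geq\frac{4}{\theta}\Exp{\frac{2}{\theta}}S\Ln{\frac{\ExpE N}{2S}}$ forces $\frac{Ds}{M}\leq\Exp{-\frac{2}{\theta}}$ for every $s\leq 2S$ (up to the rounding in $D$); together these drive each bracket strictly below $1$ and collapse the geometric sum to at most $\frac{2S}{\ExpE N}$. Extracting these precise constants from the two binomial estimates, and in particular verifying that the $\Exp{\frac{2}{\theta}}$ factor in the hypothesis on $M$ is exactly what is needed to beat the $\binom{N}{s}$ entropy term, is the delicate bookkeeping and the main obstacle of the proof.

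Once the $\ell_1$-RNSP with the stated $\rho$ and $\tau$ is established on this high-probability event, the rest is a deterministic verification against \thref{Theorem:HRL1Recovery:Lambda>TauSq} with $q=1$. There the threshold reads $\hrloneParam>\tau S^{1-\frac1q}=\tau$, and the prescribed value $\hrloneParam=\frac{2}{1+\rho}\tau=\frac{2}{1-2\theta}$ indeed exceeds $\tau=\frac{1}{1-4\theta}$ exactly when $6\theta<1$, i.e.\ under the standing hypothesis $\theta<\frac16$; the same computation shows $\hrloneParam\in\left(2,3\right)$. Since this $\hrloneParam$ attains \refP{Equation:EQ1:Subsection:Theory_RecoveryGuarantees} with equality, \thref{Theorem:HRL1Recovery:Lambda>TauSq} applies with $\rho'=\rho$, and substituting $\rho=\frac{2\theta}{1-4\theta}$, for which $1-\rho=\frac{1-6\theta}{1-4\theta}$ and $1+\rho=\frac{1-2\theta}{1-4\theta}$, turns the $q=1$ error bound into $2\frac{1-2\theta}{1-6\theta}\compr{\xVec}{S}+\left(\frac{2}{1-6\theta}+\frac{2}{1-6\theta}\right)\norm{\yVec-\AMat\xVec}_1$, which is exactly the asserted bound. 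I expect this last stage to be entirely routine.
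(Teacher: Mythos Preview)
Your proposal is correct and follows essentially the same route as the paper: establish the $(2S,D,\theta)$-lossless expansion with the stated probability (the paper simply cites \cite[Corollary~13.7]{IntroductionCS} for this rather than rederiving the union bound), pass to the $\ell_1$-RNSP via \cite[Theorem~13.11]{IntroductionCS}, and then plug the resulting $\rho,\tau$ into \thref{Theorem:HRL1Recovery:Lambda>TauSq} with $q=1$ and $\hrloneParam=\frac{2}{1+\rho}\tau$. Your verification of the constants in the final error bound is exactly what is needed.
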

%\begin{proof}
	The result is proven in \refP{Subsection:NotNPHard_LRBGMatrices:Proofs}.
%\end{proof}
%In \refP{Section:NumericBipartiteGraphs} we will verify that the upper bounds $\rho=1$ and $\tau=3$ are actually viable choices for
%\hrlone .
In \refP{Subsection:Numerics_LRBGMatrices} we will verify that the upper bound
$\hrloneParam:=3$ is a viable choice
for \hrlone{} with uniformly at random chosen $D$-left regular bipartite graph.
Note that \hrlone{} defines an $\ell_1$-SRD of order $S$ wrt $\norm{\cdot}_1$ in this case
and can be used without prior information about the noise $\eVec$, the signal $\xVec$
or the order $S$ of the RNSP.
	
	%New Chapter
\section{Numerical Experiments}\label{Section:Numerics}
Given $q\in\left\{1,2\right\}$ we will compare the estimation error of
the minimizers of
\begin{align*}
	\tempmin{\zVec\in\mathbb{R}^N}\norm{\zVec}_1+\hrloneParam\norm{\AMat\zVec-\yVec}_q
\end{align*}
against the estimation errors of the minimizers of the following well known
optimization problems 
\begin{align*}
	\tag{BP}
	\tempmin{\zVec\in\mathbb{R}^N:\AMat\zVec=\yVec}\norm{\zVec}_1,
	&
	\\\tag{BPDN}
	\tempmin{\zVec\in\mathbb{R}^N:\norm{\AMat\zVec-\yVec}_q\leq \epsilon}\norm{\zVec}_1
	&
	\TextWith \epsilon:=\norm{\eVec}_q \TextAnd
	\\\tag{CLR}
	\tempmin{\zVec\in\mathbb{R}^N:\norm{\zVec}_1\leq \tau}\norm{\AMat\zVec-\yVec}_q
	&
	\TextWith \tau:=\norm{\xVec}_1.
\end{align*}
We will estimate some minimizer of these problems with the CVX toolbox of Matlab \cite{CVX1}\cite{CVX2}.
For the BPDN and the CLR we use the optimal tuning $\epsilon=\norm{\eVec}_q$
and $\tau=\norm{\xVec}_1$ respectively. These two represent a best case benchmark
that is using the unknown prior information $\norm{\eVec}_q$ and $\norm{\xVec}_1$.
As a worst case benchmark we use the BP since it requires no prior
information about the noise or the signal.
Tuning the BPDN too high often leads to worse estimation errors
than tuning it too low \cite[Figure~1]{BPDN_misstuning}.
Thus, it is reasonable to choose the BP as the best worst case benchmark.
Given $M,N,S,D,SNR\in\SetOf{N}$
the following experiment will be repeated multiple times:
\begin{Experiment}\label{Experiment:Experiment1}
	For each $k\in\SetOf{100}$ and given $SNR$ do the following:
	\begin{enumerate}
		\item
			If $q=2$, draw each component of $\AMat_k\in\mathbb{R}^{M\times N}$ independent
			as $\GaussianRV{0}{M^{-1}}$ random variable.
			If $q=1$, draw $\AMat_k\in\left\{0,D^{-1}\right\}^{M\times N}$ as
			a uniformly at random drawn $D$-left regular bipartite graph.
		\item
			Draw the signal $\xVec_k$ uniformly at random from $\Sigma_S\cap\UnitSphere{N}{\ell_q}$.
		\item
			Draw the noise $\eVec_k$ uniformly at random from
			$\frac{\norm{\AMat_k\xVec_k}_q}{SNR}\UnitSphere{M}{\ell_q}$.
		\item
			Define the observation $\yVec_k:=\AMat_k\xVec_k+\eVec_k$.
		\item
			For each optimization problem estimate a minimizer by $\xVec^\#_k$
			and collect the relative estimation errors\\
			$\frac{\norm{\xVec_k-\xVec^\#_k}_q}{\norm{\xVec_k}_q}$
			and the noise powers $\norm{\eVec_k}_q$.
	\end{enumerate}
	We calculate the mean normalized $\ell_q$-estimation error
	and the mean normalized $\ell_q$-estimation error per noise power, i.e.,
	we set
	\begin{align*}
		\frac{\norm{\xVec-\xVec^\#}_q}{\norm{\xVec}_q}
		:=\frac{1}{100}\sum_{k\in\SetOf{100}}\frac{\norm{\xVec_k-\xVec^\#_k}_q}{\norm{\xVec_k}_q}
		\TextAnd
		\frac{\norm{\xVec-\xVec^\#}_q}{\norm{\xVec}_q\norm{\eVec}_q}
		:=\frac{1}{100}\sum_{k\in\SetOf{100}}\frac{\norm{\xVec_k-\xVec^\#_k}_q}{\norm{\xVec_k}_q\norm{\eVec_k}_q},
	\end{align*}
	where the left hand sides are understood as an assigned symbol.
\end{Experiment}
%\begin{Experiment}
%	\item
%		Draw each component of $\AMat\in\mathbb{R}^{M\times N}$ independent
%		as $\GaussianRV{0}{M^{-1}}$ random variable.
%	\item
%		Draw the signal $\xVec$ uniformly at random from $\Sigma_S\cap\mathbb{S}^{N-1}_2$
%	\item
%		Draw the noise $\eVec$ uniformly at random from
%		$\frac{\norm{\AMat\xVec}_2}{SNR}\mathbb{S}^{M-1}_2$.
%	\item
%		Define the observation $\yVec:=\AMat\xVec+\eVec$.
%	\item
%		For each decoder $\Decoder{}$ calculate the estimator $\xVec^\#:=\Decoder{\yVec}$
%		and collect the relative estimation error
%		$\frac{\norm{\xVec-\xVec^\#}_2}{\norm{\xVec}_2}=\norm{\xVec-\xVec^\#}_2$.
%\end{Experiment}
%We will always repeat this experiment $100$ times.
%We either plot the mean normalized $\ell_q$ estimation error in raw value or in decibel, i.e.
%\begin{align}
%	Mean\left(\frac{\norm{\xVec-\xVec^\#}_q}{\norm{\xVec}_q}\right),
%%	10\LogBase{10}\left(
%%	Mean\left(\frac{\norm{\xVec-\xVec^\#}_q}{\norm{\xVec}_q}\right)\right)
%\end{align}
%or the mean normalized $\ell_q$ estimation error per $\ell_q$ noise power
%in raw value or in decibel, i.e.
%\begin{align}
%	Mean\left(\frac{\norm{\xVec-\xVec^\#}_q}{\norm{\xVec}_q\norm{\eVec}_q}\right).
%\end{align}
Note that in this experiment the relative and absolute $\ell_q$ estimation errors coincide, i.e.,
$\frac{\norm{\xVec_k-\xVec^\#_k}_q}{\norm{\xVec_k}_q}=\norm{\xVec_k-\xVec^\#_k}_q$
since $\xVec_k$ is normalized.
\subsection{Tuning Parameter}\label{Subsection:Numerics_Misstuning}
\begin{figure}[ht]
    \begin{subfigure}[t]{0.5\textwidth}
		\includegraphics[scale=0.5]{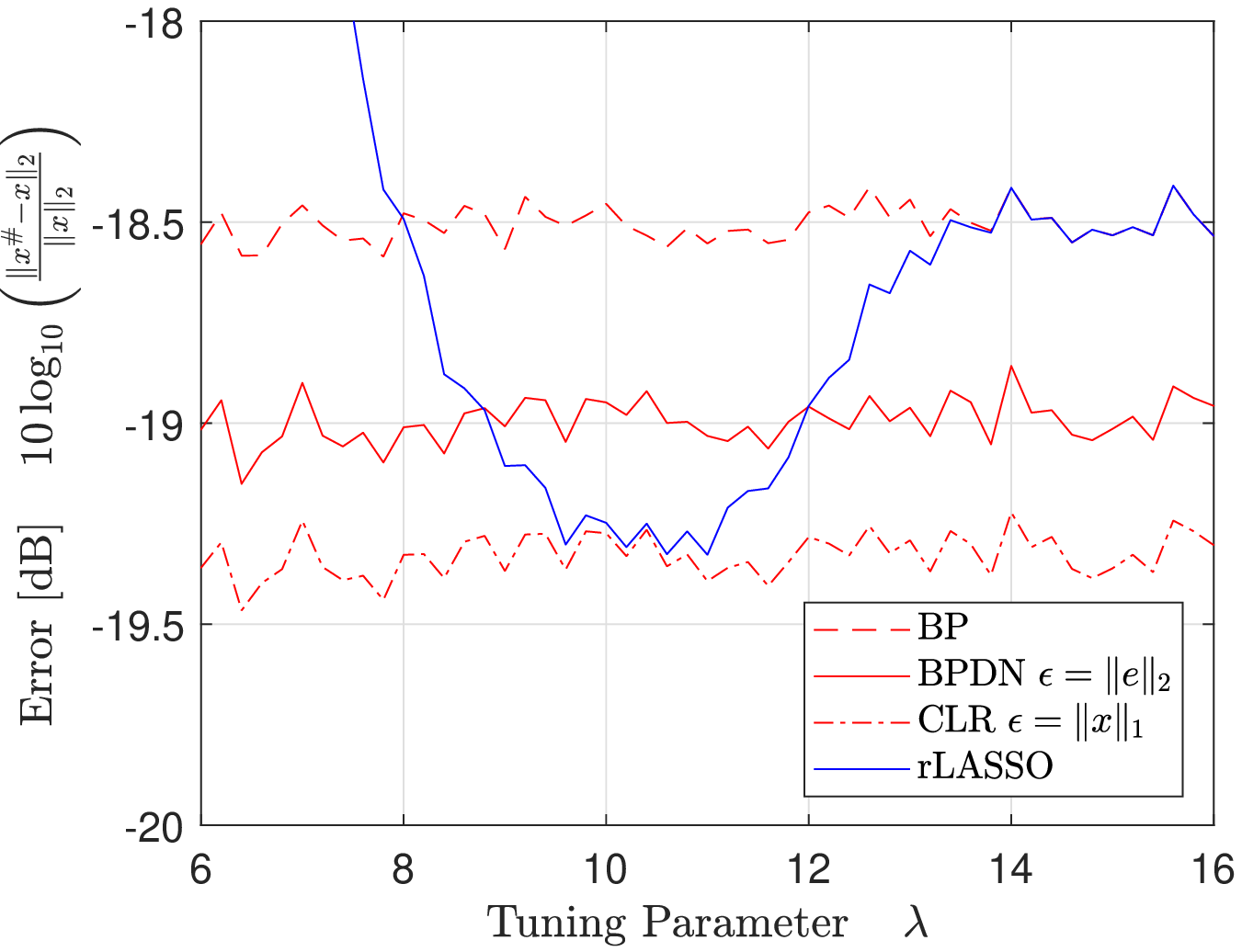}
		\subcaption{\label{Figure:Numerics:Misstuning1}
			Performance of \hrlone{} as a function of $\hrloneParam$.}
    \end{subfigure}
    \begin{subfigure}[t]{0.5\textwidth}
		\includegraphics[scale=0.5]{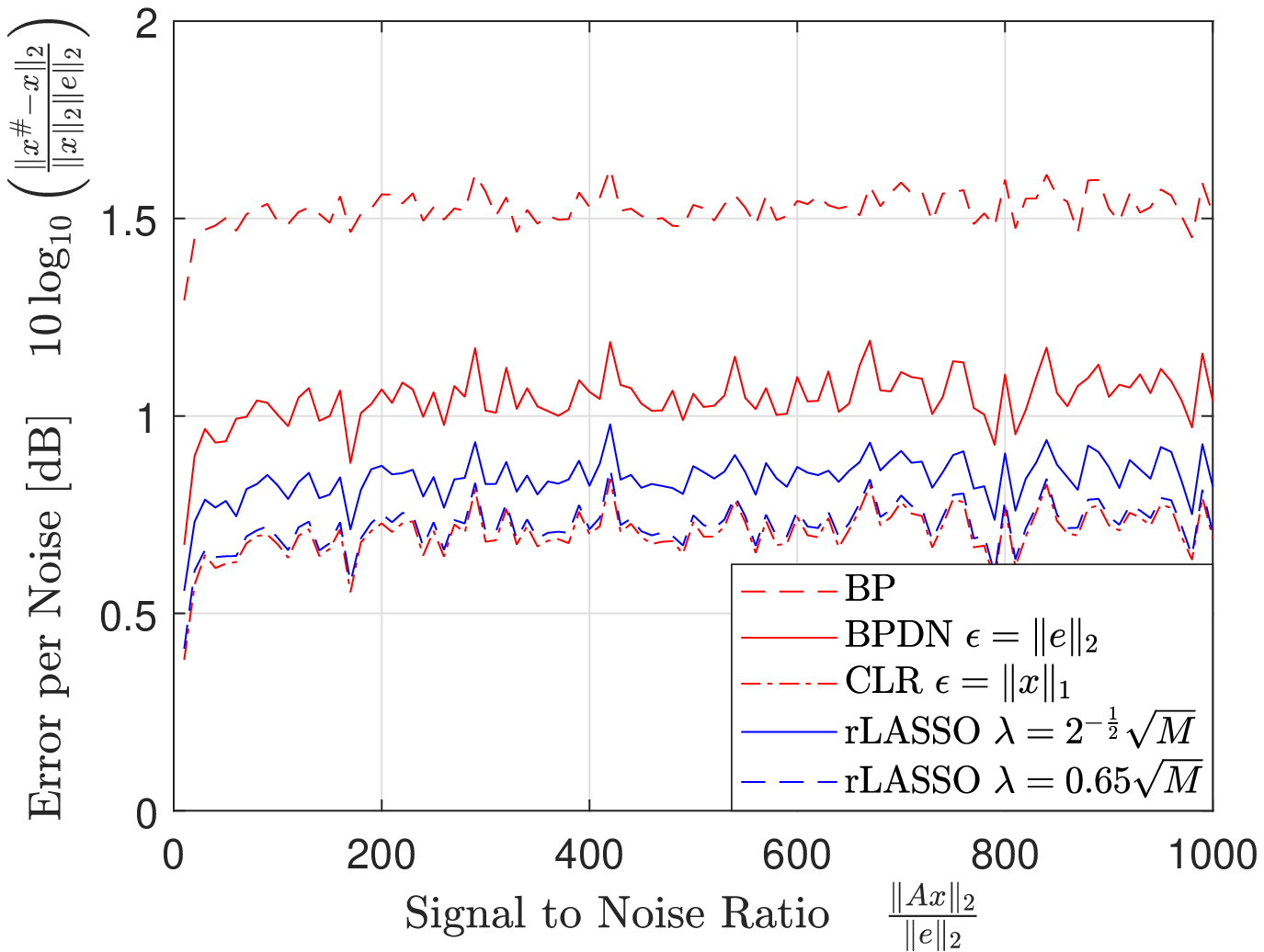}
		\subcaption{\label{Figure:Numerics:NoiseBlind1}
			Performance of \hrlone{} as a function of $SNR=\frac{\norm{\AMat\xVec}_2}{\norm{\eVec}_2}$.}
    \end{subfigure}
    \caption{Impact of the tuning parameter and noise-blindness of \hrlone{}.}
\end{figure}
In order to analyze the threshold from \thref{Theorem:HRL1Recovery:Lambda>TauSq}
and the results of \thref{Theorem:ConvergenceHRL1Param} and \thref{Proposition:HRL1Recovery:QP}
we set $q=2$, i.e., we consider the Gaussian case
since Gaussian matrices suffice a good quotient property with high probability
\cite[Theorem~11.19]{IntroductionCS}.
We fix the parameters $N=1024$, $M=256$, $S=32$, $SNR=100$ and vary the tuning parameter
$\hrloneParam\in \frac{2}{10}\SetOf{100}$ of \hrlone{} in \thref{Experiment:Experiment1}.
The results are plotted in \refP{Figure:Numerics:Misstuning1}.
%and \refP{Figure:Numerics:Misstuning2}.
%\begin{figure}[ht]
%    \centering
%    \label{Figure:Numcerics:Misstuning1}
%	\includegraphics[scale=0.5]{Graphics/NumericMinimizers/Misstuning1}
%	\caption{Performance of \hrlone{} as a function of $\hrloneParam$.}
%\end{figure}
For $\hrloneParam\leq 8$ the recovery with \hrlone{} seems to fail.
For $\hrloneParam\geq 14$ the recovery with \hrlone{} succeeds, but
the estimation errors of BP and \hrlone{} are the same,
which suggests that they return a similar minimizer
as proposed in \thref{Theorem:ConvergenceHRL1Param}.
For $\hrloneParam\rightarrow\infty$ the estimation error does not
diverge to infinity and is capped by the estimation error from the basis pursuit,
which is a consequence of the quotient property.
This coincides with the results of \thref{Proposition:HRL1Recovery:QP}.
The recovery succeeds roughly for $\hrloneParam\geq 8$,
which suggests that $\tau_1^0=8$ and thus $\tau_2^0\geq\frac{\tau_1^0}{\sqrt{S}}=\sqrt{2}$.
The optimal estimation error is achieved at
$\hrloneParam=10.5$.
%If this would correspond to the allegedly optimal
%$\hrloneParam=\frac{3+\rho}{\left(1+\rho\right)^2}\tau \sqrt{S}$ with the previous $\tau=\tau_2^0$,
%we would get that $\rho\approx 0.6728$.
%However, it is not clear that the choice $\hrloneParam=\frac{3+\rho}{\left(1+\rho\right)^2}\tau \sqrt{S}$ indeed achieves the optimal errors.
\subsection{Noise Blindness}\label{Subsection:Numerics_NoiseBlind}
We will verify that the choice of $\hrloneParam$ is indeed independent on the
noise power. For this we consider the Gaussian case $q=2$.
We fix $N=1024$, $M=256$, $S=32$ and vary the signal to noise ratio $SNR\in 10\SetOf{100}$ in \thref{Experiment:Experiment1}.
The resulting errors are plotted in \refP{Figure:Numerics:NoiseBlind1}.
%and \refP{Figure:Numerics:NoiseBlind2}.
%\begin{figure}[ht]
%    \centering
%    \label{Figure:Numcerics:NoiseBlind1}
%	\includegraphics[scale=0.5]{Graphics/NumericMinimizers/NoiseBlind1}
%	\caption{Performance of \hrlone{} as a function of $SNR=\frac{\norm{\AMat\xVec}_2}{\norm{\eVec}_2}$.}
%\end{figure}
%\begin{figure}[ht]
%    \centering
%    \begin{minipage}{0.45\textwidth}
%        \centering
%        \label{Figure:Numerics:NoiseBlind1}
%%		\includegraphics[scale=0.275]{Graphics/NumericMinimizers/Theorem218Rate4Sparity1To50.jpg}
%%		\caption{Recovery in the order of BPDN is possible.}
%%		\caption{Variable $\rho$ and $\tau$ from Theorem \ref{Theorem:RandomSampling} with undersampling rate of $\frac{1}{4}$ are viable.}
%    \end{minipage}\hfill
%    \begin{minipage}{0.45\textwidth}
%        \centering
%        \label{Figure:Numerics:NoiseBlind2}
%%		\includegraphics[scale=0.275]{Graphics/NumericMinimizers/Theorem218Rate2Sparity1To120.jpg}
%%		\caption{Recovery in the order of BPDN is possible.}
%%		\caption{Variable $\rho$ and $\tau$ from Theorem \ref{Theorem:RandomSampling} with undersampling rate of $\frac{1}{2}$ are viable.}
%    \end{minipage}
%\end{figure}
Note that due to a quotient property and the optimal tuning
all decoders achieve robust recovery guarantees.
In particular, since $\norm{\xVec_k}_q=1$, \thref{Theorem:Gaussian=>NSP}
and corresponding results yield that for each decoder the quantity
\begin{align*}
	\frac{\norm{\xVec-\xVec^\#}_q}{\norm{\xVec}_q\norm{\eVec}_q}
		=\frac{1}{100}\sum_{k\in\SetOf{100}}\frac{\norm{\xVec_k-\xVec^\#_k}_q}{\norm{\xVec_k}_q\norm{\eVec_k}_q}
		=\frac{1}{100}\sum_{k\in\SetOf{100}}\frac{\norm{\xVec_k-\xVec^\#_k}_q}{\norm{\eVec_k}_q}
\end{align*}
should be bounded by some constant $D$ independent on the signal to noise ratio $SNR$.
Indeed, we can see that this quantity stays constant and even the relative proportions
between all decoders stay constant. Remarkably, \hrlone{} with the tuning $\hrloneParam=0.65\sqrt{M}$
achieves the same estimation errors as the optimally tuned CLR and even better estimation errors
than the optimally tuned BPDN. However, \hrlone{} does not require any prior information
about the noise or the signal. We deduce that \hrlone{} is noise-blind.
\subsection{Gaussian Matrices}\label{Subsection:Numerics_GaussianMatrices}
\begin{figure}[ht]
    \begin{subfigure}[t]{0.5\textwidth}
		\includegraphics[scale=0.5]{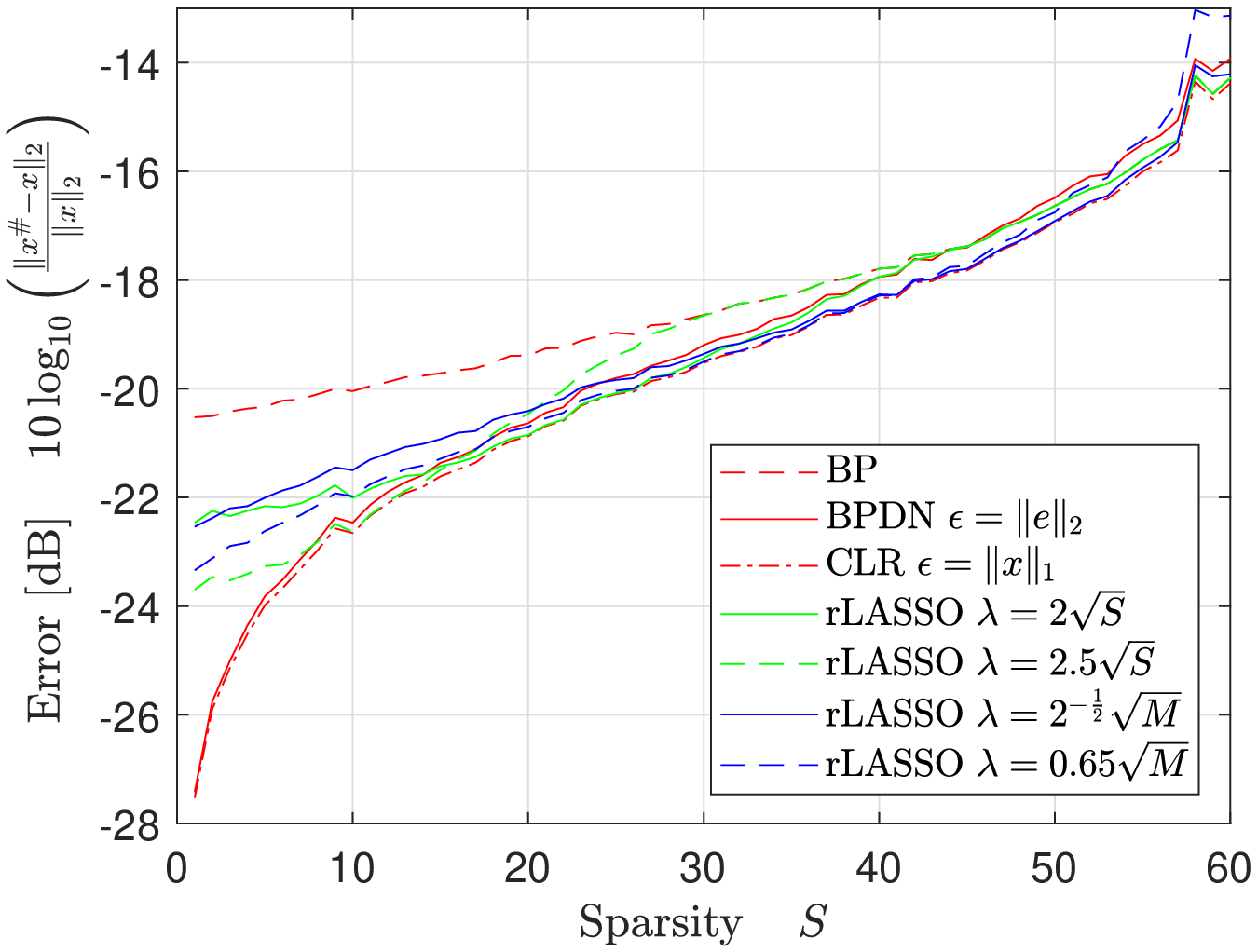}
		\subcaption{\label{Figure:Numerics:Gaussian_Sparsity}
			Performance of \hrlone{} as a function of the sparsity $S=\norm{\xVec}_0$.}
    \end{subfigure}
    \begin{subfigure}[t]{0.5\textwidth}
		\includegraphics[scale=0.5]{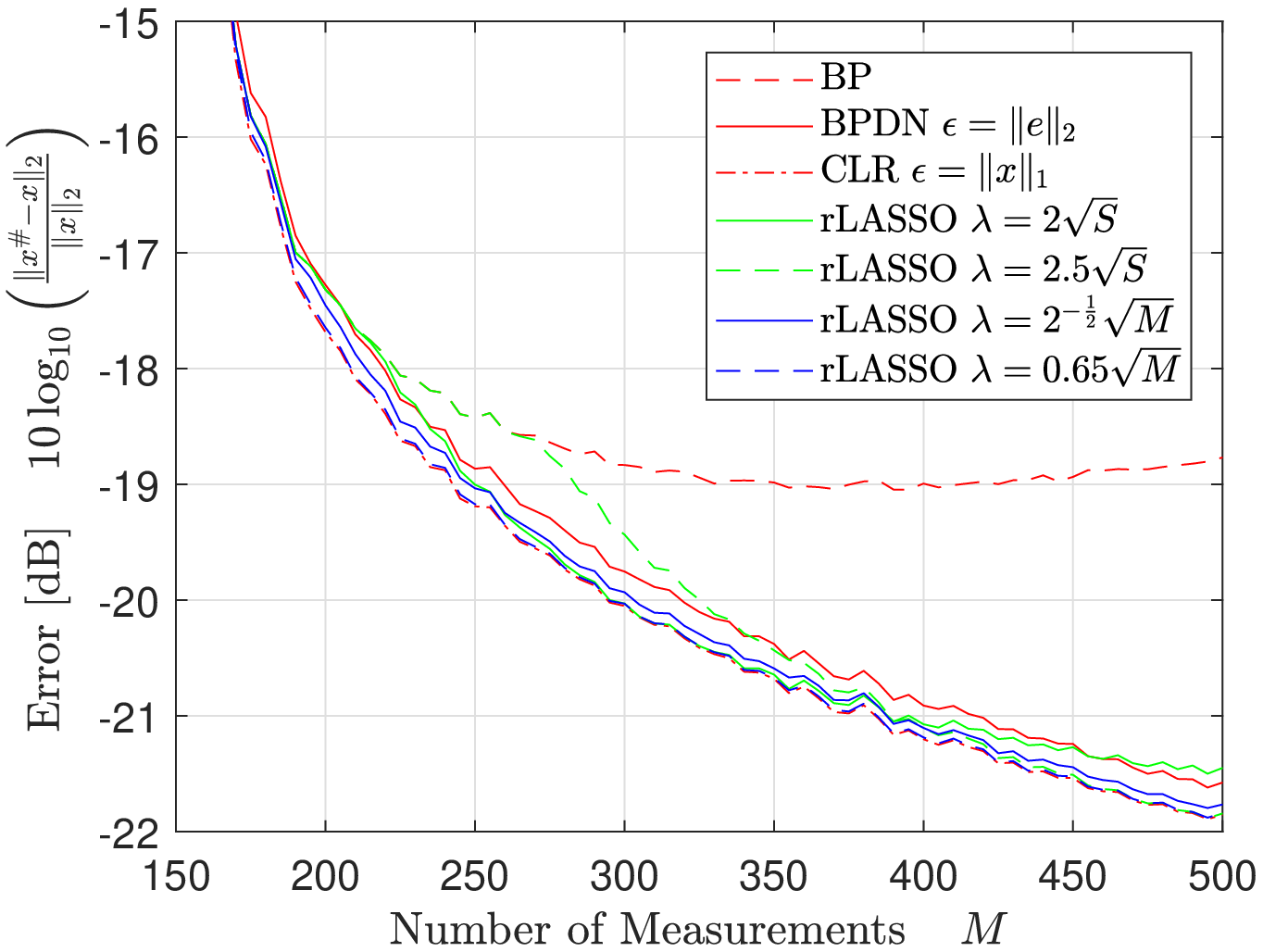}
		\subcaption{\label{Figure:Numerics:Gaussian_Measurements}
			Performance of \hrlone{} as a function of the number of measurements $M$.}
    \end{subfigure}
    \caption{Performance of different tuning methods for Gaussian matrices.}
\end{figure}
We investigate the dependence of the tuning parameter $\hrloneParam$
on the dimensions $M,N,S$ for Gaussian matrices as proposed
in \thref{Theorem:Gaussian=>NSP} and in
the argumentation around \refP{Equation:PhaseTransition:Theorem:Gaussian=>NSP}.
Thus, we set $q=2$.
We fix $N=1024$, $M=256$, $SNR=100$ and vary $S\in\SetOf{128}$ in \thref{Experiment:Experiment1}.
The results are plotted in \refP{Figure:Numerics:Gaussian_Sparsity}.
For $S\geq 60$ all estimation errors grow rapidly and we expect that the recovery fails.
Thus, we discarded plots for $S>60$.
As expected the CLR performs as a best case benchmark and the BP performs
like a worst case benchmark.
Against our expectation the BPDN only performs as a best case benchmark for $S\leq 15$.
The authors have no explanation for that yet.
For $S\geq 30$ the tuning $\hrloneParam=2.5\sqrt{S}$ and for $S\geq 45$ the tuning
$\hrloneParam=2\sqrt{S}$ perform exactly like the basis pursuit.
The tuning methods $\hrloneParam=2^{-\frac{1}{2}}\sqrt{M}$ and $\hrloneParam=0.65\sqrt{M}$
do not suffer from this problem.
In particular, they perform nearly as good as the best case benchmark CLR
for $S\geq 20$ and $S\geq 15$ respectively.
However, the tuning $\hrloneParam=0.65\sqrt{M}$ performs slighly better
than the tuning $\hrloneParam=2^{-\frac{1}{2}}\sqrt{M}$ for $S\leq 40$ while
the opposite is true for $S\geq 40$.
We deduce that the tuning parameter might be chosen independent of $S=\norm{\xVec}_0$
and still achieve fine error bounds and might even outperform tuning with
$\hrloneParam\asymp\sqrt{S}$.
Since for $S\leq 10$ all tuning methods of \hrlone{} perform sub-optimal,
we deduce that the optimal tuning parameter depends non-trivially on $S$.
As an alternative experiment we fix $N=1024,S=32,SNR=100$ and vary $M=5\SetOf{100}$ in
\thref{Experiment:Experiment1}.
The results are plotted in \refP{Figure:Numerics:Gaussian_Measurements}.
For $M\leq 270$ the tuning $\hrloneParam=2.5\sqrt{S}$ performs exactly like the BPDN
and the tuning $\hrloneParam=2\sqrt{S}$ has a similar problem for $M\leq 220$.
If $M\geq 400$, the tuning $\hrloneParam=2\sqrt{S}$ gets suboptimal.
The tunings $\hrloneParam=2^{-\frac{1}{2}}\sqrt{M}$ and $\hrloneParam=0.65\sqrt{M}$ do not
share these same problems. In particular, the tuning parameter $\hrloneParam=0.65\sqrt{M}$
seems to be indistinguishable from the optimal benchmark.
We deduce that $\hrloneParam\asymp\sqrt{M}$ might
reflect the behavior of the tuning parameter better than $\hrloneParam\asymp\sqrt{S}$.
If the number of measurements increases, the gap between \hrlone{} and BP increases.
This is due to the fact that the quotient property gets harder to fulfill
and thus the constant of the quotient property gets worse.
Thus, whenever the number of measurements is not optimal,
\hrlone{} works better than the basis pursuit with a quotient property
as it was proposed in the argumentation following \thref{Proposition:HRL1Recovery:QP}.
\subsection{Random Walk Matrices of Uniformly Distributed $D$-Left Regular Bipartite Graphs}\label{Subsection:Numerics_LRBGMatrices}
We will verify that the choice of tuning parameters from \thref{Theorem:RecoveryViaExpander}
is viable, and thus set $q=1$.
We fix $N=1024,M=256,D=10,SNR=100$ and vary $S\in\SetOf{128}$ in \thref{Experiment:Experiment1}.
The results are plotted in \refP{Figure:Numerics:LRBG1}.
\begin{figure}[ht]
	\centering
	\includegraphics[scale=0.5]{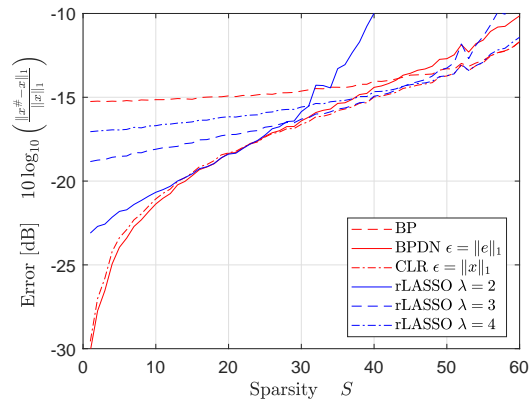}
	\caption{\label{Figure:Numerics:LRBG1}Performance of \hrlone{} as a function of the sparsity $S=\norm{\xVec}_0$
			for uniformly at random drawn $D$-left regular bipartite graph.}
\end{figure}
%\begin{figure}[ht]
%    \centering
%    \begin{minipage}{0.45\textwidth}
%        \centering
%        \label{Figure:Numerics:LRBG1}
%%		\includegraphics[scale=0.275]{Graphics/NumericMinimizers/Theorem218Rate4Sparity1To50.jpg}
%%		\caption{Recovery in the order of BPDN is possible.}
%%		\caption{Variable $\rho$ and $\tau$ from Theorem \ref{Theorem:RandomSampling} with undersampling rate of $\frac{1}{4}$ are viable.}
%    \end{minipage}\hfill
%    \begin{minipage}{0.45\textwidth}
%        \centering
%        \label{Figure:Numerics:LRBG2}
%%		\includegraphics[scale=0.275]{Graphics/NumericMinimizers/Theorem218Rate2Sparity1To120.jpg}
%%		\caption{Recovery in the order of BPDN is possible.}
%%		\caption{Variable $\rho$ and $\tau$ from Theorem \ref{Theorem:RandomSampling} with undersampling rate of $\frac{1}{2}$ are viable.}
%    \end{minipage}
%\end{figure}
Similar to Gaussian matrices the estimation errors grow rapidly for $S\geq 60$
and we expect that the recovery fails.
For $S\geq35$ the tuning $\hrloneParam=2$ fails, and
for $S\geq55$ the tuning $\hrloneParam=3$ fails.
For smaller $S$, smaller tuning parameters perform better.
This suggests that the optimal tuning parameter is depending on at least the dimensions $S$
even though the requirement $\hrloneParam>\tau S^{1-\frac{1}{q}}=\tau$ suggests
otherwise. We deduce that recovery with \hrlone{} with a constant
tuning parameter is possible, but the optimal tuning is depending at least on some parameters.
A more detailed analysis, as it was done for Gaussian matrices in
\thref{Theorem:Gaussian=>NSP}, is required to understand the optimal tuning parameter better.
	
	%Proofs
	%New Chapter
\section{Proofs of Section \ref{Section:Prelimninaries} Preliminaries}\label{Section:proofs_of_preliminaries}
The RNSP and the compatibility condition are equivalent.
\begin{Proposition}\label{Proposition:$S$-CC<=>$S$-SRNSP}
	Let $S\in\SetOf{N}$, $\AMat\in\mathbb{R}^{M\times N}$ and $\SetSize{T}\leq S$.
	Then the following statements are equivalent
	\begin{itemize}
		\item[(1)] There exists $\rho\in\left(0,1\right)$ and $\tau\in\left(0,\infty\right)$ such that
		\begin{align*}
			\norm{\ProjToIndex{T}{\vVec}}_1
			\leq \rho \norm{\ProjToIndex{T^c}{\vVec}}_1
				+\tau\norm{\AMat\vVec}_2
			\TextForAll \vVec\in\mathbb{R}^N
		\end{align*}
		holds true.			
		\item[(2)] There exists $L\in\left(1,\infty\right)$ such that for the set
		\begin{align*}
			\Delta_{L,T}
			:=\left\{\vVec\in\mathbb{R}^N:\norm{\ProjToIndex{T^c}{\vVec}}_1\leq L\norm{\ProjToIndex{T}{\vVec}}_1
				\TextAnd \norm{\ProjToIndex{T}{\vVec}}_1\neq 0\right\}
		\end{align*}
		the condition
		$
			\inf_{\vVec\in\Delta_{L,T}}\frac{S\norm{\AMat\vVec}_2^2}{\norm{\ProjToIndex{T}{\vVec}}_1^2}>0
		$
		holds true.
	\end{itemize}
	The constants may change and this change may depend on the dimensions.
\end{Proposition}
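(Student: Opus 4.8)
The plan is to recognize that both statements are governed by the geometry of the ``compatibility cone'' $\{\vVec : \norm{\ProjToIndex{T^c}{\vVec}}_1 \leq L\norm{\ProjToIndex{T}{\vVec}}_1\}$, and to pass back and forth between a \emph{multiplicative} lower bound on $\norm{\AMat\vVec}_2$ that is valid inside that cone and the \emph{additive} bound of statement~(1) that is valid everywhere. Throughout I would abbreviate $a := \norm{\ProjToIndex{T}{\vVec}}_1$, $b := \norm{\ProjToIndex{T^c}{\vVec}}_1$ and $c := \norm{\AMat\vVec}_2$, so that (1) reads $a \leq \rho b + \tau c$ for all $\vVec$, while (2) asks for some $L>1$ with $\inf\{Sc^2/a^2 : b \leq La,\ a \neq 0\} > 0$.

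For $(1)\Rightarrow(2)$ I would exploit that $\rho < 1$ makes the interval $(1,\rho^{-1})$ nonempty, so I fix any $L$ in it; then $\rho L < 1$. For $\vVec$ with $a \neq 0$ and $b \leq La$, statement~(1) gives $a \leq \rho b + \tau c \leq \rho L a + \tau c$, hence $(1-\rho L)a \leq \tau c$. Since $1 - \rho L > 0$, this yields $c \geq \tau^{-1}(1-\rho L)a$, and therefore $Sc^2/a^2 \geq S\tau^{-2}(1-\rho L)^2$, a strictly positive bound independent of $\vVec$. Taking the infimum over $\Delta_{L,T}$ then proves~(2).

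For $(2)\Rightarrow(1)$ I would let $\mu := \inf\{Sc^2/a^2 : b \leq La,\ a\neq 0\} > 0$ be the positive infimum supplied by~(2), so that $c \geq \sqrt{\mu/S}\,a$ whenever $a \neq 0$ and $b \leq La$. The claim is that $a \leq \rho b + \tau c$ holds for all $\vVec$ with the choices $\rho := L^{-1} \in (0,1)$ (legitimate since $L > 1$) and $\tau := \sqrt{S/\mu}$. I would verify this by a three-way case split: if $a = 0$ the inequality is trivial; if $a \neq 0$ and $b \leq La$ (inside the cone) then $a \leq \sqrt{S/\mu}\,c = \tau c \leq \rho b + \tau c$; and if $b > La$ (outside the cone) then $a < L^{-1}b = \rho b \leq \rho b + \tau c$.

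The argument is essentially elementary once the right case split is isolated; the closest thing to an obstacle is that in $(1)\Rightarrow(2)$ the parameter $L$ must satisfy both $L > 1$ (to land in the cone regime of~(2)) and $\rho L < 1$ (to keep $1-\rho L$ positive), and these are simultaneously achievable \emph{precisely because} the stableness constant obeys $\rho < 1$. Conversely, the correspondence $\rho = L^{-1}$ makes transparent how the cone aperture $L$ trades off against the stableness constant, which is exactly why the constants change between the two formulations as the statement warns.
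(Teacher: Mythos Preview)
Your proof is correct and follows essentially the same approach as the paper: both directions use the same parameter choices ($L\in(1,\rho^{-1})$ for $(1)\Rightarrow(2)$, and $\rho=L^{-1}$, $\tau=\sqrt{S/\mu}$ for $(2)\Rightarrow(1)$) and the same three-way case split. Your write-up is in fact slightly cleaner, since you fix $\rho$ and $\tau$ from the outset rather than introducing placeholder constants $\rho',\tau'$ and specializing them at the end as the paper does.
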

\begin{proof}
	(1)$\Rightarrow$(2):
	Pick any $L\in\left(1,\rho^{-1}\right)$ which is a non empty interval since $\rho<1$.
%	Then $L\in\left(1,\infty\right)$.
	Let $\vVec\in\Delta_{L,T}$ be arbitrary.
	The assumption yields
	\begin{align*}
		\norm{\ProjToIndex{T}{\vVec}}_1
		\leq \rho \norm{\ProjToIndex{T^c}{\vVec}}_1
			+\tau\norm{\AMat\vVec}_2
		\leq\rho L \norm{\ProjToIndex{T}{\vVec}}_1
			+\tau\norm{\AMat\vVec}_2
	\end{align*}
	and by algebraic manipulation
	$
		\frac{1-\rho L}{\tau}\leq\frac{\norm{\AMat\vVec}_2}{\norm{\ProjToIndex{T}{\vVec}}_1}
	$.
	By the choice of $L$ we get
	\begin{align}
		\inf_{\vVec\in\Delta_{L,T}}\frac{S\norm{\AMat\vVec}_2^2}{\norm{\ProjToIndex{T}{\vVec}}_1^2}\geq S\left(\frac{1-\rho L}{\tau}\right)^2>0.
	\end{align}
	(2)$\Rightarrow$(1):
	Set $\rho:=L^{-1}$ and
	$\tau:=S^\frac{1}{2}\left(\inf_{\vVec\in\Delta_{L,T}}\frac{S\norm{\AMat\vVec}_2^2}{\norm{\ProjToIndex{T}{\vVec}}_1^2}\right)^{-\frac{1}{2}} $.
	Then $\rho\in\left(0,1\right)$ and $\tau>0$.
	At first let $\vVec\notin\Delta_{L,T}$.
	If $\norm{\ProjToIndex{T}{\vVec}}_1=0$, the bound to prove holds trivially.
	So let $\norm{\ProjToIndex{T}{\vVec}}_1\neq 0$. Hence, we have for any $\tau'>0$
	\begin{align*}
		\norm{\ProjToIndex{T}{\vVec}}_1< L^{-1} \norm{\ProjToIndex{T^c}{\vVec}}_1
		\leq \rho \norm{\ProjToIndex{T^c}{\vVec}}_1 + \tau'\norm{\AMat\vVec}_2.
	\end{align*}
	Now on the other hand assume that $\vVec\in\Delta_{L,T}$. Then we have for any $\rho'>0$
	\begin{align*}
		\norm{\ProjToIndex{T}{\vVec}}_1
		=\left(\frac{\norm{\ProjToIndex{T}{\vVec}}_1^2}{S\norm{\AMat\vVec}_2^2}\right)^\frac{1}{2} S^\frac{1}{2}\norm{\AMat\vVec}_2
		\leq\left(\inf_{\vVec\in\Delta_{L,T}}\frac{S\norm{\AMat\vVec}_2^2}{\norm{\ProjToIndex{T}{\vVec}}_1^2}\right)^{-\frac{1}{2}}S^\frac{1}{2}\norm{\AMat\vVec}_2
		=\tau\norm{\AMat\vVec}_2
		\leq\rho' \norm{\ProjToIndex{T^c}{\vVec}}_1 + \tau\norm{\AMat\vVec}_2.
	\end{align*}
	This finishes the proof
	if $\rho'$ and $\tau'$ are the particular $\rho$ and $\tau$ above.
\end{proof}
\section{Proofs of Section \ref{Section:Theory} Theoretic Results for \hrlone{}}\label{Section:Theory:Proofs}
\subsection{Proofs of Subsection \ref{Subsection:Theory_RecoveryGuarantees} Recovery Guarantees for \hrlone{}}\label{Subsection:Theory_RecoveryGuarantees:Proofs}
The following statement is well known for $p,q\in\left[1,\infty\right)$ as Stechkin bound, see for instance
\cite[Proposition~2.3]{IntroductionCS}.
We note that it holds even for $p,q\in\left[1,\infty\right]$.
\begin{Lemma}[ {\cite[Proposition~2.3]{IntroductionCS}} ]\label{Lemma:STermApproximationBound}
	Let $p,q\in\left[1,\infty\right]$, $S\in\SetOf{N}$ and $\vVec\in\mathbb{R}^N$ with $q\geq p$. Then,
	\begin{align*}
		d_q\left(\vVec,\Sigma_S\right)
		\leq S^{\frac{1}{q}-\frac{1}{p}}\norm{\vVec}_p.
	\end{align*}
\end{Lemma}
\begin{proof}
	For $q,p\in\left[1,\infty\right)$ the statement is \cite[Proposition~2.3]{IntroductionCS}.
	Now let $q,p\in\left[1,\infty\right]$ with $q\geq p$ and $\vVec\in\mathbb{R}^N$.
	Let $T$ is the set of the $S$ indices with largest absolute value
	of $\vVec$.
	Then $\inf_{\zVec\in\Sigma_S}\norm{\vVec-\zVec}_q=\norm{\vVec-\ProjToIndex{T}{\vVec}}_q$.
	The proof now follows from a limit argument and
	the fact that $\lim_{r\rightarrow\infty} \norm{\vVec'}_r=\norm{\vVec'}_\infty$ holds true for all $\vVec'\in\mathbb{R}^N$,
	which is a consequence of
	$\norm{\vVec}_\infty\leq\norm{\vVec}_r\leq N^\frac{1}{r}\norm{\vVec}_\infty$.
\end{proof}
%\begin{proof}
%	We follows the proof of for instance \cite[Proposition~2.3]{IntroductionCS}
%	to prove the result for $q,p\in\left[1,\infty\right)$.
%	Let $T$ be the set of $S$ largest entries of $\vVec$ in absolute value.
%	Then we have
%	\begin{align}
%		\sigma_S\left(\vVec\right)_q
%		=&\norm{\ProjToIndex{T^c}{\vVec}}_q
%		=\left(\sum_{n\in T^c}\abs{\vVec_n}^q\right)^\frac{1}{q}
%		=\left(\sum_{n\in T^c}\abs{\vVec_n}^{q-p}\abs{\vVec_n}^p\right)^\frac{1}{q}
%		\leq \left(\min_{n'\in T}\abs{\vVec_{n'}}^{q-p}\sum_{n\in T^c}\abs{\vVec_n}^p\right)^\frac{1}{q}
%		\\\leq& \left(\left(\min_{n'\in T}\abs{\vVec_{n'}}^p\right)^{\frac{q-p}{p}}\norm{\vVec}_p^p\right)^\frac{1}{q}
%		\leq \left(\left(\frac{1}{S}\sum_{n'\in T}\abs{\vVec_n}^p\right)^{\frac{q-p}{p}}\norm{\vVec}_p^p\right)^\frac{1}{q}
%		\\\leq& \frac{1}{S}^{\frac{1}{p}-\frac{1}{q}}\left(\norm{\vVec}_p^{q-p}\norm{\vVec}_p^p\right)^\frac{1}{q}
%		=  S^{\frac{1}{q}-\frac{1}{p}}\norm{\vVec}_p.
%	\end{align}
%	For $q,p\in\left[1,\infty\right]$ the claim follows from 
%	the fact that $\lim_{r\rightarrow\infty} \norm{\vVec}_r=\norm{\vVec}_\infty$ holds true for all $\vVec\in\mathbb{R}^N$,
%	which is a consequence of
%	$\norm{\vVec}_\infty\leq\norm{\vVec}_r\leq N^\frac{1}{r}\norm{\vVec}_\infty$.
%\end{proof}
To prove recovery guarantees we require
a lemma which is proven in \cite[Theorem~4.25]{IntroductionCS} and \cite[Theorem~4.20]{IntroductionCS}.
With the improved Stechkin bound the proof of \cite[Theorem~4.25]{IntroductionCS} also
holds for $q=\infty$.
\begin{Lemma}[ {\cite[Theorem~4.25]{IntroductionCS}} \& {\cite[Theorem~4.20]{IntroductionCS}} ]\label{Lemma:SRNSPConsequence}
	Let $\AMat\in\mathbb{R}^{M\times N}$ have 
	$\ell_q$-RNSP of order $S$ wrt $\normRHS{\cdot}$
	with constants $\rho$ and $\tau$. Then, for all $\xVec',\zVec\in\mathbb{R}^N$ it holds that
	\begin{align*}
		\norm{\zVec-\xVec'}_q
		\leq
		\begin{Bmatrix}
			\frac{\left(1+\rho\right)^2}{1-\rho}S^{\frac{1}{q}-1}
				\left(\norm{\zVec}_1-\norm{\xVec'}_1+2\compr{\xVec'}{S}\right)
				+\frac{3+\rho}{1-\rho}\tau\normRHS{\AMat\xVec'-\AMat\zVec}
			& \TextIf & q\in\left(1,\infty\right]
			\\
			\frac{1+\rho}{1-\rho}
				\left(\norm{\zVec}_1-\norm{\xVec'}_1+2\compr{\xVec'}{S}\right)
				+\frac{2}{1-\rho}\tau\normRHS{\AMat\xVec'-\AMat\zVec}
			& \TextIf & q=1
		\end{Bmatrix}.
	\end{align*}
\end{Lemma}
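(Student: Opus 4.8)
The plan is to reduce the statement to the standard ``cone inequality together with null space property'' argument, run in parallel for the two cases. Throughout set $\vVec:=\zVec-\xVec'$, let $T$ be an index set collecting $S$ largest-magnitude entries of $\xVec'$, so that $\SetSize{T}\le S$ and $\norm{\ProjToIndex{T^c}{\xVec'}}_1=\compr{\xVec'}{S}$, and abbreviate $\mu:=\norm{\zVec}_1-\norm{\xVec'}_1+2\compr{\xVec'}{S}$ and $e:=\normRHS{\AMat\vVec}=\normRHS{\AMat\xVec'-\AMat\zVec}$. First I would derive the cone inequality $\norm{\ProjToIndex{T^c}{\vVec}}_1\le\norm{\ProjToIndex{T}{\vVec}}_1+\mu$; this is pure $\ell_1$ bookkeeping, obtained by writing $\norm{\zVec}_1=\norm{\ProjToIndex{T}{\zVec}}_1+\norm{\ProjToIndex{T^c}{\zVec}}_1$, applying the reverse triangle inequality on $T$ and on $T^c$ to $\zVec=\xVec'+\vVec$, and using $\norm{\ProjToIndex{T}{\xVec'}}_1=\norm{\xVec'}_1-\compr{\xVec'}{S}$.

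For $q=1$ the conclusion is immediate. The $\ell_1$-RNSP reads $\norm{\ProjToIndex{T}{\vVec}}_1\le\rho\norm{\ProjToIndex{T^c}{\vVec}}_1+\tau e$; inserting the cone inequality and solving the two coupled inequalities yields $\norm{\ProjToIndex{T}{\vVec}}_1\le(\rho\mu+\tau e)/(1-\rho)$ and $\norm{\ProjToIndex{T^c}{\vVec}}_1\le(\mu+\tau e)/(1-\rho)$, and adding head and tail gives exactly $\frac{1+\rho}{1-\rho}\mu+\frac{2}{1-\rho}\tau e$, which is the claimed bound.

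The case $q\in(1,\infty]$ is where the work lies. Since $\SetSize{T}\le S$, Hölder's inequality turns the $\ell_q$-RNSP into the $\ell_1$-head estimate $\norm{\ProjToIndex{T}{\vVec}}_1\le S^{1-\frac{1}{q}}\norm{\ProjToIndex{T}{\vVec}}_q\le\rho\norm{\ProjToIndex{T^c}{\vVec}}_1+S^{1-\frac{1}{q}}\tau e$; combined with the cone inequality as above this produces $\ell_1$ bounds $A$ on $\norm{\ProjToIndex{T}{\vVec}}_1$ and $B$ on $\norm{\ProjToIndex{T^c}{\vVec}}_1$ of the same shape as in the $q=1$ case but with $\tau e$ replaced by $S^{1-\frac{1}{q}}\tau e$. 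To reach the $\ell_q$-norm I would estimate $\norm{\vVec}_q\le\norm{\ProjToIndex{T}{\vVec}}_q+\norm{\ProjToIndex{T^c}{\vVec}}_q$ and control the tail with the extended Stechkin bound \thref{Lemma:STermApproximationBound}: applied to $\ProjToIndex{T^c}{\vVec}$ it gives, with $T_1$ the $S$ largest entries of $\ProjToIndex{T^c}{\vVec}$, the inequality $\norm{\ProjToIndex{T^c}{\vVec}}_q\le\norm{\ProjToIndex{T_1}{\vVec}}_q+S^{\frac{1}{q}-1}\norm{\ProjToIndex{T^c}{\vVec}}_1$. Applying the $\ell_q$-RNSP once to $T$ and once to $T_1$ (bounding $\norm{\ProjToIndex{T_1^c}{\vVec}}_1\le\norm{\vVec}_1\le A+B$), substituting $A$ and $B$, and collecting, the coefficient of $\mu$ collapses to $\frac{(1+\rho)^2}{1-\rho}S^{\frac{1}{q}-1}$ while the coefficient of $e$ simplifies through $\frac{3\rho+1}{1-\rho}+2=\frac{3+\rho}{1-\rho}$ to $\frac{3+\rho}{1-\rho}\tau$.

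The main obstacle is precisely the tail term $\norm{\ProjToIndex{T^c}{\vVec}}_q$: because $T$ records the largest entries of $\xVec'$ rather than of $\vVec$, a single Stechkin estimate is not directly available, and a crude split would lose the decisive factor $S^{\frac{1}{q}-1}$. Isolating the leading block $T_1$ and spending a second application of the RNSP on it is what both supplies that factor and keeps the constants in the clean form above. Finally, since \thref{Lemma:STermApproximationBound} is valid up to $q=\infty$, every step — in particular the tail estimate, which at the endpoint reads $\norm{\ProjToIndex{T^c\setminus T_1}{\vVec}}_\infty\le S^{-1}\norm{\ProjToIndex{T^c}{\vVec}}_1$ — carries over verbatim to $q=\infty$, which is the only point where this argument goes beyond the cited \cite[Theorem~4.25]{IntroductionCS}.
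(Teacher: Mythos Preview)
Your proposal is correct and follows essentially the route of the cited Foucart--Rauhut proof that the paper defers to: cone inequality, solve the coupled $\ell_1$ system, then lift to $\ell_q$ via Stechkin plus the RNSP, with the extended Stechkin bound (\thref{Lemma:STermApproximationBound}) handling $q=\infty$.

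The only cosmetic difference is bookkeeping: Foucart--Rauhut split $\vVec$ along the single set $T_0$ of its own $S$ largest entries, bound $\norm{\vVec}_q\le(1+\rho)S^{\frac{1}{q}-1}\norm{\vVec}_1+\tau e$ in one shot, and then insert the already-established $q=1$ bound for $\norm{\vVec}_1$; you instead keep the set $T$ tied to $\xVec'$, introduce a second block $T_1\subset T^c$, and spend two separate RNSP applications. Both routes produce the identical constants $\frac{(1+\rho)^2}{1-\rho}$ and $\frac{3+\rho}{1-\rho}$, so there is nothing to add.
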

%\begin{Lemma}[ {\cite[Theorem~4.25]{IntroductionCS}} ]\label{Lemma:SRNSPConsequence}
%	Let $\AMat\in\mathbb{R}^{M\times N}$ have the $\norm{\cdot}_q$ to $\normRHS{\cdot}$ $S$-SRNSP
%	with constants $\rho$ and $\tau$. Then,
%	\begin{align}
%		\norm{\zVec-\xVec'}_q
%		\leq& \frac{\left(1+\rho\right)^2}{1-\rho}S^{\frac{1}{q}-1}
%				\left(\norm{\zVec}_1-\norm{\xVec'}_1+2\sigma_S\left(\xVec'\right)_1\right)
%			+\frac{3+\rho}{1-\rho}\tau\normRHS{\AMat\xVec'-\AMat\zVec}
%		\TextForAll \xVec',\zVec\in\mathbb{R}^N.
%	\end{align}
%\end{Lemma}
We prove an auxiliary statement that looks similar to
\thref{Theorem:HRL1Recovery:Lambda>TauSq}.
\begin{Theorem}[Weak bound on $\hrloneParam$]\label{Theorem:HRL1Recovery}
	Let $\AMat\in\mathbb{R}^{M\times N}$ have 
	$\ell_q$-RNSP of order $S$ wrt $\normRHS{\cdot}$
	with constants $\rho$ and $\tau$. Let
	\begin{align*}
		\hrloneParam\geq\begin{Bmatrix}
			\frac{3+\rho}{\left(1+\rho\right)^2}\tau S^{1-\frac{1}{q}} & \TextIf & q\in\left(1,\infty\right]
			\\\frac{2}{1+\rho}\tau & \TextIf & q=1
		\end{Bmatrix}.
	\end{align*}
	Then for all $\xVec\in\mathbb{R}^N$ and $\yVec\in\mathbb{R}^M$
	any minimizer $\xVec^\#$ of 
	\begin{align*}
		\tempmin{\zVec\in\mathbb{R}^n}
			\norm{\zVec}_1
			+\hrloneParam\normRHS{\yVec-\AMat\zVec}
	\end{align*}
	obeys
	\begin{align*}
		\norm{\xVec^\#-\xVec}_q
		\leq\begin{Bmatrix}
			2\frac{\left(1+\rho\right)^2}{1-\rho}S^{\frac{1}{q}-1}\compr{\xVec}{S}
				+\left(\frac{3+\rho}{1-\rho}\tau+\frac{\left(1+\rho\right)^2}{1-\rho}S^{\frac{1}{q}-1}\hrloneParam\right)
				\normRHS{\yVec-\AMat\xVec}
			& \TextIf & q\in\left(1,\infty\right]
			\\
			2\frac{1+\rho}{1-\rho}\compr{\xVec}{S}
				+\left(\frac{2}{1-\rho}\tau+\frac{1+\rho}{1-\rho}\hrloneParam\right)\normRHS{\yVec-\AMat\xVec}
			& \TextIf & q=1
		\end{Bmatrix}.
	\end{align*}
\end{Theorem}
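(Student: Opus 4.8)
The plan is to combine the deterministic consequence of the RNSP recorded in \thref{Lemma:SRNSPConsequence} with the optimality of the minimizer $\xVec^\#$, and then to use the threshold on $\hrloneParam$ for the sole purpose of eliminating the unknown residual $\normRHS{\yVec-\AMat\xVec^\#}$. First I would invoke \thref{Lemma:SRNSPConsequence} with $\zVec=\xVec^\#$ and $\xVec'=\xVec$. For $q\in\left(1,\infty\right]$ this yields
$$\norm{\xVec^\#-\xVec}_q \leq \frac{\left(1+\rho\right)^2}{1-\rho}S^{\frac{1}{q}-1}\left(\norm{\xVec^\#}_1-\norm{\xVec}_1+2\compr{\xVec}{S}\right)+\frac{3+\rho}{1-\rho}\tau\normRHS{\AMat\xVec-\AMat\xVec^\#},$$
and the analogous inequality, with $S^{\frac{1}{q}-1}$ replaced by $1$, the first prefactor replaced by $\frac{1+\rho}{1-\rho}$, and the second by $\frac{2}{1-\rho}\tau$, holds for $q=1$.

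Second, I would exploit that $\xVec^\#$ minimizes the objective, so $\norm{\xVec^\#}_1+\hrloneParam\normRHS{\yVec-\AMat\xVec^\#}\leq\norm{\xVec}_1+\hrloneParam\normRHS{\yVec-\AMat\xVec}$, which rearranges to
$$\norm{\xVec^\#}_1-\norm{\xVec}_1\leq\hrloneParam\normRHS{\yVec-\AMat\xVec}-\hrloneParam\normRHS{\yVec-\AMat\xVec^\#}.$$
Combining this with the triangle inequality $\normRHS{\AMat\xVec-\AMat\xVec^\#}\leq\normRHS{\yVec-\AMat\xVec}+\normRHS{\yVec-\AMat\xVec^\#}$ and substituting both into the displayed RNSP bound collects every occurrence of the two a priori unknown quantities $\normRHS{\yVec-\AMat\xVec}$ and $\normRHS{\yVec-\AMat\xVec^\#}$.

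The crux, and the only place the threshold enters, is the coefficient of $\normRHS{\yVec-\AMat\xVec^\#}$. After substitution it equals $\frac{3+\rho}{1-\rho}\tau-\frac{\left(1+\rho\right)^2}{1-\rho}S^{\frac{1}{q}-1}\hrloneParam$ for $q\in\left(1,\infty\right]$, which is exactly nonpositive under the hypothesis $\hrloneParam\geq\frac{3+\rho}{\left(1+\rho\right)^2}\tau S^{1-\frac{1}{q}}$; likewise the coefficient $\frac{2}{1-\rho}\tau-\frac{1+\rho}{1-\rho}\hrloneParam$ is nonpositive under $\hrloneParam\geq\frac{2}{1+\rho}\tau$ in the case $q=1$. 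Since $\normRHS{\yVec-\AMat\xVec^\#}\geq0$, that term may simply be dropped, and gathering the surviving terms in $\compr{\xVec}{S}$ and $\normRHS{\yVec-\AMat\xVec}$ reproduces the asserted bound verbatim. I do not expect any genuine obstacle here: the argument is a short chain of inequalities, and the only point requiring insight is recognizing that the stated threshold is precisely the breakpoint at which the residual of the minimizer stops contributing to the error bound.
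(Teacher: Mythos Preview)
Your proposal is correct and follows essentially the same approach as the paper: invoke \thref{Lemma:SRNSPConsequence}, apply the triangle inequality to $\normRHS{\AMat\xVec-\AMat\xVec^\#}$, and use the optimality of $\xVec^\#$ together with the threshold on $\hrloneParam$ to eliminate the residual term $\normRHS{\yVec-\AMat\xVec^\#}$. The paper organizes the last two steps in the reverse order (it first uses the threshold to assemble the objective $\norm{\xVec^\#}_1+\hrloneParam\normRHS{\yVec-\AMat\xVec^\#}$ and then applies optimality), but the computation is the same.
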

\begin{proof}%[Proof of \thref{Theorem:HRL1Recovery}]
	We apply \thref{Lemma:SRNSPConsequence} for $q\in\left(1,\infty\right]$
	with $\xVec':=\xVec$ and $\zVec:=\xVec^\#$ and obtain
	\begin{align*}
		\norm{\xVec^\#-\xVec}_q
		\leq& \frac{\left(1+\rho\right)^2}{1-\rho}S^{\frac{1}{q}-1}
				\left(\norm{\xVec^\#}_1-\norm{\xVec}_1+2\compr{\xVec}{S}\right)
			+\frac{3+\rho}{1-\rho}\tau\normRHS{\AMat\xVec-\AMat\xVec^\#}
		\\\leq& \frac{\left(1+\rho\right)^2}{1-\rho}S^{\frac{1}{q}-1}
				\left(\norm{\xVec^\#}_1-\norm{\xVec}_1+2\compr{\xVec}{S}\right)
			+\frac{3+\rho}{1-\rho}\tau\normRHS{\yVec-\AMat\xVec^\#}
			+\frac{3+\rho}{1-\rho}\tau\normRHS{\yVec-\AMat\xVec}
%		\\=& 2\frac{\left(1+\rho\right)^2}{1-\rho}S^{\frac{1}{q}-1}\compr{\xVec}{S}
%			-\frac{\left(1+\rho\right)^2}{1-\rho}S^{\frac{1}{q}-1}\norm{\xVec}_1
%			+\frac{3+\rho}{1-\rho}\tau\normRHS{\yVec-\AMat\xVec}
%		\\&+\frac{1}{1-\rho}\left(
%				\left(1+\rho\right)^2S^{\frac{1}{q}-1}\norm{\xVec^\#}_1
%				+\left(3+\rho\right)\tau\normRHS{\yVec-\AMat\xVec^\#}
%			\right)
		\\=& 2\frac{\left(1+\rho\right)^2}{1-\rho}S^{\frac{1}{q}-1}\compr{\xVec}{S}
			-\frac{\left(1+\rho\right)^2}{1-\rho}S^{\frac{1}{q}-1}\norm{\xVec}_1
			+\frac{3+\rho}{1-\rho}\tau\normRHS{\yVec-\AMat\xVec}
		\\&+\frac{\left(1+\rho\right)^2}{1-\rho}S^{\frac{1}{q}-1}\left(
				\norm{\xVec^\#}_1
				+\frac{\left(3+\rho\right)}{\left(1+\rho\right)^2}S^{1-\frac{1}{q}}\tau\normRHS{\yVec-\AMat\xVec^\#}
			\right)
		\\\leq& 2\frac{\left(1+\rho\right)^2}{1-\rho}S^{\frac{1}{q}-1}\compr{\xVec}{S}
			-\frac{\left(1+\rho\right)^2}{1-\rho}S^{\frac{1}{q}-1}\norm{\xVec}_1
			+\frac{3+\rho}{1-\rho}\tau\normRHS{\yVec-\AMat\xVec}
		\\&+\frac{\left(1+\rho\right)^2}{1-\rho}S^{\frac{1}{q}-1}\left(
				\norm{\xVec^\#}_1
				+\hrloneParam\normRHS{\yVec-\AMat\xVec^\#}
			\right).
	\end{align*}
	Since $\xVec$ is feasible and $\xVec^\#$ is a minimizer, we get
	\begin{align*}
		\norm{\xVec^\#-\xVec}_q
		\leq& 2\frac{\left(1+\rho\right)^2}{1-\rho}S^{\frac{1}{q}-1}\compr{\xVec}{S}
			-\frac{\left(1+\rho\right)^2}{1-\rho}S^{\frac{1}{q}-1}\norm{\xVec}_1
			+\frac{3+\rho}{1-\rho}\tau\normRHS{\yVec-\AMat\xVec}
		\\&+\frac{\left(1+\rho\right)^2}{1-\rho}S^{\frac{1}{q}-1}\left(
				\norm{\xVec}_1
				+\hrloneParam\normRHS{\yVec-\AMat\xVec}
			\right)
		\\=& 2\frac{\left(1+\rho\right)^2}{1-\rho}S^{\frac{1}{q}-1}\compr{\xVec}{S}
			+\left(\frac{3+\rho}{1-\rho}\tau+\frac{\left(1+\rho\right)^2}{1-\rho}S^{\frac{1}{q}-1}\hrloneParam\right)\normRHS{\yVec-\AMat\xVec}.
	\end{align*}
	This proves the case $q\in\left(1,\infty\right]$.
	For the other case we apply \thref{Lemma:SRNSPConsequence} for $q=1$
	with $\xVec':=\xVec$ and $\zVec:=\xVec^\#$ and obtain
	\begin{align*}
		\norm{\xVec^\#-\xVec}_1
		\leq& \frac{1+\rho}{1-\rho}
				\left(\norm{\xVec^\#}_1-\norm{\xVec}_1+2\compr{\xVec}{S}\right)
			+\frac{2}{1-\rho}\tau\normRHS{\AMat\xVec-\AMat\xVec^\#}
		\\\leq& \frac{1+\rho}{1-\rho}
				\left(\norm{\xVec^\#}_1-\norm{\xVec}_1+2\compr{\xVec}{S}\right)
			+\frac{2}{1-\rho}\tau\normRHS{\yVec-\AMat\xVec^\#}
			+\frac{2}{1-\rho}\tau\normRHS{\yVec-\AMat\xVec}
		\\=& 2\frac{1+\rho}{1-\rho}\compr{\xVec}{S}
			-\frac{1+\rho}{1-\rho}\norm{\xVec}_1
			+\frac{2}{1-\rho}\tau\normRHS{\yVec-\AMat\xVec}
		+\frac{1+\rho}{1-\rho}\left(
				\norm{\xVec^\#}_1
				+\frac{2}{1+\rho}\tau\normRHS{\yVec-\AMat\xVec^\#}
			\right)
		\\\leq& 2\frac{1+\rho}{1-\rho}\compr{\xVec}{S}
			-\frac{1+\rho}{1-\rho}\norm{\xVec}_1
			+\frac{2}{1-\rho}\tau\normRHS{\yVec-\AMat\xVec}
		+\frac{1+\rho}{1-\rho}\left(
				\norm{\xVec^\#}_1
				+\hrloneParam\normRHS{\yVec-\AMat\xVec^\#}
			\right).
	\end{align*}
	Since $\xVec$ is feasible and $\xVec^\#$ is a minimizer, we obtain
	\begin{align*}
		\norm{\xVec^\#-\xVec}_1
		\leq& 2\frac{1+\rho}{1-\rho}\compr{\xVec}{S}
			-\frac{1+\rho}{1-\rho}\norm{\xVec}_1
			+\frac{2}{1-\rho}\tau\normRHS{\yVec-\AMat\xVec}
		+\frac{1+\rho}{1-\rho}\left(
				\norm{\xVec}_1
				+\hrloneParam\normRHS{\yVec-\AMat\xVec}
			\right)
		\\=& 2\frac{1+\rho}{1-\rho}\compr{\xVec}{S}
			+\left(\frac{2}{1-\rho}\tau+\frac{1+\rho}{1-\rho}\hrloneParam\right)\normRHS{\yVec-\AMat\xVec}.
	\end{align*}
	This proves the case $q=1$.
\end{proof}
It turns out that the bounds on $\hrloneParam$ of \thref{Theorem:HRL1Recovery}
are not tight. We can even use smaller parameters in a trade-off with possibly worse bounds for the estimation error.
The reason for this is that the functions $\rho\mapsto\frac{3+\rho}{\left(1+\rho\right)^2}$ and $\rho\mapsto \frac{2}{1+\rho}$
are monotonically decreasing on $\left[0,1\right)$. Thus, if we artificially assume that $\AMat$ has a worse stableness constant $\rho'$,
the bound on $\hrloneParam$ will get loosened and we can deduce
\thref{Theorem:HRL1Recovery:Lambda>TauSq} from \thref{Theorem:HRL1Recovery}.
\begin{proof}[Proof of \thref{Theorem:HRL1Recovery:Lambda>TauSq}]
	Note that the RNSP is preserved
	under increases in the stableness constant $\rho$.
	Hence, $\rho'\in\left[\rho,1\right)$
	yields that
	$\AMat$ has the 
	$\ell_q$-RNSP of order $S$ wrt $\normRHS{\cdot}$
	with constants $\rho'$ and $\tau$, and then
	the error bound of \thref{Theorem:HRL1Recovery:Lambda>TauSq}
	follows from \thref{Theorem:HRL1Recovery}.
	It is thus sufficient to prove $\rho'\in\left[\rho,1\right)$
	and the ``in particular part''.
	At first let $q\in\left(1,\infty\right]$.
	Note that the function $\fFunc{}:\left[0,1\right]\rightarrow\left[1,3\right]$ that maps
	$t$ to $\fFunc{t}:=\frac{3+t}{\left(1+t\right)^2}$ is by differentiation
	strictly monotonically decreasing.
	Hence, it is invertible, its inverse is also strictly monotonically decreasing
	and its inverse is given by
	$\gFunc{}:\left[1,3\right]\rightarrow\left[0,1\right]$
	which maps $r$ to $\gFunc{r}:=\frac{1}{2r}\left(1+\sqrt{8r+1}\right)-1$.
	Now $\hrloneParam\geq \frac{3+\rho}{\left(1+\rho\right)^2}\tau S^{1-\frac{1}{q}}$ is equivalent
	to $\frac{\hrloneParam}{\tau}S^{\frac{1}{q}-1}\geq\fFunc{\rho}$ and by the strict monotonicity
	we get the logical statement
	\begin{align}\label{Equation:Equivalence:Theorem:HRL1Recovery:Lambda>TauSq}
		\hrloneParam\geq \frac{3+\rho}{\left(1+\rho\right)^2}\tau S^{1-\frac{1}{q}}
		\hspace{5pt}\Leftrightarrow\hspace{5pt}
		\gFunc{\frac{\hrloneParam}{\tau}S^{\frac{1}{q}-1}}
		\leq \rho.
	\end{align}
	The definition of $\rho'$ can be rewritten as $\rho'=\max\left\{\rho,\gFunc{\frac{\hrloneParam}{\tau}S^{\frac{1}{q}-1}}\right\}$.
	By \refP{Equation:Equivalence:Theorem:HRL1Recovery:Lambda>TauSq}
	we get the ``in particular part'', namely
	that $\hrloneParam\geq \frac{3+\rho}{\left(1+\rho\right)^2}\tau S^{1-\frac{1}{q}}$ is equivalent to $\rho'=\rho$.
	In order to prove $\rho'\in\left[\rho,1\right)$,
	we distinguish two cases.
	If $\hrloneParam\geq \frac{3+\rho}{\left(1+\rho\right)^2}\tau S^{1-\frac{1}{q}}$,
	then \refP{Equation:Equivalence:Theorem:HRL1Recovery:Lambda>TauSq} yields that
	$\rho'=\max\left\{\rho,\gFunc{\frac{\hrloneParam}{\tau}S^{\frac{1}{q}-1}}\right\}=\rho$
	and thus $\rho'\in\left[\rho,1\right)$.
	If $\hrloneParam<\frac{3+\rho}{\left(1+\rho\right)^2}\tau S^{1-\frac{1}{q}}$,
	then \refP{Equation:Equivalence:Theorem:HRL1Recovery:Lambda>TauSq} yields that
	$\rho'=\max\left\{\rho,\gFunc{\frac{\hrloneParam}{\tau}S^{\frac{1}{q}-1}}\right\}=\gFunc{\frac{\hrloneParam}{\tau}
		S^{\frac{1}{q}-1}}> \rho$.
	Now by assumption $\hrloneParam>\tau S^{1-\frac{1}{q}}$, which is equivalent to $\frac{\hrloneParam}{\tau}S^{\frac{1}{q}-1}>1$.
	By the strict monotonicity of $\gFunc{}$ we have
	$\rho'=\gFunc{\frac{\hrloneParam}{\tau}S^{\frac{1}{q}-1}}<\gFunc{1}=1$
	and thus $\rho'\in\left[\rho,1\right)$.
	The case $q=1$ works similarly by choosing
	$\fFunc{t}:=\frac{2}{1+t}$ since the inverse is $\gFunc{r}:=\frac{2}{r}-1$.
\end{proof}
From \thref{Theorem:HRL1Recovery} we can deduce that \hrlone{} defines an SRD.
\begin{proof}[Proof of \thref{Corollary:S-SRNSP=>S-SRD}]
	Let $q>1$ and set
	$\hrloneParam:=\frac{3+\rho}{\left(1+\rho\right)^2}\tau S^{1-\frac{1}{q}}$.
	For $\yVec$ we set $\Decoder{\yVec}$ as any minimizer of \refP{Problem:HRL1} with input $\yVec$ and $\AMat$.
	This defines a mapping $\Decoder{}:\mathbb{R}^M\rightarrow\mathbb{R}^N$. By \thref{Theorem:HRL1Recovery}
	with $q\in\left(1,\infty\right]$ we have
	for all $\xVec\in\mathbb{R}^N$ and $\yVec\in\mathbb{R}^M$ that
	\begin{align*}
		\norm{\Decoder{\yVec}-\xVec}_q
		\leq& 2\frac{\left(1+\rho\right)^2}{1-\rho}S^{\frac{1}{q}-1}\compr{\xVec}{S}
			+\left(\frac{3+\rho}{1-\rho}\tau+\frac{\left(1+\rho\right)^2}{1-\rho}S^{\frac{1}{q}-1}\hrloneParam\right)\normRHS{\yVec-\AMat\xVec}
		\\=& 2\frac{\left(1+\rho\right)^2}{1-\rho}S^{\frac{1}{q}-1}\compr{\xVec}{S}
			+2\frac{3+\rho}{1-\rho}\tau\normRHS{\yVec-\AMat\xVec}
		=CS^{\frac{1}{q}-1}\compr{\xVec}{S} + D\normRHS{\yVec-\AMat\xVec}.
	\end{align*}
	Thus, $\Decoder{}$ is an $\ell_q$-SRD
	of order $S$ wrt $\normRHS{\cdot}$ for $\AMat$ with constants $C,D$.
	For the ``in particular part'' we repeat the same steps with
	$\hrloneParam=\frac{2}{1+\rho}\tau$ and use
	the bound of \thref{Theorem:HRL1Recovery} with $q=1$ instead.
\end{proof}
\subsection{Proofs of Subsection \ref{Subsection:Theory_ConvergenceToBP} Asymptotic Analysis for \hrlone{}}\label{Subsection:Theory_ConvergenceToBP:Proofs}
To prove the claimed convergence, we prove an auxiliary statement.
\begin{Lemma}\label{Lemma:ConvergenceHRL1Param}
	Let $\normRHS{\cdot}$ be any norm on $\mathbb{R}^M$, $\AMat\in\mathbb{R}^{M\times N}$
	and $\yVec\in\mathbb{R}^M$.
	For every $\hrloneParam\in\left[0,\infty\right)$ let $\xVec^\hrloneParam$ be any minimizer of 
	\begin{align*}
		\tempmin{\zVec\in\mathbb{R}^N}
			\norm{\zVec}_1
			+\hrloneParam\normRHS{\yVec-\AMat\zVec}.
	\end{align*}
	Then \refP{Problem:BPImp} has a minimizer $\xVec^{BPImp}$ and the following statements hold true:
	\begin{itemize}
		\item[(a)] The function $\hrloneParam\mapsto \norm{\xVec^\hrloneParam}_1$ is monotonically increasing.
		\item[(b)] The function $\hrloneParam\mapsto \normRHS{\AMat\xVec^\hrloneParam-\yVec}$ is monotonically decreasing.
		\item[(c)] The estimators are bounded
			$\norm{\xVec^\hrloneParam}_1\leq\norm{\xVec^{BPImp}}_1$.
		\item[(d)] The residuals are bounded
			\begin{align}
				\normRHS{\AMat\xVec^\hrloneParam-\yVec}
				\leq \normRHS{\AMat\xVec^{BPImp}-\yVec}+\hrloneParam^{-1}\left(\norm{\xVec^{BPImp}}_1-\norm{\xVec^\hrloneParam}_1\right).
%				\leq \normRHS{\AMat\xVec^{BPImp}-\yVec}+\hrloneParam^{-1}\norm{\xVec^{BPImp}}_1.
				\label{Equation:EQ1:Lemma:ConvergenceHRL1Param}
			\end{align}
		\item[(e)] If $\norm{\xVec^\hrloneParam}_1\geq\norm{\xVec^{BPImp}}_1$,
			then $\xVec^\hrloneParam$ is a minimizer of \refP{Problem:BPImp}.
		\item[(f)] If $\normRHS{\AMat\xVec^\hrloneParam-\yVec}\leq\normRHS{\AMat\xVec^{BPImp}-\yVec}$,
			then $\xVec^\hrloneParam$ is a minimizer of \refP{Problem:BPImp}.
	\end{itemize}
\end{Lemma}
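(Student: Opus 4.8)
The plan is to first establish that \refP{Problem:BPImp} admits a minimizer $\xVec^{BPImp}$, then to read off (a) and (b) from a two-parameter comparison of the optimality conditions, and finally to obtain (c)--(f) by testing optimality of $\xVec^\hrloneParam$ against the fixed competitor $\xVec^{BPImp}$. For existence, I would set $r:=\inf_{\zVec'\in\mathbb{R}^N}\normRHS{\AMat\zVec'-\yVec}$, the distance from $\yVec$ to the closed subspace $\Range{\AMat}$ measured in $\normRHS{\cdot}$. Because closed balls are compact in finite dimensions, this infimum is attained, so the least-residual set $C:=\{\zVec\in\mathbb{R}^N:\normRHS{\AMat\zVec-\yVec}=r\}$ is nonempty; being the preimage under the linear map $\AMat$ of the closed convex set of $\normRHS{\cdot}$-projections of $\yVec$ onto $\Range{\AMat}$, it is also closed and convex. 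Minimizing the continuous, coercive function $\norm{\cdot}_1$ over the nonempty closed set $C$ then yields a minimizer $\xVec^{BPImp}$ of \refP{Problem:BPImp}, which in particular satisfies $\normRHS{\AMat\xVec^{BPImp}-\yVec}=r$.

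For (a) and (b), I would fix $\hrloneParam_1\le\hrloneParam_2$ and test each minimizer against the other to obtain the two inequalities
\begin{align*}
\norm{\xVec^{\hrloneParam_1}}_1+\hrloneParam_1\normRHS{\AMat\xVec^{\hrloneParam_1}-\yVec}&\le \norm{\xVec^{\hrloneParam_2}}_1+\hrloneParam_1\normRHS{\AMat\xVec^{\hrloneParam_2}-\yVec},\\
\norm{\xVec^{\hrloneParam_2}}_1+\hrloneParam_2\normRHS{\AMat\xVec^{\hrloneParam_2}-\yVec}&\le \norm{\xVec^{\hrloneParam_1}}_1+\hrloneParam_2\normRHS{\AMat\xVec^{\hrloneParam_1}-\yVec}.
\end{align*}
Adding them cancels the $\ell_1$-terms and leaves $(\hrloneParam_2-\hrloneParam_1)\big(\normRHS{\AMat\xVec^{\hrloneParam_2}-\yVec}-\normRHS{\AMat\xVec^{\hrloneParam_1}-\yVec}\big)\le 0$, which gives the residual monotonicity (b). Substituting this back into the first inequality and using $\hrloneParam_1\ge 0$ yields $\norm{\xVec^{\hrloneParam_1}}_1\le\norm{\xVec^{\hrloneParam_2}}_1$, i.e.\ (a); since only the defining minimality of each iterate is used, the argument is insensitive to any nonuniqueness of the minimizers.

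For (c) and (d), I would use optimality of $\xVec^\hrloneParam$ and feasibility of $\xVec^{BPImp}$,
\begin{align*}
\norm{\xVec^{\hrloneParam}}_1+\hrloneParam\normRHS{\AMat\xVec^{\hrloneParam}-\yVec}\le \norm{\xVec^{BPImp}}_1+\hrloneParam\normRHS{\AMat\xVec^{BPImp}-\yVec}=\norm{\xVec^{BPImp}}_1+\hrloneParam r.
\end{align*}
Since $\normRHS{\AMat\xVec^{\hrloneParam}-\yVec}\ge r$, discarding the nonnegative term $\hrloneParam(\normRHS{\AMat\xVec^{\hrloneParam}-\yVec}-r)$ gives (c), while dividing the same inequality by $\hrloneParam>0$ and rearranging gives (d). Finally, (e) and (f) are sandwich arguments built on these: under the hypothesis of (e), combining $\norm{\xVec^{\hrloneParam}}_1\ge\norm{\xVec^{BPImp}}_1$ with (c) forces equality of the $\ell_1$-norms, whereupon (d) collapses to $\normRHS{\AMat\xVec^{\hrloneParam}-\yVec}\le r$, hence $=r$, so $\xVec^{\hrloneParam}$ is a least-residual point of minimal $\ell_1$-norm and thus a minimizer of \refP{Problem:BPImp}; under the hypothesis of (f), $\normRHS{\AMat\xVec^{\hrloneParam}-\yVec}\le r$ forces $\normRHS{\AMat\xVec^{\hrloneParam}-\yVec}=r$ directly, so $\xVec^{\hrloneParam}$ is feasible for \refP{Problem:BPImp}, and then feasibility together with (c) again forces equality of the $\ell_1$-norms. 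I expect no genuine difficulty in (a)--(f); the only step needing real care is the existence claim, where one must justify attainment of $r$ and closedness of $C$, together with the harmless reading of (d) at $\hrloneParam=0$, where the bound is vacuous.
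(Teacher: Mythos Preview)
Your proof is correct and follows essentially the same route as the paper's. The only notable differences are organizational: for existence you invoke directly that $r$ is the $\normRHS{\cdot}$-distance from $\yVec$ to the finite-dimensional subspace $\Range{\AMat}$ and then minimize the coercive $\norm{\cdot}_1$ over the closed level set $C$, whereas the paper projects a minimizing sequence onto $\Kernel{\AMat}^\perp$ and uses that $\AMat$ is bounded below there; for (a)--(b) you add the two cross-optimality inequalities (slightly cleaner than the paper's chained rewriting, but identical in content); and you prove (e) before (f) rather than the reverse, which is immaterial since each is a two-line sandwich from (c) and (d).
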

\begin{proof}
	At first we will prove that \refP{Problem:BPImp} indeed has an optimizer.
	Let $\left(\zVec'_n\right)_{n\in\mathbb{N}}$ be a sequence such that
	$\lim_{n\rightarrow\infty}\normRHS{\AMat\zVec'_n-\yVec}
		=\inf_{\zVec'\in\mathbb{R}^N}\normRHS{\AMat\zVec'-\yVec}$.
	Let $\zVec''_n$ be the orthogonal projection of $\zVec'_n$ onto
	$\Kernel{\AMat}^\perp$.
	Since $\AMat$ is injective on the finite-dimensional space
	$\Kernel{\AMat}^\perp$, it is also bounded below on this,
	i.e., there exists some $C>0$ such that
	$\normRHS{\AMat\zVec''_n}\geq C\norm{\zVec''_n}_2 \TextForAll n\in\SetOf{N}$.
	Hence,
	\begin{align*}
		\norm{\zVec''_n}_2
		\leq C^{-1}\normRHS{\AMat\zVec''_n}
		\leq C^{-1}\left(\normRHS{\AMat\zVec''_n-\yVec}+\normRHS{\yVec}\right).
	\end{align*}
	This together with the convergence of $\normRHS{\AMat\zVec''_n-\yVec}$ yields
	that $\norm{\zVec''_n}_2$ is bounded and thus
	$\left(\zVec''_n\right)_{n\in\mathbb{N}}$ contains some subsequence
	$\left(\zVec'''_n\right)_{n\in\mathbb{N}}$
	that converges to some $\zVec'''_\infty$. It follows that
	\begin{align*}
		\inf_{\zVec'\in\mathbb{R}^N}\normRHS{\AMat\zVec'-\yVec}
		=\lim_{n\rightarrow\infty}\normRHS{\AMat\zVec'_n-\yVec}
		=\lim_{n\rightarrow\infty}\normRHS{\AMat\zVec''_n-\yVec}
		=\lim_{n\rightarrow\infty}\normRHS{\AMat\zVec'''_n-\yVec}
		=\normRHS{\AMat\zVec'''_\infty-\yVec}.
	\end{align*}
	Thus, the problem $\tempmin{\zVec'\in\mathbb{R}^N}\normRHS{\AMat\zVec'-\yVec}$
	has a minimizer. Since the objective function is continuous and
	the set of feasible vectors is closed,
	$\argmin{\zVec'\in\mathbb{R}^N}\normRHS{\AMat\zVec'-\yVec}$
	is closed and non-empty.
	Now let $\left(\zVec_n\right)_{n\in\mathbb{N}}$ be a sequence such that
	$\zVec_n\in\argmin{\zVec'\in\mathbb{R}^N}\normRHS{\AMat\zVec'-\yVec}$
	and $\lim_{n\rightarrow\infty}\norm{\zVec_n}_1
		=\inf_{\zVec\in\argmin{\zVec'\in\mathbb{R}^N}\normRHS{\AMat\zVec'-\yVec}}
			\norm{\zVec}_1$,
	which is finite since there are feasible points.
	The sequence $\left(\zVec_n\right)_{n\in\mathbb{N}}$ is thus bounded and contains
	a subsequence $\left(\zVec'_n\right)_{n\in\mathbb{N}}$
	that converges to some $\xVec^{BPImp}$, which lies in
	$\argmin{\zVec'\in\mathbb{R}^N}\normRHS{\AMat\zVec'-\yVec}$ due to closedness.
	Hence, $\xVec^{BPImp}$ is feasible for \refP{Problem:BPImp} and
	\begin{align*}
		\norm{\xVec^{BPImp}}_1
		=\lim_{n\rightarrow\infty}\norm{\zVec'_n}_1
		=\lim_{n\rightarrow\infty}\norm{\zVec_n}_1
		=\inf_{\zVec\in\argmin{\zVec'\in\mathbb{R}^N}\normRHS{\AMat\zVec'-\yVec}}
			\norm{\zVec}_1.
	\end{align*}
	Thus, \refP{Problem:BPImp} has some minimizer $\xVec^{BPImp}$.
	We now prove the remaining statements.\\
	(b): Let $\hrloneParam'>\hrloneParam$.
	We use the optimality of $\xVec^{\hrloneParam}$ and $\xVec^{\hrloneParam'}$ to obtain
	\begin{align*}
		\norm{\xVec^{\hrloneParam}}_1+{\hrloneParam}\normRHS{\AMat\xVec^{\hrloneParam}-\yVec}
		\leq& \norm{\xVec^{\hrloneParam'}}_1+{\hrloneParam}\normRHS{\AMat\xVec^{\hrloneParam'}-\yVec}
		= \norm{\xVec^{\hrloneParam'}}_1+{\hrloneParam'}\normRHS{\AMat\xVec^{\hrloneParam'}-\yVec}
			-\left(\hrloneParam'-\hrloneParam\right)\normRHS{\AMat\xVec^{\hrloneParam'}-\yVec}
		\\\leq& \norm{\xVec^{\hrloneParam}}_1+{\hrloneParam'}\normRHS{\AMat\xVec^{\hrloneParam}-\yVec}
			-\left(\hrloneParam'-\hrloneParam\right)\normRHS{\AMat\xVec^{\hrloneParam'}-\yVec}.
	\end{align*}
	Since $\hrloneParam'-\hrloneParam>0$, it follows that
	$\normRHS{\AMat\xVec^{\hrloneParam'}-\yVec}\leq\normRHS{\AMat\xVec^{\hrloneParam}-\yVec}$.\\
	(a): Let $\hrloneParam'>\hrloneParam$. We use the optimality of $\xVec^{\hrloneParam}$ 
	and statement (b) to get
	\begin{align*}
		\norm{\xVec^{\hrloneParam}}_1
		=\norm{\xVec^{\hrloneParam}}_1+{\hrloneParam}\normRHS{\AMat\xVec^{\hrloneParam}-\yVec}-{\hrloneParam}\normRHS{\AMat\xVec^{\hrloneParam}-\yVec}
		\leq \norm{\xVec^{\hrloneParam'}}_1+{\hrloneParam}\normRHS{\AMat\xVec^{\hrloneParam'}-\yVec}-{\hrloneParam}\normRHS{\AMat\xVec^{\hrloneParam}-\yVec}
%		\leq \norm{\xVec^{\hrloneParam'}}_1+{\hrloneParam}\normRHS{\AMat\xVec^{\hrloneParam}-\yVec}-{\hrloneParam}\normRHS{\AMat\xVec^{\hrloneParam}-\yVec}
		\overset{(2)}{\leq} \norm{\xVec^{\hrloneParam'}}_1.
	\end{align*}
	(c): We use the optimality of $\xVec^\hrloneParam$ and the feasibility of $\xVec^{BPImp}$ to obtain
	\begin{align*}
		\norm{\xVec^\hrloneParam}_1
		=&\norm{\xVec^\hrloneParam}_1+\hrloneParam\normRHS{\AMat\xVec^\hrloneParam-\yVec}-\hrloneParam\normRHS{\AMat\xVec^\hrloneParam-\yVec}
		\\\leq&\norm{\xVec^{BPImp}}_1+\hrloneParam\normRHS{\AMat\xVec^{BPImp}-\yVec}-\hrloneParam\normRHS{\AMat\xVec^\hrloneParam-\yVec}
		\leq\norm{\xVec^{BPImp}}_1.
	\end{align*}
	(d): We use the optimality of $\xVec^\hrloneParam$ to obtain
	\begin{align*}
		\normRHS{\AMat\xVec^\hrloneParam-\yVec}
		=&\hrloneParam^{-1}\left(\norm{\xVec^\hrloneParam}_1+\hrloneParam\normRHS{\AMat\xVec^\hrloneParam-\yVec}\right)
			-\hrloneParam^{-1}\norm{\xVec^\hrloneParam}_1
		\\\leq&\hrloneParam^{-1}\left(\norm{\xVec^{BPImp}}_1+\hrloneParam\normRHS{\AMat\xVec^{BPImp}-\yVec}\right)
			-\hrloneParam^{-1}\norm{\xVec^\hrloneParam}_1
		\\
		=&\normRHS{\AMat\xVec^{BPImp}-\yVec}+\hrloneParam^{-1}\left(\norm{\xVec^{BPImp}}_1-\norm{\xVec^\hrloneParam}_1\right).
	\end{align*}
	(f): Note that the assumption means that $\xVec^\hrloneParam$ is a feasible point of \refP{Problem:BPImp}
	and hence $\norm{\AMat\xVec^\hrloneParam-\yVec}=\norm{\AMat\xVec^{BPImp}-\yVec}$.
	This and the optimality of $\xVec^\hrloneParam$ yields
	\begin{align*}
		\norm{\xVec^\hrloneParam}_1
		=\norm{\xVec^\hrloneParam}_1+\hrloneParam\norm{\AMat\xVec^\hrloneParam-\yVec}-\hrloneParam\norm{\AMat\xVec^{BPImp}-\yVec}
		\leq\norm{\xVec^{BPImp}}_1
	\end{align*}
	and $\xVec^\hrloneParam$ is a minimizer of \refP{Problem:BPImp}.\\
	(e): We use statement (d) to get
	\begin{align*}
		\normRHS{\AMat\xVec^\hrloneParam-\yVec}
		\leq\normRHS{\AMat\xVec^{BPImp}-\yVec}+\hrloneParam^{-1}\left(\norm{\xVec^{BPImp}}_1-\norm{\xVec^\hrloneParam}_1\right)
		\leq\normRHS{\AMat\xVec^{BPImp}-\yVec}.		
	\end{align*}
	By statement (f) we obtain that $\xVec^\hrloneParam$ is a minimizer of \refP{Problem:BPImp}.
\end{proof}
This lemma allows us to deduce the results of \thref{Theorem:ConvergenceHRL1Param}.
\begin{proof}[Proof of \thref{Theorem:ConvergenceHRL1Param}]
	Let $\xVec^{BPImp}$ be any minimizer of \refP{Problem:BPImp} which exists by \thref{Lemma:ConvergenceHRL1Param}.\\
	(1): This statement follows from statements (e)  and (f)
	of \thref{Lemma:ConvergenceHRL1Param}.\\
	(2):
%	Let $\left(\hrloneParam_n\right)_{n\in\mathbb{N}}$ be a sequence with
%	$\lim_{n\rightarrow\infty} \hrloneParam_n=\infty$.
	By the feasibility of $\xVec^{BPImp}$ and statement (d) of \thref{Lemma:ConvergenceHRL1Param} we have
	\begin{align*}
		0
		\leq&\normRHS{\AMat\xVec^{\hrloneParam}-\yVec}-\normRHS{\AMat\xVec^{BPImp}-\yVec}
		\overset{\refP{Equation:EQ1:Lemma:ConvergenceHRL1Param}}{\leq}{\hrloneParam}^{-1}\left(\norm{\xVec^{BPImp}}_1-\norm{\xVec^{\hrloneParam}}_1\right)
		\\\leq&{\hrloneParam}^{-1}\norm{\xVec^{BPImp}}_1
		={\hrloneParam}^{-1}\left(\inf_{\zVec\in\argmin{\zVec'\in\mathbb{R}^N}\normRHS{\AMat\zVec'-\yVec}}\norm{\zVec}_1\right).
	\end{align*}
	This yields the convergence and the bound on the distance in
	\refP{Equation:Convergence:StoppingCriteriaResidual:Theorem:ConvergenceHRL1Param}.
	The monotonicity of the convergence follows from statement (b) of \thref{Lemma:ConvergenceHRL1Param}.
	Now let $\left(\hrloneParam_n\right)_{n\in\mathbb{N}}$ be a sequence with
	$\lim_{n\rightarrow\infty} \hrloneParam_n=\infty$.
	By statement (c) of \thref{Lemma:ConvergenceHRL1Param} the sequence
	$\left(\xVec^{\hrloneParam_n}\right)_{n\in\mathbb{N}}$ is bounded. Hence, there exists a subsequence
	$\left(\hrloneParam'_n\right)_{n\in\mathbb{N}}$ such that $\left(\xVec^{\hrloneParam'_n}\right)_{n\in\mathbb{N}}$
	converges to some $\xVec$.
	Equation
	\refP{Equation:Convergence:StoppingCriteriaResidual:Theorem:ConvergenceHRL1Param} yields
	that $\xVec$ is a feasible point for \refP{Problem:BPImp}.
	By the optimality of $\xVec^{BPImp}$ we get
	$
		\lim_{n\rightarrow\infty}\norm{\xVec^{\hrloneParam'_n}}_1
		=\norm{\xVec}_1
		\geq \norm{\xVec^{BPImp}}_1
	$,
	and by statement (c) of \thref{Lemma:ConvergenceHRL1Param}
	we also get
	$
		\norm{\xVec}_1
		=\lim_{n\rightarrow\infty}\norm{\xVec^{\hrloneParam'_n}}_1
		\leq \norm{\xVec^{BPImp}}_1
	$.
	Combining these two inequalities yields
	\begin{align*}
		\lim_{n\rightarrow\infty}\norm{\xVec^{\hrloneParam_n}}_1
		=\lim_{n\rightarrow\infty}\norm{\xVec^{\hrloneParam'_n}}_1
		=\norm{\xVec^{BPImp}}_1
		=\inf_{\zVec\in\argmin{\zVec'\in\mathbb{R}^N}\normRHS{\AMat\zVec'-\yVec}}\norm{\zVec}_1.
	\end{align*}
	Doing this for all possible sequences with $\lim_{n\rightarrow\infty} \hrloneParam_n=\infty$
	we obtain the convergence
	\begin{align*}
		\lim_{\hrloneParam\rightarrow\infty}\norm{\xVec^\hrloneParam}_1
		=\inf_{\zVec\in\argmin{\zVec'\in\mathbb{R}^N}\normRHS{\AMat\zVec'-\yVec}}\norm{\zVec}_1.
	\end{align*}
	The monotonicity of convergence follows from statement (a) of \thref{Lemma:ConvergenceHRL1Param}.\\
	(3): 
	Towards a contradiction assume that
	$\inf_{ \underset{ \text{\refP{Problem:BPImp}} }{ \text{$\zVec$ minimizer of} } }
		\norm{\xVec^\hrloneParam-\zVec}_2$ does not converge to zero.
	Then there exists a sequence $\left(\hrloneParam_n\right)_{n\in\mathbb{R}^N}$
	and $\epsilon>0$ with
	$\lim_{n\rightarrow\infty} \hrloneParam_n=\infty$ and
	\begin{align*}
		\inf_{ \underset{ \text{\refP{Problem:BPImp}} }{ \text{$\zVec$ minimizer of} } }
		\norm{\xVec^{\hrloneParam_n}-\zVec}_2>\epsilon
		\TextForAll n\in\mathbb{N}.
	\end{align*}
	By statement (c) of \thref{Lemma:ConvergenceHRL1Param} the sequence $\xVec^{\hrloneParam_n}$ is bounded.
	Thus, there exists a subsequence $\left(\hrloneParam'_n\right)_{n\in\mathbb{N}}$
	such that $\xVec^{\hrloneParam'_n}$ converges to some $\xVec$ and $\lim_{n\rightarrow\infty} \hrloneParam'_n=\infty$.
	Equation
	\refP{Equation:Convergence:StoppingCriteriaResidual:Theorem:ConvergenceHRL1Param}
	yields that $\xVec$ is feasible for \refP{Problem:BPImp},
	and
	\refP{Equation:Convergence:StoppingCriteria:Theorem:ConvergenceHRL1Param}
	gives that $\xVec$ is a minimizer of \refP{Problem:BPImp}.
	Hence, we have
	\begin{align*}
		\epsilon<
		\lim_{n\rightarrow\infty} 
		\inf_{ \underset{ \text{\refP{Problem:BPImp}} }{ \text{$\zVec$ minimizer of} } }
		\norm{\xVec^{\hrloneParam_n}-\zVec}_2
		\leq 
		\lim_{n\rightarrow\infty}\norm{\xVec^{\hrloneParam'_n}-\xVec}_2
		=0.
	\end{align*}
	This is a contradiction to the assumption and thus proves statement (3).\\
	(4):
	We note that the set of minimizer of \refP{Problem:BPImp} is closed since the set of feasible points is closed
	and the objective function continuous.
	Since the objective function is a norm in a finite-dimensional space,
	the set of minimizers of \refP{Problem:BPImp} is bounded and hence compact.
	By a continuity/compactness argument
	there exists some minimizer $\xVec^{BPImp}_n$ of \refP{Problem:BPImp}
	such that
	$\inf_{ \underset{ \text{\refP{Problem:BPImp}} }{ \text{$\zVec$ minimizer of} } }
				\norm{\xVec^{\hrloneParam_n}-\zVec}_2=\norm{\xVec^{\hrloneParam_n}-\xVec^{BPImp}_n}_2$.
	Statement (3) yields that
	$\lim_{n\rightarrow\infty} \norm{\xVec^{\hrloneParam_n}-\xVec^{BPImp}_n}_2=0$.
	If $\xVec=\lim_{n\rightarrow\infty}\xVec^{\hrloneParam_n}$,
	it follows that
	\begin{align*}
		\lim_{n\rightarrow\infty} \norm{\xVec-\xVec^{BPImp}_n}_2
		\leq\lim_{n\rightarrow\infty} \norm{\xVec-\xVec^{\hrloneParam_n}}_2+\norm{\xVec^{\hrloneParam_n}-\xVec^{BPImp}_n}_2
		=0+0.
	\end{align*}
	By the closedness of the set of minimizers $\xVec$ is also a minimizer of \refP{Problem:BPImp}.
	The existence of such a convergent sequence follows from the boundedness ensured by statement (c) of \thref{Lemma:ConvergenceHRL1Param}.\\
	(5):
	By statement (d) of \thref{Lemma:ConvergenceHRL1Param} we have
	\begin{align*}
		0
		\leq\hrloneParam\normRHS{\AMat\xVec^\hrloneParam-\yVec}
		\leq \hrloneParam\normRHS{\AMat\xVec^{BPImp}-\yVec}+\left(\norm{\xVec^{BPImp}}_1-\norm{\xVec^\hrloneParam}_1\right)
		\leq \norm{\xVec^{BPImp}}_1-\norm{\xVec^\hrloneParam}_1,
	\end{align*}
	where the last inequality holds since $\xVec^{BPImp}$ is feasible and $\yVec\in\Range{\AMat}$.
	Equation
	\refP{Equation:Convergence:StoppingCriteria:Theorem:ConvergenceHRL1Param}
	yields that the right hand side and thus $\hrloneParam\normRHS{\AMat\xVec^\hrloneParam-\yVec}$
	converges to zero. Using this and \refP{Equation:Convergence:StoppingCriteria:Theorem:ConvergenceHRL1Param} again gives
	\begin{align*}
		\lim_{\hrloneParam\rightarrow\infty}
		\norm{\xVec^\hrloneParam}_1+\hrloneParam\normRHS{\AMat\xVec^\hrloneParam-\yVec}
		=\inf_{\zVec\in\argmin{\zVec'\in\mathbb{R}^N}\normRHS{\AMat\zVec'-\yVec}}\norm{\zVec}_1+0.
	\end{align*}
	Since $\yVec\in\Range{\AMat}$, this yields the last statement.
\end{proof}
The convergence to \refP{Problem:BPImp} occurs at a finite value,
but to prove this we need to introduce subdifferentials.
\begin{Definition}
	Let $C\subset\mathbb{R}^M$ be a convex set and
	$\fFunc{}:C\rightarrow\mathbb{R}$ be a convex function and $\wVec\in C$. The set
	\begin{align*}
		\SubDiff{\fFunc{}}{{\wVec}}
		:=\left\{\gVec\in\mathbb{R}^M:\scprod{\gVec}{\wVec'-\wVec}\leq\fFunc{\wVec'}-\fFunc{\wVec}
			\TextForAll \wVec'\in C\right\}
	\end{align*}
	is called the subdifferential of $\fFunc{}$ at $\wVec$. Any vector $\gVec\in\SubDiff{\fFunc{}}{{\wVec}}$
	is called subgradient of $\fFunc{}$ at $\wVec$.
\end{Definition}
We only require simple statements about subdifferentials, namely:
$\xVec^\#$ is a minimizer of $\tempmin{\xVec\in C}\fFunc{}$ if and only if
$0\in\SubDiff{\fFunc{}}{\xVec^\#}$ \cite[Section~5.2]{polyak_book}.
If $\fFunc{}$ and $\gFunc{}$ are convex with common domain $C$, then
$\SubDiff{\left(\fFunc{}+\gFunc{}\right)}{\wVec}=\SubDiff{\fFunc{}}{\wVec}+\SubDiff{\gFunc{}}{\wVec}$ \cite[Section~5.1]{polyak_book}.
Lastly, the concatenation of a convex function with an affine transformation obeys
$\SubDiff{\fFunc{\AMat\cdot-\yVec}}{\xVec}=\AMat^T\SubDiff{\fFunc{}}{\AMat\xVec-\yVec}$ \cite[Section~5.1]{polyak_book}.
For more information about subdifferentials we refer to \cite{polyak_book}.
The subdifferential of an arbitrary norm is not always a unique vector
but it has a nice characterization by its dual norm.
\begin{Lemma}\label{Lemma:SubDiff_of_Norm}
	Let $\norm{\cdot}$ be a norm on $\mathbb{R}^M$ with dual norm
	$\normDual{\cdot}:=\sup_{\normRHS{\wVec}\leq 1}\scprod{\cdot}{\wVec}$.
	Then
	\begin{align*}
		\SubDiff{\normRHS{\cdot}}{\wVec}
		=\left\{\gVec\in\mathbb{R}^M:\scprod{\gVec}{\wVec}=\normRHS{\wVec}\TextAnd \normDual{\gVec}\leq 1\right\}
	\end{align*}
	and in particular, if $\wVec\neq 0$, we have
	\begin{align*}
		\SubDiff{\normRHS{\cdot}}{\wVec}
		=\left\{\gVec\in\mathbb{R}^M:\scprod{\gVec}{\wVec}=\normRHS{\wVec}\TextAnd \normDual{\gVec}= 1\right\}.
	\end{align*}
\end{Lemma}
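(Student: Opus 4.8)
The plan is to establish the asserted set equality by a double inclusion, relying only on the definition of the subdifferential recalled above and on the dual norm $\normDual{\cdot}$. Throughout I abbreviate $A:=\SubDiff{\normRHS{\cdot}}{\wVec}$ and $B:=\left\{\gVec\in\mathbb{R}^M:\scprod{\gVec}{\wVec}=\normRHS{\wVec}\TextAnd\normDual{\gVec}\leq 1\right\}$. The single analytic ingredient I will need is the generalized Cauchy--Schwarz inequality $\scprod{\gVec}{\wVec'}\leq\normDual{\gVec}\normRHS{\wVec'}$ valid for all $\gVec,\wVec'\in\mathbb{R}^M$, which follows at once from the definition of $\normDual{\cdot}$ by rescaling $\wVec'$ to the unit sphere (the case $\wVec'=0$ being trivial).

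For the inclusion $B\subseteq A$ I would take $\gVec\in B$ and an arbitrary $\wVec'\in\mathbb{R}^M$ and estimate $\scprod{\gVec}{\wVec'-\wVec}=\scprod{\gVec}{\wVec'}-\normRHS{\wVec}\leq\normDual{\gVec}\normRHS{\wVec'}-\normRHS{\wVec}\leq\normRHS{\wVec'}-\normRHS{\wVec}$, where the first step uses $\scprod{\gVec}{\wVec}=\normRHS{\wVec}$, the second is Cauchy--Schwarz, and the third uses $\normDual{\gVec}\leq 1$. By the definition of the subdifferential this is exactly the statement $\gVec\in A$.

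For the reverse inclusion $A\subseteq B$ the key step is to probe the subgradient inequality at two well-chosen points. Given $\gVec\in A$, the choice $\wVec'=2\wVec$ yields $\scprod{\gVec}{\wVec}\leq\normRHS{\wVec}$ and the choice $\wVec'=0$ yields $\scprod{\gVec}{\wVec}\geq\normRHS{\wVec}$, so that $\scprod{\gVec}{\wVec}=\normRHS{\wVec}$. Substituting this equality back into the subgradient inequality collapses it to $\scprod{\gVec}{\wVec'}\leq\normRHS{\wVec'}$ for every $\wVec'$; taking the supremum over $\normRHS{\wVec'}\leq 1$ then gives $\normDual{\gVec}\leq 1$, hence $\gVec\in B$. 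Together with the first inclusion this proves the general formula.

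The ``in particular'' case $\wVec\neq 0$ is essentially free: I only need to sharpen $\normDual{\gVec}\leq 1$ to equality. Since $\normRHS{\wVec}>0$, the vector $\wVec/\normRHS{\wVec}$ lies on the unit sphere, so $\normDual{\gVec}\geq\scprod{\gVec}{\wVec}/\normRHS{\wVec}=1$, which combined with the already established bound forces $\normDual{\gVec}=1$. I expect no real obstacle here; the only genuinely delicate point in the whole argument is the selection of the two test points $\wVec'=0$ and $\wVec'=2\wVec$ that pin $\scprod{\gVec}{\wVec}$ exactly to $\normRHS{\wVec}$, which is precisely where the positive homogeneity of the norm enters.
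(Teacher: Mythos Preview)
Your proof is correct and follows essentially the same approach as the paper: both inclusions use the test points $\wVec'=0$ and $\wVec'=2\wVec$ to pin down $\scprod{\gVec}{\wVec}=\normRHS{\wVec}$, and the ``in particular'' part is handled identically. The only cosmetic difference is that to obtain $\normDual{\gVec}\leq 1$ the paper invokes compactness to pick a maximizer $\wVec'$ of the dual norm and plugs it into the subgradient inequality, whereas you substitute $\scprod{\gVec}{\wVec}=\normRHS{\wVec}$ back into the subgradient inequality and take the supremum directly---your route is marginally cleaner but not substantively different.
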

\begin{proof}
	At first let
	$\gVec\in\left\{\gVec'\in\mathbb{R}^M:\scprod{\gVec'}{\wVec}=\normRHS{\wVec}\TextAnd \normDual{\gVec'}\leq 1\right\}$.
	Then for any $\wVec'\in\mathbb{R}^M$ we have
	\begin{align*}
		\scprod{\gVec}{\wVec'-\wVec}
		=\scprod{\gVec}{\wVec'}-\normRHS{\wVec}
		\leq\normDual{\gVec}\normRHS{\wVec'}-\normRHS{\wVec}
		\leq\normRHS{\wVec'}-\normRHS{\wVec}
	\end{align*}
	and thus $\gVec$ is a subgradient.
	Now let $\gVec\in\SubDiff{\normRHS{\cdot}}{\wVec}$.
	If we apply the definition of a subgradient once for $\wVec':=0\in\mathbb{R}^M$ and once for $\wVec':=2\wVec$, we get
	\begin{align*}
		\scprod{\gVec}{\wVec}=\scprod{\gVec}{2\wVec-\wVec}\leq 2\normRHS{\wVec}-\normRHS{\wVec}=\normRHS{\wVec}
		\\
		\TextAnd\scprod{\gVec}{-\wVec}=\scprod{\gVec}{0-\wVec}\leq\normRHS{0}-\normRHS{\wVec}=-\normRHS{\wVec}.
	\end{align*}
	Hence, $\scprod{\gVec}{\wVec}=\normRHS{\wVec}$.
	By a continuity/compactness argument there exists a $\wVec'\in\mathbb{R}^M$ such that
	$\normRHS{\wVec'}\leq 1$ and $\normDual{\gVec}=\scprod{\gVec}{\wVec'}$.
	If we apply the definition of a subgradient for $\wVec'$, it follows that
	\begin{align*}
		\normDual{\gVec}
		=\scprod{\gVec}{\wVec'-\wVec}+\scprod{\gVec}{\wVec}
		\leq \normRHS{\wVec'}-\normRHS{\wVec}+\scprod{\gVec}{\wVec}
		=\normRHS{\wVec'}
		\leq 1.
	\end{align*}
	Thus, we obtain the second inclusion.
	For the ``in particular part'' note that, if $\wVec\neq 0$, we additionally get
	$
		1=\scprod{\gVec}{\wVec}\normRHS{\wVec}^{-1}\leq \normDual{\gVec}
	$,
	which proves the last statement.
\end{proof}
Now we can prove the result about convergence to BPImp for a finite value $\hrloneParam$.
\begin{proof}[Proof of \thref{Proposition:FiniteConvergence}]
	Since $\AMat$ is surjective, $\AMat^T:\mathbb{R}^M\rightarrow\Range{\AMat^T}$ is bijective and
	there exists an inverse mapping $\BMat:\Range{\AMat^T}\rightarrow\mathbb{R}^M$.
	Since the spaces are finite-dimensional, the operator norm of $\BMat$ is finite and obeys
	\begin{align*}
		\infty
		>\norm{\BMat}_{\infty\rightarrow\DualNormSymbol}
		:=\sup_{0\neq\vVec\in\Range{\AMat^T}}\frac{\normDual{\BMat\vVec}}{\norm{\vVec}_\infty}
		=\sup_{0\neq\wVec\in\mathbb{R}^M}\frac{\normDual{\BMat\AMat^T\wVec}}{\norm{\AMat^T\wVec}_\infty}
		=\sup_{0\neq\wVec\in\mathbb{R}^M}\frac{\normDual{\wVec}}{\norm{\AMat^T\wVec}_\infty}
		=\hrloneParam_\infty.
	\end{align*}
%	We immediately obtain that
%	\begin{align}\label{Equation:EQ1:Proposition:FiniteConvergence}
%		\normDual{\gVec}
%		=\normDual{\BMat\AMat^T\gVec}
%		\leq\norm{\BMat}_{\infty\rightarrow\DualNormSymbol}\norm{\AMat^T\gVec}_\infty
%		\TextForAll \gVec\in\mathbb{R}^M.
%	\end{align}
	Towards a contradiction assume that $\AMat\xVec^\hrloneParam\neq \yVec$.
	By the affine transformation concatenation formula and \thref{Lemma:SubDiff_of_Norm} we have
	\begin{align*}
		\SubDiff{\left(\hrloneParam\normRHS{\AMat\cdot-\yVec}\right)}{\xVec^\hrloneParam}
		=\hrloneParam\AMat^T\SubDiff{\normRHS{\cdot}}{\AMat\xVec^\hrloneParam-\yVec}
		=\left\{\hrloneParam\AMat^T\gVec:\gVec\in\mathbb{R}^M\TextAnd
			\scprod{\gVec}{\AMat\xVec^\hrloneParam-\yVec}=\normRHS{\AMat\xVec^\hrloneParam-\yVec}
			\TextAnd \normDual{\gVec}= 1\right\}.
	\end{align*}
	Let $\gVec'=\hrloneParam\AMat^T\gVec\in\SubDiff{\left(\hrloneParam\normRHS{\AMat\cdot-\yVec}\right)}{\xVec^\hrloneParam}$
	be any subgradient.
	By the bound on $\hrloneParam$ it follows that
	\begin{align*}
		\norm{\gVec'}_\infty
		=\hrloneParam\norm{\AMat^T\gVec}_\infty
		\geq\normDual{\gVec}\hrloneParam\norm{\BMat}_{\infty\rightarrow\DualNormSymbol}^{-1}
		>\normDual{\gVec}
		=1,
	\end{align*}
	i.e., no vector in $\SubDiff{\left(\hrloneParam\normRHS{\AMat\cdot-\yVec}\right)}{\xVec^\hrloneParam}$ lies in the
	$\ell_\infty$ unit ball.
%	the subdifferential of the objective function at $\xVec^\hrloneParam$
	By \thref{Lemma:SubDiff_of_Norm} on the other hand $\SubDiff{\norm{\cdot}_1}{\xVec^\hrloneParam}$
	is a subset of the $\ell_\infty$ unit ball
	and thus
	\begin{align}\label{Equation:EQ1:Proposition:FiniteConvergence}
		\SubDiff{\left(\hrloneParam\normRHS{\AMat\cdot-\yVec}\right)}{\xVec^\hrloneParam}
		\cap \left(-\SubDiff{\norm{\cdot}_1}{\xVec^\hrloneParam}\right)
		=\emptyset.
	\end{align}
	Since $\xVec^\hrloneParam$ is an optimizer,
	the optimality criterion of convex optimization yields that
	$0\in\SubDiff{\left(\norm{\cdot}_1+\hrloneParam\normRHS{\AMat\cdot-\yVec}\right)}{\xVec^\hrloneParam}$ 
	$=\SubDiff{\norm{\cdot}_1}{\xVec^\hrloneParam}+\SubDiff{\left(\hrloneParam\normRHS{\AMat\cdot-\yVec}\right)}{\xVec^\hrloneParam}$
	which is a contradiction to \refP{Equation:EQ1:Proposition:FiniteConvergence}.
	Hence, the original assumption was wrong and we have $\normRHS{\AMat\xVec^\hrloneParam-\yVec}=0$.
	By statement (1) of \thref{Theorem:ConvergenceHRL1Param}
	$\xVec^\hrloneParam$ is a minimizer of \refP{Problem:BPImp}.
%	Thus, the original assumption $\AMat\xVec^\hrloneParam\neq \yVec$ was wrong and $\xVec^\hrloneParam$
%	is a feasible point for \refP{Problem:BPImp}.
%	For any $\zVec$ with $\AMat\zVec-\yVec=0$ we use the feasibility of $\xVec^\hrloneParam$,
%	the optimality of $\xVec^\hrloneParam$ to get
%	\begin{align}
%		\norm{\xVec^\hrloneParam}_1
%		=\norm{\xVec^\hrloneParam}_1+\hrloneParam\normRHS{\AMat\xVec^\hrloneParam-\yVec}
%		\leq\norm{\zVec}_1+\hrloneParam\normRHS{\AMat\zVec-\yVec}
%		\leq\norm{\zVec}_1.
%	\end{align}
%	Hence $\xVec^\hrloneParam$ is an minimizer of \refP{Problem:BPImp}.
\end{proof}
%\subsubsection*{Proofs of Subsection Large Tuning Parameters for \hrlone{} and the Quotient Property}%\label{SubSubsection:Theory_QuotientProperty:Proofs}
%It remains to prove the Proposition about the quotient property.
We only give a sketch for a proof of \thref{Proposition:HRL1Recovery:QP},
since we would need to cite too many results for a complete proof.
\begin{proof}[Sketch for a proof of \thref{Proposition:HRL1Recovery:QP}]
	The proof is a consequence of \cite[Lemma~11.15]{IntroductionCS} and \cite[Lemma~11.16]{IntroductionCS}.
	In order to use \cite[Lemma~11.15]{IntroductionCS}, we need to consider
	\cite[Definition~11.2]{IntroductionCS} and \cite[Definition~11.4]{IntroductionCS}.
	We set $c:=S$ and $s_\ast:=1$. If $q\neq 1$, we apply
	\cite[Lemma~11.16]{IntroductionCS} to obtain the simultaneous $\left(\ell_q,\ell_1\right)$-quotient property
	relative to $\normRHS{\cdot}$ with constant $D=\left(1+\rho\right)dS^{\frac{1}{q}-1}+\tau$ and
	$D'=d$.
	If $q=1$, the simultaneous $\left(\ell_q,\ell_1\right)$-quotient property
	relative to $\normRHS{\cdot}$ is directly fulfilled with constant $D=d$ and	$D'=d$.
	In both cases this yields the second requirement of \cite[Lemma~11.15]{IntroductionCS}.
	Let $\Delta\left(\yVec\right)\in \argmin{\zVec\in\mathbb{R}^N}\norm{\zVec}_1+\hrloneParam\normRHS{\AMat\zVec-\yVec}$
	for all $\yVec\in\mathbb{R}^N$.
	If \refP{Equation:EQ1:Subsection:Theory_RecoveryGuarantees} holds true, then $\rho'=\rho$ and
	\thref{Theorem:HRL1Recovery:Lambda>TauSq} yields that
	$\left(A,\Delta\right)$ is mixed $\left(\ell_q,\ell_1\right)$-instance optimal of order $S$ with constant
	$C=\begin{Bmatrix}
		2\frac{\left(1+\rho\right)^2}{1-\rho} & \TextIf & q\in\left(1,\infty\right]\\
		2\frac{1+\rho}{1-\rho} & \TextIf & q=1 \end{Bmatrix}$.
	Thus, the other requirement of \cite[Lemma~11.15]{IntroductionCS} is fulfilled which then yields the claim.
\end{proof}
\subsection{Proofs of Subsection \ref{Subsection:Theory_RecoveryEquivalence} Equivalent Conditions for Successfull Recovery with \hrlone{}}
\label{Subsection:Theory_RecoveryEquivalence:Proofs}
In order to proof \thref{Theorem:HRL1SRdecEquivalence} and \thref{Corollary:ShapeHRL1Param},
we need to prove that certain null space properties are equivalent.
This is in general straightforward but we also need to prove that certain constants
can be preserved and this is highly nontrivial.
%In particular, we will prove the equivalence of the following pairs of null space properties:
%NSP and SNSP, NSP and RNSP, SNSP and SRNSP, as well as RNSP and SRNSP.
%Even though at least one of those equivalences is redundant in terms of \thref{Theorem:NSP<=>SNSP<=>RNSP<=>SRNSP},
The corresponding statements will also give a construction formula for missing constants
which we require for the proofs of \thref{Theorem:HRL1SRdecEquivalence} and \thref{Corollary:ShapeHRL1Param}.
However, these constructions
require calculating $\frac{N!}{S!\left(N-S\right)!}$ values.
Hence, the constants can not be calculated in polynomial time using these results. See also \cite[Section~IV]{SRNSPIsNotCalculatable}.
\begin{Lemma}[NSP and ORNSP]\label{Lemma:NSP<=>RNSP}
	Let $\AMat\in\mathbb{R}^{M\times N}$, $S\in\SetOf{N}$, $q\in\left[1,\infty\right]$ and $\normRHS{\cdot}$ be a norm on $\mathbb{R}^M$.
	Then we have the equivalence:
	\begin{itemize}
		\item[(1)]
			If $\AMat$ has $\ell_q$-ORNSP of order $S$ wrt $\normRHS{\cdot}$
			with constant $\tau$,
			then $\AMat$ has $\ell_q$-NSP of order $S$.
%		\item[(2)]
%			If $\AMat$ has $\norm{\cdot}_q$-norm $S$-NSP,
%			then for every 
%			\begin{align}
%				\tau'>\tau_q^0:=\sup_{\SetSize{T}\leq S}\sup_{\vVec\notin\Kernel{\AMat}}
%				\frac{\norm{\ProjToIndex{T}{\vVec}}_q-S^{\frac{1}{q}-1}\norm{\ProjToIndex{T^c}{\vVec}}_1}{\normRHS{\AMat\vVec}}
%			\end{align}
%			$\AMat$ has $\norm{\cdot}_q$ to $\normRHS{\cdot}$ $S$-RNSP with constant $\tau'$.
		\item[(2)]
			If $\AMat$ has $\ell_q$-NSP of order $S$,
			then $\tau_q^0\in\left(0,\infty\right)$ and for every $\tau'>\tau_q^0$
			$\AMat$ has $\ell_q$-ORNSP of order $S$ wrt $\normRHS{\cdot}$
			with constant $\tau'$.
	\end{itemize}
\end{Lemma}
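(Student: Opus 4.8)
The plan is to treat the two implications separately, with part~(2) carrying essentially all of the work. For part~(1) the argument is immediate: since the $\ell_q$-ORNSP inequality holds for every $\vVec\in\mathbb{R}^N\setminus\ZeroSet$, I would simply specialise it to $\vVec\in\Kernel{\AMat}\setminus\ZeroSet$. For such $\vVec$ we have $\normRHS{\AMat\vVec}=0$, so the robustness term drops out and the inequality reduces to $\norm{\ProjToIndex{T}{\vVec}}_q<S^{\frac{1}{q}-1}\norm{\ProjToIndex{T^c}{\vVec}}_1$, which is exactly the $\ell_q$-NSP.

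For part~(2) the crux, and the main obstacle, is to show $\tau_q^0<\infty$; the remaining assertions then follow quickly. I would fix a set $T$ with $\SetSize{T}\leq S$ and abbreviate the numerator by $f_T(\vVec):=\norm{\ProjToIndex{T}{\vVec}}_q-S^{\frac{1}{q}-1}\norm{\ProjToIndex{T^c}{\vVec}}_1$ and the denominator by $g(\vVec):=\normRHS{\AMat\vVec}$, both continuous and positively homogeneous of degree one. By homogeneity the quotient $f_T/g$ is scale invariant, so it suffices to bound it on the compact unit sphere $\left\{\vVec\in\mathbb{R}^N:\norm{\vVec}_2=1\right\}$ intersected with the open set $\left\{g>0\right\}$, which corresponds exactly to $\vVec\notin\Kernel{\AMat}$. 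If this supremum were $+\infty$, I would pick a sequence $\vVec_n$ along which the quotient diverges and, by compactness of the sphere, extract a convergent subsequence $\vVec_n\to\vVec_\ast$ with $\norm{\vVec_\ast}_2=1$. Two cases arise: if $g(\vVec_\ast)>0$, continuity forces $f_T(\vVec_n)/g(\vVec_n)\to f_T(\vVec_\ast)/g(\vVec_\ast)<\infty$, a contradiction; if $g(\vVec_\ast)=0$, then $\vVec_\ast\in\Kernel{\AMat}\setminus\ZeroSet$ and the \emph{strict} $\ell_q$-NSP gives $f_T(\vVec_\ast)<0$, so by continuity $f_T(\vVec_n)<0$ for large $n$ and the quotient is eventually negative, again contradicting divergence to $+\infty$. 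This kernel-limit case is precisely where the strictness of the NSP inequality is indispensable. Taking the maximum over the finitely many sets $T$ with $\SetSize{T}\leq S$ then yields $\tau_q^0<\infty$.

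To see $\tau_q^0>0$, which also shows that the defining supremum ranges over a nonempty set, I would first note that the NSP forces $\AMat$ to be injective on $\Sigma_S$: a nonzero $\vVec$ supported on some $T$ with $\SetSize{T}\leq S$ and lying in $\Kernel{\AMat}$ would have $\ProjToIndex{T^c}{\vVec}=0$, so the NSP would give $\norm{\vVec}_q<0$, which is absurd. Consequently any nonzero $\vVec$ supported on a single coordinate satisfies $g(\vVec)>0$ and $f_T(\vVec)=\norm{\vVec}_q>0$, so its quotient is strictly positive and $\tau_q^0>0$. Finally, for any $\tau'>\tau_q^0$ I would verify the ORNSP by splitting on the kernel: for $\vVec\notin\Kernel{\AMat}$ the definition of $\tau_q^0$ gives $f_T(\vVec)/g(\vVec)\leq\tau_q^0<\tau'$, and multiplying by $g(\vVec)>0$ yields $f_T(\vVec)<\tau'\normRHS{\AMat\vVec}$; for $\vVec\in\Kernel{\AMat}\setminus\ZeroSet$ the NSP already gives $f_T(\vVec)<0=\tau'\normRHS{\AMat\vVec}$. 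In both cases the strict ORNSP inequality holds for all $\SetSize{T}\leq S$ and all $\vVec\in\mathbb{R}^N\setminus\ZeroSet$, as required.
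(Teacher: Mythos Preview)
Your proposal is correct and follows essentially the same approach as the paper's proof: part~(1) is handled by specialising the ORNSP inequality to the kernel, and part~(2) uses the same compactness argument on the unit sphere, with the strict NSP inequality at a kernel limit point providing the contradiction to $\tau_q^0=\infty$, and a standard unit vector yielding $\tau_q^0>0$. One small imprecision: when you write $f_T(\vVec)=\norm{\vVec}_q$ for a vector supported on a single coordinate, this requires that coordinate to lie in $T$, so you should either take $T$ to be that singleton or note that the outer supremum in $\tau_q^0$ lets you choose $T$ accordingly (the paper does the latter, picking $n'\in T$).
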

\begin{proof}
	Statement (1):
	Let $\vVec\in\Kernel{\AMat}\setminus\ZeroSet$ and $\SetSize{T}\leq S$. By the ORNSP we have
	\begin{align*}
		\norm{\ProjToIndex{T}{\vVec}}_q<S^{\frac{1}{q}-1}\norm{\ProjToIndex{T^c}{\vVec}}_q+\tau\normRHS{\AMat\vVec}
		=S^{\frac{1}{q}-1}\norm{\ProjToIndex{T^c}{\vVec}}_q.
	\end{align*}
	Statement (2):
%	If $\Kernel{\AMat}=\mathbb{R}^N$, the statement is trivial. So we assume $\mathbb{R}^N\setminus\Kernel{\AMat}\neq \ZeroSet$.
	Let $T\subset\SetOf{N}$ be an arbitrary set with $\SetSize{T}\leq S$. We set
	\begin{align*}
		\tau_T
		:=\sup_{\vVec\in\mathbb{R}^N\setminus\Kernel{\AMat}}
			\frac{\norm{\ProjToIndex{T}{\vVec}}_q-S^{\frac{1}{q}-1}\norm{\ProjToIndex{T^c}{\vVec}}_1}{\normRHS{\AMat\vVec}}
		=\sup_{\underset{\norm{\vVec}_2=1}{\vVec\in\mathbb{R}^N\setminus\Kernel{\AMat}:}}
			\frac{\norm{\ProjToIndex{T}{\vVec}}_q-S^{\frac{1}{q}-1}\norm{\ProjToIndex{T^c}{\vVec}}_1}{\normRHS{\AMat\vVec}}.
	\end{align*}
	If $\Kernel{\AMat}=\mathbb{R}^N$, then $\AMat$ is the zero matrix
	which does not have $\ell_q$-NSP of order $S\geq 1$.
	Thus, we have $\mathbb{R}^N\setminus\Kernel{\AMat}\neq\ZeroSet$
	and hence $\tau_T>-\infty$.
%	Since $\mathbb{R}^N\setminus\Kernel{\AMat}\neq \ZeroSet$ there are actually elements in the feasible region.
%	Also the denominator never vanishes on the feasible region.
%	The objective function is continuous, but the feasible region is not compact,
%	so $\tau_T$ might be infinity.
	For now assume $\tau_T<\infty$ for all $\SetSize{T}\leq S$ and note that
	$
		\tau_q^0=\sup_{\SetSize{T}\leq S}\tau_T
	$.
	Since this supremum is being taken over finitely many elements, we have $\tau_q^0<\infty$.
	Since $\AMat$ has the $\ell_q$-NSP with $S\geq 1$,
	any standard unit vector $\eVec^{n'}$ can not be an element of $\Kernel{\AMat}$.
	For $n'\in T$, it follows that $\tau_T\geq\frac{1}{\normRHS{\AMat\eVec^{n'}}}>0$.
	Thus, we have $\tau_q^0\in\left(0,\infty\right)$.
	Now let $\tau'>\tau_q^0$ be arbitrary. We get for all $\vVec\notin\Kernel{\AMat}$
	\begin{align*}
		\norm{\ProjToIndex{T}{\vVec}}_q
		=&\frac{\norm{\ProjToIndex{T}{\vVec}}_q-S^{\frac{1}{q}-1}\norm{\ProjToIndex{T^c}{\vVec}}_1}{\normRHS{\AMat\vVec}}\normRHS{\AMat\vVec}
			+S^{\frac{1}{q}-1}\norm{\ProjToIndex{T^c}{\vVec}}_1
		\leq\tau_T\normRHS{\AMat\vVec}
			+S^{\frac{1}{q}-1}\norm{\ProjToIndex{T^c}{\vVec}}_1
		\\\leq&\tau_q^0\normRHS{\AMat\vVec}
			+S^{\frac{1}{q}-1}\norm{\ProjToIndex{T^c}{\vVec}}_1
		<\tau'\normRHS{\AMat\vVec}
			+S^{\frac{1}{q}-1}\norm{\ProjToIndex{T^c}{\vVec}}_1.
	\end{align*}
	For all $\vVec\in\Kernel{\AMat}\setminus\ZeroSet$ we get by the NSP anyway 
	\begin{align*}
		\norm{\ProjToIndex{T}{\vVec}}_q
		<S^{\frac{1}{q}-1}\norm{\ProjToIndex{T^c}{\vVec}}_1
		=\tau'\normRHS{\AMat\vVec}
			+S^{\frac{1}{q}-1}\norm{\ProjToIndex{T^c}{\vVec}}_1.
	\end{align*}
	So $\AMat$ has $\ell_q$-ORNSP of order $S$ wrt $\normRHS{\cdot}$
	with the claimed constant.
	It remains to prove that $\tau_T<\infty$.
	Recall that $\mathbb{R}^N\setminus\Kernel{\AMat}\neq\ZeroSet$.
	Suppose there exists a sequence of vectors
	$\left(\vVec_k\right)_{k\in\mathbb{N}}\subset\mathbb{R}^N\setminus\Kernel{\AMat}$ such that
	$\norm{\vVec_k}_2=1$ and
	\begin{align*}
			\frac{\norm{\ProjToIndex{T}{\vVec_k}}_q
			-S^{\frac{1}{q}-1}\norm{\ProjToIndex{T^c}{\vVec_k}}_1}{
			\normRHS{\AMat\vVec_k}}\rightarrow\infty.
	\end{align*}
	Since the sequence $\left(\vVec_k\right)_{k\in\mathbb{N}}$ is bounded, it contains a
	subsequence $\left(\vVec'_k\right)_{k\in\mathbb{N}}$
	that converges to some $\vVec$.
%	Let $\vVec:=\lim_{k\rightarrow\infty}\vVec_k'$.
	Since the sequence $\vVec'_k$ is bounded, we have
	$
		\norm{\ProjToIndex{T}{\vVec'_k}}_q
			-S^{\frac{1}{q}-1}\norm{\ProjToIndex{T^c}{\vVec'_k}}_1
			\leq \norm{\vVec'_k}_q\leq R
	$
	for some $R>0$. Thus, we get 
	\begin{align*}
		\frac{\norm{\ProjToIndex{T}{\vVec'_k}}_q
			-S^{\frac{1}{q}-1}\norm{\ProjToIndex{T^c}{\vVec'_k}}_1}{
			\normRHS{\AMat\vVec'_k}}
		\leq\frac{R}{\normRHS{\AMat\vVec'_k}}.
	\end{align*}
	The left hand side goes to infinity for $k\rightarrow\infty$, thus the
	denominator on the right hand side needs to go to zero.
	Hence, we obtain that $\vVec\in\Kernel{\AMat}$. Since also $\norm{\vVec}_2=\lim_{k\rightarrow\infty}\norm{\vVec'_k}_2=1\neq 0$,
	we have $\vVec\in\Kernel{\AMat}\setminus\ZeroSet$.
	We can use the NSP to obtain
	$
		\norm{\ProjToIndex{T}{\vVec}}_q
			-S^{\frac{1}{q}-1}\norm{\ProjToIndex{T^c}{\vVec}}_1< 0
	$.
	By continuity there exists a $k_0$ such that for all $k\geq k_0$
	we also have the strict inequality
	\begin{align*}
		\norm{\ProjToIndex{T}{\vVec'_k}}_q
			-S^{\frac{1}{q}-1}\norm{\ProjToIndex{T^c}{\vVec'_k}}_1
		<0,
	\end{align*}
	but this is a contradiction to
	\begin{align*}
		\lim_{k\rightarrow\infty}\frac{\norm{\ProjToIndex{T}{\vVec'_k}}_q
			- S^{\frac{1}{q}-1}\norm{\ProjToIndex{T^c}{\vVec'_k}}_1}{
			\normRHS{\AMat\vVec'_k}}
		=
		\lim_{k\rightarrow\infty}\frac{\norm{\ProjToIndex{T}{\vVec_k}}_q
			- S^{\frac{1}{q}-1}\norm{\ProjToIndex{T^c}{\vVec_k}}_1}{
			\normRHS{\AMat\vVec_k}}=\infty.
	\end{align*}
	So it follows that $\tau_T<\infty$.
\end{proof}
We will see in \thref{Corollary:ShapeSNSPAndRNSP} that
we can not improve this result to any $\tau'\leq\tau_q^0$.
\begin{Lemma}[Equivalence of ORNSP and RNSP]\label{Lemma:RNSP<=>SRNSP}
	Let $\AMat\in\mathbb{R}^{M\times N}$, $S\in\SetOf{N}$, $q\in\left[1,\infty\right]$ and $\normRHS{\cdot}$ be a norm on $\mathbb{R}^M$.
	Then we have the equivalence:
	\begin{itemize}
		\item[(1)]
			If $\AMat$ has 
			$\ell_q$-RNSP of order $S$ wrt $\normRHS{\cdot}$
			with constants $\rho$ and $\tau$,
			then for every $\tau'>\tau$ $\AMat$ has $\ell_q$-ORNSP of order $S$ wrt $\normRHS{\cdot}$
			with constant $\tau'$.
%		\item[(2)]
%			If $\AMat$ has $\norm{\cdot}_q$ to $\normRHS{\cdot}$ $S$-RNSP with constant
%			$\tau>0$,
%			then
%			$\AMat$ has $\norm{\cdot}_q$ to $\normRHS{\cdot}$ $S$-SRNSP with robustness constant
%			$\tau$ and stableness constant
%			\begin{align}
%				\rho_q\left(\tau\right)
%				:=\sup_{\SetSize{T}\leq S}\sup_{\vVec\TextSuchThat \ProjToIndex{T^c}{\vVec}\neq 0}
%					\frac{\norm{\ProjToIndex{T}{\vVec}}_q-\tau\normRHS{\AMat\vVec}}{
%					S^{\frac{1}{q}-1}\norm{\ProjToIndex{T^c}{\vVec}}_1}.
%			\end{align}
		\item[(2)]
			If $\AMat$ has $\ell_q$-ORNSP of order $S$ wrt $\normRHS{\cdot}$
			with constant $\tau$,
			then $\rho_q\left(\tau\right)\in\left[0,1\right)$ and
			$\AMat$ has $\ell_q$-RNSP of order $S$ wrt $\normRHS{\cdot}$
			with constants $\rho_q\left(\tau\right)$ and $\tau$.
	\end{itemize}
\end{Lemma}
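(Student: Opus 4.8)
The plan is to prove the two implications separately, with statement (2) carrying essentially all of the difficulty. For statement (1), I would start from the RNSP bound $\norm{\ProjToIndex{T}{\vVec}}_q \leq \rho S^{\frac{1}{q}-1}\norm{\ProjToIndex{T^c}{\vVec}}_1 + \tau\normRHS{\AMat\vVec}$ and upgrade it to the strict ORNSP inequality with a larger constant $\tau'>\tau$ by distinguishing two cases for a fixed nonzero $\vVec$ and set $\SetSize{T}\leq S$. If $\ProjToIndex{T^c}{\vVec}\neq 0$, then $\rho<1$ already yields $\rho S^{\frac{1}{q}-1}\norm{\ProjToIndex{T^c}{\vVec}}_1 < S^{\frac{1}{q}-1}\norm{\ProjToIndex{T^c}{\vVec}}_1$, and combined with $\tau\normRHS{\AMat\vVec}\leq\tau'\normRHS{\AMat\vVec}$ this gives the strict bound. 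If $\ProjToIndex{T^c}{\vVec}=0$, then $\vVec$ is supported on $T$; were $\normRHS{\AMat\vVec}=0$, the RNSP would force $\norm{\vVec}_q\leq 0$ and hence $\vVec=0$, contradicting $\vVec\neq 0$. Thus $\normRHS{\AMat\vVec}>0$, and $\tau'>\tau$ turns $\norm{\ProjToIndex{T}{\vVec}}_q\leq\tau\normRHS{\AMat\vVec}$ into the strict ORNSP inequality.

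For statement (2), the core is to show that the NSP shape value $\rho_q\left(\tau\right)$ is strictly below $1$; once this is established the RNSP bound follows by unwinding the definition of $\rho_q\left(\tau\right)$. Since the outer supremum ranges over finitely many index sets, it suffices to bound, for each fixed $\SetSize{T}\leq S$, the quantity $\rho_T:=\sup\frac{\norm{\ProjToIndex{T}{\vVec}}_q-\tau\normRHS{\AMat\vVec}}{S^{\frac{1}{q}-1}\norm{\ProjToIndex{T^c}{\vVec}}_1}$ over all $\vVec$ with $\ProjToIndex{T^c}{\vVec}\neq 0$. The ORNSP makes each individual ratio strictly less than $1$, so $\rho_T\leq 1$; the work is to rule out $\rho_T=1$.

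This is the \emph{main obstacle}, and I would handle it with the same compactness/continuity scheme as in the proof of \thref{Lemma:NSP<=>RNSP}. Using the scale-invariance of the ratio I would restrict to the unit sphere $\norm{\vVec}_2=1$, assume towards a contradiction a sequence $\vVec_k$ with ratio tending to $1$, and pass to a convergent subsequence $\vVec_k\to\vVec$ with $\norm{\vVec}_2=1$. If $\ProjToIndex{T^c}{\vVec}\neq 0$, the denominator stays bounded away from zero and continuity forces the limiting ratio to equal $1$, contradicting the strict ORNSP bound at the nonzero vector $\vVec$. If $\ProjToIndex{T^c}{\vVec}=0$, the denominator tends to zero; since the ratios converge to $1$, the numerators $\norm{\ProjToIndex{T}{\vVec_k}}_q-\tau\normRHS{\AMat\vVec_k}$ must tend to zero, so in the limit $\norm{\ProjToIndex{T}{\vVec}}_q=\tau\normRHS{\AMat\vVec}$; but applying ORNSP to the nonzero $S$-sparse vector $\vVec$ (which has $\ProjToIndex{T^c}{\vVec}=0$) yields $\norm{\ProjToIndex{T}{\vVec}}_q<\tau\normRHS{\AMat\vVec}$, again a contradiction. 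Hence $\rho_T<1$ for every $T$, and $\rho_q\left(\tau\right)=\max\{0,\max_T\rho_T\}\in\left[0,1\right)$.

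Finally, to obtain the RNSP bound with constants $\rho_q\left(\tau\right)$ and $\tau$, I would once more split on whether $\ProjToIndex{T^c}{\vVec}=0$. When $\ProjToIndex{T^c}{\vVec}\neq 0$, the inequality $\rho_T\leq\rho_q\left(\tau\right)$ rearranges directly (the denominator being positive) into the required RNSP inequality. When $\ProjToIndex{T^c}{\vVec}=0$, the claim reduces to $\norm{\vVec}_q\leq\tau\normRHS{\AMat\vVec}$, which holds trivially for $\vVec=0$ and follows from the strict ORNSP inequality otherwise. This covers all $\vVec\in\mathbb{R}^N$ and completes the reverse implication.
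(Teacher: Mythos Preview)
Your proposal is correct and follows essentially the same approach as the paper: the same compactness argument on the $\ell_2$-unit sphere to rule out $\rho_T=1$, and the same unwinding of the definition to obtain the RNSP bound. The only cosmetic differences are that in statement (1) you split on whether $\ProjToIndex{T^c}{\vVec}=0$ while the paper splits on whether $\vVec\in\Kernel{\AMat}$, and in the degenerate case $\ProjToIndex{T^c}{\vVec}=0$ of statement (2) you pass to the limit of the numerator to get equality and contradict ORNSP at $\vVec$, whereas the paper uses the strict ORNSP inequality at $\vVec$ together with continuity to force the numerators to be eventually negative; both variants are valid and equivalent in spirit.
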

\begin{proof}
	Statement (1):
	Let $\tau'>\tau$.
	Let $\vVec\notin\Kernel{\AMat}$ and $\SetSize{T}\leq S$. By the RNSP we have
	\begin{align*}
		\norm{\ProjToIndex{T}{\vVec}}_q
		\leq& \rho S^{\frac{1}{q}-1}\norm{\ProjToIndex{T^c}{\vVec}}_1+\tau\normRHS{\AMat\vVec}
		\leq S^{\frac{1}{q}-1}\norm{\ProjToIndex{T^c}{\vVec}}_1+\tau\normRHS{\AMat\vVec}
		<S^{\frac{1}{q}-1}\norm{\ProjToIndex{T^c}{\vVec}}_1
			+\tau'\normRHS{\AMat\vVec}.
	\end{align*}
	Now let $\vVec\in\Kernel{\AMat}\setminus\ZeroSet$. Then either $\ProjToIndex{T}{\vVec}\neq 0$
	or $\ProjToIndex{T^c}{\vVec}\neq 0$. Suppose that $\ProjToIndex{T}{\vVec}\neq 0$.
	Then the RNSP yields that
	$
		0<\norm{\ProjToIndex{T}{\vVec}}_q
		\leq \rho S^{\frac{1}{q}-1}\norm{\ProjToIndex{T^c}{\vVec}}_1+\tau\normRHS{\AMat\vVec}
		=\rho S^{\frac{1}{q}-1}\norm{\ProjToIndex{T^C}{\vVec}}_1
	$.
	Thus, in both cases $\ProjToIndex{T^c}{\vVec}\neq 0$. Using this, $\rho<1$ and the RNSP once more yields
	\begin{align*}
		\norm{\ProjToIndex{T}{\vVec}}_q
		\leq& \rho S^{\frac{1}{q}-1}\norm{\ProjToIndex{T^c}{\vVec}}_1 +\tau\normRHS{\AMat\vVec}
		< S^{\frac{1}{q}-1}\norm{\ProjToIndex{T^c}{\vVec}}_1+\tau\normRHS{\AMat\vVec}
		= S^{\frac{1}{q}-1}\norm{\ProjToIndex{T^c}{\vVec}}_1+\tau'\normRHS{\AMat\vVec}.
	\end{align*}
	It follows that $\AMat$ has $\ell_q$-ORNSP of order $S$ wrt $\normRHS{\cdot}$
	with constant $\tau'$.\\
	Statement (2):
	Let $T\subset \SetOf{N}$ be an arbitray set with $\SetSize{T}\leq S$.
	%If $T=\SetOf{N}$, we set $\rho_T:=0$. In other cases we set
	We set
	\begin{align*}
		\rho_T
		:=\sup_{\underset{\ProjToIndex{T^c}{\vVec}\neq 0}{\vVec\in\mathbb{R}^N:}}
			\frac{\norm{\ProjToIndex{T}{\vVec}}_q-\tau\normRHS{\AMat\vVec}}{
			S^{\frac{1}{q}-1}\norm{\ProjToIndex{T^c}{\vVec}}_1}
		=\sup_{\underset{\ProjToIndex{T^c}{\vVec}\neq 0,\norm{\vVec}_2=1}{\vVec\in\mathbb{R}^N:}}
			\frac{\norm{\ProjToIndex{T}{\vVec}}_q-\tau\normRHS{\AMat\vVec}}{
			S^{\frac{1}{q}-1}\norm{\ProjToIndex{T^c}{\vVec}}_1}.
	\end{align*}
	By the ORNSP we have $\rho_T\leq 1$.
	For now assume that	$\rho_T<1$ for all $\SetSize{T}\leq S$ and note that
	$
		\rho_q\left(\tau\right)=\max\left\{0,\sup_{\SetSize{T}\leq S}\rho_T\right\}
	$.
	Since this supremum is being taken over finitely many elements, we have $\rho_q\left(\tau\right)\in\left[0,1\right)$.
	For all $\vVec$ such that $\ProjToIndex{T^c}{\vVec}\neq 0$ we get
	\begin{align*}
		\norm{\ProjToIndex{T}{\vVec}}_q
		=&\frac{\norm{\ProjToIndex{T}{\vVec}}_q-\tau\normRHS{\AMat\vVec}}{
				S^{\frac{1}{q}-1}\norm{\ProjToIndex{T^c}{\vVec}}_1
			}
			S^{\frac{1}{q}-1}\norm{\ProjToIndex{T^c}{\vVec}}_1 +\tau\normRHS{\AMat\vVec}
		\leq\rho_T S^{\frac{1}{q}-1}\norm{\ProjToIndex{T^c}{\vVec}}_1+\tau\normRHS{\AMat\vVec}
		\\\leq&\rho_q\left(\tau\right) S^{\frac{1}{q}-1}\norm{\ProjToIndex{T^c}{\vVec}}_1+\tau\normRHS{\AMat\vVec}.
	\end{align*}
	For all $\vVec\neq 0$ such that $\ProjToIndex{T^c}{\vVec}= 0$ we get by using the ORNSP
	\begin{align*}
		\norm{\ProjToIndex{T}{\vVec}}_q
		<S^{\frac{1}{q}-1}\norm{\ProjToIndex{T^c}{\vVec}}_1+\tau\normRHS{\AMat\vVec}
		=\rho_q\left(\tau\right) S^{\frac{1}{q}-1}\norm{\ProjToIndex{T^c}{\vVec}}_1+\tau\normRHS{\AMat\vVec}.
	\end{align*}
	So $\AMat$ has $\ell_q$-RNSP of order $S$ wrt $\normRHS{\cdot}$ with the claimed stableness constant.\\
	It remains to prove $\rho_T<1$.
	If $\left\{\vVec\TextSuchThat \ProjToIndex{T^c}{\vVec}\neq 0\right\}=\emptyset$, then $\rho_T=-\infty<1$.
	On the other hand assume $\left\{\vVec\TextSuchThat \ProjToIndex{T^c}{\vVec}\neq 0\right\}\neq \emptyset$.
	Suppose there exists a sequence of vectors
	$\left(\vVec_k\right)_{k\in\mathbb{N}}$ such that
	$\norm{\vVec_k}_2=1$, $\ProjToIndex{T^c}{\vVec_k}\neq 0$ and
	\begin{align*}
		\frac{\norm{\ProjToIndex{T}{\vVec_k}}_q-\tau\norm{\AMat\vVec_k}}{
				S^{\frac{1}{q}-1}\norm{\ProjToIndex{T^c}{\vVec_k}}_1
			}
			\rightarrow 1.
	\end{align*}
	Since $\left(\vVec_k\right)_{k\in\mathbb{N}}$ is bounded, it contains a convergent subsequence
	$\left(\vVec'_k\right)_{k\in\mathbb{N}}$. Let $\vVec:=\lim_{k\rightarrow\infty}\vVec'_k$.
	There are now two cases that both result in a contradiction. The first one is
	$\ProjToIndex{T^c}{\vVec}\neq 0$. Then we have by the ORNSP
	\begin{align*}
		1
		=\lim_{k\rightarrow\infty}\frac{\norm{\ProjToIndex{T}{\vVec'_k}}_q-\tau\normRHS{\AMat\vVec'_k}}{
				S^{\frac{1}{q}-1}\norm{\ProjToIndex{T^c}{\vVec'_k}}_1
			}
		=\frac{\norm{\ProjToIndex{T^c}{\vVec}}_q-\tau\normRHS{\AMat\vVec}}{
				S^{\frac{1}{q}-1}\norm{\ProjToIndex{T^c}{\vVec}}_1
			}
		<1
	\end{align*}
	which is a contradiction. The second case is $\ProjToIndex{T^c}{\vVec}=0$.
	Since $\norm{\vVec}_2=1\neq 0$, the ORNSP yields that
	$
		\norm{\ProjToIndex{T}{\vVec}}_q
		<S^{\frac{1}{q}-1}\norm{\ProjToIndex{T^c}{\vVec}}_1+\tau\normRHS{\AMat\vVec}
		=\tau\normRHS{\AMat\vVec}
	$.
	By continuity there exists a $k_0\in\mathbb{N}$ such that for all $k\geq k_0$ we also
	have the strict inequality
	\begin{align*}
		\norm{\ProjToIndex{T^c}{\vVec'_k}}_q-\tau\normRHS{\AMat\vVec'_k}<0,
	\end{align*}
	but this is a contradiction to 
	\begin{align*}
		\lim_{k\rightarrow\infty}
			\frac{\norm{\ProjToIndex{T}{\vVec'_k}}_q-\tau\norm{\AMat\vVec'_k}}{
				S^{\frac{1}{q}-1}\norm{\ProjToIndex{T^c}{\vVec'_k}}_1
			}
		\lim_{k\rightarrow\infty}
			\frac{\norm{\ProjToIndex{T}{\vVec_k}}_q-\tau\norm{\AMat\vVec_k}}{
				S^{\frac{1}{q}-1}\norm{\ProjToIndex{T^c}{\vVec_k}}_1
			}
			=1.
	\end{align*}
	It follows that $\rho_T<1$.
\end{proof}
%\begin{Theorem}[All Null Space Properties Are Equivalent up to Constants]\label{Theorem:NSP<=>SNSP<=>RNSP<=>SRNSP}
%	Let $q\in\left[1,\infty\right]$ and $\normRHS{\cdot}$ be a norm on $\mathbb{R}^M$.
%	Let $S\in\SetOf{N}$. Let $\AMat\in\mathbb{R}^{M\times N}$ be a matrix.
%	The following conditions are equivalent:
%		\begin{itemize}
%		\item[(1)]
%			The matrix
%			$\AMat$ has the $\norm{\cdot}_q$-norm $S$-NSP.
%		\item[(2)]
%			The matrix
%			$\AMat$ has the $\norm{\cdot}_q$-norm $S$-SNSP.
%		\item[(3)]
%			The matrix
%			$\AMat$ has the $\norm{\cdot}_q$ to $\normRHS{\cdot}$ $S$-RNSP.
%		\item[(4)]
%			The matrix
%			$\AMat$ has the $\norm{\cdot}_q$ to $\normRHS{\cdot}$ $S$-SRNSP.
%	\end{itemize}
%	The constants may change if they appear in two statments.
%	For a relation on constants see \thref{Lemma:NSP<=>SNSP}, \thref{Lemma:NSP<=>RNSP}, \thref{Lemma:SNSP<=>SRNSP}
%	and \thref{Lemma:RNSP<=>SRNSP}.
%\end{Theorem}
%\begin{proof}
%	The proof follows from \thref{Lemma:NSP<=>SNSP}, \thref{Lemma:NSP<=>RNSP}, \thref{Lemma:SNSP<=>SRNSP} and
%	\thref{Lemma:RNSP<=>SRNSP}.%, which will be proven next.
%\end{proof}
Recall that we want to prove \thref{Theorem:HRL1SRdecEquivalence}.
In order to do that, we need to prove that
$\AMat$ has $\ell_q$-ORNSP of order $S$ with some constant $\tau<\hrloneParam$.
This topological property is also a consequence of \thref{Lemma:NSP<=>RNSP} as
we will prove next.
\begin{Corollary}\label{Corollary:ShapeSNSPAndRNSP}
	Let $S\in\SetOf{N}$, $q\in\left[1,\infty\right]$ and $\normRHS{\cdot}$ be a norm on $\mathbb{R}^M$.
	Let $\AMat\in\mathbb{R}^{M\times N}$ have $\ell_q$-NSP of order $S$.
%	and set
%	\begin{align}
%		\rho_q^0:=&\sup_{\SetSize{T}\leq S}\sup_{\vVec\in\Kernel{\AMat}\setminus\ZeroSet}
%				\frac{\norm{\ProjToIndex{T}{\vVec}}_q}{S^{\frac{1}{q}-1}\norm{\ProjToIndex{T^c}{\vVec}}_1},
%		\\\tau_q^0
%				:=&\sup_{\SetSize{T}\leq S}\sup_{\vVec\notin\Kernel{\AMat}}
%				\frac{\norm{\ProjToIndex{T}{\vVec}}_q-S^{\frac{1}{q}-1}\norm{\ProjToIndex{T^c}{\vVec}}_1}{\normRHS{\AMat\vVec}}.
%	\end{align}
	Then we have
	\begin{align*}
%		\left\{\text{$\rho'\in\left[0,1\right): \AMat$ has $\norm{\cdot}_q$-norm $S$-SNSP with constant $\rho'$}\right\}
%		Sshape_q
%		=&\left[\rho_q^0,1\right) \text{ is a closed set in $\left[0,1\right)$ and }
%		\\\left\{\text{$\tau'\in\left[0,\infty\right): \AMat$ has $\norm{\cdot}_q$ to $\normRHS{\cdot}$ $S$-RNSP with constant $\tau'$}\right\}
%		\\
		\left\{\text{$\tau\in\left[0,\infty\right):$
			$\AMat$ has $\ell_q$-ORNSP of order $S$ wrt $\normRHS{\cdot}$ with constant $\tau$}\right\}
		=&\left(\tau_q^0,\infty\right),
	\end{align*}
	which is an open set.
\end{Corollary}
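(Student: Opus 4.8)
The plan is to prove the two inclusions separately; openness is then automatic, since $(\tau_q^0,\infty)$ is an open interval contained in $[0,\infty)$ (recall $\tau_q^0\in(0,\infty)$ by \thref{Lemma:NSP<=>RNSP}). The inclusion $(\tau_q^0,\infty)\subseteq\{\cdots\}$ is exactly statement (2) of \thref{Lemma:NSP<=>RNSP}: every $\tau'>\tau_q^0$ is an admissible ORNSP constant. The content lies in the reverse inclusion, i.e.\ that any admissible $\tau$ must satisfy $\tau>\tau_q^0$, or equivalently that $\tau_q^0$ itself is \emph{not} admissible.

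The key step I would establish first is that the supremum defining $\tau_q^0$ is attained. Since there are only finitely many $T$ with $\SetSize{T}\leq S$, we have $\tau_q^0=\max_T\tau_T$ in the notation of the proof of \thref{Lemma:NSP<=>RNSP}, and I would fix $T^\ast$ with $\tau_{T^\ast}=\tau_q^0$, recalling $\tau_q^0>0$. I would then take a maximizing sequence $(\vVec_k)$ with $\norm{\vVec_k}_2=1$ and $\vVec_k\notin\Kernel{\AMat}$ along which the quotient $\frac{\norm{\ProjToIndex{T^\ast}{\vVec_k}}_q-S^{\frac{1}{q}-1}\norm{\ProjToIndex{(T^\ast)^c}{\vVec_k}}_1}{\normRHS{\AMat\vVec_k}}$ tends to $\tau_{T^\ast}$, and by compactness pass to a subsequence converging to some $\vVec$ with $\norm{\vVec}_2=1$. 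If $\vVec\notin\Kernel{\AMat}$, continuity of the numerator and of the (nonzero) denominator gives that the quotient at $\vVec$ equals $\tau_{T^\ast}$, so the supremum is attained. If instead $\vVec\in\Kernel{\AMat}\setminus\ZeroSet$, the $\ell_q$-NSP forces $\norm{\ProjToIndex{T^\ast}{\vVec}}_q-S^{\frac{1}{q}-1}\norm{\ProjToIndex{(T^\ast)^c}{\vVec}}_1<0$, hence the numerator is strictly negative along the tail while the denominator stays positive, making the quotient negative there, which contradicts its convergence to $\tau_{T^\ast}>0$. Thus this case cannot occur and $\tau_q^0$ is attained at some $\vVec^\ast\notin\Kernel{\AMat}$.

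With the maximizer $(T^\ast,\vVec^\ast)$ in hand, the reverse inclusion is a one-line evaluation: if $\AMat$ has $\ell_q$-ORNSP of order $S$ wrt $\normRHS{\cdot}$ with constant $\tau$, then applying its defining strict inequality to the nonzero vector $\vVec^\ast$ and to $T^\ast$ gives $\norm{\ProjToIndex{T^\ast}{\vVec^\ast}}_q<S^{\frac{1}{q}-1}\norm{\ProjToIndex{(T^\ast)^c}{\vVec^\ast}}_1+\tau\normRHS{\AMat\vVec^\ast}$, which rearranges (using $\vVec^\ast\notin\Kernel{\AMat}$, so $\normRHS{\AMat\vVec^\ast}>0$) to $\tau_q^0<\tau$. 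Hence every admissible $\tau$ exceeds $\tau_q^0$, and in particular $\tau_q^0$ is inadmissible; this yields the inclusion into $(\tau_q^0,\infty)$. Combined with the first inclusion this gives the claimed set equality, and the set is open as an open interval.

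The main obstacle is precisely the attainment of $\tau_q^0$: the feasible region (the unit $\ell_2$-sphere with the kernel removed) is not compact, so I must rule out a maximizing sequence drifting into $\Kernel{\AMat}$. The positivity $\tau_q^0>0$ together with the $\ell_q$-NSP, which makes the numerator strictly negative on nonzero kernel vectors, is exactly what excludes this escape. This is the same continuity/compactness mechanism used in the proof of \thref{Lemma:NSP<=>RNSP}, reused here to upgrade mere finiteness of $\tau_q^0$ to its attainment.
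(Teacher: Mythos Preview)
Your proposal is correct and follows essentially the same approach as the paper's own proof: both obtain the inclusion $\supset$ from \thref{Lemma:NSP<=>RNSP}, then take a maximizing sequence on the unit sphere, extract a convergent subsequence, rule out the kernel limit via the $\ell_q$-NSP, and use the resulting attained maximizer $(\vVec^\ast,T^\ast)$ to show that equality holds at $\tau_q^0$, so $\tau_q^0$ cannot be an ORNSP constant. Your treatment of the kernel case is in fact slightly cleaner than the paper's, since you directly use $\tau_{T^\ast}=\tau_q^0>0$ to contradict an eventually negative quotient, whereas the paper argues via the numerator being forced to zero.
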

\begin{proof}
	Note that by \thref{Lemma:NSP<=>RNSP} we obtain the inclusion $\supset$.
%	\begin{align}
%		\left\{\text{$\tau\in\left[0,\infty\right):$
%			$\AMat$ has $\ell_q$-ORNSP of order $S$ wrt $\normRHS{\cdot}$ with constant $\tau$}\right\}\supset\left(\tau_q^0,\infty\right).
%	\end{align}
	By the definiton of $\tau_q^0$ there exist $T$, $\left(\vVec_k\right)_{k\in\mathbb{N}}$ such that $\norm{\vVec_k}_2=1$ and
	\begin{align*}
		\tau_q^0=\lim_{k\rightarrow\infty}
		\frac{\norm{\ProjToIndex{T}{\vVec_k}}_q-S^{\frac{1}{q}-1}\norm{\ProjToIndex{T^c}{\vVec_k}}_1}{\normRHS{\AMat\vVec_k}}.
	\end{align*}
	Since $\norm{\vVec_k}_2=1$, there exists a subsequence $\left(\vVec'_k\right)_{k\in\mathbb{N}}$ that converges to some
	$\vVec$ with $\norm{\vVec}_2=1$.
	If $\vVec\in\Kernel{\AMat}$, we set $\epsilon:=S^{\frac{1}{q}-1}\norm{\ProjToIndex{T^c}{\vVec}}_1-\norm{\ProjToIndex{T}{\vVec}}_q$
	which is strictly positive by the $\ell_q$-NSP.
%	We recall that
%	\begin{align}
%		\frac{\norm{\ProjToIndex{T}{\vVec'_k}}_q-S^{\frac{1}{q}-1}\norm{\ProjToIndex{T^c}{\vVec'_k}}_1}{\normRHS{\AMat\vVec'_k}}
%	\end{align}
%	converges to the finite value $\tau_q^0$.
	Since $\vVec\in\Kernel{\AMat}$, $\normRHS{\AMat\vVec'_k}$ converges to zero. Hence,
	$\norm{\ProjToIndex{T}{\vVec'_k}}_q-S^{\frac{1}{q}-1}\norm{\ProjToIndex{T^c}{\vVec'_k}}_1$ needs to converge to zero too.
	By continuity there exists a $k_0$ such that for all $k\geq k_0$ we have
	\begin{align*}
		\norm{\ProjToIndex{T}{\vVec'_k}}_q-S^{\frac{1}{q}-1}\norm{\ProjToIndex{T^c}{\vVec'_k}}_1\leq \frac{\epsilon}{2}.
	\end{align*}
	Hence,
	\begin{align*}
		\epsilon
		=S^{\frac{1}{q}-1}\norm{\ProjToIndex{T^c}{\vVec}}_1-\norm{\ProjToIndex{T}{\vVec}}_q
		=\lim_{k\rightarrow\infty}\norm{\ProjToIndex{T}{\vVec'_k}}_q-S^{\frac{1}{q}-1}\norm{\ProjToIndex{T^c}{\vVec'_k}}_1\leq \frac{\epsilon}{2}
	\end{align*}
	which is a contradiction. Thus, we assume $\vVec\notin\Kernel{\AMat}$. Then we have
	\begin{align*}
		\norm{\ProjToIndex{T}{\vVec}}_q
		=\frac{\norm{\ProjToIndex{T}{\vVec}}_q-S^{\frac{1}{q}-1}\norm{\ProjToIndex{T^c}{\vVec}}_1}{\normRHS{\AMat\vVec}}\normRHS{\AMat\vVec}
			+S^{\frac{1}{q}-1}\norm{\ProjToIndex{T^c}{\vVec}}_1
		=\tau_q^0\normRHS{\AMat\vVec}+S^{\frac{1}{q}-1}\norm{\ProjToIndex{T^c}{\vVec}}_1.
	\end{align*}
	In this case, $\tau_q^0$ is not an ORNSP constant since we lack the strict inequality.
	Since we can increase ORNSP constants arbitrarily, it follows that no element from $\left[0,\tau_q^0\right]$ is an ORNSP constant.
	This is the inclusion $\subset$
%	\begin{align}
%		\left\{\text{$\tau\in\left[0,\infty\right):$
%			$\AMat$ has $\ell_q$-ORNSP of order $S$ wrt $\normRHS{\cdot}$ with constant $\tau$}\right\}\subset\left(\tau_q^0,\infty\right)
%	\end{align}
	and finishes the proof.
\end{proof}
We can finally proof \thref{Theorem:HRL1SRdecEquivalence} and \thref{Corollary:ShapeHRL1Param}.
\begin{proof}[Proof of \thref{Theorem:HRL1SRdecEquivalence}]
%	(1)$\Rightarrow$(2): This is given by \thref{Theorem:SRNSP=>Wurschli}. The second one is proven similar to \thref{Theorem:Wurschli=>SRNSPq=1}.
%	(1)$\Rightarrow$(2): By \thref{Lemma:RNSP<=>SRNSP} $\AMat$ has $\norm{\cdot}_q$ to $\normRHS{\cdot}$ $S$-SRNSP
%	with some stableness constant $\rho$ and robustness constant $\hrloneParam$.
%	By \thref{Theorem:HRL1ImpRecovery:Lambda>Tau} the SRdec property follows.\\
	(1)$\Rightarrow$(2): By \thref{Corollary:ShapeSNSPAndRNSP} we have
	\begin{align*}
		\hrloneParam\in\left\{\text{$\tau\in\left[0,\infty\right):$
			$\AMat$ has $\ell_1$-ORNSP of order $S$ wrt $\normRHS{\cdot}$ with constant $\tau$}\right\}
		=\left(\tau_1^0,\infty\right)
	\end{align*}
	and thus
	$\AMat$ has $\ell_1$-ORNSP of order $S$ wrt $\normRHS{\cdot}$ with constant $\tau':=\frac{\hrloneParam+\tau_1^0}{2}$.
	By \thref{Lemma:RNSP<=>SRNSP} $\AMat$ has 
	$\ell_1$-RNSP of order $S$ wrt $\normRHS{\cdot}$
	with some stableness constant $\rho$ and robustness constant $\tau'$.
	By \thref{Theorem:HRL1Recovery:Lambda>TauSq} the SRD property follows since $\hrloneParam>\tau'$
	and $q=1$.\\
	(2)$\Rightarrow$(3):
	It is helpful to consider the set
	\begin{align*}
		\textnormal{Dec}\left(\AMat\right)
		:=\left\{\Decoder{}:\mathbb{R}^M\rightarrow\mathbb{R}^N\TextSuchThat
			\Decoder{\yVec}\in\argmin{\zVec\in\mathbb{R}^N}\norm{\zVec}_1
					+\hrloneParam\normRHS{\AMat\zVec-\yVec} \TextForAll \yVec\in\mathbb{R}^M\right\},
	\end{align*}
	which by assumption only constains $\ell_1$-SRD of order $S$ wrt $\normRHS{\cdot}$ for $\AMat$.
	Now let $\xVec$ be $S$-sparse and set $\yVec:=\AMat\xVec$. For any minimizer $\xVec^\#$ of
	\refP{Problem:HRL1} with input $\yVec$ choose one decoder $\Decoder{}_{\xVec^\#}\in\textnormal{Dec}\left(\AMat\right)$
	that maps $\yVec$ to $\xVec^\#$.
	Since it is an $\ell_1$-SRD of order $S$ wrt $\normRHS{\cdot}$ for $\AMat$,
	there exists $C_{\xVec^\#}$, $D_{\xVec^\#}$ such that
	\begin{align*}
		\norm{\Decoder{}_{\xVec^\#}\left(\yVec\right)-\xVec}_1
		\leq C_{\xVec^\#}\compr{\xVec}{S}+D_{\xVec^\#}\normRHS{\yVec-\AMat\xVec}
		=0+D_{\xVec^\#}\normRHS{\AMat\xVec-\AMat\xVec}=0
	\end{align*}
	holds true.
	It follows that $\xVec^\#=\xVec$ and the minimizer of \refP{Problem:HRL1} with input $\yVec=\AMat\xVec$ is unique and $\xVec$.\\
	(3)$\Rightarrow$(1):
	Let $\vVec\in\mathbb{R}^M\setminus\ZeroSet$ and $\SetSize{T}\leq S$.
	Set $\yVec:=\AMat\ProjToIndex{T}{\vVec}$.
	By assumption we obtain that $\ProjToIndex{T}{\vVec}$ is the minimizer of
	\refP{Problem:HRL1} with input $\yVec$.
	Since $-\ProjToIndex{T^c}{\vVec}$ is feasible, we have
	\begin{align}\label{Equation:EQ1:Theorem:Wurschli<=>SRNSPq=1}
		\norm{\ProjToIndex{T}{\vVec}}_1
		+\hrloneParam\normRHS{\yVec-\AMat\ProjToIndex{T}{\vVec}}
		\leq\norm{-\ProjToIndex{T^c}{\vVec}}_1
		+\hrloneParam\normRHS{\yVec-\AMat\left(-\ProjToIndex{T^c}{\vVec}\right)}.
	\end{align}
	Since $\vVec\neq 0$, we have
	\begin{align*}
		\ProjToIndex{T}{\vVec}
		=\vVec+\ProjToIndex{T}{\vVec}-\vVec
		=\vVec-\ProjToIndex{T^c}{\vVec}
		\neq -\ProjToIndex{T^c}{\vVec}.
	\end{align*}
	By the assumption we also obtain that $\ProjToIndex{T}{\vVec}$ is the unique minimizer of
	\refP{Problem:HRL1} with input $\yVec$. Thus, the inequality in
	\refP{Equation:EQ1:Theorem:Wurschli<=>SRNSPq=1} is a strict inequality and we get
	\begin{align*}
		\norm{\ProjToIndex{T}{\vVec}}_1
		+\hrloneParam\normRHS{\yVec-\AMat\ProjToIndex{T}{\vVec}}
		<\norm{-\ProjToIndex{T^c}{\vVec}}_1
		+\hrloneParam\normRHS{\yVec-\AMat\left(-\ProjToIndex{T^c}{\vVec}\right)}.
	\end{align*}
	Since we have set $\yVec=\AMat\ProjToIndex{T}{\vVec}$, it follows that
	\begin{align*}
		\norm{\ProjToIndex{T}{\vVec}}_1
		<&\norm{-\ProjToIndex{T^c}{\vVec}}_1
		+\hrloneParam
			\normRHS{\AMat\left(\ProjToIndex{T}{\vVec}+\ProjToIndex{T^c}{\vVec}\right)}
		=\norm{\ProjToIndex{T^c}{\vVec}}_1
		+\hrloneParam\normRHS{\AMat\vVec}
	\end{align*}
	holds true.	Doing this for all $T$ with $\SetSize{T}\leq S$ and all
	$\vVec\in\mathbb{R}^N\setminus\ZeroSet$ yields that
	$\AMat$ has the $\ell_1$-ORNSP of order $S$ wrt $\normRHS{\cdot}$
	with constant $\tau=\hrloneParam$.
\end{proof}
\begin{proof}[Proof of \thref{Corollary:ShapeHRL1Param}]
	By basic norm inequalities $\Decoder{}:\mathbb{R}^M\rightarrow\mathbb{R}^N$
	is an $\ell_q$-SRD of order $S$ wrt $\normRHS{\cdot}$ for $\AMat$ if and only if
	it is an $\ell_1$-SRD of order $S$ wrt $\normRHS{\cdot}$ for $\AMat$.
	By this and by \thref{Theorem:HRL1SRdecEquivalence} we obtain
	the equality in \refP{Equation:Sets:Corollary:ShapeHRL1Param}.
	Now suppose that $\AMat$ has $\ell_q$-NSP of order $S$.
	Note that by a general norm inequality we have
	\begin{align}\label{Equation:EQ1:Corollary:ShapeHRL1Paramq}
		\norm{\ProjToIndex{T}{\vVec}}_1\leq S^{1-\frac{1}{q}}\norm{\ProjToIndex{T}{\vVec}}_q
		\TextForAll \SetSize{T}\leq S \TextAnd \vVec\in\mathbb{R}^N.
	\end{align}
	It immediately follows that $\AMat$ has $\ell_1$-NSP of order $S$.
	In particular, using \refP{Equation:EQ1:Corollary:ShapeHRL1Paramq} on
	the definition of $\tau_1^0$ yields that $\tau_1^0\leq S^{1-\frac{1}{q}}\tau_q^0$.
	By \thref{Corollary:ShapeSNSPAndRNSP} and by \thref{Theorem:HRL1SRdecEquivalence}
	the interval $\left(\tau_1^0,\infty\right)$ equals one and thus both sets of
	\refP{Equation:Sets:Corollary:ShapeHRL1Param}.
\end{proof}
\section{Proofs of Section \ref{Section:NotNPHard} \hrlone{} is a Practical Usable Recovery Algorithm}\label{Section:NotNPHard:Proofs}
\subsection{Proofs of Subsection \ref{Subsection:NotNPHard_GaussianMatrices} Gaussian Measurements}\label{Subsection:NotNPHard_GaussianMatrices:Proofs}
In order to prove \thref{Theorem:Gaussian=>NSP}, we follow the proof of \cite[Theorem~11]{NSPCone}
and adapt to account for a better robustness constant $\tau$.
\begin{Definition}\label{Definition:RobustnessCone}
	For $S\in\SetOf{N}$ and $\rho\in\left[0,1\right)$ the set
	\begin{align*}
		\NSPCone_{\rho,S}^q
		:=\left\{\vVec\in\mathbb{R}^N\TextSuchThat \exists \SetSize{T}\leq S\TextWith
			\norm{\ProjToIndex{T}{\vVec}}_q
			\geq \rho S^{\frac{1}{q}-1}\norm{\ProjToIndex{T^c}{\vVec}}_1
		\right\}
	\end{align*}
	is called robustness cone for $S$ and $\rho$.
\end{Definition}
The robustness cone can generally be interpreted as the set of vectors where the robustness
summand of the RNSP inequality is required.
We can use the robustness cone to get an estimate for the robustness constant.
%If an operator $\AMat$ is bounded below on this set, it has RNSP.
\begin{Lemma}\label{Lemma:HasNSP}
	Let $q\in\left[1,\infty\right]$ and $\normRHS{\cdot}$ be a norm on $\mathbb{R}^M$.
	Let $\AMat\in\mathbb{R}^{M\times N}$, $S\in\SetOf{N}$, $\rho\in\left[0,1\right)$
	and $\tau\in\left(0,\infty\right)$.
	If $\inf_{\vVec\in\NSPCone_{\rho,S}^q\cap \UnitSphere{N}{\ell_q}}
			\normRHS{\AMat\vVec}\geq\tau^{-1}>0$,
	then $\AMat$ has $\ell_q$-RNSP of order $S$ wrt $\normRHS{\cdot}$
	with constants $\rho$ and $\tau$.
\end{Lemma}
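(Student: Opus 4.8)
The plan is to exploit that $\NSPCone_{\rho,S}^q$ is a cone, so that the hypothesis, which is phrased as an infimum over the $\ell_q$-unit sphere, upgrades to a homogeneous bound valid on the whole cone; the $\ell_q$-RNSP inequality then follows by a simple case distinction on each fixed support set.

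First I would record that $\NSPCone_{\rho,S}^q$ is invariant under positive scaling: if $\vVec$ satisfies $\norm{\ProjToIndex{T}{\vVec}}_q\geq\rho S^{\frac{1}{q}-1}\norm{\ProjToIndex{T^c}{\vVec}}_1$ for some $\SetSize{T}\leq S$, then so does $\lambda\vVec$ for every $\lambda>0$, since both sides are positively homogeneous of degree one. Consequently, for any nonzero $\vVec\in\NSPCone_{\rho,S}^q$ the normalized vector $\vVec/\norm{\vVec}_q$ lies in $\NSPCone_{\rho,S}^q\cap\UnitSphere{N}{\ell_q}$, so the hypothesis $\normRHS{\AMat\cdot}\geq\tau^{-1}$ on that set gives, after multiplying through by $\norm{\vVec}_q$ and using $\normRHS{\AMat(\lambda\vVec)}=\lambda\normRHS{\AMat\vVec}$,
\begin{align*}
	\norm{\vVec}_q\leq\tau\normRHS{\AMat\vVec}
	\TextForAll \vVec\in\NSPCone_{\rho,S}^q,
\end{align*}
the case $\vVec=0$ being trivial.

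Next I would fix $\SetSize{T}\leq S$ and $\vVec\in\mathbb{R}^N$ and split into two cases. If $\norm{\ProjToIndex{T}{\vVec}}_q\leq\rho S^{\frac{1}{q}-1}\norm{\ProjToIndex{T^c}{\vVec}}_1$, then the RNSP inequality holds at once, because $\tau\normRHS{\AMat\vVec}\geq 0$. Otherwise this very $T$ witnesses $\vVec\in\NSPCone_{\rho,S}^q$, so the homogenized bound applies; combining it with the coordinate-restriction monotonicity $\norm{\ProjToIndex{T}{\vVec}}_q\leq\norm{\vVec}_q$ yields $\norm{\ProjToIndex{T}{\vVec}}_q\leq\tau\normRHS{\AMat\vVec}\leq\rho S^{\frac{1}{q}-1}\norm{\ProjToIndex{T^c}{\vVec}}_1+\tau\normRHS{\AMat\vVec}$. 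Since $T$ and $\vVec$ were arbitrary, this is exactly the $\ell_q$-RNSP of order $S$ wrt $\normRHS{\cdot}$ with constants $\rho$ and $\tau$.

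I do not expect a serious obstacle: the whole content sits in the homogenization step, and the remaining points are pure bookkeeping. The membership in $\NSPCone_{\rho,S}^q$ is existential over $T$, so it suffices that the single fixed $T$ witnesses it; the infimum in the hypothesis need not be attained, which is irrelevant since only the lower bound $\tau^{-1}$ is used; and the conventions $\frac{1}{q}=0$ together with $\norm{\cdot}_\infty$ make every step go through verbatim in the endpoint case $q=\infty$, so no separate argument is needed there.
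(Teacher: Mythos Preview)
Your proof is correct and follows essentially the same approach as the paper: homogenize the hypothesis from the $\ell_q$-sphere to the whole cone via scale invariance, then case-split to obtain the RNSP inequality. The only cosmetic difference is that the paper splits on whether $\vVec$ lies in $\NSPCone_{\rho,S}^q$ (globally), whereas you fix $T$ first and split on whether the defining inequality fails for that particular $T$; both lead to the identical bound $\norm{\ProjToIndex{T}{\vVec}}_q\leq\norm{\vVec}_q\leq\tau\normRHS{\AMat\vVec}$ in the nontrivial case.
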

\begin{proof}
	If on the one hand $\vVec\in\mathbb{R}^N\setminus\NSPCone_{\rho,S}^q$, then
	we have for all $\SetSize{T}\leq S$
	\begin{align*}
		\norm{\ProjToIndex{T}{\vVec}}_q
		<\rho S^{\frac{1}{q}-1}\norm{\ProjToIndex{T^c}{\vVec}}_1
		\leq \rho S^{\frac{1}{q}-1}\norm{\ProjToIndex{T^c}{\vVec}}_1 + \tau\normRHS{\AMat\vVec}.
	\end{align*}
	If on the other hand $\vVec\in\NSPCone_{\rho,S}^q\setminus\ZeroSet$, then
	$\frac{\vVec}{\norm{\vVec}_q}\in\NSPCone_{\rho,S}^q\cap \UnitSphere{N}{\ell_q}$ and we have
	\begin{align*}
			\tau\normRHS{\AMat\vVec}
			=\norm{\vVec}_q\tau\normRHS{\AMat\frac{\vVec}{\norm{\vVec}_q}}
			\geq \norm{\vVec}_q\tau\inf_{\vVec'\in\NSPCone_{\rho,S}^q\cap B_{\ell_q}}
				\normRHS{\AMat\vVec'}
			\geq \norm{\vVec}_q.
	\end{align*}
	For any $\SetSize{T}\leq S$ it follows that
	\begin{align*}
		\norm{\ProjToIndex{T}{\vVec}}_q
		\leq \norm{\vVec}_q
		\leq\rho S^{\frac{1}{q}-1}\norm{\ProjToIndex{T^c}{\vVec}}_1+\tau\normRHS{\AMat\vVec}.
	\end{align*}
	Consequently $\AMat$ has $\ell_q$-RNSP of order $S$ wrt $\normRHS{\cdot}$
	with constants $\rho$ and $\tau$.
\end{proof}
Interestingly, all normalized (and rescaled) vectors of the robustness cone are a convex combination of sparse normalized vectors.
The proof is given in \cite[Lemma~3(b)]{NSPCone}.
\begin{Lemma}[ {\cite[Lemma~3(b)]{NSPCone}} ]\label{Lemma:NSPCone}
	Let $S\in\SetOf{N}$ and $\rho\in\left(0,1\right)$. Then
	\begin{align*}
		\NSPCone_{\rho,S}^2\cap \UnitSphere{N}{\ell_2}
		\subset \sqrt{1+\left(1+\rho^{-1}\right)^2}\conv{\Sigma_S\cap\UnitSphere{N}{\ell_2}}.
	\end{align*}
\end{Lemma}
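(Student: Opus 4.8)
The plan is to reduce the asserted set inclusion to an upper bound on the amount of "mass'' needed to write $\vVec$ as a combination of sparse unit vectors, i.e. to bound the gauge of $K:=\conv{\Sigma_S\cap\UnitSphere{N}{\ell_2}}$ from above by the constant $c:=\sqrt{1+\left(1+\rho^{-1}\right)^2}$. First I would fix $\vVec$ with $\norm{\vVec}_2=1$ lying in $\NSPCone_{\rho,S}^2$ and pass to a canonical index set: let $T_0$ be the support of the $S$ largest-magnitude entries of $\vVec$. Since $T_0$ simultaneously maximizes $\norm{\ProjToIndex{T}{\vVec}}_2$ and minimizes $\norm{\ProjToIndex{T^c}{\vVec}}_1$ over all $\SetSize{T}\leq S$, the defining inequality of the cone, which a priori holds for some $T$, also holds for $T_0$. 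Writing $a:=\norm{\ProjToIndex{T_0}{\vVec}}_2$ and $b:=S^{-1/2}\norm{\ProjToIndex{T_0^c}{\vVec}}_1$, the cone condition becomes $a\geq\rho b$, hence $b\leq\rho^{-1}a$, while normalization gives $\norm{\ProjToIndex{T_0^c}{\vVec}}_2=\sqrt{1-a^2}$.

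Next I would partition the tail complement $T_0^c$ into consecutive blocks $T_1,T_2,\dots$ of size $S$ (the last possibly smaller), ordered by decreasing magnitude, and set $\uVec^{(k)}:=\ProjToIndex{T_k}{\vVec}/\norm{\ProjToIndex{T_k}{\vVec}}_2$ for the nonzero blocks, so that each $\uVec^{(k)}\in\Sigma_S\cap\UnitSphere{N}{\ell_2}$ and $\vVec=\sum_{k\geq 0}\norm{\ProjToIndex{T_k}{\vVec}}_2\,\uVec^{(k)}$. Membership $\vVec\in cK$ then follows once the total weight $\Lambda:=\sum_{k\geq 0}\norm{\ProjToIndex{T_k}{\vVec}}_2$ is bounded by $c$: because $\Sigma_S\cap\UnitSphere{N}{\ell_2}$ is symmetric we have $0\in K$, so $\vVec/c$ equals the convex combination $(\Lambda/c)\sum_k\bigl(\norm{\ProjToIndex{T_k}{\vVec}}_2/\Lambda\bigr)\uVec^{(k)}+(1-\Lambda/c)\cdot 0$ of points of $K$, whence $\vVec\in cK$.

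The heart of the argument is the sharp estimate of $\Lambda$. The standard shelling inequality gives, for $k\geq 2$, the bound $\norm{\ProjToIndex{T_k}{\vVec}}_2\leq S^{-1/2}\norm{\ProjToIndex{T_{k-1}}{\vVec}}_1$, which upon summation telescopes to $\sum_{k\geq 2}\norm{\ProjToIndex{T_k}{\vVec}}_2\leq S^{-1/2}\norm{\ProjToIndex{T_0^c}{\vVec}}_1=b$. For the first tail block I would use $\norm{\ProjToIndex{T_1}{\vVec}}_2\leq\norm{\ProjToIndex{T_0^c}{\vVec}}_2=\sqrt{1-a^2}$, and for the head simply $\norm{\ProjToIndex{T_0}{\vVec}}_2=a$. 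Combining these and inserting $b\leq\rho^{-1}a$ yields
\[
	\Lambda\leq a+\sqrt{1-a^2}+b\leq\left(1+\rho^{-1}\right)a+\sqrt{1-a^2}.
\]

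The main obstacle, and the only place where the precise constant is forced, is to recognize that the right-hand side must now be maximized over $a\in\left[0,1\right]$: a one-variable optimization (equivalently, Cauchy--Schwarz in $\mathbb{R}^2$ applied to the vectors $\left(1+\rho^{-1},1\right)$ and $\left(a,\sqrt{1-a^2}\right)$) shows its maximum equals $\sqrt{1+\left(1+\rho^{-1}\right)^2}=c$, exactly the claimed factor, attained at $a=\left(1+\rho^{-1}\right)/\sqrt{1+\left(1+\rho^{-1}\right)^2}$. This quadrature step is what separates the sharp constant from the cruder bound $2+\rho^{-1}$ that results if one instead estimates $\norm{\ProjToIndex{T_1}{\vVec}}_2\leq a$; the care in splitting off only the first tail block and pairing $\sqrt{1-a^2}$ against the head is therefore essential.
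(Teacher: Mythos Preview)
The paper does not supply its own proof of this lemma; it simply cites \cite[Lemma~3(b)]{NSPCone} and moves on. Your argument is correct and is in fact the standard block-decomposition proof underlying that reference: pass to the canonical index set $T_0$ of the $S$ largest entries, split the tail into size-$S$ blocks ordered by magnitude, control $\sum_{k\geq 2}\norm{\ProjToIndex{T_k}{\vVec}}_2$ by the shelling inequality, treat the first tail block separately via $\norm{\ProjToIndex{T_1}{\vVec}}_2\leq\norm{\ProjToIndex{T_0^c}{\vVec}}_2=\sqrt{1-a^2}$, and finish with the two-dimensional Cauchy--Schwarz optimization in $a$. All steps are sound, including your justification that the cone inequality can be assumed to hold for $T_0$ and your reduction of the inclusion to the gauge bound $\Lambda\leq c$ via $0\in K$. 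The remark that bounding $\norm{\ProjToIndex{T_1}{\vVec}}_2\leq a$ instead would only give $2+\rho^{-1}$ is also accurate and nicely isolates where the sharp constant comes from.
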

We introduce the Gaussian width.
\begin{Definition}\label{Definition:GaussianWidth}
	Let $T\subset\mathbb{R}^N$ and let the entries of $\gVec\in\mathbb{R}^N$ be independent
	$\GaussianRV{0}{1}$ random variables.
	Then
	\begin{align*}
		\GW{T}:=\Expect{\sup_{\vVec\in T}\scprod{\gVec}{\vVec}}
	\end{align*}
	is called Gaussian width of $T$.
\end{Definition}
%The Gaussian width is of course independent on the random vector $\gVec$.
Further, we need an estimate for the Gaussian width of $\conv{\Sigma_S\cap\UnitSphere{N}{\ell_2}}$. A proof can be found in \cite[Lemma~4]{NSPCone}.
\begin{Lemma}[ {\cite[Lemma~4]{NSPCone}} ]\label{Lemma:GaussianWidth}
	Let $S\in\SetOf{N}$. Then
	\begin{align*}
		\GW{\conv{\Sigma_S\cap\UnitSphere{N}{\ell_q}}}
		\leq \sqrt{2S\Ln{\ExpE\frac{N}{S}}}+\sqrt{S}.
%		\leq \left(\sqrt{2}+1\right)\sqrt{S\Ln{\ExpE\frac{N}{S}}}.
	\end{align*}
\end{Lemma}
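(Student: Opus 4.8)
The plan is to reduce the Gaussian width of the convex hull to an expected maximum over $\binom{N}{S}$ coordinate restrictions of a standard Gaussian vector, and then to control that expectation by a Chernoff/union-bound argument fuelled by Gaussian concentration of measure. Throughout I treat the case $q=2$, which is the one feeding into \thref{Lemma:NSPCone}.

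First I would remove the convex hull and identify the resulting object. Since a linear functional $\vVec\mapsto\scprod{\gVec}{\vVec}$ attains its supremum over a convex hull at an extreme point, and since the supremum over $S$-sparse unit vectors supported on a fixed set $T$ equals $\norm{\ProjToIndex{T}{\gVec}}_2$ by Cauchy--Schwarz (with maximiser $\vVec=\ProjToIndex{T}{\gVec}/\norm{\ProjToIndex{T}{\gVec}}_2$), I obtain
\begin{align*}
	\GW{\conv{\Sigma_S\cap\UnitSphere{N}{\ell_2}}}
	=\Expect{\sup_{\vVec\in\Sigma_S\cap\UnitSphere{N}{\ell_2}}\scprod{\gVec}{\vVec}}
	=\Expect{\max_{\SetSize{T}=S}\norm{\ProjToIndex{T}{\gVec}}_2}.
\end{align*}
Only supports of size exactly $S$ matter here, since enlarging $T$ can only increase $\norm{\ProjToIndex{T}{\gVec}}_2$. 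It remains to bound this last expectation.

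Next I would estimate the expected maximum through exponential moments. For each fixed $T$ the map $\gVec\mapsto\norm{\ProjToIndex{T}{\gVec}}_2$ is $1$-Lipschitz, and $\Expect{\norm{\ProjToIndex{T}{\gVec}}_2}\leq\left(\Expect{\norm{\ProjToIndex{T}{\gVec}}_2^2}\right)^{1/2}=\sqrt{S}$ by Jensen. The Gaussian concentration inequality for Lipschitz functions then gives the sub-Gaussian bound
\begin{align*}
	\Expect{\Exp{\beta\norm{\ProjToIndex{T}{\gVec}}_2}}
	\leq\Exp{\beta\sqrt{S}+\frac{1}{2}\beta^2}
	\TextForAll \beta>0.
\end{align*}
Writing $Z:=\max_{\SetSize{T}=S}\norm{\ProjToIndex{T}{\gVec}}_2$, bounding $\Exp{\beta Z}$ by the sum of the $\binom{N}{S}$ exponentials, taking expectations, and using $\Exp{\beta\Expect{Z}}\leq\Expect{\Exp{\beta Z}}$, I arrive at
\begin{align*}
	\Expect{Z}
	\leq\frac{\Ln{\binom{N}{S}}}{\beta}+\sqrt{S}+\frac{\beta}{2}.
\end{align*}

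Finally I would optimise the free parameter. Choosing $\beta=\sqrt{2\Ln{\binom{N}{S}}}$ makes the first and third summands equal and collapses them to $\sqrt{2\Ln{\binom{N}{S}}}$, so that $\Expect{Z}\leq\sqrt{S}+\sqrt{2\Ln{\binom{N}{S}}}$. The elementary estimate $\binom{N}{S}\leq\left(\ExpE\frac{N}{S}\right)^S$ yields $\Ln{\binom{N}{S}}\leq S\Ln{\ExpE\frac{N}{S}}$ and hence the asserted bound $\sqrt{2S\Ln{\ExpE\frac{N}{S}}}+\sqrt{S}$. I expect the main obstacle to be this third step: one must invoke Gaussian concentration to get the exponential-moment bound centred at the correct mean $\sqrt{S}$, and then balance the entropy term $\Ln{\binom{N}{S}}$ against the fluctuation scale via the choice of $\beta$; by contrast the convex-hull reduction and the binomial estimate are routine.
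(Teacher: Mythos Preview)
The paper does not give its own proof of this lemma; it merely states the result and refers the reader to \cite[Lemma~4]{NSPCone}. Your argument is correct and is essentially the standard proof: reduce the convex hull to extreme points, rewrite the supremum as $\max_{\SetSize{T}=S}\norm{\ProjToIndex{T}{\gVec}}_2$, apply Gaussian concentration for the $1$-Lipschitz map $\gVec\mapsto\norm{\ProjToIndex{T}{\gVec}}_2$ to obtain a sub-Gaussian moment generating function centred at $\Expect{\norm{\ProjToIndex{T}{\gVec}}_2}\leq\sqrt{S}$, then combine Jensen with a union bound over the $\binom{N}{S}$ supports and optimise in $\beta$. The only cosmetic point is the edge case $S=N$, where $\binom{N}{S}=1$ and your chosen $\beta$ vanishes; there the bound $\Expect{\norm{\gVec}_2}\leq\sqrt{N}\leq\sqrt{2N\Ln{\ExpE}}+\sqrt{N}$ is immediate, so the statement still holds.
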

Lastly we introduce Gordon's escape through the mesh theorem. It was originally proven in
\cite{gordon}. A different proof can be found in \cite[Theorem~9.21]{IntroductionCS}.
\begin{Theorem}[ {\cite[Theorem~9.21]{IntroductionCS}} ]\label{Theorem:Gordons_Mesh}
	Let the entries of $\AMat\in\mathbb{R}^{M\times N}$
	be independent $\GaussianRV{0}{1}$ random variables
	and $T\subset\UnitSphere{N}{\ell_2}$.
	Then for any $t\in\left(0,\infty\right)$ we have
	\begin{align*}
		\Prob{\inf_{\vVec\in T}\norm{\AMat\vVec}_2\leq \GMC{M}\sqrt{M}-\GW{T}-t}\leq\Exp{-\frac{t^2}{2}}.
	\end{align*}
\end{Theorem}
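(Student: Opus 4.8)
The plan is to realise $\inf_{\vVec\in T}\norm{\AMat\vVec}_2$ as the min--max of a single Gaussian process, compare it against a much simpler process via Gordon's Gaussian comparison inequality, and finally upgrade the resulting bound on the expectation to a tail bound through Gaussian concentration. First I would dualise the Euclidean norm: since $\norm{\AMat\vVec}_2=\sup_{\uVec\in\UnitSphere{M}{\ell_2}}\scprod{\uVec}{\AMat\vVec}$ and $\scprod{\uVec}{\AMat\vVec}=\scprod{\AMat}{\uVec\vVec^T}$ (the Frobenius pairing), I can write
\[
	\inf_{\vVec\in T}\norm{\AMat\vVec}_2
	=\min_{\vVec\in T}\ \max_{\uVec\in\UnitSphere{M}{\ell_2}} Y_{\vVec,\uVec},
	\qquad Y_{\vVec,\uVec}:=\scprod{\AMat}{\uVec\vVec^T}.
\]
Here $Y$ is a centred Gaussian process indexed by $(\vVec,\uVec)$ with covariance $\Expect{Y_{\vVec,\uVec}Y_{\vVec',\uVec'}}=\scprod{\vVec}{\vVec'}\scprod{\uVec}{\uVec'}$. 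The comparison process I would use is $X_{\vVec,\uVec}:=\scprod{\gVec}{\vVec}+\scprod{\hVec}{\uVec}$, where $\gVec\in\mathbb{R}^N$ and $\hVec\in\mathbb{R}^M$ are independent standard Gaussian vectors.

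Next I would apply the increment form of \emph{Gordon's Gaussian min--max comparison inequality} to the pair $(X,Y)$, treating $\vVec$ as the outer minimisation index and $\uVec$ as the inner maximisation index. Two increment conditions must be checked. For a common $\vVec$ both processes satisfy $\Expect{\abs{X_{\vVec,\uVec}-X_{\vVec,\uVec'}}^2}=\Expect{\abs{Y_{\vVec,\uVec}-Y_{\vVec,\uVec'}}^2}=\norm{\uVec-\uVec'}_2^2$, so that condition holds with equality. For distinct $\vVec\neq\vVec'$ a short covariance computation gives
\[
	\Expect{\abs{X_{\vVec,\uVec}-X_{\vVec',\uVec'}}^2}-\Expect{\abs{Y_{\vVec,\uVec}-Y_{\vVec',\uVec'}}^2}
	=2\left(1-\scprod{\vVec}{\vVec'}\right)\left(1-\scprod{\uVec}{\uVec'}\right)\geq 0,
\]
using $\scprod{\vVec}{\vVec'}\leq 1$ and $\scprod{\uVec}{\uVec'}\leq 1$ on the unit spheres. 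Verifying this nonnegative-product identity is the \emph{key algebraic step}, and it is exactly what orients the comparison in the direction we need. Gordon's inequality then yields $\Expect{\min_{\vVec}\max_{\uVec}X_{\vVec,\uVec}}\leq\Expect{\min_{\vVec}\max_{\uVec}Y_{\vVec,\uVec}}=\Expect{\inf_{\vVec\in T}\norm{\AMat\vVec}_2}$.

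I would then evaluate the comparison expectation in closed form. Since the inner maximisation over $\uVec$ only touches $\scprod{\hVec}{\uVec}$, we have $\max_{\uVec}X_{\vVec,\uVec}=\norm{\hVec}_2+\scprod{\gVec}{\vVec}$, whence $\Expect{\min_{\vVec}\max_{\uVec}X_{\vVec,\uVec}}=\Expect{\norm{\hVec}_2}+\Expect{\min_{\vVec\in T}\scprod{\gVec}{\vVec}}=\GMC{M}\sqrt{M}-\GW{T}$. Here I use $\Expect{\min_{\vVec\in T}\scprod{\gVec}{\vVec}}=-\Expect{\max_{\vVec\in T}\scprod{\gVec}{\vVec}}=-\GW{T}$ by the symmetry $-\gVec\overset{d}{=}\gVec$, and $\Expect{\norm{\hVec}_2}=\sqrt{M}\,\Expect{\norm{M^{-\frac12}\hVec}_2}=\GMC{M}\sqrt{M}$ to match the normalisation in the definition of Gordon's constant. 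This gives $\Expect{\inf_{\vVec\in T}\norm{\AMat\vVec}_2}\geq\GMC{M}\sqrt{M}-\GW{T}$. Finally, the map $F(\AMat):=\inf_{\vVec\in T}\norm{\AMat\vVec}_2$ is $1$-Lipschitz in the Frobenius norm, because $\abs{F(\AMat)-F(\AMat')}\leq\sup_{\vVec\in\UnitSphere{N}{\ell_2}}\norm{(\AMat-\AMat')\vVec}_2\leq\norm{\AMat-\AMat'}_F$; the Gaussian concentration inequality for $1$-Lipschitz functions of i.i.d.\ standard Gaussians then gives $\Prob{F(\AMat)\leq\Expect{F(\AMat)}-t}\leq\Exp{-\frac{t^2}{2}}$, and combining with the expectation lower bound produces the claimed tail since $\left\{F\leq\GMC{M}\sqrt{M}-\GW{T}-t\right\}\subset\left\{F\leq\Expect{F}-t\right\}$.

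The main obstacle is Gordon's comparison inequality itself: the entire argument reduces to it, and its proof---a multivariate, min--max refinement of Slepian's lemma obtained by Gaussian interpolation and integration by parts along a path deforming the covariance of $X$ into that of $Y$---is the genuinely hard, classical ingredient. If I am permitted to invoke it as a known Gaussian-process comparison result, then the remaining pieces (the min--max reformulation, the covariance bookkeeping leading to the $2(1-\scprod{\vVec}{\vVec'})(1-\scprod{\uVec}{\uVec'})$ identity, the evaluation of $\GMC{M}\sqrt{M}-\GW{T}$, and the Lipschitz/concentration step) are all routine.
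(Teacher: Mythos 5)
Your proposal is correct, and it reconstructs exactly the standard argument behind the result the paper imports without proof (the paper only cites \cite{gordon} and \cite[Theorem~9.21]{IntroductionCS}): the min--max reformulation via duality, Gordon's Gaussian min--max comparison in increment form (which, as you use it, correctly sidesteps the variance mismatch $\Expect{X^2}=2\neq 1=\Expect{Y^2}$ that the covariance form would require an auxiliary scalar Gaussian to fix), the evaluation $\Expect{\norm{\hVec}_2}+\Expect{\min_{\vVec\in T}\scprod{\gVec}{\vVec}}=\GMC{M}\sqrt{M}-\GW{T}$, and concentration for $1$-Lipschitz functions of Gaussians. Given that the paper itself treats the theorem as a known black box, invoking Gordon's comparison lemma as a classical ingredient is entirely appropriate, and your remaining steps are complete and correct.
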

With all these statements we can prove \thref{Theorem:Gaussian=>NSP}.
\begin{proof}[Proof of \thref{Theorem:Gaussian=>NSP}]
	Note that the phase transition inequality
	\refP{Equation:PhaseTransition:Theorem:Gaussian=>NSP} is equivalent to
	\begin{align}\label{Equation:PhaseTransition_1:Theorem:Gaussian=>NSP}
		\GMC{M}\sqrt{M}-\sqrt{1+\left(1+\rho^{-1}\right)^2}
			\left(\sqrt{2S\Ln{\ExpE\frac{N}{S}}}+\sqrt{S}\right)
		-\sqrt{2\Ln{\eta^{-1}}}
		\geq\tau^{-1}\sqrt{M}.
	\end{align}	
	We set $T:=\NSPCone_{\rho,S}^2\cap\UnitSphere{N}{\ell_2}$. To estimate the Gaussian width of $T$,
	let the entries of $\gVec\in\mathbb{R}^N$ be independent $\GaussianRV{0}{1}$ random variables.
	By \thref{Lemma:NSPCone} and \thref{Lemma:GaussianWidth} we can estimate
	\begin{align}
		\nonumber
			\GW{T}
		=&
			\GW{\NSPCone_{\rho,S}^2\cap\UnitSphere{N}{\ell_2}}
		=
			\Expect{\sup_{\vVec\in \NSPCone_{\rho,S}^2\cap\UnitSphere{N}{\ell_2}}\scprod{\gVec}{\vVec}}
%		\\\leq&
%			\Expect{\sup_{\vVec\in \NSPCone_{\rho,S}^2\cap B_{\ell_2}}\scprod{\gVec}{\vVec}}
		\leq
			\Expect{\sup_{\vVec\in \sqrt{1+\left(1+\rho^{-1}\right)^2}\conv{\Sigma_S\cap\UnitSphere{N}{\ell_2}}}\scprod{\gVec}{\vVec}}
		\\=&
			\sqrt{1+\left(1+\rho^{-1}\right)^2}\Expect{\sup_{\vVec\in\conv{\Sigma_S\cap\UnitSphere{N}{\ell_2}}}\scprod{\gVec}{\vVec}}
		=\label{Equation:GW:Theorem:Gaussian=>NSP}
			\sqrt{1+\left(1+\rho^{-1}\right)^2}
			\left(\sqrt{2S\Ln{\ExpE\frac{N}{S}}}+\sqrt{S}\right).
	\end{align}
	Setting $t:=\sqrt{2\Ln{\eta^{-1}}}\in\left(0,\infty\right)$ and using
	\refP{Equation:GW:Theorem:Gaussian=>NSP} and \refP{Equation:PhaseTransition_1:Theorem:Gaussian=>NSP}
	yields that
	\begin{align*}
			\GMC{M}\sqrt{M}-\GW{T}-t=
		&
			\GMC{M}\sqrt{M}-\GW{T}-\sqrt{2\Ln{\eta^{-1}}}
		\\\geq&
			\GMC{M}\sqrt{M}-\sqrt{1+\left(1+\rho^{-1}\right)^2}
			\left(\sqrt{2S\Ln{\ExpE\frac{N}{S}}}+\sqrt{S}\right)-\sqrt{2\Ln{\eta^{-1}}}
		\geq
			\tau^{-1}\sqrt{M}.
	\end{align*}
	Hence, we have the logical statement
	\begin{align}\label{Equation:logical:Theorem:Gaussian=>NSP}
		\inf_{\vVec\in T}\norm{M^\frac{1}{2}\AMat\vVec}_2>\GMC{M}\sqrt{M}-\GW{T}-t
		\Rightarrow
		\inf_{\vVec\in T}\norm{\AMat\vVec}_2>\tau^{-1}.
	\end{align}
	Since the entries of $M^\frac{1}{2}\AMat$ are independent $\GaussianRV{0}{1}$ random variables,
	\thref{Theorem:Gordons_Mesh} together with \refP{Equation:logical:Theorem:Gaussian=>NSP}
	yields that
	\begin{align*}
%			\Prob{\inf_{\vVec\in \NSPCone_{\rho,S}^2\cap\UnitSphere{N}{\ell_2}}\norm{\AMat\vVec}_2>\tau^{-1}}
%		=&
			\Prob{\inf_{\vVec\in T}\norm{\AMat\vVec}_2>\tau^{-1}}
		\geq
			\Prob{\inf_{\vVec\in T}\norm{M^\frac{1}{2}\AMat\vVec}_2>\sqrt{M}\GMC{M}-\GW{T}-t}
		\geq
			1-\Exp{-\frac{t^2}{2}}
		=
			1-\eta.
	\end{align*}
	Hence, by \thref{Lemma:HasNSP} $\AMat$ has $\ell_2$-RNSP of order $S$ wrt $\norm{\cdot}_2$
	with constants $\rho$ and $\tau$ with probability of at least $1-\eta$.
\end{proof}
%\begin{proof}[Proof of Remark after \thref{Theorem:Gaussian=>NSP}]
%	We have the logical statements
%	\begin{align}
%		&
%			\sqrt{M}\geq\left(\GMC{M}-\tau^{-1}\right)^{-1}\left(\sqrt{1+\left(1+\rho^{-1}\right)^2}
%			\left(\sqrt{2S\Ln{\ExpE\frac{N}{S}}}+\sqrt{S}\right)
%			+\sqrt{2\Ln{\eta^{-1}}}\right)
%		\\\Leftrightarrow&
%			\sqrt{M}\left(\GMC{M}-\tau^{-1}\right)\geq\sqrt{1+\left(1+\rho^{-1}\right)^2}
%			\left(\sqrt{2S\Ln{\ExpE\frac{N}{S}}}+\sqrt{S}\right)+\sqrt{2\Ln{\eta^{-1}}}
%		\\\Leftrightarrow&
%			\GMC{M}-\tau^{-1}\geq\sqrt{1+\left(1+\rho^{-1}\right)^2}
%			\left(\sqrt{2\frac{S}{M}\Ln{\ExpE\frac{N}{S}}}+\sqrt{\frac{S}{M}}\right)+\sqrt{\frac{2}{M}\Ln{\eta^{-1}}}
%		\\\Leftrightarrow&
%			\tau^{-1}\leq\GMC{M}-\sqrt{1+\left(1+\rho^{-1}\right)^2}
%			\left(\sqrt{2\frac{S}{M}\Ln{\ExpE\frac{N}{S}}}+\sqrt{\frac{S}{M}}\right)+\sqrt{\frac{2}{M}\Ln{\eta^{-1}}}
%		\\\Leftrightarrow&
%			\tau\geq\left(\GMC{M}-\sqrt{1+\left(1+\rho^{-1}\right)^2}
%			\left(\sqrt{2\frac{S}{M}\Ln{\ExpE\frac{N}{S}}}+\sqrt{\frac{S}{M}}\right)+\sqrt{\frac{2}{M}\Ln{\eta^{-1}}}\right)^{-1}.
%	\end{align}
%\end{proof}
In order to estimate the NSP shape constant $\tau_2^0$, we want to optimize this to account
for the smallest possible $\tau$. At first we choose a particular $\eta$ and draw a temporary result
to remove $\eta$.
\begin{Corollary}\label{Corollary:Gaussian=>NSP:no_eta}
	Let the entries of $\AMat\in\mathbb{R}^{M\times N}$
	be independent $\GaussianRV{0}{M^{-1}}$ random variables.
	If
	\begin{align}\label{Equation:PhaseTransition:Corollary:Gaussian=>NSP:no_eta}
		\tau>\left(\GMC{M}-\sqrt{5}\left(\sqrt{2\frac{S}{M}\Ln{\ExpE\frac{N}{S}}}+\sqrt{\frac{S}{M}}\right)\right)^{-1}>0,
	\end{align}
	then
	$\rho\left(\tau\right):=\left(\sqrt{\left(\frac{\GMC{M}-\tau^{-1}}{\sqrt{2\frac{S}{M}\Ln{\ExpE\frac{N}{S}}}+\sqrt{\frac{S}{M}}}\right)^2-1}-1\right)^{-1}\in\left(0,1\right)$, and for any
	$\rho\in\left(\rho\left(\tau\right),1\right)$
	with probability of at least
	\begin{align*}
		1-\Exp{-\frac{1}{2}\left(\sqrt{1+\left(1+\rho\left(\tau\right)^{-1}\right)^2}-\sqrt{1+\left(1+\rho^{-1}\right)^2}\right)^2\left(\sqrt{2\frac{S}{M}\Ln{\ExpE\frac{N}{S}}}+\sqrt{\frac{S}{M}}\right)^2M}\in\left(0,1\right)
	\end{align*}
	$\AMat$ has $\ell_2$-RNSP of order $S$ wrt $\norm{\cdot}_2$
	with constants $\rho$ and $\tau$.
\end{Corollary}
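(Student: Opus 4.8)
The plan is to obtain the corollary as a specialization of \thref{Theorem:Gaussian=>NSP} in which the free confidence parameter $\eta$ is tuned to saturate the phase transition threshold. To keep the bookkeeping light I abbreviate the normalized width bound as $W:=\sqrt{2\frac{S}{M}\Ln{\ExpE\frac{N}{S}}}+\sqrt{\frac{S}{M}}$ and write $\phi\left(t\right):=\sqrt{1+\left(1+t^{-1}\right)^2}$, so that the threshold \refP{Equation:PhaseTransition:Theorem:Gaussian=>NSP} becomes $\tau\geq\left(\GMC{M}-\phi\left(\rho\right)W-\sqrt{\frac{2}{M}\Ln{\eta^{-1}}}\right)^{-1}$. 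The conceptual key, which I would record first, is that the quantity $\rho\left(\tau\right)$ from the statement is exactly the stableness constant obtained by letting the $\eta$-slack tend to zero: solving $\phi\left(\rho\left(\tau\right)\right)=\frac{\GMC{M}-\tau^{-1}}{W}$ for $\rho\left(\tau\right)$ reproduces the displayed formula.

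First I would verify $\rho\left(\tau\right)\in\left(0,1\right)$. Since $\phi\left(1\right)=\sqrt{5}$, the hypothesis \refP{Equation:PhaseTransition:Corollary:Gaussian=>NSP:no_eta} rearranges to $\frac{\GMC{M}-\tau^{-1}}{W}>\sqrt{5}$ (and in particular $\GMC{M}-\tau^{-1}>0$). Squaring gives $\rho\left(\tau\right)^{-1}=\sqrt{\left(\frac{\GMC{M}-\tau^{-1}}{W}\right)^2-1}-1>1$, so the radicand is nonnegative, $\rho\left(\tau\right)$ is well defined, and $\rho\left(\tau\right)\in\left(0,1\right)$.

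Next I would fix any $\rho\in\left(\rho\left(\tau\right),1\right)$ and use that $t\mapsto\phi\left(t\right)$ is strictly decreasing on $\left(0,1\right]$, since $t\mapsto 1+t^{-1}$ is. Combined with the defining identity $\GMC{M}-\tau^{-1}=\phi\left(\rho\left(\tau\right)\right)W$ this gives the strict positivity
\[
	\GMC{M}-\phi\left(\rho\right)W-\tau^{-1}=\left(\phi\left(\rho\left(\tau\right)\right)-\phi\left(\rho\right)\right)W>0.
\]
I would then set $\eta:=\Exp{-\frac{M}{2}\left(\GMC{M}-\phi\left(\rho\right)W-\tau^{-1}\right)^2}$, which lies in $\left(0,1\right)$ precisely because the base of the exponent is positive. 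By construction $\sqrt{\frac{2}{M}\Ln{\eta^{-1}}}=\GMC{M}-\phi\left(\rho\right)W-\tau^{-1}$, so the threshold \refP{Equation:PhaseTransition:Theorem:Gaussian=>NSP} holds with equality, and \thref{Theorem:Gaussian=>NSP} yields the $\ell_2$-RNSP of order $S$ with constants $\rho$ and $\tau$ with probability at least $1-\eta$.

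Finally, substituting $\GMC{M}-\phi\left(\rho\right)W-\tau^{-1}=\left(\phi\left(\rho\left(\tau\right)\right)-\phi\left(\rho\right)\right)W$ back into $\eta$ turns the probability bound into $1-\Exp{-\frac{1}{2}\left(\phi\left(\rho\left(\tau\right)\right)-\phi\left(\rho\right)\right)^2W^2M}$, which is exactly the expression claimed. Everything past the first paragraph is a monotonicity remark together with a direct invocation of \thref{Theorem:Gaussian=>NSP}; the only genuinely delicate point I expect is the bookkeeping that identifies $\rho\left(\tau\right)$ with the zero-slack solution and keeps the two algebraic identities $\GMC{M}-\tau^{-1}=\phi\left(\rho\left(\tau\right)\right)W$ and the factorization of $\GMC{M}-\phi\left(\rho\right)W-\tau^{-1}$ straight.
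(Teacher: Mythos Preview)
Your proposal is correct and follows essentially the same approach as the paper's proof: both verify $\rho\left(\tau\right)\in\left(0,1\right)$ from the hypothesis, then choose $\eta$ so that the phase transition inequality \refP{Equation:PhaseTransition:Theorem:Gaussian=>NSP} holds with equality, and invoke \thref{Theorem:Gaussian=>NSP}. Your abbreviations $W$ and $\phi$ and the explicit identification of $\rho\left(\tau\right)$ via $\phi\left(\rho\left(\tau\right)\right)=\frac{\GMC{M}-\tau^{-1}}{W}$ make the algebra more transparent than the paper's presentation, but the underlying argument is identical.
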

\begin{proof}
	To prove that $\rho\left(\tau\right)$ is well defined note that we have the logical statements
%	\begin{align}
%		&
%			\tau>\left(\GMC{M}-\sqrt{5}\left(\sqrt{2\frac{S}{M}\Ln{\ExpE\frac{N}{S}}}+\sqrt{\frac{S}{M}}\right)\right)^{-1}>0
%		\\\Rightarrow&
%			\tau^{-1}<\GMC{M}-\sqrt{5}\left(\sqrt{2\frac{S}{M}\Ln{\ExpE\frac{N}{S}}}+\sqrt{\frac{S}{M}}\right)
%		\\\Rightarrow&
%			\GMC{M}-\tau^{-1}>\sqrt{5}\left(\sqrt{2\frac{S}{M}\Ln{\ExpE\frac{N}{S}}}+\sqrt{\frac{S}{M}}\right)
%		\\\Rightarrow&
%			\frac{\GMC{M}-\tau^{-1}}{\sqrt{2\frac{S}{M}\Ln{\ExpE\frac{N}{S}}}+\sqrt{\frac{S}{M}}}>\sqrt{5}
%		\\\Rightarrow&
%			\frac{\GMC{M}-\tau^{-1}}{\sqrt{2\frac{S}{M}\Ln{\ExpE\frac{N}{S}}}+\sqrt{\frac{S}{M}}}\in\left(\sqrt{5},\infty\right)
%		\\\Leftrightarrow&
%			\left(\frac{\GMC{M}-\tau^{-1}}{\sqrt{2\frac{S}{M}\Ln{\ExpE\frac{N}{S}}}+\sqrt{\frac{S}{M}}}\right)^2\in\left(5,\infty\right)
%		\\\Leftrightarrow&
%			\left(\frac{\GMC{M}-\tau^{-1}}{\sqrt{2\frac{S}{M}\Ln{\ExpE\frac{N}{S}}}+\sqrt{\frac{S}{M}}}\right)^2-1\in\left(4,\infty\right)
%		\\\Leftrightarrow&
%			\sqrt{\left(\frac{\GMC{M}-\tau^{-1}}{\sqrt{2\frac{S}{M}\Ln{\ExpE\frac{N}{S}}}+\sqrt{\frac{S}{M}}}\right)^2-1}\in\left(2,\infty\right)
%		\\\Leftrightarrow&
%			\sqrt{\left(\frac{\GMC{M}-\tau^{-1}}{\sqrt{2\frac{S}{M}\Ln{\ExpE\frac{N}{S}}}+\sqrt{\frac{S}{M}}}\right)^2-1}-1\in\left(1,\infty\right)
%		\\\Leftrightarrow&
%			\rho\left(\tau\right)=\left(\sqrt{\left(\frac{\GMC{M}-\tau^{-1}}{\sqrt{2\frac{S}{M}\Ln{\ExpE\frac{N}{S}}}+\sqrt{\frac{S}{M}}}\right)^2-1}-1\right)^{-1}\in\left(0,1\right).
%	\end{align}
	\begin{align*}
		&
			\tau>\left(\GMC{M}-\sqrt{5}\left(\sqrt{2\frac{S}{M}\Ln{\ExpE\frac{N}{S}}}+\sqrt{\frac{S}{M}}\right)\right)^{-1}>0
		\\\Rightarrow\hspace{5pt}&
			\frac{\GMC{M}-\tau^{-1}}{\sqrt{2\frac{S}{M}\Ln{\ExpE\frac{N}{S}}}+\sqrt{\frac{S}{M}}}\in\left(\sqrt{5},\infty\right)
		\\\Leftrightarrow\hspace{5pt}&
			\rho\left(\tau\right)=\left(\sqrt{\left(\frac{\GMC{M}-\tau^{-1}}{\sqrt{2\frac{S}{M}\Ln{\ExpE\frac{N}{S}}}+\sqrt{\frac{S}{M}}}\right)^2-1}-1\right)^{-1}\in\left(0,1\right).
	\end{align*}
	We set
	\begin{align*}
		\eta:=\Exp{-\frac{1}{2}\left(\sqrt{1+\left(1+\rho\left(\tau\right)^{-1}\right)^2}-\sqrt{1+\left(1+\rho^{-1}\right)^2}\right)^2\left(\sqrt{2\frac{S}{M}\Ln{\ExpE\frac{N}{S}}}+\sqrt{\frac{S}{M}}\right)^2M},
	\end{align*}
	which obeys $\eta\in\left(0,1\right)$ since $\rho>\rho\left(\tau\right)$.
	It follows that
	\begin{align*}
		&
			\sqrt{1+\left(1+\rho^{-1}\right)^2}\left(\sqrt{2S\Ln{\ExpE\frac{N}{S}}}+\sqrt{S}\right)
			+\sqrt{2\Ln{\eta^{-1}}}
		\\=&
			\sqrt{1+\left(1+\rho^{-1}\right)^2}\left(\sqrt{2S\Ln{\ExpE\frac{N}{S}}}+\sqrt{S}\right)
		\\&+
			\left(\sqrt{1+\left(1+\rho\left(\tau\right)^{-1}\right)^2}-\sqrt{1+\left(1+\rho^{-1}\right)^2}\right)\left(\sqrt{2\frac{S}{M}\Ln{\ExpE\frac{N}{S}}}+\sqrt{\frac{S}{M}}\right)\sqrt{M}
		\\=&
			\sqrt{1+\left(1+\rho\left(\tau\right)^{-1}\right)^2}\left(\sqrt{2S\Ln{\ExpE\frac{N}{S}}}+\sqrt{S}\right).
	\end{align*}
	If we plug in the definition of $\rho\left(\tau\right)$ into this, we obtain
	\begin{align*}
			\sqrt{1+\left(1+\rho^{-1}\right)^2}\left(\sqrt{2S\Ln{\ExpE\frac{N}{S}}}+\sqrt{S}\right)
			+\sqrt{2\Ln{\eta^{-1}}}
%		\\=&
%			\sqrt{1+\left(1+\rho\left(\tau\right)^{-1}\right)^2}\left(\sqrt{2S\Ln{\ExpE\frac{N}{S}}}+\sqrt{S}\right)
		=&
			\frac{\GMC{M}-\tau^{-1}}{\sqrt{2\frac{S}{M}\Ln{\ExpE\frac{N}{S}}}+\sqrt{\frac{S}{M}}}\left(\sqrt{2S\Ln{\ExpE\frac{N}{S}}}+\sqrt{S}\right)
		\\=&
			\left(\GMC{M}-\tau^{-1}\right)\sqrt{M}.
	\end{align*}
	Together with $\tau>0$ from \refP{Equation:PhaseTransition:Corollary:Gaussian=>NSP:no_eta}
	this yields that \refP{Equation:PhaseTransition:Theorem:Gaussian=>NSP} holds true.
	The proof now follows from \thref{Theorem:Gaussian=>NSP}.
\end{proof}
By a certain choice of $\rho$ in this result and a limit argument we can deduce \thref{Proposition:Gaussian=>HRL1} from this.
\begin{proof}[Proof of \thref{Proposition:Gaussian=>HRL1}]
%	For $\tau\in\left(\left(\GMC{M}-\sqrt{5}\left(\sqrt{2\frac{S}{M}\Ln{\ExpE\frac{N}{S}}}+\frac{S}{M}\right)\right)^{-1},\hrloneParam S^{-\frac{1}{2}}\right)$
	Given $\alpha\in\left(0,1\right)$ and $\tau\in\left(\left(\GMC{M}-\sqrt{5}\left(\sqrt{2\frac{S}{M}\Ln{\ExpE\frac{N}{S}}}+\sqrt{\frac{S}{M}}\right)\right)^{-1},\hrloneParam S^{-\frac{1}{2}}\right)$
	we set
	\begin{align*}
		\tilde{\rho}\left(\alpha,\tau\right):=\left(\sqrt{\left(\left(1-\alpha\right)\sqrt{1+\left(1+\rho\left(\tau\right)^{-1}\right)^2}+\alpha\sqrt{5}\right)^2-1}-1\right)^{-1}
	\end{align*}
	with $\rho\left(\tau\right)$ from \thref{Corollary:Gaussian=>NSP:no_eta}.
	By \thref{Corollary:Gaussian=>NSP:no_eta} we have $\rho\left(\tau\right)<1$ which we can use
	once for each bound to obtain that
	\begin{align*}
			\tilde{\rho}\left(\alpha,\tau\right)
		<&
			\left(\sqrt{\left(\left(1-\alpha\right)\sqrt{1+\left(1+1^{-1}\right)^2}+\alpha\sqrt{5}\right)^2-1}-1\right)^{-1}
			=1\TextAnd
		\\
			\tilde{\rho}\left(\alpha,\tau\right)
		>&
			\left(\sqrt{\left(\left(1-\alpha\right)\sqrt{1+\left(1+\rho\left(\tau\right)^{-1}\right)^2}+\alpha\sqrt{1+\left(1+\rho\left(\tau\right)^{-1}\right)^2}\right)^2-1}-1\right)^{-1}
		=\rho\left(\tau\right),
	\end{align*}
	i.e., the necessary requirement $\tilde{\rho}\left(\alpha,\tau\right)\in\left(\rho\left(\tau\right),1\right)$.
	By the definition of $\tilde{\rho}\left(\alpha,\tau\right)$ we further get that
	\begin{align*}
		\sqrt{1+\left(1+\tilde{\rho}\left(\alpha,\tau\right)^{-1}\right)^2}=\left(1-\alpha\right)\sqrt{1+\left(1+\rho\left(\tau\right)^{-1}\right)^2}+\alpha\sqrt{5}.
	\end{align*}
	Using this and $\sqrt{1+\left(1+\rho\left(\tau\right)^{-1}\right)^2}=\frac{\GMC{M}-\tau^{-1}}{\sqrt{2\frac{S}{M}\Ln{\ExpE\frac{N}{S}}}+\sqrt{\frac{S}{M}}}$
	yields
%	\begin{align}
%		&
%			1-\Exp{-\frac{1}{2}\left(\sqrt{1+\left(1+\rho\left(\tau\right)^{-1}\right)^2}-\sqrt{1+\left(1+\tilde{\rho}\left(\alpha,\tau\right)^{-1}\right)^2}\right)^2\left(\sqrt{2\frac{S}{M}\Ln{\ExpE\frac{N}{S}}}+\sqrt{\frac{S}{M}}\right)^2M}
%		\\=&
%			1-\Exp{-\frac{1}{2}\left(\sqrt{1+\left(1+\rho\left(\tau\right)^{-1}\right)^2}-\left(\left(1-\alpha\right)\sqrt{1+\left(1+\rho\left(\tau\right)^{-1}\right)^2}+\alpha\sqrt{5}\right)\right)^2\left(\sqrt{2\frac{S}{M}\Ln{\ExpE\frac{N}{S}}}+\sqrt{\frac{S}{M}}\right)^2M}
%		\\=&
%			1-\Exp{-\frac{1}{2}\alpha^2\left(\sqrt{1+\left(1+\rho\left(\tau\right)^{-1}\right)^2}-\sqrt{5}\right)^2\left(\sqrt{2\frac{S}{M}\Ln{\ExpE\frac{N}{S}}}+\sqrt{\frac{S}{M}}\right)^2M}
%		\\=&
%			1-\Exp{-\frac{1}{2}\alpha^2\left(\frac{\GMC{M}-\tau^{-1}}{\sqrt{2\frac{S}{M}\Ln{\ExpE\frac{N}{S}}}+\sqrt{\frac{S}{M}}}-\sqrt{5}\right)^2\left(\sqrt{2\frac{S}{M}\Ln{\ExpE\frac{N}{S}}}+\sqrt{\frac{S}{M}}\right)^2M}
%		\\=&
%			1-\Exp{-\frac{1}{2}\alpha^2\left(\GMC{M}-\tau^{-1}-\sqrt{5}\left(\sqrt{2\frac{S}{M}\Ln{\ExpE\frac{N}{S}}}+\sqrt{\frac{S}{M}}\right)\right)^2M}
%		\\=&
%			1-\Exp{-\frac{1}{2}\alpha^2\left(\GMC{M}-\sqrt{5}\left(\sqrt{2\frac{S}{M}\Ln{\ExpE\frac{N}{S}}}+\sqrt{\frac{S}{M}}\right)-\tau^{-1}\right)^2M}.
%	\end{align}
	\begin{align*}
		&
			\left(\sqrt{1+\left(1+\rho\left(\tau\right)^{-1}\right)^2}-\sqrt{1+\left(1+\tilde{\rho}\left(\alpha,\tau\right)^{-1}\right)^2}\right)^2\left(\sqrt{2\frac{S}{M}\Ln{\ExpE\frac{N}{S}}}+\sqrt{\frac{S}{M}}\right)^2
		\\=&
			\alpha^2\left(\sqrt{1+\left(1+\rho\left(\tau\right)^{-1}\right)^2}-\sqrt{5}\right)^2\left(\sqrt{2\frac{S}{M}\Ln{\ExpE\frac{N}{S}}}+\sqrt{\frac{S}{M}}\right)^2
		\\=&
			\alpha^2\left(\GMC{M}-\sqrt{5}\left(\sqrt{2\frac{S}{M}\Ln{\ExpE\frac{N}{S}}}+\sqrt{\frac{S}{M}}\right)-\tau^{-1}\right)^2.
	\end{align*}
	Using this and \thref{Corollary:Gaussian=>NSP:no_eta} we obtain
	for every $\alpha\in\left(0,1\right)$ and every
	\begin{align*}
		\tau\in\left(\left(\GMC{M}-\sqrt{5}\left(\sqrt{2\frac{S}{M}\Ln{\ExpE\frac{N}{S}}}+\frac{S}{M}\right)\right)^{-1},\hrloneParam S^{-\frac{1}{2}}\right)
	\end{align*}
	that
	\begin{align}
		\nonumber&
			\Prob{\text{$\AMat$ has $\ell_2$-RNSP of order $S$ wrt $\norm{\cdot}_2$ with constants
				$\tilde{\rho}\left(\alpha,\tau\right)$ and $\tau$}}
		\\\geq&\label{Equation:prob_calc:Theorem:Gaussian=>NSP:HRL1}
			1-\Exp{-\frac{1}{2}\alpha^2\left(\GMC{M}-\sqrt{5}\left(\sqrt{2\frac{S}{M}\Ln{\ExpE\frac{N}{S}}}+\sqrt{\frac{S}{M}}\right)-\tau^{-1}\right)^2M}
	\end{align}
	holds true.
	It follows that
	\begin{align}
		\nonumber&
			\Prob{\text{$\AMat$ has $\ell_2$-NSP of order $S$ and $\hrloneParam>\tau_2^0 S^\frac{1}{2}$}}
		\\\nonumber=&
			\Prob{\text{$\AMat$ has $\ell_2$-NSP of order $S$ and $\tau_2^0<\hrloneParam S^{-\frac{1}{2}}$}}
		\\\geq &\label{Equation:EQ1:Theorem:Gaussian=>NSP:HRL1}
			\Prob{\text{$\AMat$ has $\ell_2$-ORNSP of order $S$ wrt $\norm{\cdot}_2$ with constant $\hrloneParam S^{-\frac{1}{2}}$}}
		\\\geq&\label{Equation:EQ2:Theorem:Gaussian=>NSP:HRL1}
			\Prob{\text{$\AMat$ has $\ell_2$-RNSP of order $S$ wrt $\norm{\cdot}_2$ with constants
	$\tilde{\rho}\left(\alpha,\tau\right)$ and $\tau$}}
		\\\geq&\label{Equation:EQ3:Theorem:Gaussian=>NSP:HRL1}
			1-\Exp{-\frac{1}{2}\alpha^2\left(\GMC{M}-\sqrt{5}\left(\sqrt{2\frac{S}{M}\Ln{\ExpE\frac{N}{S}}}+\sqrt{\frac{S}{M}}\right)-\tau^{-1}\right)^2M},
	\end{align}
	where
	\begin{itemize}
		\item[]
			\refP{Equation:EQ1:Theorem:Gaussian=>NSP:HRL1} follows from \thref{Lemma:NSP<=>RNSP} and
			\thref{Corollary:ShapeSNSPAndRNSP},
		\item[]
			\refP{Equation:EQ2:Theorem:Gaussian=>NSP:HRL1} follows from \thref{Lemma:RNSP<=>SRNSP}
			and $\hrloneParam S^{-\frac{1}{2}}>\tau$ and
		\item[]
			\refP{Equation:EQ3:Theorem:Gaussian=>NSP:HRL1} follows from
			\refP{Equation:prob_calc:Theorem:Gaussian=>NSP:HRL1}.
	\end{itemize}
	The function in \refP{Equation:EQ3:Theorem:Gaussian=>NSP:HRL1} is continuous in $\left(\alpha,\tau\right)$
	and	we can send $\tau\rightarrow \hrloneParam S^{-\frac{1}{2}}$ and $\alpha\rightarrow 1$ to obtain
	\begin{align*}
		&
			\Prob{\text{$\AMat$ has $\ell_2$-NSP of order $S$ and $\hrloneParam>\tau_2^0 S^\frac{1}{2}$}}
		\\\geq&
			1-\Exp{-\frac{1}{2}\left(\GMC{M}-\sqrt{5}\left(\sqrt{2\frac{S}{M}\Ln{\ExpE\frac{N}{S}}}+\sqrt{\frac{S}{M}}\right)-\sqrt{S}\hrloneParam^{-1}\right)^2M}.
	\end{align*}
	In this case, \thref{Corollary:ShapeHRL1Param} yields
	$\hrloneParam>\tau_2^0 \sqrt{S}\geq \tau_1^0$ and $\hrloneParam\in\left(\tau_1^0,\infty\right)$
	and thus also that $\Decoder{}$ is an $\ell_2$-SRD of order $S$ wrt $\norm{\cdot}_2$ for $\AMat$.
	Alternatively, one could also use \thref{Corollary:ShapeSNSPAndRNSP}
	to obtain that $\AMat$ has $\ell_2$-ORNSP of order $S$ wrt $\norm{\cdot}_2$ with constant
	$\frac{1}{2}\left(\hrloneParam S^{-\frac{1}{2}}+\tau_2^0\right)$,
	\thref{Lemma:RNSP<=>SRNSP} to obtain that $\AMat$ has $\ell_2$-RNSP of order $S$ wrt $\norm{\cdot}_2$
	with some stableness constant and robustness constant $\frac{1}{2}\left(\hrloneParam S^{-\frac{1}{2}}+\tau_2^0\right)$
	and \thref{Theorem:HRL1Recovery:Lambda>TauSq} to get the claim.
\end{proof}
\subsection{Proofs of Subsection \ref{Subsection:NotNPHard_LRBGMatrices} Random Walk Matrices of Uniformly Distributed $D$-Left Regular Bipartite Graphs}\label{Subsection:NotNPHard_LRBGMatrices:Proofs}
\begin{Definition}\label{Definition:LosslessExpander}
	Let $S\in\SetOf{N}$ and $D^{-1}\AMat\in\left\{0,D^{-1}\right\}^{M\times N}$ be a $D$-left regular bipartite graph.
	If additionally there exists a $\theta\in\left[0,1\right)$ such that
	\begin{align*}
		\SetSize{\RightVertices{T}}\geq \left(1-\theta\right)D\SetSize{T} \TextForAll \SetSize{T}\leq S
	\end{align*}
	holds true, then $D^{-1}\AMat$ is called a random walk matrix of an
	$\left(S,D,\theta\right)$-lossless expander.
\end{Definition}
If we draw a $D$-left regular bipartite graph uniformly at random, it will be a random walk matrix
of an $\left(S,D,\theta\right)$-lossless expander with high probability.
\begin{Proposition}[ {\cite[Corollary~13.7]{IntroductionCS}} ]\label{Proposition:LeftRegularBipartiteGraphsAreLosslessExpanders}
	Let $S\in\SetOf{N}$, $\theta\in\left(0,1\right)$,
	$D:=\RoundUp{\frac{2}{\theta}\Ln{\frac{\ExpE N}{S}}}$
	and $\AMat\in\left\{0,D^{-1}\right\}^{M\times N}$ be a uniformly at random chosen $D$-left regular bipartite graph.
%	$\left(\SetOf{N},\SetOf{M},E\right)$ be a bipartite graph drawn
%	uniformly at random among all left $D$-regular bipartite graphs.
%	If $D:=\RoundUp{\frac{2}{\theta}\Ln{\frac{\ExpE N}{S}}}$ and
	If
	\begin{align*}
		M\geq \frac{2}{\theta}\Exp{\frac{2}{\theta}}S\Ln{\frac{\ExpE N}{S}},
	\end{align*}
%	then $\left(\SetOf{N},\SetOf{M},E\right)$
	then $\AMat$ is the random walk matrix of an
	$\left(S,D,\theta\right)$-lossless expander with probability of at least
	$
		1-\frac{S}{\ExpE N}
	$.
\end{Proposition}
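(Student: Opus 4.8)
The plan is to establish the lossless-expansion estimate $\SetSize{\RightVertices{T}}\geq\left(1-\theta\right)D\SetSize{T}$ for every $T$ with $\SetSize{T}\leq S$ by the probabilistic method, combining a union bound over supports with a collision count; this is exactly \cite[Corollary~13.7]{IntroductionCS}, whose underlying estimate is \cite[Theorem~13.6]{IntroductionCS}. I model the uniformly chosen $D$-left regular graph by letting each left vertex draw its $D$ \emph{distinct} right-neighbours as a uniform $D$-subset of $\SetOf{M}$, so that the $\SetSize{T}D$ edges leaving a fixed $T$ are (nearly) independent uniform choices. It then suffices to bound, for each $s\in\SetOf{S}$, the probability of the bad event $\mathcal{B}_s$ that some $T$ with $\SetSize{T}=s$ has $\SetSize{\RightVertices{T}}<\left(1-\theta\right)Ds$, and to show that $\sum_{s=1}^{S}\Prob{\mathcal{B}_s}\leq\frac{S}{\ExpE N}$. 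Note that $\mathcal{B}_1$ is empty, since each left vertex has exactly $D$ distinct neighbours, so $\SetSize{\RightVertices{\left\{n\right\}}}=D$; thus the sum is effectively over $s\geq2$, which is what lets the constant close.

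First I would fix $T$ with $\SetSize{T}=s$ and expose its $Ds$ edges one at a time. Calling an edge a \emph{collision} when its right endpoint was already hit, one has $\SetSize{\RightVertices{T}}=Ds-(\text{number of collisions})$, so $\mathcal{B}_s$ forces at least $m:=\RoundUp{\theta Ds}$ collisions. At each step at most $Ds$ right vertices are occupied, so the conditional probability that a prescribed edge collides is at most $\frac{Ds}{M}$ regardless of the earlier draws; by the tower rule, a fixed set of $m$ edges are all collisions with probability at most $\left(\frac{Ds}{M}\right)^m$. A union bound over the $\binom{Ds}{m}$ choices of colliding positions gives $\Prob{\SetSize{\RightVertices{T}}<\left(1-\theta\right)Ds}\leq\binom{Ds}{m}\left(\frac{Ds}{M}\right)^m\leq\left(\frac{\ExpE Ds}{\theta M}\right)^{\theta Ds}$, where I used $\binom{Ds}{m}\leq\left(\frac{\ExpE Ds}{m}\right)^m\leq\left(\frac{\ExpE}{\theta}\right)^m$, the bound $m\geq\theta Ds$, and $\frac{Ds}{M}<1$.

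Next I would union bound over the $\binom{N}{s}\leq\left(\frac{\ExpE N}{s}\right)^s$ supports of size $s$ and insert the parameter choices. From $D=\RoundUp{\frac{2}{\theta}\Ln{\frac{\ExpE N}{S}}}$ one has $\theta D\geq2\Ln{\frac{\ExpE N}{S}}$, and from $M\geq\frac{2}{\theta}\Exp{\frac{2}{\theta}}S\Ln{\frac{\ExpE N}{S}}$ — crucially keeping the factor $s$ instead of bounding $s\leq S$ — the base obeys $\frac{\ExpE Ds}{\theta M}\leq\frac{s}{S}\cdot\frac{\ExpE^{1-2/\theta}}{\theta}\leq\frac{s}{S}\ExpE^{-1}$ up to the ceiling in $D$. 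Writing $L:=\Ln{\frac{\ExpE N}{S}}$ and using $\ExpE N=S\ExpE^{L}$, this yields $\Prob{\mathcal{B}_s}\leq\left[\frac{\ExpE N}{s}\left(\frac{s}{S}\ExpE^{-1}\right)^{2L}\right]^s$, whose bracket equals $\ExpE^{L}\ExpE^{-2L}\left(\frac{s}{S}\right)^{2L-1}\leq\ExpE^{-L}=\frac{S}{\ExpE N}$ for all $1\leq s\leq S$, since $\left(\frac{s}{S}\right)^{2L-1}\leq1$ and the $S$-powers cancel exactly. The small surplus that the ceiling adds to the base is offset because it also enlarges the exponent $\theta Ds$, and $\frac{\ExpE^{1-2/\theta}}{\theta}\leq\ExpE^{-1}$ on $\left(0,1\right]$. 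Hence $\Prob{\mathcal{B}_s}\leq\left(\frac{S}{\ExpE N}\right)^s$, and summing the geometric series from $s=2$ gives a total of order $\frac{S}{\ExpE N}$.

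The main obstacle is purely quantitative: driving the geometric sum down to \emph{exactly} $\frac{S}{\ExpE N}$ rather than a constant multiple requires tracking the constants in $D=\RoundUp{\cdots}$ and in $M\geq\frac{2}{\theta}\Exp{\frac{2}{\theta}}S\Ln{\cdots}$ sharply, and absorbing the roundings in $m=\RoundUp{\theta Ds}$ and in the ceiling defining $D$ into the slack furnished by the factor $\Exp{\frac{2}{\theta}}$; this precise calibration is the content of \cite[Theorem~13.6]{IntroductionCS}, from which the stated probability $1-\frac{S}{\ExpE N}$ is read off. A secondary point to settle is the exact random model (neighbours drawn with or without replacement), but since the collision count only uses an upper estimate $\frac{Ds}{M}$ on the per-edge collision probability, the argument is insensitive to this choice, and I would simply invoke the cited corollary for the final constant.
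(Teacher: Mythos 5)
The paper does not reprove this statement at all — its entire proof is the citation to \cite[Corollary~13.7]{IntroductionCS} — and your collision-counting argument (union bound over supports, at least $\RoundUp{\theta Ds}$ collisions forced by failure, per-edge collision probability at most $\frac{Ds}{M}$, then the calibration via $\theta D\geq 2\Ln{\frac{\ExpE N}{S}}$ and the choice of $M$) is precisely the proof underlying that corollary, namely \cite[Theorem~13.6]{IntroductionCS} together with its corollary, so your proposal is correct and takes essentially the same route. The one step you leave informal — absorbing the ceilings in $D$ and in $m=\RoundUp{\theta Ds}$ into the slack furnished by $\Exp{\frac{2}{\theta}}$ so as to land on exactly $1-\frac{S}{\ExpE N}$ — is exactly the bookkeeping carried out in the cited reference, which you correctly identify and invoke, matching the paper's own standard of proof.
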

This statement is proven in \cite[Corollary~13.7]{IntroductionCS}.
%The biadjacency matrix of a $\left(2S,D,\theta\right)$-lossless expander has $S$-SRNSP.
Further, the random walk matrix of a $\left(2S,D,\theta\right)$-lossless expander has $\ell_1$-RNSP of order $S$.
\begin{Lemma}[ {\cite[Theorem~13.1]{IntroductionCS}} ]\label{Lemma:LosslessExpandersHaveSRNSP}
	Let $2S\in\SetOf{N}$, $D\in\SetOf{M}$ and $\theta\in\left[0,\frac{1}{6}\right)$.
	Let $\AMat\in\left\{0,D^{-1}\right\}^{M\times N}$ be the random walk matrix of a
	$\left(2S,D,\theta\right)$-lossless expander graph.
	Then $\AMat$ has
	$\ell_1$-RNSP of order $S$ wrt $\norm{\cdot}_1$ with constants
	$\rho=\frac{2\theta}{1-4\theta}$ and $\tau=\frac{1}{\left(1-4\theta\right)}$.
\end{Lemma}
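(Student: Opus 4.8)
The plan is to follow the expander-recovery route used in \cite[Chapter~13]{IntroductionCS} and to extract exactly the constants $\rho=\frac{2\theta}{1-4\theta}$ and $\tau=\frac{1}{1-4\theta}$. First I would record two elementary structural facts about the random walk matrix $\AMat$. Since every column of $\AMat$ has exactly $D$ nonzero entries, each equal to $D^{-1}$, every column sums to $1$; hence $\AMat$ is an $\ell_1$-contraction, $\norm{\AMat\wVec}_1\leq\norm{\wVec}_1$ for all $\wVec\in\mathbb{R}^N$, by the triangle inequality applied columnwise. Second, the lossless expansion $\SetSize{\RightVertices{U}}\geq(1-\theta)D\SetSize{U}$ valid for $\SetSize{U}\leq 2S$ yields a unique-neighbour bound: writing $R_1(U)$ for the set of right vertices adjacent to exactly one element of $U$, a double count of the $D\SetSize{U}$ edges leaving $U$ gives $D\SetSize{U}\geq 2\SetSize{\RightVertices{U}}-\SetSize{R_1(U)}$, whence $\SetSize{R_1(U)}\geq(1-2\theta)D\SetSize{U}$ and at most $2\theta D\SetSize{U}$ of the edges leaving $U$ are \emph{collision} edges.

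Next I would reduce to a canonical support. The inequality to be proven, $\norm{\ProjToIndex{T}{\vVec}}_1\leq\rho\norm{\ProjToIndex{T^c}{\vVec}}_1+\tau\norm{\AMat\vVec}_1$, only becomes harder as $\norm{\ProjToIndex{T}{\vVec}}_1$ grows and $\norm{\ProjToIndex{T^c}{\vVec}}_1$ shrinks, while $\norm{\AMat\vVec}_1$ does not depend on $T$. Thus it suffices to treat $T$ equal to the index set of the $S$ largest-magnitude entries of $\vVec$: for any other $T'$ with $\SetSize{T'}\leq S$ one has $\norm{\ProjToIndex{T'}{\vVec}}_1\leq\norm{\ProjToIndex{T}{\vVec}}_1$ and $\norm{\ProjToIndex{(T')^c}{\vVec}}_1\geq\norm{\ProjToIndex{T^c}{\vVec}}_1$, so the bound for the canonical $T$ transfers verbatim. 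With $T$ fixed I would split the remaining coordinates into consecutive magnitude-ordered blocks $T_1,T_2,\dots$ of size $S$, so that $\SetSize{T\cup T_1}\leq 2S$ and the order-$2S$ expansion applies to $T\cup T_1$, with the block decay $\norm{\ProjToIndex{T_{j+1}}{\vVec}}_1\leq\norm{\ProjToIndex{T_j}{\vVec}}_1$ available from the sorting.

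The heart of the argument is a cancellation estimate on $\norm{\AMat\vVec}_1$. Restricting the sum defining $\norm{\AMat\vVec}_1$ to the unique neighbours of $T\cup T_1$ and using that on such a right vertex $\AMat\vVec$ sees the single large coordinate plus only tail coordinates, I would lower-bound $\norm{\AMat\vVec}_1$ by $\norm{\ProjToIndex{T}{\vVec}}_1$ minus two error terms: one controlling the $\ell_1$-mass lost at collision edges leaving $T\cup T_1$ (bounded through the collision count $2\theta D\SetSize{T\cup T_1}$ together with the magnitude ordering), and one controlling the pollution of these unique neighbours by the tail $\vVec-\ProjToIndex{T\cup T_1}{\vVec}$ (bounded through the $\ell_1$-contraction and the block decay). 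Each contributes an error of order $2\theta$, and it is precisely the superposition of these two losses — made possible by doubling the support from $S$ to $2S$ — that turns the naive denominator $1-2\theta$ into $1-4\theta$. Collecting terms gives $(1-4\theta)\norm{\ProjToIndex{T}{\vVec}}_1\leq\norm{\AMat\vVec}_1+2\theta\norm{\ProjToIndex{T^c}{\vVec}}_1$, which rearranges to the asserted $\ell_1$-RNSP with $\rho=\frac{2\theta}{1-4\theta}$ and $\tau=\frac{1}{1-4\theta}$; note that $\theta<\tfrac16$ is exactly the condition $\rho<1$.

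I expect the main obstacle to be this cancellation estimate, and specifically the coupling between the collision count and the coordinate magnitudes. A priori the collision edges could concentrate on the largest entries of $T$, so a crude use of the unique-neighbour count alone only yields coefficient $1$ (not $2\theta$) on $\norm{\ProjToIndex{T^c}{\vVec}}_1$ and leaves the troublesome factor $\frac{1}{1-2\theta}>1$. The delicate point is therefore to exploit the magnitude ordering together with the order-$2S$ expansion — effectively applying the expansion to the super-level sets / to the block $T\cup T_1$ — so as to charge the lost mass against $2\theta$ times the \emph{tail} rather than against the whole of $\ProjToIndex{T}{\vVec}$. This is exactly where the proof of \cite[Theorem~13.1]{IntroductionCS} does its work, and reproducing that bookkeeping tightly enough to obtain the stated constants is the step that requires genuine care.
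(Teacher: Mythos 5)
Your surrounding scaffolding is sound: the reduction to the canonical $T$ (the $S$ largest-magnitude entries) is legitimate, since that choice simultaneously maximizes $\norm{\ProjToIndex{T}{\vVec}}_1$ and minimizes $\norm{\ProjToIndex{T^c}{\vVec}}_1$ while $\norm{\AMat\vVec}_1$ is unaffected; the column-normalization observation $\norm{\AMat\wVec}_1\leq\norm{\wVec}_1$, the unique-neighbour count $\SetSize{R_1(U)}\geq\left(1-2\theta\right)D\SetSize{U}$, the collision-edge budget $2\theta D\SetSize{U}$ for $\SetSize{U}\leq 2S$, and the remark that $\theta<\frac{1}{6}$ is exactly $\rho<1$ are all correct. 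Note, though, that the paper itself does none of this: its entire proof is to apply \cite[Theorem~13.1]{IntroductionCS} to the unnormalized matrix $D\AMat\in\left\{0,1\right\}^{M\times N}$, which has $\ell_1$-RNSP with constants $\rho=\frac{2\theta}{1-4\theta}$ and $\tau'=\frac{1}{\left(1-4\theta\right)D}$, and then rescale via $\norm{D\AMat\vVec}_1=D\norm{\AMat\vVec}_1$ to obtain $\tau=\frac{1}{1-4\theta}$. Your normalization bookkeeping correctly replaces that rescaling step, but everything else in your plan is an attempt to re-prove the cited theorem itself.

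And there the proposal has a genuine gap at its only hard step: the inequality $\left(1-4\theta\right)\norm{\ProjToIndex{T}{\vVec}}_1\leq\norm{\AMat\vVec}_1+2\theta\norm{\ProjToIndex{T^c}{\vVec}}_1$ is asserted, not derived, and the mechanism you sketch cannot produce it. You apply the expansion hypothesis only to the single set $T\cup T_1$ and propose to control the pollution of its unique neighbours by the far tail ``through the $\ell_1$-contraction and the block decay''; but a coordinate $j\notin T\cup T_1$ may send all $D$ of its edges onto those unique neighbours, so contraction bounds the polluting mass only by $\norm{\vVec-\ProjToIndex{T\cup T_1}{\vVec}}_1$ with coefficient of order $1$, and the monotonicity $\norm{\ProjToIndex{T_{j+1}}{\vVec}}_1\leq\norm{\ProjToIndex{T_j}{\vVec}}_1$ does not shrink this by a factor $\theta$. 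To charge the whole tail at rate $2\theta$ one must apply the order-$2S$ expansion to \emph{mixed} sets $T\cup W$, with $W$ running over size-$S$ blocks of the magnitude-sorted tail (equivalently, the sorted collision-edge counting in the proof of \cite[Theorem~13.1]{IntroductionCS}): this caps the number of collision edges contributed by each tail block at $2\theta DS$, and only then does the sorting convert edge counts into $\ell_1$ mass with the factor $\theta$. Your closing paragraph concedes precisely this (``this is exactly where the proof of \cite[Theorem~13.1]{IntroductionCS} does its work''), which leaves the proposal in an untenable position: either it is incomplete at its decisive estimate --- indeed your middle paragraph's claim that both error terms ``each contribute an error of order $2\theta$'' is contradicted by your own final paragraph --- or it silently cites the theorem being proved, in which case the paper's one-line cite-and-rescale argument is the whole proof and the remaining construction is unnecessary.
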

\begin{proof}
	Note that \cite[Theorem~13.1]{IntroductionCS} yields that the matrix
	$D\AMat\in\left\{0,1\right\}^{M\times N}$
	has $\ell_1$-RNSP of order $S$ wrt $\norm{\cdot}_1$ with constants
	$\rho=\frac{2\theta}{1-4\theta}$ and $\tau'=\frac{1}{\left(1-4\theta\right)D}$.
	Rescalling yields the statement.
\end{proof}
%This statement is proven in \cite[Theorem~13.1]{IntroductionCS} after re.
We are now able to recover signals from measurements matrices chosen uniformly at random
among all $D$-left regular bipartite graph.
%\begin{proof}[Proof of \thref{Theorem:RecoveryViaExpander}]
%	By \thref{Proposition:LeftRegularBipartiteGraphsAreLosslessExpanders} $\AMat$ is the
%	random walk matrix of a $\left(2S,D,\theta\right)$-lossless expander
%	with the given probability.
%	In this case $\AMat$ has by \thref{Lemma:LosslessExpandersHaveSRNSP} $\norm{\cdot}_1$ to $\norm{\cdot}_1$ $S$-SRNSP with constants
%	$\rho'=\frac{2\theta}{1-4\theta}$ and $\tau=\frac{1}{1-4\theta}$.
%%	In this case $\AMat$ has by \thref{Lemma:LosslessExpandersHaveSRNSP} $\norm{\cdot}_1$ to $\norm{\cdot}_1$ $S$-SRNSP with constants
%%	$\rho'=\frac{2\theta}{1-4\theta}$ and $\tau'=\frac{1}{\left(1-4\theta\right)D}$. Thus, $\AMat$ has $\norm{\cdot}_1$ to $\norm{\cdot}_1$ $S$-SRNSP
%%	with constants $\rho'=\frac{2\theta}{1-4\theta}$ and $\tau=\frac{1}{1-4\theta}$.
%	The statement now follows from \thref{Theorem:HRL1Recovery:Lambda>TauSq} with a possibly different $\rho$.
%\end{proof}
\begin{proof}[Proof of \thref{Theorem:RecoveryViaExpander}]
	By \thref{Proposition:LeftRegularBipartiteGraphsAreLosslessExpanders} $\AMat$ is the
	random walk matrix of a $\left(2S,D,\theta\right)$-lossless expander
	with the given probability.
	In this case, by \thref{Lemma:LosslessExpandersHaveSRNSP} $\AMat$ has
	$\ell_1$-RNSP of order $S$ wrt $\norm{\cdot}_1$ with constants
	$\rho=\frac{2\theta}{1-4\theta}$ and $\tau=\frac{1}{1-4\theta}$.
	Since $\theta\in\left(0,\frac{1}{6}\right)$, we have that
	$\hrloneParam=\frac{2}{1+\rho}\tau=\frac{2}{1-2\theta}\in\left(2,3\right)$.
	The statement now follows from \thref{Theorem:HRL1Recovery:Lambda>TauSq}.
\end{proof}

\section*{Acknowledgements}
The work was partially supported by DAAD grant 57417688.
PJ has been supported by DFG grant JU 2795/3.
We are thankful for the comments and input of Felix Krahmer.
	\bibliography{Bibliography/Bibliography}
	\bibliographystyle{IEEEtran}
\end{document}